\newcommand{\new}[1]{{\color{black}{#1}\color{black}}}
\newif\ifanon
\setlist[itemize]{topsep=0pt,}
\newcommand{\ubar}[1]{\underaccent{\bar}{#1}}
\newcommand{\pmr}{pacing multiplier\,}
\newcommand{\pmrs}{pacing multipliers\,}
\def\##1\#{\begin{align}#1\end{align}}
\def\$#1\${\begin{align*}#1\end{align*}}
\def\given{\,|\,}
\def\tr{\mathop{\text{tr}}\kern.2ex}
\def \gam {{ \gamma }}
\def \indi {{ \mathds{1} }}
\def\R{{\mathbb R}}
\def\Rp{{\mathbb R _+}}
\def\Rn{{\mathbb R^n}}
\def\Rnp{{\mathbb R^n_+}}
\def\Rnmpp{{\mathbb R^{n-1}_{++}}}
\def\Rnpp{{\mathbb R^n_{++}}}
\def\Q{{\mathbb Q}}
\def\P{{\mathbb P}}
\def\E{{\mathbb E}}
\def\B{{\mathbb B}}
\def\cC{{\mathcal{C}}}
\def\cB{{\mathcal{B}}}
\def\cF{{\mathcal{F}}}
\def\cH{{\mathcal{H}}}
\def\cG{{\mathcal{G}}}
\def\cK{{\mathcal{K}}}
\def\cX{{\mathcal{X}}}
\def\cD{{\mathcal{D}}}
\def\cP{{\mathcal{P}}}
\def\cN{{\mathcal{N}}}
\def\rg{{\rangle}}
\def\lg{{\langle}}
\newlist{enumconditions}{enumerate}{1} 
\setlist[enumconditions]{label = \thelemma.\alph*}
\crefname{enumconditionsi}{Cond.}{Conditions}
\newlist{enumlmresult}{enumerate}{1} 
\setlist[enumlmresult]{label = \thelemma.\arabic*}
\crefname{enumlmresulti}{Part}{Parts}
\newlist{enumdef}{enumerate}{1} 
\setlist[enumdef]{label = \thedefn.\arabic*}
\crefname{enumdefi}{Cond.}{Cond.}
\crefname{assumption}{Assumption}{Assumptions}
\renewcommand*{\theAssumption}{\arabic{assumption}}
\newlist{enumthmresult}{enumerate}{1} 
\setlist[enumthmresult]{label = \thetheorem.\arabic*}
\crefname{enumthmresulti}{Part}{Parts}
\newlist{enumassumption}{enumerate}{1} 
\setlist[enumassumption]{label = \theAssumption.\arabic*}
\crefname{enumassumptioni}{Condition}{Conditions}
\crefname{definition}{Def.}{Defs.}
\crefname{equation}{Eq.}{Eqs.}
\crefname{claim}{Claim}{Claims}
\crefname{lemma}{Lemma}{Lemmas}
\crefname{appendix}{App.}{Apps}
\crefname{theorem}{Theorem}{Theorems}
\crefname{section}{Section}{Sections}
\crefname{corollary}{Corollary}{Corollaries}
\crefname{example}{Example}{Examples}
\crefname{figure}{Figure}{Figures}
\crefname{table}{Table}{Tables}
\def \cd {{ \cdot }}
\def \deltasti {{\delta^*_i}}
\def \deltast {{\delta^*}}
\def \deltagami {{\delta^\gamma_i}}
\def \deltagam {\delta^\gamma}
\def \I {{\mathrm{{I}}}}
\def \II {{\mathrm{{II}}}}
\def \III {{\mathrm{{III}}}}
\def \Ic {{I_<}}
\def \Icc {{I_<}}
\def \roott {{\sqrt t}}
\def \LAR  {{ \mathsf{{LAR}}}}
\def \REV {{ \mathsf{{REV}}}}
\def \REVst {{ \mathsf{{REV}}  }^* }
\def \REVgam {{ \mathsf{{REV}} }^ \gamma }
\def \NSW {{  \mathsf{{NSW}}}}
\def \FPPE {{  \textsf{{FPPE}}}}
\def \oFPPE {{  \widehat{ \textsf{{FPPE}}}}}
\def \LFM {{  \mathsf{{LFM}}}}
\def \oLFM {{  \widehat{ \mathsf{{LFM}} } }}
\def \LNSW {{ \mathsf{{NSW}}}}
\def \NSWgam {{  \mathsf{{NSW}}}^\gamma}
\def \NSWst {{  \mathsf{{NSW}}}^*}
\def \lin {\text{lin}}
\def \pay {{ \small  \textsf{{pay}}}}
\def \CTR {{ \small \textsc{{CTR}}}}
\def \click {{ \small \textsc{{click}}}}
\def \AE {{ \small \textsc{{Adexchange}}}}
\def \RG{{ \small \textsc{{Region}}}}
\def \TG{{ \small \textsc{{Usertag}}}}
\def \CR {{  \mathsf{{CR}}}}
\def \CI {{  \mathsf{{CI}}}}
\def \cBst {{ \mathcal{B}^*}}
\def \cBgam {{ \mathcal{B}^\gamma}}
\def \cBgamS {{ \mathcal{B}^\gamma_S}}
\def \cBgamC {{ \mathcal{B}^\gamma_C}}
\def \cBgamX {{ \mathcal{B}^\gamma_X}}
\def \cBstS {{ \mathcal{B}^*_S}}
\def \cBstC {{ \mathcal{B}^*_C}}
\def \cBstX {{ \mathcal{B}^*_X}}
\def \b {\beta}
\def \Hst {{H^*}}
\def \ust { u^* }
\def \ugam { u^\gamma }
\def \ugami { u^\gamma_i }
\def \usti {{ u^*_i }}
\def \mubar {\bar \mu}
\def \mubargam {{\bar \mu^\gamma}}
\def \mutau {\mu^\tau}
\def \must {{\mu^*}}
\def \mubarst {{\bar \mu^*}}
\def \musti {\mu^*_i}
\def \mubarsti {{\bar \mu^*_i}}
\def \ubarbetai {{\ubar\beta_i}}
\def \vi {v_i}
\def \vh {v_h}
\def \vk {v_k}
\def \vithetau {{ v_i(\theta^\tau) }}
\def \vithe {{ v_i(\theta) }}
\def \vitau {{ v_i^\tau }}
\def \vbar {{ \bar v }}
\def \vbarsq {{ \bar v ^2 }}
\def \vbarit {{ \bar v_i^t }}
\def \fbar {{\bar f}}
\def \var {{ \operatorname {Var} }}
\def \cov {{ \operatorname {Cov} }}
\def \nui {{ \nu_i}}
\def \nubar {{ \bar\nu}}
\def \sumtau {{\sum_{\tau=1}^{t}}}
\def \sumiton {{ \sum_{i=1}^n }}
\def \sumi {{ \sum_{i} }}
\def \ptau {{ p ^ \tau }}
\def \pgamtau {{ p ^ {\tau} }}
\def \pgam {{ p ^ {\gamma} }}
\def \xtaui {{ x^\tau_i }}
\def \xgam { x^{\gamma} }
\def \xgamtaui {{ x^{\gamma,\tau}_i }}
\def \ttinf {{t\to \infty}}
\def \thetau {{ \theta^\tau }}
\def \t {\theta}
\def \eps{{ \epsilon }}
\def \sighatsq {{ \hat \sigma ^2}}
\def \pst {p^*}
\def \betabar {{\bar \beta}}
\def \betadia {{\beta^\diamond}}
\def \betast {\beta^*}
\def \betasti {{\beta^*_i}}
\def \betastX {{\beta^*_X}}
\def \betai {\beta_i}
\def \betak {\beta_k}
\def \betah {\beta_h}
\def \betagam {\beta^\gamma}
\def \betagami {\beta^\gamma_i}
\def \xitau {{x_i^\tau}}
\def \xst {x^*}
\def \xsti {{x^*_i}}
\def \inv {^{-1}}
\def \pinv {^{\dagger}}
\def \sq {^{2}}
\def \tp {^{\top}}
\def \st {^{*}}
\def \ot {^{\otimes 2}}
\def \dom {{ \operatorname*{Dom}\,}}
\def \Diag {{ \operatorname*{diag}}}
\def \Vec {{ \operatorname*{Vec}}}
\newcommand{\defeq}{=}
\newcommand*\diff{\mathop{}\!\mathrm{d}}
\def \dt {{\diff \theta}}
\def \Ndiff {{N_{\text{diff}}}}
\def \Thetatie {{\Theta_{\text{tie}}}}
\def \toas {{ \,\overset{\mathrm{{a.s.\;}}}{\longrightarrow} \,}}
\def \toprob {{ \,\overset{{p}}{\to}\, }}
\def \toepi {{ \,\overset{\mathrm{epi}}{\longrightarrow}\, }}
\def \tod {{ \,\overset{{\,d\,\,}}{\rightarrow}\, }}
\def \invtroot {t^{-1/2}}
\DeclareMathOperator*{\esssup}{ess\,sup}
\def \relint {{ \operatorname*{relint}}}
\def \bidgap {{  \mathsf{{bidgap}}}}
\begin{document}

\RUNAUTHOR{Liao and Kroer}

\RUNTITLE{Statistical Inference in Market Equilibrium}

\TITLE{
    Statistical Inference and A/B Testing in Fisher Markets and Paced Auctions
}

\ARTICLEAUTHORS{%
\AUTHOR{Luofeng Liao}
\AFF{Department of Industrial Engineering and Operations Research, Columbia University, \EMAIL{\texttt{ll3530@columbia.edu}}}
\AUTHOR{Christian Kroer}
\AFF{Department of Industrial Engineering and Operations Research, Columbia University, \EMAIL{\texttt{ck2945@columbia.edu}}}
} 

\ABSTRACT{%
\new{
We initiate the study of statistical inference and A/B testing for two market equilibrium models: linear Fisher market (LFM) equilibrium and first-price pacing equilibrium (FPPE).
LFM arises from fair resource allocation systems such as allocation of food to food banks. }
For LFM, we assume that the observed data is captured by the classical finite-dimensional Fisher market equilibrium, and its steady-state behavior is modeled by a continuous limit Fisher market. 
\new{The FPPE model arises from internet advertising where advertisers are constrained by budgets and advertising opportunities are sold via first-price auctions.  }
We propose a statistical framework for the FPPE model, in which a continuous limit FPPE models the steady-state behavior of the auction platform, and a finite FPPE provides the data to estimate primitives of the limit FPPE. Both LFM and FPPE have an Eisenberg-Gale convex program characterization, the pillar upon which we derive our statistical theory.
We start by deriving basic convergence results for the finite market to the limit market.
We then derive asymptotic distributions, and construct confidence intervals. Furthermore, we establish the asymptotic local minimax optimality of estimation based on finite markets. We then show that the theory can be used for conducting statistically valid A/B testing on auction platforms. Synthetic and semi-synthetic experiments verify the validity and practicality of our theory.
}%




%


\maketitle
\section{Introduction}

Statistical inference is a crucial tool for measuring and improving a variety of real-world systems with multiple agents, including large-scale systems such as internet advertising platforms and resource allocation systems. 
However, statistical interference is a crucial issue in such systems. Past work has often focused on interference such as networks effects, which may arise due to user interactions on social media platforms.
In this paper, we focus on a different type of interference: interference effects arising from competition between agents on a platform. To be concrete, consider the case of A/B testing for internet advertising: budgets are prevalent among advertisers on such platforms, and these budgets mean that the actions of one advertiser can affect the actions of another. Often, in such systems, randomization is performed e.g. at the user level and then budget-splitting is used to clone advertisers into the A and B treatment. However, budget interactions may cause \emph{all} users in e.g. the A or B treatment to be related to each other, and thus it is not at all clear that one can apply standard statistical methods that treat each user as an independent sample.
Instead, a theory of \emph{equilibrium interference} is needed, and we need to understand how statistical interference can be performed when such interference is present.
We study statistical inference and A/B testing in two closely-related equilibrium models: 
First, we study one of the most classical competitive equilibrium models: the \emph{linear Fisher market} (LFM) equilibrium. 
Second, we study the \emph{first-price pacing equilibrium} (FPPE)~\citep{conitzer2022pacing}, which is a model that captures the budget-management tools often employed on internet advertising platforms.

In a Fisher market, there is a set of $n$ budget-constrained buyers and $m$ goods.  A market equilibrium (ME) is an allocation of the goods and a corresponding set of prices on the goods such that the market \emph{clears}, meaning that demand equals supply.
In a linear Fisher market, a buyers' utility is linear in their allocation.
Beyond being a classical model of price formation, the Fisher market equilibrium arises in resource allocation systems via the the competitive equilibrium
from equal incomes (CEEI) mechanism~\citep{varian1974equity,budish2011combinatorial}. 
In CEEI, each individual is given an endowment
of faux currency and reports her valuation for the goods; then, a market equilibrium is computed, and
the goods are allocated accordingly. The resulting allocation has many desirable properties such as Pareto optimality, envy-freeness and proportionality. 
Below we list examples of allocation systems where a Fisher market equilibrium naturally arises.

\begin{example}[Allocation of resources] 
Scarce resource allocation is 
prevalent in real life.
In systems that assign blood donation to hospitals and blood banks~\citep{mcelfresh2020matching}, 
or donated foods to charities in different neighborhoods~\citep{aleksandrov2015online,sinclair2021fairness},
scarce compute resources to users~\citep{ghodsi2011dominant,parkes2015beyond,kash2014no,devanur2018new}, 
course seats to students~\citep{othman2010finding,budish2016course},
the CEEI mechanism is already in use or serves as a fair and efficient alternative.
For systems that implement CEEI,
we may be interested in quantifying the variability of the amount of resources (blood or food donation) received by the participants (hospitals or charities) of these systems as well as the variability of fairness and efficiency metrics of interest in the long run.
Enabling statistical inference in such systems enables better tools for both evaluating and improving these systems.
\end{example}

\begin{example}[Fair notification allocation] In certain social media mobile apps, 
users are notified of events such as other users liking or commenting on their posts. 
Notifications are important for increasing user engagement, but too many notifications can be disruptive for users. 
Moreover, in practice, different types of notification are managed by distinct teams, competing for the chances to push their notifications to users. 
\citet{kroer2023fair}
propose to use Fisher markets to fairly control allocation of notifications.
They treat notification types as buyers, and users as items in a Fisher market.
Platforms are often interested in measuring outcome properties of such notification systems.
In \cref{sec:exp_instagram}
we will present a simulation study of our uncertainty quantification methods applied to the notification allocation problem.
\end{example}


    

The second type of equilibrium model we study is the FPPE model, which arises in internet advertising.
First, we review how impressions are sold in internet advertising, where first or second-price auction generalizations are used.
When a user shows up on a platform, an auction is run in order to determine which ads to show, before the page is returned to the user. Such an auction must run extremely fast. This is typically achieved by having each advertiser specify the following ahead of time: their target audience, their willingness-to-pay for an impression (or values per click, which are then multiplied by platform-supplied \emph{click-through-rate} estimates), and a budget.
Then, the bidding for individual impressions is managed by a proxy bidder controlled by the platform.  
As a concrete example, to create an ad campaign on Meta Ads Manager, advertisers need to specify the following parameters: (1) the conversion location (say website, apps, Messenger and so on), (2) optimization and delivery (target your ads to users with specific behavior patterns, such as those who are more likely to view the ad or click the ad link), (3) audience (age, gender, demographics, interests and behaviors), and (4) how much money do you want to spend (budget). 
Given the above parameters reported by the advertiser, the (algorithmic) proxy bidder supplied by the platform is then responsible for bidding in individual auctions to maximize advertiser utility, while respecting the budget constraint.

\new{An important role of these proxy bidders is to ensure smooth budget expenditure.
Pacing is a budget management method that modifies the advertiser's bids by applying a shading factor, known as a (multiplicative) \emph{pacing
multiplier}, to the advertiser's bid.}
Tuning the pacing multiplier changes the spending rate: the larger the pacing multiplier, the more aggressive the bids. 
The goal of the proxy bidder is to choose this pacing multiplier such that the advertiser exactly exhausts their budget (or alternatively use a multiplier of one in the case where their budget is not exhausted by using unmodified bids).
In this paper we focus on pacing-based budget management systems. 

First-price pacing equilibrium~\citep{conitzer2022pacing} is a market-equilibrium-like model that captures the steady-state outcome of a system where all buyers employ a proxy bidder that uses multiplicative pacing.
\citet{conitzer2022pacing} showed that an FPPE always exists and is unique. 
Moreover, as a pacing configuration method, FPPE enjoys nice properties such as being revenue-maximizing among all budget-feasible pacing strategies, 
shill-proof (the platform does not benefit from adding fake bids under first-price auction mechanism), and
revenue-monotone (revenue weakly increases when adding bidders, items or budget).
The FPPE model specifically captures the setting where each auction is a \emph{first-price} auction. First and second-price auctions are both prevalent in practice, but equilibrium models for second-price auctions are much less tractable (in fact, even finding one is computationally hard~\citep{chen2023complexity}). To that end, we focus on the first-price auction setting in this paper; the second-price setting is interesting, but we expect that it will be much harder to give satisfying statistical inference results for it.

Quantifying uncertainty in pacing systems is an important task on online advertising platforms. Basic statistical tasks, such as the prediction of the bidding behavior of advertisers or the revenue of the whole platform, require a statistical theory to model the intricacies of the bidding process. A/B testing, a method that seeks to understand the effect of rolling out a new feature, also requires a rigorous theoretical treatment to handle equilibrium effects.
\new{
    What a platform would typically like to do, is to treat each item in A and B as a separate unit and measure e.g., the price of the item as an independent observation. 
    However, interference occurs due to the optimizing behavior of each buyer, where they end up buying a bundle in their demand set (on a platform that runs ad auctions, this would typically occur through the proxy bidder that performs pacing). This optimization combined with the budget constraints causes interference both between different buyers and different items. Consequently, one has to treat the entire market as the observed ``unit.''
}
To the best of our knowledge, our results are the first to provide a statistical theory that captures such competitive interference effects that one would expect on an internet advertising platform.

Although LFM and FPPE have seemingly very different use cases, they each have an \emph{Eisenberg-Gale} convex program characterization~\citep{eisenberg1959consensus,eisenberg1961aggregation}.
This is the unifying theme that allows us to study these two models using similar tools.
In particular, this allows us to reduce inference about market equilibrium to inference about stochastic programs, where many classical tools from mathematical programming~\citep{shapiro2021lectures} and empirical processes theory~\citep{vaart1996weak} can be applied. 

\subsection{Contributions}

\paragraph{Statistical models for resource allocation systems and first-price pacing auction platforms. }
We leverage the infinite-dimensional Fisher market model of \citet{gao2022infinite} in order to propose a statistical model for resource allocation systems and FPPE platforms.
In this model, we observe market equilibria formed with a finite number of items that are i.i.d.\ draws from some distribution, and aim to make inferences about several primitives of the limit market, such as revenue, Nash social welfare (a fair metric of efficiency), and other quantities of interest; see \cref{tbl:qoi}. 
Importantly, this lays the theoretical foundation for A/B testing in resource allocation systems and auction markets,
which is a difficult statistical problem because buyers interfere with each other through the supply and the budget constraints. 
With the presence of equilibrium effects, traditional statistical approaches which rely on the i.i.d.\  assumption or SUTVA (stable unit treatment value assumption, \citet{imbens2015causal}) fail. 
The key lever we use to approach this problem is a convex program characterization of the equilibria, called the Eisenberg-Gale (EG) program. With the EG program, the inference problem reduces to an $M$-estimation problem~\citep{shapiro2021lectures,van2000asymptotic} on a constrained non-smooth convex optimization problem.

\begin{table}[htbp]
    \centering
    \begin{tabular}{@{}ccc@{}}
        \toprule
         \textbf{FPPE} & \textbf{LFM}   \\
        \midrule
        \makecell{pacing multipliers $\betast$,\\ revenue $\REVst$} & \makecell{ inverse bang-per-bucks $\betast$, \\ utilities $\ust$, \\ Nash social welfare $\NSWst$  } \\
        \bottomrule
    \end{tabular}
    \caption{Quantities of interest in LFM and FPPE.}
    \label{tbl:qoi}
\end{table}

\paragraph{Convergence and inference results for LFM and FPPE.}
We show that the finite market, which represents the observed data, is a good estimator for the limit market
by showing a hierarchy of results: strong consistency, convergence rates, and asymptotic normality.
We also establish that the observed market is an optimal estimator of the limit market in the asymptotic local minimax sense \citep{van2000asymptotic,le2000asymptotics,duchi2021asymptotic}. Finally, we provide consistent variance estimators, whose consistency is proved by a uniform law-of-large-numbers over certain function classes.
A shared challenge for developing statistical theory for both LFM and FPPE
is nonsmoothness: The objective function in the EG convex program (for a sampled finite market) is non-differentiable \new{on $\Rn$} almost surely, since it involves the max operator.
{
    \new{
    Even so, we will see that in our limit market model, the expectation operator, which becomes an integral, smooths out the non-differentiability issues under relatively mild conditions.
    We explore sufficient conditions for differentiability in \cref{sec:analytical_properties_of_dual_obj}.
    }
} 
For FPPE, there is a more prominent issue: the parameter-on-boundary issue, which means that the optimal population solution might be on the boundary of the constraint set.
Here we briefly discuss how we handle the two issues when deriving asymptotic distribution results for FPPE, which is one of the more difficult results in the paper.
First, asymptotic distribution results for $M$-estimation are known to hold under certain regularity conditions on stochastic programs~\citep[Theorem 3.3]{shapiro1989asymptotic}.
One such condition concerns the differentiability of the population objective.
We provide low-level sufficient conditions for differentiability, and show they have natural interpretations from an economic perspective~(\cref{sec:analytical_properties_of_dual_obj}).
Another important condition to verify is stochastic equicontinuity~(\cref{it:stoc_equic}), which we establish by leveraging empirical process theory~\citep{vaart1996weak,kosorok2008introduction}.

\paragraph{Statistically reliable A/B testing in resource allocation systems and FPPE platforms.}
Applying our theory, we \new{formalize and analyze} a budget-split A/B testing design for item-side randomization that resembles real-world A/B testing methodology in markets with budgets.
In the budget-split design, treatment and control markets are formed independently, and buyer's budgets are split proportionally between them, while items are randomly assigned to treatment or control markets.
Then, based on the equilibrium outcomes in the A and B markets, we construct estimators and confidence intervals that enable statistical inference. 

\new{
\subsection{Overview and Motivation of our Limit Market Model}
As described above, the fundamental quantity of interest in our model is the limit market, whether for LFM or FPPE. We now provide some discussion and motivation for our use of this limit market.
\begin{figure}
    \centering
    \includegraphics[scale = .25]{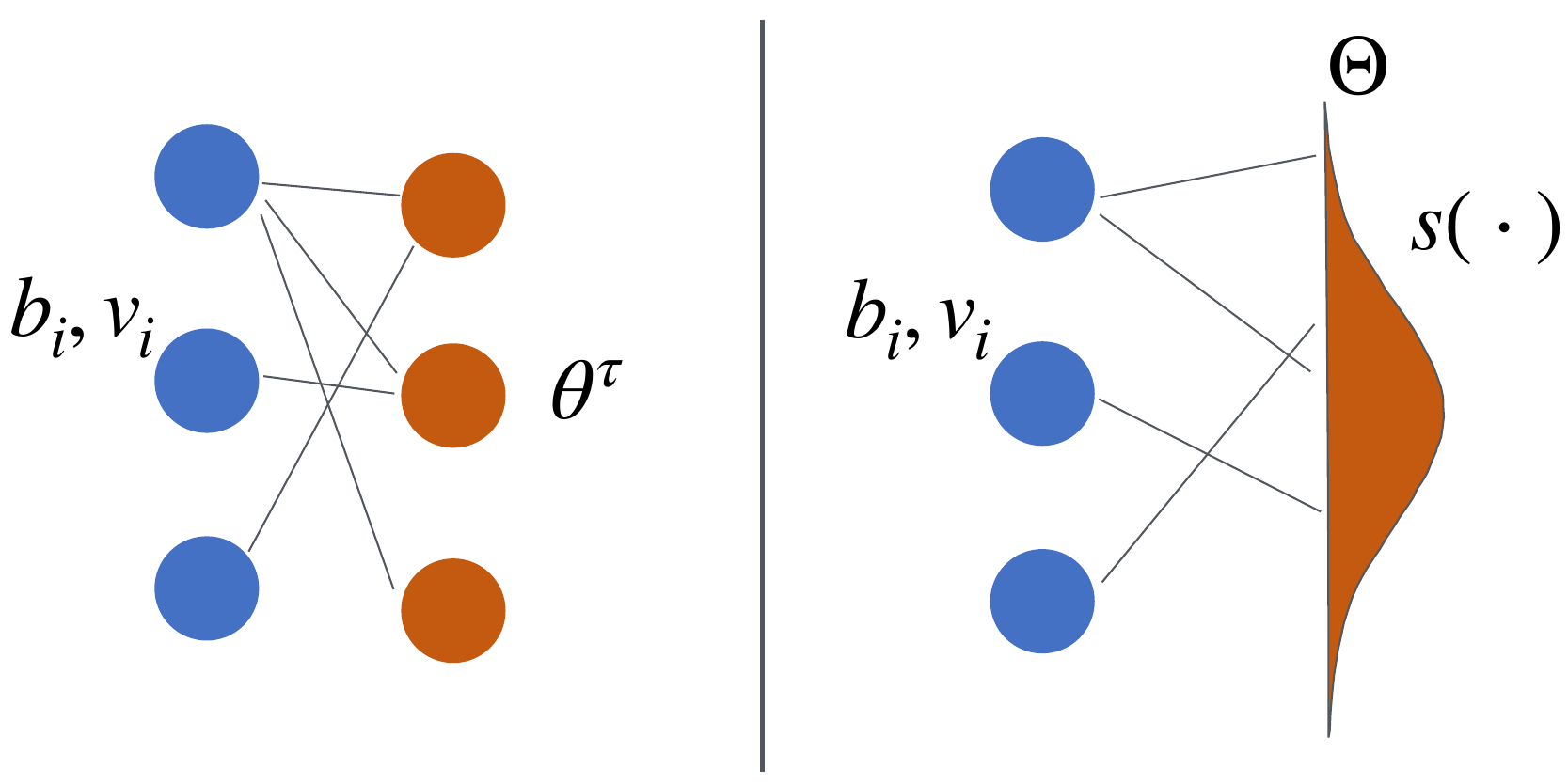}
    \caption{
        \new{Items (orange) are assigned to buyers (blue). Left: a finite market. Right: a limit market.}}
    \label{fig:fppe}
\end{figure}
Suppose a practitioner faces the following prediction task: given data arising from a finite market, what can we say about future finite markets that are created under similar conditions? 
In the case of the allocation systems we study in the paper,
this can be interpreted as follows: suppose we observe the behavior of the system for one week (represented by the finite market), and the allocation system is operating in equilibrium. Can we now make statistical inferences about next week's market, assuming that the generated items follow a similar pattern? Similarly, A/B testing a system under a budget-split design requires us to make statistical statements about each arm.


To address questions such as these, we study the statistical behavior of finite markets sampled from a limit market where the item set is continuous, and the supply becomes a distribution on the item set; see \cref{fig:fppe} for an illustration. 
Limit markets are formally introduced in \cref{sec:lfm and fppe definitions}. 

The limit market serves as a conceptual model through which we derive \textit{prediction intervals for future finite markets}.
We focus on deriving confidence intervals for the limit market, i.e., an interval $I$ constructed from the data so that the probability $\P(\text{quantities in the limit market lie in } I)$ is high.
One can then easily construct prediction intervals for a future market.
Take FPPE as an example. Let $\betagam$ be the observed multiplier, and $\betast$ be the limit one. Suppose $\beta^{new}$ is the multiplier in a future market, independent of the observed market. We will show that the observed multiplier concentrates around the limit multiplier: $\sqrt t (\betagam - \betast) \tod N(0, \Sigma_\beta)$ for some matrix $\Sigma_\b$. Then 
\begin{align}
    \sqrt t (\betagam - \beta^{new})  = \sqrt t ((\betagam - \betast) - (\beta^{new} - \betast)) \tod N(0, 2\Sigma_\beta)
\end{align}
by independence between $\betagam$ and $\beta^{new}$.
Given an estimator of $\Sigma_\beta$, a prediction interval for $\beta^{new}$ can thus be constructed.

Of course, if the goal is to make predictions about future finite markets, one might wonder if there is a way to give a formal model of predicting future finite markets directly from observed finite markets, without going through the limit market.
For some quantities it is easy to impose a simple statistical model and develop a prediction theory while ignoring the equilibrium structure in the data; for other quantities, it is more difficult to even put a statistical model on them.
In either case, ignoring the equilibrium structure casts doubt on the validity of such approaches.

}
\subsection{Related Works} \label{sec:related_work}

\textbf{A/B testing in two-sided markets.}
Empirical studies by 
\citet{blake2014marketplace,fradkin2019simulation} demonstrate bias in experiments due to marketplace interference.
\citet{basse2016randomization} study the bias and variance of treatment effects under two randomization schemes for auction experiments.
\citet{bojinov2019time} study the estimation of causal quantities in time series experiments.
Some recent state-of-the-art designs are the multiple randomization designs \citep{liu2021trustworthy,johari2022experimental,bajari2021multiple} and the switch-back designs 
\citep{sneider2018experiment,hu2022switchback,li2022interference,bojinov2022design,NEURIPS2020_abd98725}.
The surveys by 
\citet{kohavi2017surprising,bojinov2022online} contain detailed accounts of A/B testing in internet markets.
See \citet{larsen2022statistical} for an extensive survey on statistical challenges in A/B testing.
Compared to these papers, our paper is the first to focus on A/B testing with equilibrium effects.

\new{
The budget-split design we study is similar to one studied by \citet{liu2021trustworthy}: buyer's budgets are split proportionally, creating two sets of identical buyers, and then items are assigned to either group of buyers at random. However, we differ in the theory to analyze such experiments.
The main difference between our paper and theirs is that they did not consider equilibrium effects, while we do.
In particular, \citet{liu2021trustworthy} assume that pacing and other strategic effects are fixed across treatments. In turn, this means that any strategic behavior or budget-optimizing behavior ignores the A/B assignment.
On a related note, we consider randomness of both the item (impression) arrivals and the treatment assignment, while Liu et al.\ only consider the latter.  
Second, we provide an exact variance formula for many quantities in the ad auction systems, such as pacing multipliers and revenues, while Liu et al.\ do not. 


}
\textbf{Pacing equilibrium.}
Pacing is a prevalent budget-management methods on ad auction platforms.
In the first-price setting,
\citet{borgs2007dynamics} study first price auctions with budget constraints in a perturbed model, whose limit prices converge to those of an FPPE.
Building on the work of \citet{borgs2007dynamics}, \citet{conitzer2022pacing} introduce the FPPE model and discover several properties of FPPE such as shill-proofness and monotonicity in buyers, budgets and goods.
There it is also established that FPPE is closely related to the quasilinear Fisher market equilibrium~\citep{chen2007note,cole2017convex}.
\citet{gao2022infinite} propose an infinite-dimensional variant of the quasilinear Fisher market, which lays the probability foundation of the current paper.
\citet{gao2021online,liao2022dualaveraging} study online computation of the infinite-dimensional Fisher market equilibrium.
In the second-price setting,
\citet{balseiro2015repeated} investigate budget-management in second-price auctions through a fluid mean-field approximation;
\citet{balseiro2019learning} study adaptive pacing strategy from buyers' perspective in a stochastic continuous setting;
\citet{balseiro2021budget} study several budget smoothing methods including multiplicative pacing in a stochastic context;
\citet{conitzer2022multiplicative} study second price pacing equilibrium, and shows that the equilibria exist under fractional allocations.

\textbf{$M$-estimation when the parameter is on the boundary}
There is a long literature on the statistical properties of 
$M$-estimators when the parameter is on the boundary~\citep{geyer1994asymptotics,shapiro1990differential,shapiro1988sensitivity,shapiro1989asymptotic,shapiro1991asymptotic,shapiro1993asymptotic,shapiro2000asymptotics,andrews1999estimation,andrews2001testing,knight1999epi,knight2001limiting,knight2006asymptotic,knight2010asymptotic,dupacova1988asymptotic,dupavcova1991non,self1987asymptotic}.
Some recent works on the statistical inference theory for constrained $M$-estimation include \citet{li2022proximal,hong2020numerical,hsieh2022inference}.
Our work leverages \citet{shapiro1989asymptotic}, which develops a general set of conditions
for asymptotic normality of constrained $M$-estimators when the objective function is nonsmooth.
Working under the specific model of FPPE, we build on and go beyond these 
contributions by deriving sufficient condition for asymptotic normality in FPPE, establishing local asymptotic minimax theory and developing valid inferential procedures.

\textbf{Statistical learning and inference with equilibrium effects}
Online learning approaches, which are related to statistical learning, have been investigated for other equilibrium models, such as
general exchange economy~\citep{guo2021online,liu2022welfare} and
matching markets~\citep{cen2022regret,dai2021learning,liu2021bandit,jagadeesan2021learning,min2022learn}.
Our work is also related to the rich literature of inference under interference~\citep{hudgens2008toward,aronow2017estimating,athey2018exact,leung2020treatment,hu2022average,li2022random}.
In the economic literature, researchers have studied how to estimate auction market primitives from bid data; see \cite{athey2007nonparametric} for a survey.

Closely related to our work is a recent preprint by \citet{munro2021treatment}. They consider a potential outcomes framework where the outcome of an agent depends on the treatments of all agents, but only through the equilibrium price. The equilibrium price is attained by a market clearance condition. Although both their work and our work consider a limit market equilibrium (in their case a mean-field equilibrium), there are significant differences. First, \citet{munro2021treatment} send the number of agents to infinity while we consider the asymptotics where the number of items grows. Second, \citet{munro2021treatment} present a very general market equilibrium framework that requires abstract regularity conditions (which do not hold in our setting), while we focus on equilibria arising from resource allocation systems and auction pacing systems, and consequently we are able to present low-level conditions that facilitate statistical inference. Third, their model works with a \emph{single} market where buyers are randomly exposed to treatment or control. 
\new{
    Consequently, \citet{munro2021treatment} focuses on estimating direct effects and spillover effects when buyers in a single market are randomly assigned to either treatment or control. 
    In contrast, our work is focused on interventions that affect the entire valuation distribution, and we study separated markets when we apply our theory to A/B testing.
}
For this reason, an A/B testing framework such as the one we develop is necessary. \citet{Wager2021,sahoo2022policy} also consider a mean-field game modeling approach and perform policy learning with a gradient descent method.

\ifanon
\else
This paper builds upon two preliminary conference papers~\citep{liao2023fisher,liao2023stat}. The present paper gives a more unified presentation, and provides some additional results on the statistical theory of Fisher markets and FPPE.  
Perhaps most importantly, this paper conducts two semisynthetic experiments based on an ad auction dataset~\citep{liao2014ipinyou} and an Instagram notification dataset~\citep{instagramdata}, demonstrating the practicality of the proposed theory.
This paper also provides strong consistency and convergence rate for FPPE, minimax optimality results for LFM, and a novel closed-form expression for the Hessian matrix of the population EG objective~(\cref{eq:def_pop_eg}) using results from differential geometry~\citep{kim1990cube}.
Building on the preliminary versions of the present paper, \citet{liao2024bootstrap} extends the FPPE statistical theory to the cases where degenerate buyers are present and develops bootstrap inference methods, and \citet{liao2024interference} studies interference among FPPEs.
\fi

\subsection{Notations}
Let $e_i$ be the $i$-th basis vector in $\Rn$.
Furthermore, we let $A\pinv $ be the Moore-Penrose pseudo inverse of a matrix $A$. 
Let $\diff \t$ denote the Lebesgue measure in $\R^D$.
For a measurable space $(\Theta, \diff \theta)$, we let $L^p$ (and $L^p_+$, resp.) denote the set of (nonnegative, resp.) $L^p$ functions on $\Theta$ w.r.t\ the integrating measure $\diff \theta $ for any $p\in [1, \infty]$ (including $p=\infty$). 
We treat all functions that agree on all but a measure-zero set as the same.
For a sequence of random variables $\{X_n\}$, we say $X_n = O_p(1)$ if $\lim_{K\to \infty} \limsup_n \P(|X_n| > K) = 0$. We say $X_n = o_p(1)$ if $X_n$ converges to zero in probability.
We say $X_n = O_p(a_n)$ (resp.\ $o_p(a_n)$) if $X_n/a_n = O_p(1)$ (resp.\ $o_p(1)$).
The subscript $i$ is for indexing buyers and superscript $\tau$ is for items.

\section{Linear Fisher Market and First-Price Pacing Equilibrium}
\label{sec:lfm and fppe definitions}
\new{In this section we introduce the Fisher market equilibrium and the first-price pacing equilibrium.} We start by presenting components that are common to both models, and then introduce each equilibrium concept.
In both LFM and FPPE, we have a set of $n$ buyers and a set of items, and the goal is to find market-clearing prices for the items.
The items are represented by a set $\Theta \in \R^D$, a compact set with $\int \indi (\Theta) \diff \t > 0$.
Clearly the measure \new{space} $(\Theta,\dt)$ is atomless.

Both LFM and FPPE require the following elements; see \cref{fig:fppe} for an illustration.

(1)
The \emph{budget} $b_i$ of buyer $i$. Let $b = (b_1,\dots, b_n)$. 
\new{The smallest budget is denoted by $\ubar b = \min_i b_i$.}
(2)
The \emph{valuation} for buyer $i$ is a function $v_i \in L^1_+$. Buyer $i$ has valuation $v_i(\theta)$ (per unit supply) of item $\theta\in \Theta$. Let $v: \Theta \to \Rn$, $v(\theta) = [v_1(\theta),\dots, v_n(\theta)] \tp$. We assume $\vbar = \max _i \sup_\theta  v_i(\theta)< \infty $.
(3) 
The \emph{supplies} of items are given by a function  $ s \in L^\infty_+$, i.e., item $\theta\in \Theta$ has $s(\theta)$ units of supply. 

    Without loss of generality, we assume a unit total supply $\int_\Theta s \diff \theta = 1$, which makes $s$ a probability measure.
    Let $\P$ denote the probability measure induced by $s$, i.e.,
    $\P (A) = \int_A s\dt$  for a measurable set $A$. 
    Given $g:\Theta \to \R$, we let $\E[g] = \int g(\theta)s(\theta)\diff \theta$ and $\var[g] = \E[g\sq] - (\E[g])\sq$.
    \new{
        Also let $\nu_i = \int v_i s\diff \t = \E[ v_i]$ be the monopolistic utility of buyer $i$, and $\nubar = \max_i \nu_i$.
    } 
    Given  $t$ i.i.d.\ draws $\{ \theta^1,\dots, \theta^t\}$ 
    from $s$, let $P_t g(\cd) = \frac1t \sumtau g(\thetau)$.
    Let $\vitau = v_i(\thetau)$.

Equilibria in both LFM and FPPE are characterized by a particular type of convex program known as an \emph{Eisenberg-Gale} (EG) convex program. For statistical inference purposes, we will focus on the duals of these EG programs, which is a convex optimization problem over the space of \emph{pacing multipliers} $\beta \in \Rnp$ (these pacing multipliers turn out to represent the price-per-utility of buyers in equilibrium).
In both cases, the dual EG objective separates into per-item convex terms
\begin{align}
    \label{eq:def:F}
    F (\t, \b) = f(\t,\b)-  \sumiton b_i \log \beta_i \;,\;\; f(\t,\b)=\max_{i\in[n]} \beta_i v_i(\theta)  \;.
\end{align}
and the population and sample EG objectives are
\begin{align}
    \label{eq:def_pop_eg}
    H(\beta) = \E[F(\t,\b)] \; , \;\; H_t(\beta) = P_t F(\cd, \beta) \; .
\end{align}
The reason we focus on the duals is that they can be cast as sample average approximations of the limit convex programs. This interpretation is not possible for the original primal EG programs.

\subsection{Linear Fisher Markets (LFM)}
\label{sec:lfm_review}
The LFM model has two primary uses. Its original intent is as a model of competition, and price formation in a competitive market.
An additional, and practically important, use of LFM is as a tool for fair and efficient resource allocation (with the items being the resources).
If every individual in a resource allocation problem is given one unit of faux currency, then the resulting LFM equilibrium allocation is known to be both Pareto efficient and satisfy the fairness desiderata of envy-freeness and proportionality~\citep{nisan2007algorithmic}. This fair allocation approach is known as competitive equilibrium from equal incomes (CEEI)~\citep{varian1974equity}.

We now describe the competitive equilibrium concept. 
Imagine there is a central policymaker that sets 
prices $p(\cdot)$ for the items $\Theta$. Upon observing the prices, buyer~$i$ 
maximizes their utility subject to their budget. 
Their \emph{demand set} is the set of bundles that are optimal under the prices:
\[ D_i (p) := \argmax_{x_i \in L^\infty_+(\Theta)} \bigg\{ \int v_i x_i s \diff \t :  \int p x_i s\diff \t \leq b_i\bigg\} \; .\] 
Of course, due to the supply constraints, if prices are too low, there will be a supply shortage. On the other hand, if prices are too high, a surplus occurs. A competitive equilibrium is a set of prices and bundles such that all items are sold exactly at their supply (or have price zero).
We call such an equilibrium the \emph{limit LFM equilibrium} for the supply function $s$~\citep{gao2022infinite}.
\begin{definition}[Limit LFM]
    \label{def:LMF}
The limit equilibrium, denoted $\LFM (b,v,s, \Theta)$, is an allocation-price tuple $(x,p(\cd))$ such that the following holds.
\begin{enumerate}
    \item (Supply feasibility and market clearance) $\sum_i x_i \leq 1$ and $\int p (1 - \sum_i x_i) s \diff \t= 0$. 
    \item (Buyer optimality) $x_i \in D_i (p)$ all $i$.
\end{enumerate}
\end{definition}

\citet{gao2022infinite} show that an equilibrium of a limit LFM must exist, and that when the measure space $(\Theta,\dt)$ is atomless, a pure equilibrium allocation \footnote{An allocation $x$ is pure if $x_i(\t) \in \{0,1\}$.} must exist.
Given an equilibrium $(\xst, \pst)$, let 
\begin{align*}
    \usti = \int \vi s \xsti \diff \t \;, \quad \betasti = b_i / \usti  \;,\quad \NSWst = \sumiton b_i \log(\usti )
\end{align*}
be buyer $i$' utility, her inverse bang-per-buck, and the (log) Nash social welfare of the whole market.
The inverse bang-per-buck $\beta_i^*$ can also be seen as the price-per-utility of buyer $i$. 
In a general LFM, the equilibrium allocation may not be unique, but the equilibrium quantities $\pst,\betast,\ust$ are unique. 
In order to facilitate statistical inference, we will impose certain  
differentiability conditions, which turn out to imply uniqueness and purity of the equilibrium allocation $\xst$. 

Next we introduce the \emph{finite} LFM, which models the data we observe in a market.
The finite LFM equilibrium is nothing but a limit LFM equilibrium where the item set $\Theta$ is the finite set of observed items $\gamma$. 
Let $\gam = \{ \theta^1,\dots, \theta^t\}$ be $t$ i.i.d. samples from the supply distribution $s$, each with supply $\sigma \in \Rp$.

\begin{definition}[Finite LFM] \label{def:observed_market}

    The finite observed LFM, denoted $\oLFM(b,v,\sigma, \gamma)$, is any allocation-price tuple $({x}, p) \in \R^{t\times n}_+ \times \Rnp $ such that the following hold:
    \begin{enumerate}
        \item (Supply feasibility and market clearance) $\sumi\xtaui \leq 1 $ and  $\sum_\tau \ptau (1- \sumi \xitau)  = 0$.
    
        \item (Buyer optimality) $x_i \in D_i (p) = \argmax_{x_i} \{ \sum_\tau \xitau \vitau : \sigma \sum_\tau \xitau \ptau \leq b_i , \xitau \geq 0\}$, the demand set given the prices. 
    \end{enumerate}
\end{definition}

Let  $(\xgam, \pgam) \in \oLFM(b,v,\sigma = 1/t, \gamma)$ \footnote{We use $\in$ since the equilibrium allocation may not be unique; equilibrium prices are unique.}, 
where
$\xgam = (\xitau )_{i,\tau}$ and price $\pgam = [p^1,\dots, p^t]$. 
We study this form of finite LFM due a scale-invariant property of LFM, \new{and this ensures that for all market sizes, the ``buyer size'' is comparable to the ``item size'' of the market} (see \cref{sec:lfm_scale_invariance}).
Buyer $i$'s utility is $\ugam_i=\sigma \sumtau \vitau \xtaui = \frac1t \sumtau \vitau \xtaui  $, and the inverse bang-per-buck is $\betagami = b_i / \ugami$. The (log) Nash social welfare is $\NSWgam = \sumi b_i \log (\ugami)$.

There exists natural bounds on $\betast$ in a limit LFM. Recall $\nu_i =\int v_i s\dt $ is the expected value of buyer~$i$. 
By \citet{gao2022infinite}, 
we know that 
$b_i / \nu_i \leq \betasti \leq (\sum_{i'} b_{i'}) / (\min_{i'} \nu_{i'})$.
 We define 
\begin{align}
    \label{eq:def_C_LFM}
    C_\LFM = \prod_{i=1}^{n}
    \bigg[ \frac{b_i}{2\nu_i},  \frac{ 2 \sum_{i'} b_{i'}}{\min_{i'} \nu_{i'}}\bigg] \subset \Rnp
\end{align}
to be the region whose interior $\betast$ must lie in.

It is well-known~\citep{eisenberg1959consensus,cole2017convex,gao2022infinite}
that the equilibrium inverse bang-per-buck in a limit (resp.\ finite) LFM uniquely solves
the population (resp.\ sample) dual EG program
\begin{align}
    \label{eq:pop_deg_lfm}
    \betast = \argmin_{\beta \in \Rnp}
     H(\beta) \; , \;
     \betagam = \argmin_{\b\in \Rnp} H_t(\b) \; .
\end{align}

\new{
We review other properties of LFM, such as scale-invariance and mechanism design properties, including fairness and efficiency, in \cref{sec:lfm_scale_invariance}.
}

\subsection{First-Price Pacing Equilibrium (FPPE)}
\label{sec:fppe_review}
The FPPE setting~\citep{conitzer2022pacing} models an economy that typically occurs on internet advertising platforms: the buyers (advertisers in the internet advertising setting) are subject to budget constraints, and must participate in a set of first-price auctions, each of which sells a single item.
Each buyer is assigned a \emph{pacing multiplier} $\beta_i \in [0,1]$ by the platform to scale down their bids in the auctions, and submits bids of the form $\beta_i v_i(\theta)$ for each item $\theta$.  \new{From the platform's perspective, the goal of choosing $\beta_i$ is to ensure that there is \emph{no unnecessary pacing}: A buyer's budget constraint must be satisfied, but if $\beta_i < 1$ then the buyer exhausts their budget exactly.}
In the FPPE model, all auctions occur simultaneously, and thus the buyers choose a single $\beta_i$ that determines their bid in all auctions.
The utility of a buyer in FPPE is quasilinear: it is the sum of their value received from items plus their leftover budget (this is equivalent for decision-making purposes to the utility being the value received from items minus payments).

\begin{definition}[Limit FPPE, \citet{gao2022infinite}]
    \label{def:limit_fppe}
    A limit FPPE, denoted $\FPPE(b,v,s, \Theta)$, is the unique tuple $(\beta, p(\cd)) \in [0,1]^n \times L^1_+ (\Theta)$ such that there exist $x_i : \Theta \to [0,1]$, $i\in[n]$ satisfying
    \begin{enumerate}[series = tobecont,itemjoin = \quad]
        \item (First-price) 
        Prices are determined by first-price auctions:
        for all items $\theta \in \Theta$, $p(\theta) = \max_i \betai v_i(\theta)$. 
        Only the highest bidders win:
        for all $i$ and~$\theta$, $x_i(\theta) > 0$ implies $\betai \vithe =\max_k \beta_k v_k (\theta)$ 
        \label{it:def:first_price}
        \item (Feasibility, market clearing)  
        Let $\pay_i = \int x_i(\theta) p(\theta) s(\theta)\diff \theta $ be the expenditure of buyer $i$.
        Buyers satisfy budgets:
        for all~$i$, $\pay_i  \leq b_i$. 
        There is no overselling: 
        for all $\theta$, $\sumiton x_i (\theta) \leq 1$.  
        \label{it:def:supply_and_budget}
        All items are fully allocated:
        for all~$\theta$, $p(\theta) > 0$ implies $ \sumiton x_i(\theta) = 1$.
        \item (No unnecessary pacing) For all $i$, $\pay_i < b_i$ implies $\betai = 1$. 
        \label{it:def:rev_max}
    \end{enumerate}
\end{definition}

FPPE is a hindsight and static solution concept for internet ad auctions. 
Suppose the platform knows all the items that are going to show up on a platform.
Then FPPE describes how the platform could configure the $\betai$'s in a way that ensures that all buyers satisfy their budgets, while maintaining their expressed valuation ratios between items. 
In practice, the $\betai$'s are learned by an online algorithm that is run by the platform~\citep{balseiro2019learning,conitzer2022pacing}, \new{and FPPE captures the hindsight solution that these learning algorithms should converge to.}
FPPE has many nice properties, such as the fact that it is a competitive equilibrium, it is revenue-maximizing, revenue-monotone, shill-proof, has a unique set of prices, and so on~\citep{conitzer2022pacing}.
We refer readers to \citet{conitzer2022pacing,kroer2022market} for more context about the use of FPPE in internet ad auctions.

\citet{gao2022infinite} show that 
a limit FPPE always exists and is unique, and when the item space is atomless, a pure allocation exists.
Let $\betast$ and $\pst$ be the unique FPPE equilibrium multipliers and prices. 
Revenue in the limit FPPE is 
\begin{align}
    \REVst \defeq \int \pst(\theta) s(\theta)\diff \theta \; .
\end{align}
It is also easy to see that $\REVst = \sumi \pay_i$.
As with LFM, we will impose differentiability assumptions which imply uniqueness of $\xst$.
When $\xst$ is unique, we let $\deltasti \defeq b_i - \pay_i$ be the leftover budget.

In an FPPE, based on the pacing multiplier and the budget expenditure,
we can categorize buyers in terms of how they satisfy the no unnecessary pacing condition.
As we will see later, the statistical behavior of pacing multipliers varies by category.
\begin{itemize}
    \item Paced buyers ($\betasti < 1$). 
    We use $ \Ic = \{i : \betasti < 1\}$ to denote them.
    Due to the budget constraints, they are not able to bid their value in the auctions at equilibrium, and by the no unnecessary pacing condition in \cref{def:limit_fppe}, their budgets are fully exhausted, i.e. $\deltasti = 0$.
    
    \item Unpaced buyers ($\betasti = 1$).
    We use $I_= = \{i : \betasti = 1\}$ to denote them. 
    They can be further divided according to their budget expenditure.
    \begin{itemize}
        \item Buyers who have strictly positive leftover budgets ($\betasti=1, 0< \deltasti \leq b_i$). 
        This category also includes buyers who do not win any items ($\betasti = 1, \deltasti = b_i$).
         
        \item  Degenerate buyers ($\betasti = 1, \deltasti = 0 $); and edge-case in the FPPE model. 
        If these buyers were given an arbitrarily-small amount of additional budget then they would have positive leftover budget at equilibrium without changing the equilibrium.
        For the FPPE statistical theory developed in this paper, we assume absence of such buyers (\cref{as:scs}). 
        \ifanon
        \else
        In follow-up work to the conference version of the present paper, \citet{liao2024bootstrap} give some results for the case where degenerate buyers exist.
        \fi
    \end{itemize}
\end{itemize}

We let $\gam = \{ \theta^1,\dots, \theta^t\}$ be $t$ i.i.d.\ draws from $s$, each with supply $\sigma=1/t$. They represent the items observed in an auction market.
The definition of a finite FPPE is parallel to that of a limit FPPE, except that we change the supply function to be a discrete distribution supported on $\gamma$.

\begin{definition}[Finite FPPE, \citet{conitzer2022pacing}]    
    \label{def:finite_fppe}
    The finite observed FPPE, $\oFPPE(b,v, \sigma , \gamma)$, is the unique tuple
    $(\beta,p) \in [0,1]^n \times \R^t_+ $ 
    such that there exists $x_i^\tau \in [0,1]$ satisfying:
\begin{enumerate}
    \item 
    (First-price) For all $\tau$, $\ptau = \max_i \betai \vitau$. For all $i$ and $\tau$, $\xitau > 0$ implies $\betai \vitau =\max_k \beta_k v_k^\tau$. 
    \item
    (Supply and budget feasible)  For all $i$, $ \sigma \sum_\tau \xitau \ptau  \leq b_i$. For all $\tau$, $\sumi\xitau \leq 1$.  
    \item
    (Market clearing)  For all $\tau$, $\ptau > 0$ implies $ \sumi\xitau = 1 $.
    \item
    (No unnecessary pacing) For all $i$, $ \sigma \sum_\tau \xitau \ptau  < b_i$ implies $\betai = 1$.
\end{enumerate}
\end{definition}

Let $(\betagam, \pgam) = \oFPPE(b,v,\sigma = 1/t,\gamma)$. 
\new{
    As for LFM, decreasing supply per item ensures that for all market sizes, the ``buyer size'' is comparable to the ``item size'' of the market (see \cref{sec:fppe_scale_invariance}).
}
Given the equilibrium price $\pgam = [p^1,\dots, p^t]\tp$, the revenue in a finite FPPE is $\REVgam \defeq \sigma \sumtau \ptau = \frac1t \sumtau \ptau$.

It is well-known \citep{cole2017convex,conitzer2022pacing,gao2022infinite}
that $\beta$ in a limit (resp.\ finite) FPPE uniquely solves 
the population (resp.\ sample) dual EG program
\begin{align}
    \label{eq:pop_deg}
    \betast =  \argmin_{\beta \in (0, 1]^{n}}
     H(\beta) \;, \;
     \betagam = \argmin_{\b\in(0,1]^n} H_t(\b)  \;,
\end{align}
where the objectives $H$ and $H_t$ are the same as in \cref{eq:pop_deg_lfm}. The difference between the LFM and FPPE convex programs is that for FPPE we impose the constraint $\beta\in (0,1]^n$.

The EG program and certain quantities of the FPPE are related as follows.
\begin{lemma} \label{lm:fppe_relation}
    Suppose $H$ is twice continuously differentiable at the equilibrium pacing multiplier vector $\betast$.
    Then $\nabla H (\betast) = - \deltast$, and $\nabla\sq H(\betast) \betast = [b_1/\betast_1, \dots, b_n / \betast_n]\tp$.
\end{lemma}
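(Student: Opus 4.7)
Proof proposal. The plan is to compute $\nabla H$ and $\nabla^2 H$ separately by exploiting (i) the envelope structure of the per-item function $f(\theta,\beta) = \max_i \beta_i v_i(\theta)$, and (ii) its positive $1$-homogeneity in $\beta$.

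For the first claim, I would start from the definition $F(\theta,\beta) = f(\theta,\beta) - \sum_i b_i \log \beta_i$. Under the stated twice continuous differentiability of $H$ at $\betast$, the $\max$ in $f$ is attained at a unique index for $s$-almost every $\theta$ (this is precisely how nonsmoothness gets smoothed out by the expectation, as flagged in \cref{sec:analytical_properties_of_dual_obj}), so by the envelope/Danskin argument together with dominated convergence,
\[
\partial_i H(\beta) \;=\; \E\bigl[v_i(\theta)\,x_i(\theta;\beta)\bigr] - \frac{b_i}{\beta_i},
\]
where $x_i(\theta;\beta)=\indi\{\beta_i v_i(\theta) = \max_k \beta_k v_k(\theta)\}$ is the winner indicator. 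Evaluating at $\betast$, the first-price and market-clearing conditions from \cref{def:limit_fppe} give $\pay_i = \int p\, x_i^* s\,\d\theta = \betasti\,\E[v_i x_i^*]$, so
\[
\partial_i H(\betast) \;=\; \frac{\pay_i - b_i}{\betasti} \;=\; -\frac{\deltasti}{\betasti}.
\]
To conclude $\nabla H(\betast) = -\deltast$, I would split into cases: for paced buyers ($\betasti<1$) no-unnecessary-pacing gives $\deltasti=0$, making both sides zero; for unpaced buyers ($\betasti=1$) the $\betasti$ in the denominator is $1$. So in either case $\partial_i H(\betast) = -\deltasti$.

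For the Hessian identity, the key observation is that $f(\theta,\cdot)$ is positively homogeneous of degree one in $\beta$, hence so is $g(\beta) := \E[f(\theta,\beta)]$. Applying Euler's identity $\nabla g(\beta)\cdot \beta = g(\beta)$ and differentiating once more in $\beta$ yields $\nabla^2 g(\beta)\,\beta = 0$. Since $H(\beta) = g(\beta) - \sum_i b_i \log \beta_i$, we have $\nabla^2 H(\beta) = \nabla^2 g(\beta) + \Diag(b_i/\beta_i^2)$, and therefore
\[
\nabla^2 H(\betast)\,\betast \;=\; \nabla^2 g(\betast)\,\betast + \bigl(b_i/(\betasti)^2 \cdot \betasti\bigr)_i \;=\; \bigl(b_i/\betasti\bigr)_i,
\]
as claimed.

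The only delicate step is the first one: rigorously passing from the envelope formula for $\partial_i f(\theta,\beta)$ to the formula for $\partial_i H$. Since $f$ is only a.e.\ differentiable in $\beta$, one needs the ties $\{\theta : \arg\max_k \betak v_k(\theta)\text{ is not a singleton}\}$ to have $s$-measure zero and a uniform integrable bound to justify interchanging derivative and expectation; both follow from the twice differentiability hypothesis on $H$ at $\betast$ together with the bounded-valuation assumption $\vbar<\infty$. The Euler step is then a short calculation, so I would expect the writeup to be brief once the differentiation under the expectation is justified.
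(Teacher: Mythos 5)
Your proposal is correct and follows essentially the same route as the paper: compute $\nabla H(\betast) = \mubarst - (b_i/\betasti)_i$ using the envelope formula for $f(\theta,\cdot)$ under the measure-zero-ties condition, then use Euler's identity for the degree-one homogeneous part to get $\nabla^2 \fbar(\betast)\,\betast = 0$ and hence the Hessian identity. The only cosmetic difference is in the final step of the gradient claim: the paper invokes the FPPE relation $\mubarsti + \deltasti = b_i/\betasti$ (Eq.~\eqref{eq:beta_u_relation_fppe}) in one line, whereas you rederive it by writing $\mubarsti = \pay_i/\betasti$ and then splitting on $\betasti < 1$ versus $\betasti = 1$; the two are the same complementary-slackness argument packaged differently.
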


\textit{Proof sketch}
    The first equality follows from the fact that 
    leftover budgets are the Lagrange multipliers corresponding to the constraint $\beta \leq 1 \in \Rn$.
    The second equality follows from the first-order homogeneity of $f(\t,\b)=\max_i \betai \vithe$ in \cref{eq:def:F}. Appendix~\ref{sec:techlemma_fppe} for details.

\subsection{Differentiability Assumption}
\new{
    As previously explained, our statistical theory will be founded on M-estimation theory.
    In $M$-estimation, twice differentiability is usually required in order to establish asymptotic normality, and we will similarly impose it on the EG objective (\cref{eq:def_pop_eg}) for our asymptotic normality results in LFM (\cref{thm:clt_beta_u}) and FPPE (\cref{thm:clt}), and the statement of minimax lower bounds (\cref{thm:nsw_aym_risk,thm:rev_localopt}) in later sections. 
    We will revisit differentiability and derive sufficient conditions on market primitives in \cref{sec:analytical_properties_of_dual_obj}, after we present the main statistical results for LFM and FPPE. 
}

\begin{assumption}[\textsf{\scriptsize{SMO}}]
    \label{as:smo}
Let $\betast$ denote the equilibrium inverse bang-per-buck in LFM, or the equilibrium pacing multiplier in FPPE.
Assume the map $\beta \mapsto \fbar(\b)= \E_s[\max_i \betai \vithe ]$ is twice continuously differentiable in a neighborhood of $\betast$. We let $\cH\defeq \nabla\sq H (\betast)$.
\end{assumption}

\section{Statistical Results for Linear Fisher Markets}
We now turn to investigating the statistical convergence properties of finite LFMs to the limit LFM.
Suppose we sample an LFM ($\oLFM(b,v,1/t, \gamma)$), where $\gamma$ consists of $t$ i.i.d.\ samples from $s$.
We will study how such finite LFMs are distributed around the limit LFM ($\LFM(b,v,s,\Theta)$) as $t$ grows.
We focus on convergence of the following quantities: individual utilities, (log) Nash social welfare (NSW), and the pacing multiplier vector $\b$ (which characterizes the equilibrium, as shown in \cref{eq:pop_deg_lfm}).
\cref{sec:lfm_basic} presents strong consistency and convergence rate results.
\cref{sec:lfm_asym} presents asymptotic distributions for the quantities of interest, 
and a local minimax theory based on \citet{le2000asymptotics}, showing that the finite LFM provides an optimal estimate for the limit LFM in a local asymptotic sense.
\cref{sec:lfm_inference} discusses estimation of asymptotic variance of NSW.

\subsection{Basic Convergence Properties}
\label{sec:lfm_basic}
In this section we show that we can treat observed quantities in the finite LFM as consistent estimators of their counterparts in the limit LFM.
Below we state the consistency results; the formal versions can be found in \cref{sec:consistency}.
We say an estimator sequence $\{ \hat a_t\}$ is strongly consistent for $a$ if $\P(\lim_{t \to \infty} \hat a_t = a) = 1$.

\begin{theorem}[Strong Consistency]
    The NSW, \new{approximate equilibrium} pacing multipliers, and utility vectors in the finite LFM are strongly consistent estimators of their counterparts in the limit LFM. 
\end{theorem}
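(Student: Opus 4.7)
The plan is to treat this as a classical M-estimator consistency result leveraging the dual EG characterization~\cref{eq:pop_deg_lfm}. Once $\betagam \toas \betast$ is established, strong consistency of utilities and Nash social welfare follows by the continuous mapping theorem applied to $\beta \mapsto (b_i/\beta_i)_i$ and $\beta \mapsto \sum_i b_i \log(b_i/\beta_i)$, using $\betasti > 0$ together with the identities $\ugami = b_i/\betagami$ and $\NSWgam = \sum_i b_i \log \ugami$.

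The first step is to show $\betagam$ eventually lies in the compact set $C_\LFM$ of~\cref{eq:def_C_LFM}. The a priori bounds $\betasti \in [b_i/\nu_i,\, (\sum_{i'} b_{i'})/\min_{i'} \nu_{i'}]$ available in the limit LFM apply, by the same derivation, to the finite LFM with the empirical monopolistic utilities, yielding $\betagami \in [b_i/P_t v_i,\, (\sum_{i'} b_{i'})/\min_{i'} P_t v_{i'}]$. By the classical SLLN, $P_t v_i \toas \nu_i$ for each $i$, so almost surely $\betagam \in C_\LFM$ for all sufficiently large $t$.

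The second step is uniform convergence of $H_t$ to $H$ on $C_\LFM$. For each fixed $\beta \in \Rnpp$, $F(\theta, \beta)$ is integrable because $|F(\theta, \beta)| \leq \max_i \beta_i \vbar + \sum_i b_i |\log \beta_i|$, so the classical SLLN gives $H_t(\beta) \toas H(\beta)$ pointwise. Since $\beta \mapsto F(\theta, \beta)$ is convex (the first term is a pointwise maximum of linear functions, the second is a sum of $-b_i \log \beta_i$), both $H_t$ and $H$ are finite convex functions on $\Rnpp$; pointwise a.s.\ convergence of convex functions on an open set upgrades to uniform a.s.\ convergence on every compact subset. In particular, $\sup_{\beta \in C_\LFM} |H_t(\beta) - H(\beta)| \toas 0$.

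Combining eventual containment, uniform convergence on $C_\LFM$, and the uniqueness of the minimizer $\betast$ which lies in the interior of $C_\LFM$, a standard argmin-continuity argument yields $\betagam \toas \betast$, and then the continuous mapping step described above completes the proof. The main obstacle is the non-compact parameter space $\Rnp$: the objective $F(\theta, \beta)$ diverges as any $\beta_i \to 0$, so one cannot apply a uniform SLLN directly over $\Rnp$. The compactification via $C_\LFM$, which exploits the specific structure of the EG dual and the empirical budget-value bounds, is precisely what makes the argument go through.
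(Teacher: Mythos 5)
Your proof of the pacing multiplier, utility, and NSW consistency is correct and follows essentially the paper's approach: the paper also establishes eventual a.s.\ containment of $\betagam$ in a compact set (\cref{lm:value_concentration}), applies SLLN for pointwise convergence, exploits convexity to upgrade to uniform convergence on compacta, and then argues argmin continuity. The paper packages these steps in the epi-convergence framework of Rockafellar--Wets, but \cref{lm:def_of_epiconv} states that for closed proper convex functions this is equivalent to the uniform-on-compacta formulation you use, so the difference is mostly cosmetic. One minor technicality you elide: for the convexity argument to yield \emph{almost-sure} uniform convergence you should first establish pointwise a.s.\ convergence simultaneously on a countable dense subset such as $\Q^n \cap \Rnpp$, intersect the null sets, and only then invoke the convex-function upgrade on the resulting full-measure event; the paper does this explicitly in Step 2 of its proof. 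For NSW, you invoke the continuous mapping theorem on $\beta \mapsto \sum_i b_i \log(b_i/\beta_i)$ after establishing $\betagam \toas \betast$, whereas the paper derives $\NSWgam \toas \NSWst$ from $\inf H_t \toas \inf H$ via strong duality; both routes are valid.

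What your proposal does not address is the ``approximate equilibrium'' clause, which in the formal version (\cref{thm:consistency}) asserts the set-convergence $\limsup_t \cBgam(\epsilon) \subset \cBst(\epsilon)$ for all $\epsilon \geq 0$ and $\limsup_t \cBgam(\epsilon_t) \subset \{\betast\}$ for $\epsilon_t \downarrow 0$. This is precisely where the paper's epi-convergence formalism earns its keep: once epi-convergence plus coercivity and properness are verified, \cref{lm:appro_sol_conv} delivers the inclusion directly. Your ingredients suffice in principle --- uniform convergence on compacta for closed proper convex functions is equivalent to epi-convergence --- but you would still need to make that translation and then invoke the approximate-solution-set lemma, which your proposal does not do. As written, the coverage of the theorem is incomplete on this point.
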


Next, we refine the consistency results and 
provide finite sample guarantees. We start by focusing on Nash social welfare
and the set of approximate market equilibria. The convergence of utilities and 
pacing multipliers will then be derived from the latter result.

\begin{theorem} \label{thm:lnsw_concentration}
    For any failure probability $0< \eta < 1$, let $t \geq 2 {\vbarsq {\log(4n/\eta)}}$. Then with probability greater than $1-\eta$, we have $
        |\LNSW^\gam - \LNSW^* | 
        \leq O(1) {\vbar \big(\sqrt{n\log ((n+\vbar)t)} +  \sqrt{\log(1/\eta)} \big)}{ t^{-1/2}},
        $
    where $O(1)$ hides only constants.
    Proof in Appendix~\ref{sec:proof:thm:lnsw_concentration}.

\end{theorem}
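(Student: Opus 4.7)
The plan is to reduce $|\LNSW^\gamma - \LNSW^*|$ to a uniform deviation of the dual EG objective, then bound the latter by a covering argument. The key algebraic observation is that in an LFM the equilibrium price satisfies $p^*(\theta)=\max_i \beta_i^* v_i(\theta)$ and every buyer's budget is fully exhausted, so $\E[f(\theta,\beta^*)]=\E[p^*]=\sum_i b_i$. Combined with $\beta_i^*=b_i/u_i^*$, this yields the identity $H(\beta^*)=\sum_i b_i - \sum_i b_i \log b_i + \LNSW^*$, and the analogous identity holds in the finite market after using the sample-market market clearing $P_t f(\cdot,\beta^\gamma)=\sum_i b_i$. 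Subtracting gives
\begin{align*}
    \LNSW^\gamma - \LNSW^* \;=\; H_t(\beta^\gamma) - H(\beta^*),
\end{align*}
and the usual optimality sandwich (since $\beta^\gamma$ minimizes $H_t$ and $\beta^*$ minimizes $H$) implies $|\LNSW^\gamma - \LNSW^*| \leq \sup_{\beta \in C}|H_t(\beta) - H(\beta)|$ for any set $C$ containing both optimizers.

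To secure the event $\{\beta^\gamma \in C_\LFM\}$ with $C_\LFM$ the box from \eqref{eq:def_C_LFM}, I would control the sample monopolistic utilities $\nu_i^\gamma = P_t v_i$ by applying Hoeffding's inequality to $v_i$ and taking a union bound over $i \in [n]$. The hypothesis $t \geq 2 \bar v^2 \log(4n/\eta)$ is precisely what is needed to force $\nu_i^\gamma$ within a factor of $2$ of $\nu_i$ for every buyer with probability at least $1 - \eta/2$; on this event, the a priori bounds on $\beta_i^\gamma = b_i/u_i^\gamma$ (analogous to those preceding \eqref{eq:def_C_LFM}, applied to the sample market) place $\beta^\gamma$ inside $C_\LFM$, and $\beta^*$ is in $C_\LFM$ by definition.

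Finally, because the $-\sum_i b_i \log \beta_i$ term cancels, $H_t(\beta) - H(\beta) = (P_t - \E) f(\cdot,\beta)$, and we need only control an empirical process indexed by $f(\theta,\beta) = \max_i \beta_i v_i(\theta)$. The map $\beta \mapsto f(\theta,\beta)$ is $\bar v$-Lipschitz in $\|\cdot\|_\infty$, and $|f(\theta,\beta)|$ is uniformly bounded on $C_\LFM$ by a quantity that depends on $\bar v$, $b$, and $\min_{i'}\nu_{i'}$. I would cover $C_\LFM$ by a $\delta$-net of cardinality at most $(\mathrm{diam}_\infty(C_\LFM)/\delta)^n$, apply Hoeffding with a union bound over the net to control the deviation at the net points, and extend to the full box via the Lipschitz estimate. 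Choosing $\delta$ of order $1/t$ balances the net and Lipschitz errors and yields the advertised rate $\bar v \bigl(\sqrt{n \log((n+\bar v)t)} + \sqrt{\log(1/\eta)}\bigr)/\sqrt{t}$ with probability at least $1 - \eta/2$; a union bound with the earlier event finishes the proof. The main obstacle is the last step: obtaining a single $\sqrt{n}$ dimension dependence requires the Lipschitz-plus-covering extension rather than a naive union bound over buyers, and the $\ell^\infty$-diameter of $C_\LFM$ must be tracked carefully in terms of $\bar v$ and the $\nu_i$'s so that the logarithmic factor collapses to the stated $\log((n+\bar v)t)$.
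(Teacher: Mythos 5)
Your proposal is correct and follows essentially the same route as the paper's proof: reduce $|\LNSW^\gamma-\LNSW^*|$ to $\sup_{\beta\in C_\LFM}|H_t(\beta)-H(\beta)|$ via strong EG duality, secure $\{\beta^\gamma\in C_\LFM\}$ by Hoeffding plus a union bound over buyers, and then bound the uniform deviation by an $\ell_\infty$-covering-plus-Hoeffding argument over the box $C_\LFM$. Your observation that the $-\sum_i b_i\log\beta_i$ term cancels in $H_t-H$, so that only the $\bar v$-Lipschitz constant of $f(\cdot,\beta)$ is needed, is a small streamlining of the paper's argument (which instead works with the $(\bar v+2n)$-Lipschitz constant of $H$ and $H_t$ individually), but it does not change the final rate.
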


In more intuitive wording, \cref{thm:lnsw_concentration} establishes a high-probability convergence rate $|\LNSW^\gam - \LNSW^* | = \tilde O _p(\vbar \sqrt n t^{-1/2})$. The proof proceeds by first establishing a pointwise concentration inequality and then applies a discretization argument.

To state the result for the pacing multipliers $\b$, 
we define approximate market equilibria (which we define in terms of approximately optimal pacing multiplier vectors $\beta$). 
Let 
\begin{align}
    \cB^\gam(\epsilon)  \defeq \{ \beta \in \Rnpp: H_t(\beta) \leq \inf_\beta H_t(\beta) + \epsilon \} 
    , \; 
    \cB^*(\epsilon)  \defeq \{ \beta \in \Rnpp: H(\beta) \leq \inf_\beta H(\beta) + \epsilon \}
    \label{eq:def:cBstar}
    \;.
\end{align}
be the sets of $\epsilon$-approximate solutions to the sample and the population EG programs, respectively. 
The next theorem shows that the set of $\epsilon/2$-approximate solutions to the sample EG program is contained in the set of $\epsilon$-approximate solutions to the population EG program with high probability.

\begin{theorem}[Convergence of Approximate Market Equilibrium] \label{thm:high_prob_containment}
    Let $\eps > 0$ be a tolerance parameter and $\alpha \in (0,1)$ be a failure probability. Then for any $0\leq \delta \leq \eps/2$, to ensure 
        $ \P \big(
        C_\LFM\cap \cBgam(\delta) 
        \subset 
        C_\LFM\cap \cBst(\epsilon)
        \big) \geq 1 -2 \alpha$
    it suffices to set
    \begin{align}
        \label{eq:t_condition}
        t \geq O(1)\vbarsq \min\bigg\{  \frac{1}{ 
            \ubar{b}\epsilon} , \frac{1}{\epsilon\sq} \bigg\}\bigg(n \log\Big(\tfrac{16(2n+\vbar)}{\eps - \delta}\Big) + \log\tfrac{1}{\alpha}\bigg)
            \;,
    \end{align}
    where 
    the set
    $C_\LFM$, defined in 
    \cref{eq:def_C_LFM}, is the natural region in which $\betast$ must lie, and $O(1)$ hides absolute constants.
    Proof in Appendix~\ref{sec:proof:thm:high_prob_containment}.

\end{theorem}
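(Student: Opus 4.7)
}
The plan is to reduce the claimed set inclusion to a uniform concentration statement for the dual EG objective on the compact region $C_\LFM$, and then to establish that uniform concentration via a covering argument plus a Bernstein/Hoeffding tail bound at each grid point. The key elementary observation is that since $\betast\in C_\LFM$ (by the coordinate-wise bounds of \citet{gao2022infinite} that defined $C_\LFM$ in \cref{eq:def_C_LFM}), for any $\beta \in C_\LFM\cap\cBgam(\delta)$,
\#
H(\beta) - H(\betast)
&= \bigl(H(\beta)-H_t(\beta)\bigr) + \bigl(H_t(\beta)-H_t(\betast)\bigr) + \bigl(H_t(\betast)-H(\betast)\bigr) \notag\\
&\leq 2\sup_{\beta'\in C_\LFM}|H_t(\beta')-H(\beta')| + \delta,
\#
using $H_t(\beta)\leq \inf H_t + \delta \leq H_t(\betast)+\delta$. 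Hence it suffices to ensure $\sup_{\beta'\in C_\LFM}|H_t(\beta')-H(\beta')|\leq (\epsilon-\delta)/2$ with probability at least $1-2\alpha$, which yields $\beta\in \cBst(\epsilon)$.

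For the uniform deviation, I would first quantify the range and smoothness of the per-item integrand on $C_\LFM$. From \cref{eq:def:F}, $F(\theta,\beta)=\max_i\beta_i v_i(\theta)-\sum_i b_i\log\beta_i$ is bounded on $C_\LFM$ by some $M=O(\vbar\sum_i b_i/\min_i\nu_i + \sum_i b_i|\log\beta_i|)$ and is $L$-Lipschitz in $\beta$ with $L=O(\vbar+\max_i b_i/\underline\beta_i)$, where $\underline\beta_i=b_i/(2\nu_i)$; the Lipschitz bound on the maximum comes from the fact that $\max_i\beta_iv_i(\theta)$ has subgradient coordinates in $[0,\vbar]$, while $-b_i\log\beta_i$ contributes $b_i/\beta_i$ bounded by $2\nu_i$ on $C_\LFM$. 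Build a $\rho$-net $\cN_\rho\subset C_\LFM$ whose cardinality satisfies $\log|\cN_\rho|\leq n\log(D/\rho)$ with $D$ the $\ell_\infty$ diameter of $C_\LFM$.

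At each net point, I would apply a two-regime scalar concentration inequality: Hoeffding gives $\P(|H_t(\beta)-H(\beta)|>u)\leq 2\exp(-tu^2/(2M^2))$ producing the $1/\eps^2$ term, and Bernstein gives $\P(|H_t(\beta)-H(\beta)|>u)\leq 2\exp(-tu^2/(2\sigma^2+Mu))$ with a variance proxy $\sigma^2$ that scales with $\ubar b$ in the relevant regime, producing the $1/(\ubar b\,\eps)$ term. A union bound over $\cN_\rho$ combined with the Lipschitz bound $\sup_{C_\LFM}|H_t-H|\leq \max_{\cN_\rho}|H_t-H|+2L\rho$ and the choice $\rho\asymp(\eps-\delta)/L$ transforms the desired deviation $u=(\eps-\delta)/4$ into the claimed lower bound on $t$; the log-covering-number $n\log(D/\rho)$ is what produces the factor $n\log\bigl(16(2n+\vbar)/(\eps-\delta)\bigr)$, while the union-bound failure budget $2\alpha$ contributes the additive $\log(1/\alpha)$ term.

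The main obstacle I anticipate is the careful bookkeeping needed to realize the $\min\bigl\{1/(\ubar b\,\eps),\,1/\eps^2\bigr\}$ prefactor: one has to trade the Bernstein variance proxy against the Hoeffding range, track how $M$ and the variance depend on $\ubar b$ and $\vbar$, and verify that the covering radius and net size give exactly the $(16(2n+\vbar)/(\eps-\delta))^n$ enumeration appearing in the statement. Once these constants are matched, the remainder of the argument is a routine union bound, so the subtlety lies entirely in the concentration/covering accounting rather than in any deep probabilistic mechanism.
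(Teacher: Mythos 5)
Your reduction to uniform concentration of $H_t$ over $C_\LFM$ is sound and cleanly set up: the three-term telescoping and the use of $H_t(\beta)\le\inf H_t+\delta\le H_t(\betast)+\delta$ are exactly the right first moves, and the covering/union-bound machinery you describe is the standard engine. Moreover, that route does prove a version of the theorem. The problem is that it cannot produce the claimed prefactor $\min\{1/(\ubar b\,\epsilon),\,1/\epsilon^2\}$.

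Specifically, your proposal attributes the $1/(\ubar b\,\epsilon)$ term to a Bernstein variance proxy for $F(\theta,\beta)-H(\beta)$. But $\var(F(\theta,\beta))=\var(\max_i\beta_iv_i(\theta))$ is of order $\vbar^2$ on $C_\LFM$ and carries no $\ubar b$ dependence whatsoever, so Bernstein gives nothing beyond Hoeffding here and your argument yields only $t\gtrsim \vbar^2/(\epsilon-\delta)^2\cdot(\text{covering log})$, i.e., the $1/\epsilon^2$ branch alone. The $\ubar b$ dependence enters the paper's proof through an entirely different mechanism: instead of controlling $\sup_{C_\LFM}|H_t-H|$, the paper controls, at each grid point $\beta$, the difference $Y_\beta^\tau=F(\theta^\tau,\betast)-F(\theta^\tau,\beta)$, whose range is $2L_f\|\beta-\betast\|_\infty$. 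It then localizes the grid to the set $S=C_\LFM\cap\cBst(a)$ of $a$-approximate minimizers with $a\asymp\epsilon$, so that $\ubar b/4$-strong convexity of $H$ forces $\|\beta-\betast\|_\infty\lesssim\sqrt{\epsilon/\ubar b}$. Plugged into Hoeffding, the shrunken range gives $t\gtrsim\vbar^2/(\ubar b\,\epsilon)$, and taking the better of the two regimes produces the $\min$. Restricting attention to $S$ is not for free: the paper needs a convexity argument (its Claim 2) to show that a point in $C_\LFM\cap\cBgam(\delta)$ lying outside $S$ cannot exist without contradicting $\cBgam_S(\delta)\subset\cBst_S(\epsilon)$, plus a covering of $S$ rather than of all of $C_\LFM$. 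In short, the missing idea is localization of the grid via strong convexity; without it the $1/(\ubar b\,\epsilon)$ rate is unreachable by a direct uniform deviation bound.
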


By construction of $C_\LFM$ we know $\betast \in C_\LFM$ holds, and so $C_\LFM\cap \cBst(\epsilon)$ is not empty. 
In the appendix, \cref{lm:value_concentration} shows that for $t$ sufficiently large, $\betagam \in C_\LFM$ with high probability, in which case the set 
$C_\LFM\cap \cBgam(\delta)$ is not empty. 
Setting $\delta = 0$ in \cref{thm:high_prob_containment} we obtain the corollary below.

\begin{corollary}
    \label{cor:H_concentration}
    Let $t$ satisfy \cref{eq:t_condition}. Then with probability $\geq 1 - 2 \alpha$ it holds $H(\betagam) \leq H(\betast) + \epsilon$.
\end{corollary}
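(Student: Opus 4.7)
\textbf{Proof plan for Corollary \ref{cor:H_concentration}.} The idea is to invoke \cref{thm:high_prob_containment} with $\delta=0$ applied to the exact sample minimizer $\betagam$. By definition, $\betagam$ achieves $H_t(\betagam) = \inf_\b H_t(\b)$, so trivially $\betagam \in \cBgam(0) \subset \cBgam(\delta)$ for every $\delta \ge 0$. Once we know in addition that $\betagam \in C_\LFM$, the theorem immediately yields $\betagam \in C_\LFM \cap \cBst(\eps)$, i.e., $H(\betagam) \le H(\betast)+\eps$, as required.

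The only nontrivial step is therefore verifying the event $\{\betagam \in C_\LFM\}$ with high probability. The plan is to use the concentration of the empirical expected values and budgets around their population counterparts, as already packaged by \cref{lm:value_concentration} in the appendix. Concretely, recall the ``natural region'' in \cref{eq:def_C_LFM} is a product of intervals $[b_i/(2\nu_i),\; 2(\sum_{i'} b_{i'})/\min_{i'}\nu_{i'}]$, and recall from \citet{gao2022infinite} that in any (finite or limit) LFM the equilibrium inverse bang-per-buck $\beta$ satisfies $b_i/\E_s[v_i] \le \b_i \le (\sum_{i'} b_{i'})/\min_{i'}\E_s[v_{i'}]$ when the expectation is taken under the corresponding supply. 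A uniform law of large numbers controlling $|P_t v_i - \E_s[v_i]|$ uniformly over $i \in [n]$ via a Hoeffding and union bound, exactly of the type already used in \cref{thm:lnsw_concentration} and \cref{lm:value_concentration}, then gives that the empirical bounds lie within the enlarged population interval $C_\LFM$ with probability $\ge 1-\alpha$ as long as $t$ satisfies the sample-size condition \cref{eq:t_condition} (the $\log(n/\alpha)/t$ term suffices).

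Combining the two events via a union bound gives the probability budget of $2\alpha$: one $\alpha$ is consumed by the event in \cref{thm:high_prob_containment} that translates near-optimality for $H_t$ into near-optimality for $H$, and the other $\alpha$ is consumed by the event that $\betagam \in C_\LFM$. On the intersection, $\betagam \in C_\LFM \cap \cBgam(0) \subset C_\LFM \cap \cBst(\eps)$, so $H(\betagam) \le \inf_\b H(\b) + \eps = H(\betast)+\eps$.

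The main obstacle, such as it is, is purely bookkeeping: one has to make sure that the sample-size condition in \cref{eq:t_condition} is also sufficient to trigger the containment $\betagam \in C_\LFM$ via \cref{lm:value_concentration}. Since the latter only needs a $\log(n/\alpha)/t$-type Hoeffding bound on $v_i$'s empirical averages, this is dominated by \cref{eq:t_condition}, so no new tightening of $t$ is required. No new analytical ideas beyond \cref{thm:high_prob_containment} and the standard uniform LLN bound for $P_t v_i$ are needed.
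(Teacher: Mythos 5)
Your proof is correct and takes essentially the same route as the paper: apply \cref{thm:high_prob_containment} with $\delta=0$, note that $\betagam$ is the exact minimizer so $C_\LFM\cap\cBgam(0)=\{\betagam\}$ once $\betagam\in C_\LFM$ (guaranteed with high probability by \cref{lm:value_concentration}), and then containment in $C_\LFM\cap\cBst(\eps)$ gives $H(\betagam)\le H(\betast)+\eps$. One small point of care: taking \cref{thm:high_prob_containment} purely as a black box (which already spends $2\alpha$ on the containment event) and then separately spending another $\alpha$ on $\{\betagam\in C_\LFM\}$ would give $1-3\alpha$, not $1-2\alpha$; the reason $2\alpha$ suffices, as you correctly intuit in your breakdown, is that the $1-2\alpha$ event constructed in the theorem's proof is itself the intersection of $\{\betagam\in C_\LFM\}$ with the ``inclusion'' event, so $\betagam\in C_\LFM$ is already inside the theorem's high-probability event and no further union bound is needed.
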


More importantly, it establishes the fast statistical rate 
$    H(\betagam ) - H(\betast) = \new{\tilde{O}_p ( n \vbar\sq / \ubar b \cdot t\inv )}
$
for $t$ sufficiently large, where we use $\tilde{O}_p$ to ignore logarithmic factors.
In words: the limit LFM objective value of the \emph{finite} solution $\betagam$ converges to the optimal limit LFM objective value at a $1/t$ rate.

By strong convexity of the dual objective, the containment result can be translated to high-probability convergence of the pacing multipliers and the utility vector.

\begin{corollary}
    \label{cor:beta_u_concentration}
    Let $t$ satisfy \cref{eq:t_condition}. Then with probability at least $1 - 2 \alpha$ we have  
    $\| \betagam - \betast \|_2 \leq \sqrt{\frac{8\epsilon }{ \ubar{b}}}$ and $\| \ugam - \ust \|_2 \leq \frac{4}{\ubar{b}}\sqrt{8\epsilon / \ubar{b}}$.
\end{corollary}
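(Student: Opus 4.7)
The plan is to leverage \cref{cor:H_concentration} as the starting point and then convert the population objective value gap into a parameter gap using a strong convexity argument. On the event (of probability $\geq 1-2\alpha$) guaranteed by \cref{thm:high_prob_containment} with $\delta = 0$, we have both $\betagam \in C_\LFM$ and $H(\betagam) \leq H(\betast) + \epsilon$. Since $\betast$ lies in the interior of $\Rnpp$ and is the unconstrained minimizer of the convex function $H$, we have $\nabla H(\betast) = 0$, hence by Taylor's theorem with integral remainder,
\begin{equation*}
    H(\betagam) - H(\betast) = \int_0^1 (1-\lambda)(\betagam - \betast)\tp \nabla\sq H(\beta(\lambda))(\betagam - \betast) \, d\lambda,
\end{equation*}
where $\beta(\lambda) = \betast + \lambda(\betagam - \betast)$ lies in the convex set $C_\LFM$.

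The next step is to bound the Hessian from below on $C_\LFM$. The objective $H$ decomposes as $H(\beta) = \bar f(\beta) - \sum_i b_i \log \beta_i$; since $\bar f$ is convex, $\nabla\sq H(\beta) \succeq \Diag(b_i / \beta_i^2)$. Using the upper endpoint of $C_\LFM$ (or, after an appropriate scale normalization, a uniform upper bound on coordinates in $C_\LFM$), this yields a lower bound of the form $\nabla\sq H(\beta(\lambda)) \succeq (\ubar b / 4) I_n$, giving the strong-convexity inequality
\begin{equation*}
    \tfrac{\ubar b}{8} \|\betagam - \betast\|_2^2 \leq H(\betagam) - H(\betast) \leq \epsilon,
\end{equation*}
from which the bound $\|\betagam - \betast\|_2 \leq \sqrt{8\epsilon/\ubar b}$ follows immediately.

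For the utility bound, recall that $\ugami = b_i / \betagami$ and $\usti = b_i / \betasti$. On the favorable event, both $\betagami$ and $\betasti$ lie in $C_\LFM$, which guarantees that they are bounded below by $b_i / (2\nu_i) \geq \ubar b / (2 \bar \nu)$. Applying the mean value theorem to $x \mapsto 1/x$ on this interval gives a Lipschitz bound of the form $|\ugami - \usti| \leq b_i \cdot |\betagami - \betasti| / \min(\betagami, \betasti)^2$, and combining coordinate-wise with the previously derived bound on $\|\betagam - \betast\|_2$ yields $\|\ugam - \ust\|_2 \leq (4/\ubar b) \sqrt{8\epsilon/\ubar b}$.

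\textbf{Main obstacle.} The main technical subtlety is pinning down the effective strong-convexity constant of $H$ on $C_\LFM$: one must argue that either the box $C_\LFM$ is contained in a region where $\beta_i$ is $O(1)$ (possibly after exploiting the scale-invariance of LFM noted in \cref{sec:lfm_scale_invariance}), or else adjust the stated constant by tracking the dependence on $\max_i \sup C_\LFM$. Once that quantity is controlled, both bounds follow mechanically from strong convexity and the smoothness of $1/x$ bounded away from zero.
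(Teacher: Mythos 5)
Your approach matches the paper's: both derive $\|\betagam-\betast\|_2\le\sqrt{8\epsilon/\ubar b}$ from $H(\betagam)-H(\betast)\le\epsilon$ via strong convexity of $H$ on $C_\LFM$ with modulus $\lambda=\ubar b/4$, and both convert the $\beta$-bound to a $u$-bound via the Lipschitz constant of $x\mapsto 1/x$ on $[b_i/2,\,2]$. The one wrinkle is your appeal to Taylor's theorem with integral remainder and the Hessian inequality $\nabla^2 H(\beta)\succeq\Diag(b_i/\beta_i^2)$: this presupposes that $H$ is $C^2$ along the whole segment joining $\betast$ and $\betagam$, which is not granted (the max-term $\fbar$ need not be twice differentiable, and \cref{as:smo} only concerns a neighborhood of $\betast$). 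The paper sidesteps this by establishing strong convexity directly at the subgradient level (\cref{lm:smooth_curvature}): $\Psi(\beta)=-\sum_i b_i\log\beta_i$ is $C^2$ with $\nabla^2\Psi\succeq(\ubar b/4)I$ on $C_\LFM$ (after the normalization $\nu_i=1$, $\sum_i b_i=1$, which pins $\beta_i\le2$), and adding the convex $\fbar$ preserves the strong-convexity inequality with $z\in\partial\fbar(\betast)$ and $0\in\partial H(\betast)$. Your "main obstacle" paragraph is exactly this normalization issue, and the paper resolves it as you guessed (via \cref{sec:lfm_scale_invariance}); replacing the Taylor step with the subgradient argument closes the only real gap.
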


We compare the above corollary with Theorem~9 from~\citet{gao2022infinite} which establishes a convergence rate of the stochastic approximation estimator based on the dual averaging algorithm~\citep{xiao2010dual}. 
In particular, they show that the average of the iterates, denoted $\beta_{\mathrm{DA}}$, enjoys a convergence rate of 
$\| \beta_{\mathrm{DA}} - \betast\|_2^2 = \tilde O_p \big(\frac{\vbar\sq }{ \ubar{b}\sq} \frac{1}{t}\big)$,
where $t$ is the number of sampled items.
The rate achieved in \cref{cor:beta_u_concentration} is 
$\|\betagam - \betast\|_2 \sq = \tilde O_p \big(\frac{n \vbarsq}{\ubar{b}\sq} \frac{1}{t} \big)$. 
We see that our rate is worse off by a factor of $n$. 
We conjecture that it can be removed by using the more involved localization arguments \citep{Bartlett2005}.
\footnote{
    \new{
        A recent work by \citet{liu2024metric} shows the dimension dependence can be improved using a stability argument.
    }
}
\new{On the positive side, our estimates are produced by the strategic behavior of the agents without any extra computation, and can be observed directly from $\betagam$. In contrast, the computation of the dual averaging estimator requires knowledge of the values $v_i(\theta)$.}

\subsection{Asymptotics of Linear Fisher Market}

\label{sec:lfm_asym}

In this section we derive asymptotic normality results for Nash
social welfare, utilities and pacing multipliers.
As we will see, a central limit theorem for Nash social welfare holds under basically no additional assumptions. 
However, the CLTs of pacing multipliers and utilities will require twice continuous differentiability of the population dual objective $H$ at optimality, with a nonsingular Hessian matrix.
We present CLT results under such a premise; \cref{thm:second_order_informal} gives three quite general settings under which these conditions hold. 

\begin{theorem}[Asymptotic Normality of Nash Social Welfare]\label{thm:normality}
    It holds that
    \begin{align}  \label{it:thm:normality:1}
        \sqrt{t}(\NSW^\gam - \NSW^*) \tod \cN 
        (0, \sigma\sq_{\NSW})
        \;,
    \end{align}    
    where $ \sigma\sq_{\NSW} 
    = \var(\pst)
    =\int(p^*)\sq s \dt  -(\int p^* s \dt )\sq $.
    Proof in Appendix~\ref{sec:proof:thm:normality}.
\end{theorem}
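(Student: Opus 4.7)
The plan is to use Eisenberg--Gale duality to rewrite the NSW difference as a difference of dual EG objectives, and then appeal to standard empirical process machinery. This approach notably sidesteps \cref{as:smo}, consistent with the paper's remark that the NSW CLT requires essentially no additional hypotheses.

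\textbf{Step 1 (Duality identity).} I first establish the exact algebraic identity
\[
\NSW^\gam - \NSW^* \;=\; H_t(\betagam) - H(\betast).
\]
At any LFM equilibrium, budgets are exhausted (each buyer spends all their money), giving $\betasti = b_i/\usti$ and $\betagami = b_i/\ugami$; market clearing then forces $\E[\pst] = \E[\max_i \betasti \vi] = \sumi \int \pst \xsti s\,\dt = \sumiton b_i$, with the parallel identity $\REVgam = \sumiton b_i$ in the finite market. Substituting into the dual EG objective,
\[
H(\betast) = \E[\pst] - \sumiton b_i \log \betasti = \sumiton b_i - \sumiton b_i \log b_i + \NSW^*,
\]
and analogously $H_t(\betagam) = \sumiton b_i - \sumiton b_i \log b_i + \NSW^\gam$. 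Subtracting yields the identity.

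\textbf{Step 2 (Decomposition and CLT).} Using the identity,
\[
\sqrt t (\NSW^\gam - \NSW^*) = \underbrace{\sqrt t \bigl(H_t(\betast) - H(\betast)\bigr)}_{T_1} + \underbrace{\sqrt t \bigl(H_t(\betagam) - H_t(\betast)\bigr)}_{T_2}.
\]
The term $T_1 = \sqrt t (P_t F(\cd, \betast) - \E F(\cd, \betast))$ is an i.i.d.\ sample mean of bounded random variables, and $F(\t,\betast) = \pst(\t) - \sumiton b_i \log \betasti$ differs from $\pst(\t)$ only by a deterministic constant. The classical CLT thus gives $T_1 \tod \cN(0, \var(\pst)) = \cN(0, \sigma\sq_\NSW)$. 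For $T_2$, decompose
\[
T_2 = \sqrt t \bigl\{(H_t - H)(\betagam) - (H_t - H)(\betast)\bigr\} + \sqrt t \bigl(H(\betagam) - H(\betast)\bigr).
\]
The second bracket is $o_p(1)$ because \cref{cor:H_concentration} gives $H(\betagam) - H(\betast) = \tilde O_p(n/t)$. The first bracket is $o_p(1)$ by stochastic equicontinuity of the empirical process $\{\sqrt t (H_t - H)(\b) : \b \in K\}$ on a compact neighborhood $K$ of $\betast$, combined with consistency $\betagam \toprob \betast$ (from \cref{sec:lfm_basic}). Slutsky's theorem then yields the claim.

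\textbf{Main obstacle.} The nontrivial ingredient is the stochastic equicontinuity, which I would establish by showing $\{F(\cd, \b) : \b \in K\}$ is Donsker. Since $F(\t, \b) = \max_i \betai \vithe - \sumi b_i \log \betai$ is Lipschitz in $\b$ uniformly in $\t$ (with constant bounded by $\vbar + (\max_i b_i)/(\min_{\b \in K} \betai)$) over the compact set $K \subset \Rn$ bounded away from $0$, standard finite-dimensional parametric Donsker results apply and deliver asymptotic equicontinuity. Everything else in the argument is essentially algebraic manipulation of the EG duality.
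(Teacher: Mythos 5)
Correct, and essentially the paper's own route: both arguments reduce, via the EG duality identity, to a CLT for $\sqrt{t}\bigl(H_t(\betast) - H(\betast)\bigr)$ together with the estimate $\sqrt t\bigl(H_t(\betagam) - H_t(\betast)\bigr) = o_p(1)$, the latter obtained from stochastic equicontinuity of the Donsker class $\{F(\cdot,\b)\}$ plus consistency of $\betagam$. The paper simply packages this second step into a single invocation of Shapiro's SAA optimal-value CLT (\cref{lm:clt_optimal_value}), whose Lipschitz-envelope hypothesis encodes exactly the Donsker argument you unpack, and handles the non-compactness of the domain by passing to surrogate minimizations over $C_\LFM$ — the same ``eventually in $C$'' observation you invoke; your additional appeal to \cref{cor:H_concentration} for $\sqrt t\bigl(H(\betagam)-H(\betast)\bigr)$ is valid but stronger than necessary, since that term is nonnegative and a sandwich argument already closes the gap.
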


As stated previously, our asymptotic results for $\beta$ and $u$ require that $H$ is twice-continuously differentiable at $\betast$.
When this differentiability holds the set of items that incur ties is $s$-measure zero (see \cref{thm:first_differentiability}), and thus the equilibrium allocation $\xst$ in the limit LFM is unique and must be pure. 
Now we define a map $\must: \Theta \to \Rnp$, which represents the utility all buyers obtain from the item $\t$ at equilibrium. Formally,
\begin{align}
    \must(\t) = [\xst_1 (\t) v_1 (\t), \dots, \xst_n(\t) v_n(\t)]\tp.
\end{align}
Since $\xst$ is pure, only one entry of $\must(\t)$ is nonzero. 
Clearly $\int \must s \diff \t = [\ust_1,\dots, \ust_n]\tp$.

\begin{theorem}[Asymptotic Normality of Pacing Multipliers and Utilities] \label{thm:clt_beta_u}
    Let \cref{as:smo} hold with non-singular Hessian matrix $\cH=\nabla \sq H(\betast)$. 
Then 
\begin{align}
    \sqrt{t} (\beta^\gam - \beta^*) \tod \allowbreak\cN \big(0, 
        \Sigma_\beta
        \big), \quad 
\sqrt{t}(u^\gam - u^*) \tod \allowbreak\cN\big
        (0, 
        \Sigma_u
        \big) ,
\end{align}     
    where 
         $\Sigma_\beta = \cH\inv\allowbreak 
         \cov(\must)
        \cH\inv $ 
    and
        $\Sigma_u = 
        \Diag(-b_i/(\betasti)\sq )\allowbreak 
        \cH\inv\allowbreak 
        \cov(\must) \allowbreak 
        \cH\inv\allowbreak 
        \Diag(-b_i/(\betasti)\sq ) .$ 
    Proof in Appendix~\ref{sec:proof:thm:normality}.

\end{theorem}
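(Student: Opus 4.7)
The plan is to recognize this as a standard (unconstrained) M-estimation problem, since by \cref{eq:pop_deg_lfm} we have $\betagam = \argmin_{\beta \in \Rnp} H_t(\beta)$ and $\betast = \argmin_{\beta \in \Rnp} H(\beta)$, and the constraint $\beta > 0$ is inactive at $\betast$ because $\betasti = b_i/\usti > 0$. Thus both minimizers satisfy first-order conditions in the interior, and we can treat them as unconstrained M-estimators of the criterion $F(\theta,\beta) = \max_i \betai v_i(\theta) - \sum_i b_i \log \betai$.

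The core of the argument is the asymptotic linearization
\begin{align*}
    \sqrt{t}(\betagam - \betast) = -\cH\inv \, \sqrt{t}\, P_t \psi(\cdot, \betast) + o_p(1),
\end{align*}
where $\psi(\theta,\betast) \in \partial_\beta F(\theta,\betast)$ is a measurable subgradient selection. By \cref{as:smo}, $H$ is $C^2$ at $\betast$ with invertible Hessian $\cH$; by \cref{thm:first_differentiability} (to be invoked from the differentiability section) the set of ties $\{\theta : \arg\max_i \betasti v_i(\theta) \text{ is not a singleton}\}$ has $s$-measure zero, so the ordinary gradient
\begin{align*}
    \nabla_\beta F(\theta, \betast) = \big[v_1(\t) x^*_1(\t),\ldots, v_n(\t) x^*_n(\t)\big]\tp - \big[b_1/\betast_1,\ldots, b_n/\betast_n\big]\tp = \mubarst(\theta) - u^*
\end{align*}
is well-defined $s$-a.s.\ (here I use $\betasti v_i = \max_k \betastk v_k$ picks out the equilibrium winner $x_i^*$, combined with $b_i/\betasti = \usti$).

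I would verify the three ingredients of the non-smooth M-estimator CLT (e.g.\ Shapiro 1989 Theorem 3.3, which the paper already invokes for FPPE, or van der Vaart Theorem 5.23): (i) consistency of $\betagam$, already obtained in \cref{cor:beta_u_concentration}; (ii) twice-differentiability of the population criterion with nonsingular Hessian, directly given by \cref{as:smo}; and (iii) stochastic equicontinuity of $\{\sqrt t\,(P_t - \E)[F(\cdot,\beta) - F(\cdot,\betast) - \nabla F(\cdot,\betast)\tp (\beta-\betast)] : \|\beta-\betast\|\le\delta\}$. Because $\beta \mapsto F(\theta,\beta)$ is convex and Lipschitz on a neighborhood of $\betast$ with a uniform constant depending only on $\vbar$, $b_i$, and $\min_i\betasti$, this function class is Donsker, and (iii) follows. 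The main obstacle is precisely this equicontinuity step: the max-operator inside $F$ is non-smooth, so one cannot invoke the smooth M-estimator theorem directly, and one must instead exploit the measure-zero tie set together with empirical process tools (the paper's \cref{it:stoc_equic} / \cref{sec:analytical_properties_of_dual_obj} gives a template for this).

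Once the linearization is in hand, $\nabla F(\cdot,\betast) - \E[\nabla F(\cdot,\betast)] = \mubarst - \E[\mubarst]$ (since the $u^*$ term is deterministic), and $\E[\nabla F(\cdot,\betast)]=0$ by first-order optimality. The multivariate CLT gives
\begin{align*}
    \sqrt t \,P_t \nabla F(\cdot,\betast) \tod \cN(0, \cov(\mubarst)),
\end{align*}
and composing with $-\cH\inv$ yields the stated limit for $\betagam$. For $u$, apply the delta method to the smooth map $g(\beta) = (b_1/\beta_1,\ldots,b_n/\beta_n)\tp$ evaluated at $\betast$, whose Jacobian is $\Diag(-b_i/(\betasti)\sq)$; sandwiching $\Sigma_\beta$ with this diagonal matrix on both sides produces exactly $\Sigma_u$ as claimed.
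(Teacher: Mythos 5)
Your proposal is correct and reaches the same asymptotic linearization $\sqrt t(\betagam-\betast)=-\cH\inv\frac{1}{\sqrt t}\sum_\tau\nabla F(\thetau,\betast)+o_p(1)$, but via a genuinely different technical route than the paper's proof in \cref{sec:proof:thm:normality}. The paper invokes Theorem 2.1 of \citet{hjort2011asymptotics}, a CLT specialized to \emph{convex} M-estimation: there the conditions to check are only (a) convexity of $\beta\mapsto F(\theta,\beta)$, (b) the weak stochastic differentiability condition $\E[R(\theta,h)^2]=o(1)$ as $h\to 0$ (which the paper verifies via the measure-zero tie set and bounded convergence), and (c) a quadratic expansion of $H$ at $\betast$. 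Crucially, Hjort--Pollard's theorem uses convexity to \emph{entirely bypass} the stochastic-equicontinuity / Donsker machinery — the paper even explicitly flags this as the reason for choosing it. You instead propose to go through the general non-smooth M-estimator CLT (Shapiro 1989 or van der Vaart Thm 5.23), which requires establishing stochastic equicontinuity of the remainder or gradient process as a separate step; you then argue this follows from convexity plus Lipschitzness via Donskerness. That route is exactly what the paper does for the FPPE theorem (\cref{thm:clt}), where the constraint boundary forces the use of Shapiro's constrained framework and its \cref{it:stoc_equic}; for LFM the constraint is inactive, so the paper takes the shortcut that convexity alone permits. Both routes are valid; yours is more general and closer in spirit to the paper's FPPE proof, while the paper's choice for LFM is more economical since it sidesteps the empirical-process verification entirely. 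Two small nits: your chain ``$F$ convex and Lipschitz $\Rightarrow$ class Donsker $\Rightarrow$ (iii)'' glosses over that what is actually needed is stochastic equicontinuity of the gradient class (or a maximal inequality for the remainder process, as in van der Vaart 5.23 which does not literally pass through Donskerness), and you write $\mubarst(\theta)$ for the item-level utility map where the paper reserves $\must(\theta)$ for the function and $\mubarst$ for its expectation; neither affects the substance of the argument.
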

\cref{thm:normality} can also be derived from \cref{thm:clt_beta_u} using the delta method, since $\NSWst = \sumiton b_i \log (\usti) = \sumi b_i \log (b_i / \betasti)$ is a smooth function of $\betast$.

We will show that the asymptotic variances in \cref{thm:clt_beta_u} are the best achievable, in an asymptotic local minimax sense. To make this precise, we need to introduce ``supply neighborhoods'' obtained through perturbing the original supply~$s$.

\subsubsection{Perturbed Supply}
\label{sec:perturb_supply}
First we introduce notation to parametrize neighborhoods of the supply $s$.
Let $g\in G_d = \{g:\Theta \to \R^d: \E[g]=0, \E[\|g\|\sq]<\infty\}$ be a direction along which we wish to perturb the supply $s$.
Given a vector $\alpha \in \R^d $ signifying the magnitude of perturbation,
we want to scale the original supply of item $\theta$ by  $\exp(\alpha\tp g(\theta))$ and then obtain a perturbed supply distribution by appropriate normalization.
To do this we define the perturbed supply by  
\footnote{
    In \citet{duchi2021asymptotic} they allow more general classes of perturbations, we specialize their results for our purposes.
}
\begin{align}
    \label{eq:perturbed_is_roughly_expo}
    s_{\alpha,g} (\theta) \defeq C \inv [{1 + \alpha\tp g(\theta)}] s(\theta) 
\end{align}
with a normalizing constant $C = 1 + \int \alpha\tp g(\theta) s(\theta)\diff\theta$.  
As $\alpha\to 0$, the perturbed supply $s_{\alpha,g}$ effectively approximates
$
    s_{\alpha,g} (\theta) \propto \exp(\alpha\tp g(\theta)) s(\theta) 
.$

We let $\betast_{\alpha,g}$, $\ust_{\alpha,g}$ and $\NSWst_{\alpha,g}$ be the limit inverse bang-per-buck, price and revenue in $\LFM(b,v,s_{\alpha,g}, \Theta)$.
Clearly $\betast = \betast_{0,g}$ for any $g$ and similarly for $\ust_{0,g}$ and $\NSW_{0, g}$.

\subsubsection{Asymptotic Local Minimax Optimality}

Given the asymptotic normality of observed LFM, it is desirable to understand the best possible statistical procedure for estimating the limit LFM.
One way to discuss the optimality is to measure the difficulty of estimating the limit LFM when the supply distribution varies over small neighborhoods of the true supply $s$, asymptotically. When an estimator achieves the best worst-case risk over these small neighborhoods, we say it is asymptotically locally minimax optimal. For general references, see \citet{vaart1996weak,le2000asymptotics}. More recently \citet[Sec.\ 3.2]{duchi2021asymptotic} develop asymptotic local minimax theory for constrained convex optimization, and we rely on their results.

Let $L:\Rn \to \R$ be any symmetric quasi-convex loss \footnote{A function is quasi-convex if its sublevel sets are convex.}. 
In asymptotic local minimax theory we are interested in the local asymptotic risk: given a sequence of estimators $\{ \hat \beta_t: \Theta^t \to \Rn \}_t$,
\begin{align*}
\LAR_\beta (\{\hat \beta_t\}_t )  \defeq 
    {\sup_{ g\in G_d, d\in \mathbb{N}}
    \lim_{c\to \infty}
    \liminf_{t \to \infty}
    \sup_{\|\alpha\|_2\leq \frac{c}{\sqrt t}}
    \E_{s_{\alpha,g}^{\otimes t}}[L(\sqrt{t} (\hat \beta_t - \betast_{\alpha,g} ))] } \;.
\end{align*}
If we ignore the limits and consider a fixed $t$, then $\LAR_\beta$ roughly measures the 
worst-case risk for the estimators $\{\hat \beta_t\}$.
Note that $\alpha$ is a $d$-vector, and thus the shrinking norm-balls depend on $d$, and the expectation is taken w.r.t.\ the $t$-fold product of the perturbed supply.

Similarly, define the risk for utility $u$ (resp. Nash social welfare $\NSW$) 
given an estimator sequence $\{ \hat u _t\}$ (resp. $\{  \widehat \NSW_t\}$): 
\begin{align*}
    &\LAR_u(\{\hat u_t\}_t ) =
    {\sup_{ g\in G_d, d\in \mathbb{N}}
    \lim_{c\to \infty}
    \liminf_{t \to \infty}
    \sup_{\|\alpha\|_2\leq \frac{c}{\sqrt t}}
    \E_{s_{\alpha,g}^{\otimes t}}[L(\sqrt{t} (\hat u_t - u^*_{\alpha,g} ))] } \;,
    \\
    & \LAR_\NSW(\{\widehat \NSW_t\}_t ) =
    {\sup_{ g\in G_d, d\in \mathbb{N}}
    \lim_{c\to \infty}
    \liminf_{t \to \infty}
    \sup_{\|\alpha\|_2\leq \frac{c}{\sqrt t}}
    \E_{s_{\alpha,g}^{\otimes t}}[L(\sqrt{t} (\widehat \NSW_t - \NSW^*_{\alpha,g} ))] } \;. 
\end{align*}

\begin{theorem} \label{thm:nsw_aym_risk}
    Let \cref{as:smo} hold.
    Then 
    \begin{align*}
        & \inf_{\hat \beta_t} \LAR_\b (\{ \hat \beta_t \})
        \geq 
        \E[L(\cN(0, \Sigma_\b))] \;,
        \\
        & \inf_{\hat u_t} \LAR_u ( \{ \hat u_t \})
        \geq 
        \E[L(\cN(0, \Sigma_u))]
        \;,
        \\
        & \inf_{\widehat \NSW_t} \LAR_\NSW ( \{ \widehat \NSW_t \} )
        \geq 
        \E[L(\cN(0, \sigma^2_\NSW))]
        \;,
    \end{align*}
    where $\sigma\sq_\NSW$ is defined in \cref{thm:normality}, and $\Sigma_\b, \Sigma_u$ in \cref{thm:clt_beta_u}.
    Proof in \cref{sec:proof:thm:nsw_aym_risk}
\end{theorem}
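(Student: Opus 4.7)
The proof will instantiate the asymptotic local minimax framework of \citet[Sec.\ 3.2]{duchi2021asymptotic} for the functionals $\alpha\mapsto\betast_{\alpha,g}$, $\alpha\mapsto\ust_{\alpha,g}$, $\alpha\mapsto\NSWst_{\alpha,g}$ along the parametric submodel $\{s_{\alpha,g}\}$ introduced in \cref{sec:perturb_supply}. The recipe has three ingredients: (i) local asymptotic normality (LAN) of the submodel, (ii) differentiability of each target functional at $\alpha=0$, and (iii) an optimization over the direction $g$.

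\emph{Step 1 (LAN).} From $s_{\alpha,g}\propto(1+\alpha\tp g)s$ and $\E[g]=0$, the family is differentiable in quadratic mean at $\alpha=0$ with score $g$ and Fisher information $I_g=\E[g g\tp]=\cov(g)$. This supplies the LAN expansion required to invoke the Hájek--Le Cam minimax theorem for quasi-convex $L$.

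\emph{Step 2 (Functional derivatives).} Under \cref{as:smo} the population first-order condition $\nabla H_{\alpha,g}(\betast_{\alpha,g})=0$, with $H_{\alpha,g}(\beta)=\E_{s_{\alpha,g}}[F(\theta,\beta)]$, is amenable to the implicit function theorem. Since at points of differentiability $\nabla_\beta F(\theta,\betast)=\must(\theta)-[b_i/\betasti]$, and $\E[g]=0$ kills the $\theta$-constant term,
\[
\partial_\alpha \betast_{0,g} \;=\; -\cH\inv\,\E[\must g\tp].
\]
Since $\usti=b_i/\betasti$, the chain rule gives $\partial_\alpha \ust_{0,g}=\Diag(-b_i/(\betasti)\sq)\,\partial_\alpha\betast_{0,g}$. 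For NSW, using $\NSWst_{\alpha,g}=\sumi b_i\log(b_i/\betasti_{\alpha,g})$ together with the identity $\cH\betast=[b_i/\betasti]$ (the LFM analogue of \cref{lm:fppe_relation}, obtained from first-order homogeneity of $f(\theta,\beta)=\max_i\betai v_i(\theta)$ in $\beta$), we get
\[
\partial_\alpha \NSWst_{0,g}\;=\;-(\ust)\tp\partial_\alpha\betast_{0,g}\;=\;(\betast)\tp\E[\must g\tp]\;=\;\E[\pst g\tp],
\]
where $\pst(\theta)=(\betast)\tp\must(\theta)$ at every point of purity.

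\emph{Step 3 (Lower bound and optimization over $g$).} Combining Steps 1--2 and applying \citet[Thm.~3]{duchi2021asymptotic} along each submodel yields, for every $g\in G_d$,
\[
\liminf_{t\to\infty}\sup_{\|\alpha\|_2\le c/\sqrt t}\E_{s_{\alpha,g}^{\otimes t}}\!\big[L\big(\sqrt t(\hat\beta_t-\betast_{\alpha,g})\big)\big]\;\ge\;\E\big[L\big(\cN(0,\,D_g\,I_g\inv\,D_g\tp)\big)\big],
\]
with $D_g=\partial_\alpha\betast_{0,g}$, and analogous inequalities for $u$ and $\NSW$. Taking the supremum over $g$ and choosing $g=\must-\E\must$ (a centered $n$-vector) gives $D_g=-\cH\inv\cov(\must)$ and $I_g=\cov(\must)$, so that $D_g I_g\inv D_g\tp=\cH\inv\cov(\must)\cH\inv=\Sigma_\beta$; the parallel computation with the diagonal factor produces $\Sigma_u$. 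For NSW, Cauchy--Schwarz on the scalar expression $\E[\pst g]$ identifies $\sup_g(\E[\pst g])\sq/\E[g\sq]=\var(\pst)=\sigma\sq_\NSW$, attained by $g=\pst-\E[\pst]$. This matches the three claimed lower bounds.

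\emph{Anticipated obstacle.} The main technical hurdle is justifying Step 2 rigorously: because $F(\theta,\cdot)$ is almost-surely nondifferentiable, one must argue that twice continuous differentiability of $H$ at $\betast$ (from \cref{as:smo}) persists on a neighborhood and carries over uniformly to $H_{\alpha,g}$ for $\alpha$ in a vanishing ball, so that the implicit function theorem and exchange of $\partial_\alpha$ with expectation both apply. This amounts to checking the regularity hypotheses of \citet{duchi2021asymptotic} (suitable continuity of $\nabla H_{\alpha,g}$ and invertibility of its Hessian) in the LFM setting; the well-definedness of $\must$ as the unique pure allocation (a consequence of \cref{as:smo} via \cref{thm:first_differentiability}) is what makes the envelope computation of $\nabla_\beta F(\theta,\betast)$ unambiguous.
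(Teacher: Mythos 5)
Your proposal is correct and instantiates the same Duchi--Le Cam asymptotic local minimax framework the paper uses (parametric submodel $s_{\alpha,g}$, quadratic-mean differentiability with score $g$, and lower bounds of the form $D_g I_g^{-1} D_g\tp$). The one genuinely different computation is in Step 2 for $\NSW$: the paper differentiates the dual value $\NSW_{\alpha,g}=\int F(\theta,\betast_{\alpha,g})\,s_{\alpha,g}\,\diff\theta + \text{const}$ and lets the envelope theorem kill the inner $\nabla_\beta F\cdot\nabla_\alpha\betast_{\alpha,g}$ term via first-order optimality, landing directly on $\int F(\theta,\betast)\,g\,s\,\diff\theta$. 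You instead push the chain rule through the primal formula $\NSW=\sumiton b_i\log(b_i/\betasti)$ and then collapse $-(\ust)\tp\partial_\alpha\betast_{0,g}=(\ust)\tp\cH\inv\E[\must g\tp]$ to $\E[\pst g\tp]$ via the homogeneity identity $\cH\betast=\ust$ (the LFM analogue of \cref{lm:fppe_relation}). Both reach $\E[\pst g]$; the paper's envelope argument is a bit more parsimonious since it never needs that identity, while yours stays entirely in $\beta$-space and makes the delta-method flavor explicit. A second, smaller difference: you make the supremum over $g$ concrete by exhibiting least-favorable directions ($g=\must-\E\must$ for $\beta$ and $u$; $g=\pst-\E\pst$ for $\NSW$), whereas the paper outsources this optimization to Theorem~1 / Sec.~8.3 of \citet{duchi2021asymptotic}. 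Your version is more self-contained, at the cost of having to verify membership $g\in G_d$ and to handle possible singularity of $I_g=\cov(\must)$ with a pseudoinverse (which is harmless, since $\cov(\must)\cov(\must)\pinv\cov(\must)=\cov(\must)$ still yields $\cH\inv\cov(\must)\cH\inv$). The regularity concern you flag about exchanging $\partial_\alpha$ with $\E$ and applying the implicit function theorem under \cref{as:smo} is exactly what the paper discharges by invoking Lemma~8.1 and Proposition~1 of \citet{duchi2021asymptotic}, so your anticipated obstacle is real but already resolved by the cited machinery.
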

\subsection{Variance Estimation and Inference}\label{sec:lfm_inference}
In this section we show how to construct confidence intervals for Nash social welfare. 
We will show how to construct confidence intervals for pacing multipliers and utilities in the FPPE section. The procedure is similar for LFM, and thus we omit it here.

First, regarding inference, it is interesting to note that 
the observed NSW ($\LNSW^\gam$) is a negatively-biased estimate of the limit NSW ($\LNSW^*$), i.e., $ \E[\LNSW^\gam] - \LNSW^* \leq 0$.\footnote{Note 
$ \E[\LNSW^\gam] - \LNSW^* = \E[\min_\beta H_t(\beta)] - H(\betast) \leq \min_\beta \E[H_t(\beta)]  - H(\betast)  = 0.$}
Moreover, it can be shown that, when the items are i.i.d.\ $\E[\min H_t] \leq \E[\min H_{t+1}]$  by a simple argument from Proposition~16 from~\citet{shapiro2003monte}. Monotonicity tells us that increasing the market size produces, on average, less biased estimates of the limit NSW.

To construct confidence intervals for Nash social welfare, one needs to estimate the asymptotic variance. 
Let $\ptau$ be the price of item $\thetau$ in the finite market, and $\bar p = \frac1t \sumtau \ptau$.
The variance estimator is then
\begin{align}
    \hat\sigma_{\NSW}\sq \defeq \frac{1}{t} \sumtau\big( \ptau - \bar p\big)\sq 
    \; .
\end{align}
We emphasize that in the computation of the variance estimator $ \hat\sigma_{\NSW}\sq$ one does not need knowledge of the valuations $\{\vithetau\}_{i,\tau}$.
All that is needed is the equilibrium prices $\pgam = (p^{1},\dots, p^{t})$ for the items. 
Given the variance estimator, we 
construct the confidence interval 
$ \big[ \LNSW^\gam \pm z_{\alpha/2}  \frac{\hat \sigma_{\NSW}}{\sqrt t}\big]$,
where $z_\alpha$ is the $\alpha$-th quantile of a standard normal.
The next theorem establishes validity of the variance estimator.
\begin{theorem} \label{thm:ci_lnsw}
    It holds that $\hat \sigma ^2_{\NSW} \toprob \sigma_{\NSW}\sq$.
    Given $0< \alpha < 1$, it holds that
   $ 
    \lim_{t\to\infty} \P\big( \NSW^* \in [ \NSW^\gam \pm z_{\alpha/2}  \hat \sigma_{\NSW}/\sqrt t \,] \big) 
        =
        1-\alpha 
$.  Proof in Appendix~\ref{sec:proof_var_est}.

\end{theorem}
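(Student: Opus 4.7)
The plan is to reduce the theorem to two steps: consistency of $\hat\sigma^2_{\NSW}$ as an estimator of $\sigma^2_{\NSW} = \var(p^*)$, and then Slutsky's theorem combined with \cref{thm:normality} to get the asymptotic coverage $1-\alpha$. The coverage statement is immediate once consistency is in hand, so the real content is the consistency argument.

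For consistency, I would write the observed price at item $\theta^\tau$ as $p^\tau = \max_i \betagami v_i(\theta^\tau)$ and rewrite the estimator in the form $\hat\sigma^2_{\NSW} = P_t[(\max_i \betagami v_i)^2] - (P_t[\max_i \betagami v_i])^2$. Defining the functions $g_q(\theta,\beta) = (\max_i \beta_i v_i(\theta))^q$ for $q\in\{1,2\}$, it then suffices to prove that $P_t g_q(\cdot,\betagam) \toprob \E[g_q(\cdot,\betast)]$ for $q = 1, 2$; the theorem's formula for $\sigma^2_{\NSW}$ then drops out by continuous mapping since $p^*(\theta) = \max_i \betasti v_i(\theta)$.

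The main step is a uniform law of large numbers combined with consistency of $\betagam$. Fix a compact neighborhood $\mathcal{K}\subset C_\LFM$ of $\betast$ (nonempty by \cref{eq:def_C_LFM}) and consider the function classes $\mathcal{F}_q = \{g_q(\cdot,\beta) : \beta \in \mathcal{K}\}$. Each function is uniformly bounded (by $(\max_{\beta\in\mathcal{K}} \|\beta\|_\infty \bar v)^q$), and the map $\beta\mapsto g_q(\theta,\beta)$ is Lipschitz in $\beta$ uniformly in $\theta$ (with Lipschitz constant controlled by $\bar v^q$ on $\mathcal{K}$). A compact, uniformly Lipschitz, uniformly bounded family is Glivenko--Cantelli by a standard bracketing argument \citep{vaart1996weak}, so
\[
\sup_{\beta\in\mathcal{K}} \big| P_t g_q(\cdot,\beta) - \E[g_q(\cdot,\beta)] \big| \toprob 0.
\]
Since $\betagam \toprob \betast$ by \cref{cor:beta_u_concentration} and $\betast$ lies in the interior of $\mathcal{K}$, eventually $\betagam\in\mathcal{K}$ with probability one, and thus
\[
\big| P_t g_q(\cdot,\betagam) - \E[g_q(\cdot,\betagam)] \big| \toprob 0.
\]
Meanwhile, dominated convergence gives continuity of $\beta \mapsto \E[g_q(\cdot,\beta)]$ at $\betast$ (the sup-max envelope $\bar v^q \|\beta\|_\infty^q$ is integrable), so the consistency of $\betagam$ yields $\E[g_q(\cdot,\betagam)]\toprob \E[g_q(\cdot,\betast)]$. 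Combining the two displays produces $P_t g_q(\cdot,\betagam) \toprob \E[g_q(\cdot,\betast)]$ for $q=1,2$, which in turn gives $\hat\sigma^2_{\NSW} \toprob \sigma^2_{\NSW}$.

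The confidence interval claim then follows by Slutsky: \cref{thm:normality} gives $\sqrt{t}(\NSW^\gam - \NSW^*) \tod \cN(0,\sigma^2_{\NSW})$, and dividing by $\hat\sigma_{\NSW}\toprob \sigma_{\NSW}$ yields an asymptotically standard normal pivot, hence coverage $1-\alpha$. The only subtle point is the uniform-in-$\beta$ control of $g_2$, which is the squared max; but boundedness of $v_i$ on $\Theta$ makes this a routine Glivenko--Cantelli argument rather than a genuine obstacle.
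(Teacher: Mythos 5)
Your proof is correct and follows essentially the same route as the paper's: both reduce the problem to a uniform law of large numbers over a compact neighborhood of $\betast$ combined with consistency of $\betagam$, then conclude the coverage statement by Slutsky's theorem together with \cref{thm:normality}. The only difference is cosmetic — you decompose the sample variance into raw first and second moments $P_t g_1, P_t g_2$ and invoke a Glivenko--Cantelli bracketing argument, whereas the paper writes $\hat\sigma^2(\beta)$ as $P_t(F(\cdot,\beta)-H(\beta))^2 - (H_t(\beta)-H(\beta))^2$ and cites Theorem~7.53 of \citet{shapiro2021lectures} for both pieces; the two decompositions and the two uniform-LLN tools accomplish the same thing.
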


\section{Statistical Results for FPPE}

Next we study statistical inference questions for the FPPE model.
Since FPPE is characterized by an EG-style program similar to that of LFM, many of the results for FPPE are similar to those for LFM. However, an important difference is that the FPPE model has constraints on the pacing multipliers, which makes the asymptotic theory more involved.
As for LFM, we assume that we observe a finite auction 
market $\oFPPE(b,v,1/t, \gamma)$ with $\gamma$ being $t$ i.i.d. draws from $s(\cdot)$, and we use it to estimate quantities from the limit market $\FPPE(b,v,s,\Theta)$.
In FPPE we mainly focus on the revenue of the limit market, and for the same reason as in LFM, since FPPE is characterized by an EG program with pacing multiplier as the variables, we also present results for $\b$.
Similarly to the LFM case, one can use results for $\b$ to derive estimators for buyer utilities.

\subsection{Basic Convergence Properties}
\label{sec:fppe_basic}
Since FPPE has a similar convex program characterization as LFM, strong consistency and convergence rate results can be derived using similar ideas.

\begin{theorem}\label{thm:fppe_as_convergence}
    We have $\betagam \toas \betast$,
    and $\REVgam \toas \REVst$. Proof in Appendix~\ref{sec:proof:thm:fppe_as_convergence}.
\end{theorem}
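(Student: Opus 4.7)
The plan is the classical argmin-consistency argument for $M$-estimators: uniform convergence of the sample objective to the population objective on a compact set containing both $\betagam$ and $\betast$, combined with uniqueness of the population minimizer, yields $\betagam \toas \betast$. The FPPE constraint $\beta \in (0,1]^n$ already bounds the multipliers from above, so the only compactness issue is to bound them away from zero. Evaluating at $\beta = \mathbf{1}$ gives $H_t(\betagam) \leq H_t(\mathbf{1}) = P_t[\max_i v_i] \leq \vbar$ deterministically, and the lower bound $F(\theta,\beta) \geq -\sum_i b_i \log \beta_i$ (using $\max_i \beta_i v_i \geq 0$) together with $\log \betai \leq 0$ on $(0,1]^n$ yields $-b_i \log \betagami \leq \vbar$, hence $\betagami \geq \exp(-\vbar/\ubar{b}) =: \ubar\beta$ for every $i$. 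The identical argument applied to $H$ shows $\betast \in [\ubar\beta,1]^n$, so both iterates lie in the compact set $K = [\ubar\beta,1]^n$.

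Next, I would establish the uniform strong law $\sup_{\beta \in K}|H_t(\beta) - H(\beta)| \toas 0$. On $K$, the map $\beta \mapsto F(\theta,\beta)$ is Lipschitz with a constant independent of $\theta$ (bounded by $\vbar + (\sum_i b_i)/\ubar\beta$), and $|F(\cdot,\beta)|$ is uniformly bounded by $\vbar + \sum_i b_i|\log\ubar\beta|$. A finite $\epsilon$-net discretization of $K$, ordinary SLLN at each net point, and the uniform Lipschitz bound combine to give the uniform convergence. Uniqueness of $\betast$ as the minimizer of $H$ on $(0,1]^n$ is a known property of the limit FPPE \citep{conitzer2022pacing,gao2022infinite}. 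Standard argmin-consistency then gives $\betagam \toas \betast$.

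For revenue, write $\REVgam = P_t f(\cdot,\betagam)$ and $\REVst = \E[f(\cdot,\betast)]$, where $f(\theta,\beta) = \max_i \beta_i v_i(\theta)$, and decompose
\begin{align*}
|\REVgam - \REVst| \leq \sup_{\beta \in K}|P_t f(\cdot,\beta) - \E[f(\cdot,\beta)]| + |\E[f(\cdot,\betagam)] - \E[f(\cdot,\betast)]|.
\end{align*}
The first term vanishes almost surely by the same uniform SLLN applied to $f$ (which is bounded by $\vbar$ on $K$ and Lipschitz in $\beta$), and the second vanishes by $\vbar$-Lipschitzness of $\beta \mapsto \E[f(\cdot,\beta)]$ together with $\betagam \toas \betast$ from the previous step.

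The main obstacle I foresee is the compactness step: one must verify that the $-\log$ barrier in $F$ is strong enough to keep $\betagami$ uniformly bounded away from zero uniformly in $t$. The deterministic calculation above handles this cleanly, since $H_t(\mathbf{1})$ admits a sample-independent bound; the uniform SLLN over the resulting compact set $K$ is then routine because the integrand, though nonsmooth, remains uniformly Lipschitz and uniformly bounded on $K$.
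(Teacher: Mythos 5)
Your proposal is correct, and the core argument differs from the paper's in an interesting way. The paper proves $\betagam\toas\betast$ by referring back to the LFM consistency theorem, which is argued via epi-convergence of extended-real-valued objectives (Rockafellar--Wets machinery: lower semicontinuity of the extended $\tilde H$, pointwise a.s.\ convergence on a dense rational grid, eventual level-boundedness, and then the set-convergence theorems). You instead reduce to a compact set $K=[\ubar\beta,1]^n$ by a \emph{deterministic} bound: since $\mathbf 1$ is always feasible, $H_t(\betagam)\leq H_t(\mathbf 1)\leq\vbar$ with no randomness, and the nonnegativity of $\max_i\beta_iv_i$ plus the log barrier force $\betagami\geq e^{-\vbar/\ubar b}$ on every sample path. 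This is a genuinely cleaner entry point than the paper's own $C_\FPPE$ bound in the proof of \cref{thm:rev_convergence}, which only holds with high probability and requires concentration of $\frac1t\sum_\tau\vitau$; it also exploits the FPPE constraint $\beta\leq 1$ specifically, which the LFM proof cannot use (there the domain is unbounded above and the epi-convergence framework earns its keep). Once on a compact set with a uniformly Lipschitz, uniformly bounded integrand, your uniform SLLN via discretization plus classical argmin-consistency is airtight. For the revenue part, the paper uses a lighter decomposition that avoids any uniform SLLN: $|\REVgam - \REVst|\leq\vbar\|\betagam-\betast\|_\infty + |\frac1t\sum_\tau f(\thetau,\betast) - \E f(\cdot,\betast)|$, handling the first term by the consistency of $\betagam$ and the second by a pointwise SLLN at the fixed point $\betast$. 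Your decomposition via $\sup_K|P_tf-\E f|$ and $\E$-Lipschitzness is equally valid but slightly heavier, since you already established the uniform law. Both routes prove the theorem; yours is more elementary and self-contained, while the paper's epi-convergence route generalizes more readily to cases where no clean compactness reduction is available.
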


We complement the strong consistency result with the following rate results. They are derived using a discretization argument similar to the one for LFM.
\begin{theorem}
    \label{thm:rev_convergence}
    It holds that
    $ \| \betagam - \betast\|_2
    = \tilde O_p(\frac{\sqrt{n} (\vbar + 2\nubar n + 1)}{\ubar{b} \sqrt t})
    $
    and
    $ | \REV^\gam - \REV^* | = 
        \tilde{O}_p
        (\frac{\vbar \sqrt{n} (\vbar + 2\nubar n + 1 ) }{ \ubar{b} }\frac{1}{\sqrt{t}}).$
    Proof in Appendix~\ref{sec:proof:thm:rev_convergence}.
\end{theorem}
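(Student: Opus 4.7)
The plan is to adapt the discretization and strong-convexity argument that yielded \cref{thm:high_prob_containment,cor:beta_u_concentration} for LFM, now accounting for the box constraint $\b \in (0,1]^n$ of FPPE. I would proceed in three steps: (i) confine both $\betast$ and $\betagam$ to a compact subregion $C \subset (0,1]^n$ with high probability; (ii) establish uniform concentration of $H_t$ around $H$ on $C$; and (iii) use strong convexity of $H$ on $C$ to convert an objective gap into a parameter gap. The revenue bound will then follow by a Lipschitz-plus-Hoeffding decomposition.

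For (i), a natural choice is $C = \prod_i [b_i/(2\nu_i),\,1]$. For $\betast$ this follows from the FPPE complementarity structure: each buyer either satisfies $\betasti = 1$ (unpaced), or exhausts her budget, in which case $b_i = \int \xsti \betasti v_i s\, \dt \leq \betasti \nu_i$ so $\betasti \geq b_i/\nu_i$. For $\betagam$, a preliminary pointwise concentration of $H_t(\betast)$ combined with strong convexity places $\betagam \in C$ with high probability for $t$ large. On $C$, $H$ is $\ubar b$-strongly convex because the Hessian contribution of $-b_i \log \betai$ is $b_i/\betai^2 \geq b_i \geq \ubar b$ whenever $\betai \leq 1$.

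For (ii), on $C$ the per-item integrand $F(\t,\b) = \max_i \betai v_i(\t) - \sum_i b_i \log \betai$ is Lipschitz in $\b$ with constant bounded on the order of
\[
L \;\leq\; \vbar \,+\, \sum_i b_i/\ubar\beta_i \,+\, O(1) \;\leq\; \vbar + 2 n \nubar + O(1),
\]
where the $-\log$ slope is controlled by $\ubar\beta_i = b_i/(2\nu_i)$, giving $b_i/\ubar\beta_i = 2\nu_i \leq 2\nubar$. I would cover $C$ by an $\eps$-net of cardinality $(O(1)/\eps)^n$, apply a tail bound at each grid point with a union bound, and interpolate by Lipschitz continuity. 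To secure the fast $1/t$ rate required, I would use a Bernstein-style inequality that exploits the fact that, near $\betast$, the variance of $F(\cdot,\b)-F(\cdot,\betast)$ is of order $\|\b-\betast\|_2^2$ by strong convexity. Balancing discretization error and concentration error yields $H(\betagam) - H(\betast) = \tilde O_p(n L^2/(\ubar b\, t))$, in parallel with \cref{cor:H_concentration}.

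For (iii), strong convexity gives $\|\betagam - \betast\|_2^2 \leq 2\bigl(H(\betagam) - H(\betast)\bigr)/\ubar b = \tilde O_p\bigl(n L^2/(\ubar b^2 t)\bigr)$, establishing the first claim. For the revenue bound, decompose
\[
|\REVgam - \REVst| \;\leq\; |P_t f(\cdot,\betagam) - P_t f(\cdot,\betast)| \,+\, |P_t f(\cdot,\betast) - \E f(\cdot,\betast)|.
\]
The first term is at most $\vbar\,\|\betagam - \betast\|_2$ by the pointwise Lipschitz bound $|f(\t,\b)-f(\t,\b')| \leq \vbar\,\|\b-\b'\|_2$, while the second is $O_p(\vbar/\sqrt t)$ by Hoeffding since $f(\cdot,\betast) \leq \vbar$. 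The main obstacle will be step (ii): a plain Hoeffding-based discretization only delivers an objective gap of order $L/\sqrt t$, which translates to $\|\betagam - \betast\|_2 = O_p(t^{-1/4})$. Recovering the $t^{-1/2}$ rate in the parameter requires the variance-aware (Bernstein-type) argument sketched above, mirroring the $\min\{1/(\ubar b\,\eps),\,1/\eps^2\}$ branching visible in \cref{thm:high_prob_containment}.
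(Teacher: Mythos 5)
Your proposal follows the paper's argument closely: confine $\betast,\betagam$ to a compact box $C$ with high probability, establish uniform concentration of $H_t$ on $C$ via a Lipschitz-plus-discretization argument, convert the objective gap to a parameter gap via strong convexity ($\lambda = \ubar b$ on $C$), and split the revenue error as $\vbar\|\betagam-\betast\| + |P_t f(\cdot,\betast) - \E f(\cdot,\betast)|$. The Lipschitz constant $\vbar + 2n\nubar + O(1)$ and the identification that a naive $\eps$-net-plus-Hoeffding argument only gives $\|\betagam - \betast\|_2 = O_p(t^{-1/4})$, so that a localized concentration bound is needed to recover the $t^{-1/2}$ rate, are both correct and match the paper's reuse of the proof of \cref{thm:high_prob_containment}. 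Your ``Bernstein via variance $\lesssim \|\beta-\betast\|^2$'' idea is a cosmetic variation on the paper's actual mechanism, which applies Hoeffding but shrinks the \emph{range} of $F(\cdot,\beta)-F(\cdot,\betast)$ to $O(L\|\beta-\betast\|_\infty)$ over the sublevel set $\cBst(a)$, controlled via $D = O(\sqrt{\eps/\lambda})$; both give the $\min\{1/(\lambda\eps),\,1/\eps^2\}$ branching. (Also, one small slip: the variance bound follows from Lipschitzness of $F$, not from strong convexity.)

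The one genuine error is your choice $C = \prod_i [b_i/(2\nu_i),\,1]$. Your derivation $\betasti \geq b_i/\nu_i$ is correct for paced buyers, but for an unpaced buyer with $\betasti = 1$ it gives nothing, and if some buyer has $b_i > 2\nu_i$ (a large-budget, unpaced buyer), the interval $[b_i/(2\nu_i),\,1]$ is empty and $\betast \notin C$. The paper avoids this by using the universally valid lower bound $\betasti \geq b_i/(b_i + \nu_i)$ (from $\usti = \deltasti + \mubarsti \leq b_i + \nu_i$ and $\betasti = b_i/\usti$), which, after halving the $\nu_i$ for slack, gives $C_\FPPE = \prod_i [b_i/(2\nu_i + b_i),\,1]$; the left endpoint is always strictly below $1$. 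This also feeds a matching high-probability lower bound on $\betagami$ via the finite-market inequality $\betagami \geq b_i/(b_i + \frac1t\sumtau v_i^\tau)$ from \citet{gao2022infinite}. With this fix your argument goes through and yields the stated rates.
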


The above bounds hold for a broad class of limit FPPE models and may be loose for a particular model. 
In \cref{sec:fppe_asym}, we show that for buyers $i \in \{ i:\betasti = 1\}$, their pacing multipliers converge at a rate faster than $ \tilde O_p(1/\sqrt t)$.

\subsection{Asymptotics of FPPE}
\label{sec:fppe_asym}

As in LFM, our statistical inference results require the limit market to behave smoothly around the optimal pacing multipliers $\betast$.
To that end, we will make \cref{as:smo}, as in LFM.
Similar to LFM, under \cref{as:smo}, the equilibrium allocation $\xst$ is unique and must be pure. 
Again we can define 
\begin{align}
    \label{eq:def:must}
     \must (\t) = [x^*_1(\t) v_1(\t), \dots, x^*_n(\t) v_n (\t)]\tp\; .
\end{align}
Under \cref{as:smo}, the equilibrium allocation $\xst$ is unique, so $\must(\cdot)$ is also unique. 
Moreover, $\must(\t) = \nabla f(\t, \betast)$, 
and $ (\betast) \tp \must(\t) = \pst(\t)$ (the set of nondifferentiable points has measure zero, and thus we can ignore such points). 
Also let $\mubarst$
denote the utility from items:
\begin{align}
    \label{eq:def:mubarst}
    \mubarst = \int \must s \dt \in \Rnp
\end{align}
Note that in the FPPE model, buyers' utility consists of two parts: utility from items and leftover budgets.
In FPPE, the pacing multipliers relate budgets and utilities via \citep{conitzer2022pacing} 
\begin{align} \label{eq:beta_u_relation_fppe}
    \mubarsti + \deltasti = b_i/ \betasti \;.
\end{align}

In the unconstrained case, classical $M$-estimation theory says that, under regularity conditions, an $M$-estimator is asymptotically normal with covariance matrix $\cH\inv  \var(\text{gradient}) \cH\inv$ \citep[Chap.\ 5]{van2000asymptotic}. However, in the case of FPPE, which is characterized by a constrained convex problem, the Hessian matrix needs to be adjusted to take into account the geometry of the constraint set $B=(0,1]^n$ at the optimum $\betast$.
We let $\cP \defeq \Diag( \indi( \betasti < 1) )$ be an ``indicator matrix'' of buyers whose $\betasti < 1$, and define the projected Hessian \citep[Section 12.5]{nocedal2006numerical}
\begin{align}
    \label{def:cHB}
    \cH_B \defeq \cP \cH \cP \;.
\end{align}
It will be shown that the asymptotic variance of $\betagam$ is $\cH_B\pinv \var(\text{gradient}) \cH_B\pinv$ and the ``gradient" is exactly $\must$.

\begin{assumption}[\textsf{\scriptsize{SCS}}] 
    \label{as:scs} Strict complementary slackness holds:
     $\betasti = 1$ implies  $\delta^*_i  > 0$.
\end{assumption}

\cref{as:scs} can be viewed as a non-degeneracy condition from a convex programming perspective, since $\delta_i$ corresponds to a Lagrange multiplier on $\beta_i \leq 1$.
From a market perpective, \cref{as:scs} requires that if a buyer's bids are not paced ($\betasti = 1$), then their leftover budget $\delta^* _i$ must be strictly positive.
This can again be seen as a market-based non-degeneracy condition: if $\deltasti=0$ then the budget constraint of buyer $i$ is binding, yet $\betasti=1$ would imply that they have no use for additional budget.
If \cref{as:scs} fails, one could slightly increase the budgets of buyers for which \cref{as:scs} fails, i.e., those who do not pace yet have exactly zero leftover budget, and obtain a market instance with the same equilibrium, but where \cref{as:scs} holds.

From a technical viewpoint, \cref{as:scs} is a stronger form of first-order optimality.  
Note $\nabla H(\betast) = -\deltast$ (cf.\ \cref{lm:fppe_relation}). 
The usual first-order optimality condition is 
\begin{align}
    - \nabla H(\betast) \in \cN_B(\betast) \;,
    \label{eq:usual_foc}
\end{align}
where $\cN_B(\beta) = \prod_{i=1}^n J_i(\beta)$ is the normal cone with $J_i (\beta)= [0,\infty)$
if $\betai = 1$ and $J_i (\beta)= \{0\}$ if $\betai < 1$ for $\beta \in \Rnpp$.
Then \cref{eq:usual_foc} translates to the condition that $\betasti = 1$ implies $\deltasti \geq 0$. 
On the other hand, when written in terms of the normal cone, 
\cref{as:scs} is equivalent to $$- \nabla H(\betast) \in \relint ( \cN_B(\betast) ) \; ,
\footnote{
    The relative interior of a set is $\relint(S)\defeq\{x \in S:$ there exists $\epsilon>0$ such that $N_\epsilon(x) \cap \operatorname{aff}(S) \subseteq S \}$ where $\operatorname{aff}(S)$ is the affine hull of $S$, and $N_\epsilon(x)$ is a ball of radius $\epsilon$ centered on $x$. 
}$$
equivalent to the condition $\betasti = 1$ implies $\deltasti > 0$.
Given that $ \relint ( \cN_B(\betast) ) \subset \cN_B(\betast) $, \cref{as:scs} is obviously a stronger form of first-order condition. Conditions like \cref{as:scs} are commonly seen in the study of statistical properties of constrained $M$-estimators (\citealt[Assumption B]{duchi2021asymptotic} and \citealt{shapiro1989asymptotic}). In the proof of \cref{thm:clt}, \cref{as:scs} forces the critical cone to reduce to a hyperplane and thus ensures asymptotic normality of the estimates.
Without \cref{as:scs}, the asymptotic distribution of $\betagam$ could be non-normal.

\subsubsection{Asymptotic Normality}

We now show that the observed pacing multipliers $\betagam$ and the observed revenue $\REV^\gamma$ are asymptotically normal.
Define the \emph{influence functions} 
\begin{equation} \label{eq:def_inf_func}
    \begin{aligned}
    D_\beta(\theta) & \defeq - (\cH_B)^\dagger (\must(\theta) - \mubarst) \; ,
    \\
    D_\REV(\theta) & \defeq \pst(\theta) - \REV^* + (\mubarst) \tp D_\beta(\theta).
    \end{aligned}
\end{equation}
Recall $\must$ is defined in \cref{eq:def:must}, $\mubarst$ in \cref{eq:def:mubarst}, $\cH_B$ in \cref{def:cHB}.
And note $\E[D_\beta] = 0$ and $\E[D_\REV] = 0$.

\begin{theorem} 
    \label{thm:clt}
    If
    \cref{as:smo,as:scs} hold, then 
    \begin{align}
        &\sqrt{t}(\betagam - \betast) =  \frac1\roott \sumtau D_\beta (\thetau) + o_p(1) \; ,
        \\
        &\sqrt{t} (\REV^\gam - \REV^*) = \frac1\roott \sumtau D_\REV(\thetau) + o_p(1)  \;.
        \label{eq:rev_clt}
    \end{align}
    Consequently, $ \sqrt{t}(\betagam - \betast )$ and $\sqrt {t} (\REV^\gamma - \REV^*) $
    are asymptotically normal with means zero and variances $\Sigma_\beta\defeq \E[D_\beta D_\beta\tp ] =  (\cH_B)\pinv \var(\must) (\cH_B)\pinv$ and $\sigma^2_\REV \defeq \E[D_\REV (\theta)^2]$. 
    Proof in Appendix~\ref{sec:proof:thm:clt}.
\end{theorem}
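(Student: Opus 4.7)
The plan is to treat $\betagam$ as a constrained M-estimator of $\betast$ and follow the standard route: localize the problem via strong consistency (Theorem \ref{thm:fppe_as_convergence}), use strict complementary slackness (Assumption \ref{as:scs}) to pin down the active set, perform a Taylor expansion of the KKT conditions on the resulting linear subspace, and then absorb the stochastic remainder using empirical process tools. Let $\Ic = \{i : \betasti < 1\}$ and $I_= = \{i : \betasti = 1\}$. With $\cP = \Diag(\indi(\betasti<1))$, the projected Hessian $\cH_B = \cP \cH \cP$ appears naturally because only coordinates in $\Ic$ are free to vary to first order.

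\textbf{Asymptotic expansion of $\betagam$.} First, because $-\nabla H(\betast) = \deltast \in \relint(\cN_B(\betast))$ by Assumption \ref{as:scs}, consistency of $\betagam$ and of the empirical gradient implies that with probability tending to one, $\betagam_i = 1$ for all $i \in I_=$, while the inequality constraints are inactive for $i \in \Ic$. Thus $\betagam$ eventually solves the unconstrained first-order condition $\nabla_\Ic H_t(\betagam) = 0$ on the affine subspace $\{\beta : \beta_{I_=} = 1\}$. Taylor-expanding around $\betast$ gives
\[
    0 = \nabla_\Ic H_t(\betast) + \cH_{\Ic,\Ic}\,(\betagam - \betast)_\Ic + R_t,
\]
where I will argue $R_t = o_p(t^{-1/2})$. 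Since $\nabla H(\betast) = -\deltast$ and $\deltasti = 0$ for $i\in \Ic$, the gradient $\nabla_\Ic H_t(\betast)$ is a zero-mean i.i.d.\ average of $(\must(\t^\tau) - \mubarst)_\Ic$ (using $\nabla_\beta F(\t,\betast) = \must(\t) - [b_i/\betasti]$). Inverting $\cH_{\Ic,\Ic}$ (nonsingular by assumption) and padding with zeros on $I_=$ gives exactly $-\cH_B^{\dagger}$ applied to $\must(\t^\tau) - \mubarst$, which is the claimed influence function $D_\beta$. A standard CLT for i.i.d.\ sums then yields asymptotic normality with covariance $\cH_B^\dagger \var(\must) \cH_B^\dagger$.

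\textbf{Asymptotic expansion of $\REVgam$.} Since $\REVgam = \frac1t \sum_\tau \max_i \betagam_i v_i(\thetau) = P_t f(\cdot,\betagam)$ and $\REV^* = \E[f(\cdot,\betast)]$ with $f(\cdot,\betast) = \pst$, decompose
\[
    \REVgam - \REV^* = (P_t - \E)f(\cdot,\betast) + \big(\E f(\cdot,\betagam) - \E f(\cdot,\betast)\big) + (P_t-\E)\big(f(\cdot,\betagam) - f(\cdot,\betast)\big).
\]
The first term is an i.i.d.\ mean of $\pst(\thetau) - \REV^*$. For the second, differentiability of $\fbar$ at $\betast$ with gradient $\mubarst$ (established via Assumption \ref{as:smo} and the envelope-type argument behind Lemma \ref{lm:fppe_relation}) produces $(\mubarst)\tp(\betagam - \betast) + o_p(t^{-1/2})$. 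Substituting the expansion of $\betagam - \betast$ derived above combines these into the claimed form of $D_\REV$. The third term is the stochastic remainder, which I need to show is $o_p(t^{-1/2})$.

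\textbf{Main obstacle.} The crux is the stochastic equicontinuity step: showing that $(P_t - \E)\big(f(\cdot,\beta) - f(\cdot,\betast)\big) = o_p(t^{-1/2})$ uniformly for $\beta$ in a $t^{-1/2}$-neighborhood of $\betast$, and similarly that the remainder $R_t$ in the expansion of $\nabla_\Ic H_t$ is $o_p(t^{-1/2})$. Because $f(\theta,\beta) = \max_i \beta_i v_i(\theta)$ is nonsmooth in $\beta$ pointwise, one cannot simply invoke a smooth Taylor argument; instead, I would verify that the function class $\{f(\cdot,\beta) : \beta \in N(\betast)\}$ is Donsker (it is a finite pointwise maximum of Lipschitz-in-$\beta$ functions on a compact index, so it has bounded bracketing entropy) and then apply a maximal inequality together with $L^2$-continuity of the class at $\betast$. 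Once this and the analogous uniform LLN on a neighborhood of $\betast$ (needed to absorb $R_t$) are established, the rest is bookkeeping, and Slutsky's theorem delivers both CLTs with the stated covariances $\Sigma_\b$ and $\sigma^2_\REV = \E[D_\REV^2]$.
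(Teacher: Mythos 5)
Your proposal takes a genuinely different route from the paper's. The paper verifies the hypotheses of a general constrained-$M$-estimation theorem (Theorems 3.1--3.3 of Shapiro, 1989), in which the limit of $\sqrt{t}(\betagam-\betast)$ is characterized by a piecewise-quadratic program over the critical cone; Assumption~\ref{as:scs} enters only to collapse that cone to a linear subspace and hence force normality of the limit. You instead use \cref{as:scs} \emph{up front}: identify the active set exactly with probability tending to one, reduce to an unconstrained $M$-estimation problem on the affine set $\{\beta:\beta_{I_=}=1\}$, and expand the first-order condition there. This is more elementary and avoids the critical-cone machinery entirely, at the cost of leaning on strict complementarity more essentially (the paper's framework would still yield a limit distribution, just a non-normal one, if \cref{as:scs} failed). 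For the revenue CLT your three-term decomposition is the same as the paper's, and the Donsker argument for $\{f(\cdot,\beta)\}$ mirrors the paper's verification via the VC-subgraph class $\cF_1$. So the skeleton is sound, but there are two gaps you should close.

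First, ``uniform LLN on a neighborhood'' does not absorb $R_t$: uniform LLN gives $o_p(1)$, but you need $R_t = o_p(t^{-1/2})$. Concretely, writing $R_t = [(\nabla H_t - \nabla H)(\betagam) - (\nabla H_t - \nabla H)(\betast)] + [\nabla H(\betagam) - \nabla H(\betast) - \cH(\betagam-\betast)]$, the first bracket is exactly a stochastic-equicontinuity term for the \emph{(sub)gradient} class $\cF_2 = \{\theta\mapsto D_f(\theta,\beta)\}$, not the price class $\cF_1$ you establish Donskerness for. The paper proves $\cF_2$ is VC-subgraph (\cref{lm:verifying_VC}), which is what gives $o_p(t^{-1/2}+\|\betagam-\betast\|)$ via \cref{lm:donsker_to_SE,lm:ltwo_to_parameter}. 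You also need a preliminary rate $\betagam - \betast = O_p(t^{-1/2})$ to turn this into $o_p(t^{-1/2})$ and to bound the second bracket by $o(\|\betagam-\betast\|) = o_p(t^{-1/2})$ via twice-differentiability of $H$; neither step is free in your write-up.

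Second, the active-set identification step is stated a bit too quickly. Since $H_t$ is a finite pointwise max plus a smooth term, $\betagam$ generically lands at a kink of $H_t$, so ``consistency of the empirical gradient'' should be read as: every selection $g_t \in \partial H_t(\betagam)$ satisfies $g_t \to \nabla H(\betast)$ (e.g.\ via a uniform LLN for subgradient selections on a neighborhood, Theorem 7.53 of Shapiro et al.). Combined with $\nabla_i H(\betast)=-\deltasti<0$ for $i\in I_=$ and the optimality condition $0\in\partial H_t(\betagam)+\cN_B(\betagam)$, this forces $\betagam_i=1$ eventually. With these two points repaired, your argument goes through and gives the same influence functions and covariances.
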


The functions $D_\beta$ and $D_\REV$ are called the {influence functions} of the estimates $\betagam$ and $\REV^\gam$ because they measure the change in the estimates caused by adding a new item to the market (asymptotically).

\new{
\cref{thm:clt} can be simplified if $I_= = [k]$, i.e., the first $k$ buyers are the ones with $\beta_i=1$.
Let $\betast_<$ and $\betagam_<$ denote the subvectors corresponding to $I_<$, and define $\betast_=$ and $\betagam_=$ similarly.
Let
$\cH_{<}$ denote the lower right $(n-k)\times(n-k)$ block of $\cH $ corresponding to $I_<$.
\cref{thm:clt} then gives
$\sqrt t (\betagam_< - \betast_<) \tod \cN(0, \Sigma_{\beta, <})$ and $\sqrt t (\betagam_= - \betast_=) = o_p (1)$ for some positive semi-definite matrix $\Sigma_{\beta, <}$ \footnote{If $I_=$ (resp. $I_<$) is empty, we disregard the statement for $\betagam_=$ (resp. $\betagam_<$).}.
To see this, 
note the pseudo-inverse of the projected Hessian $(\cH_B )\pinv = \Diag(0_{k\times k} , (\cH_{<})\inv )$. 
Consequently, 
$ \sqrt{t} (\betagami - \betasti)$ is of order $o_p(1)$ for $i \in I_=$, and thus converging faster than the usual $O_p(1)$ rate.
In fact, one can show $\P(\betagami = 1) \to 1$ for all $i \in I_=$; see \cref{lm:fast_beta_convergence}.
The fast rate phenomenon is empirically investigated in \cref{sec:exp_fast_rate}.

A practical implication of \cref{thm:clt} is the identification of budget constrained buyers in the limit market.
By \cref{as:scs} we have $I_= = \{i: \betasti = 1\}= \{ i: \deltasti > 0\}$, i.e., $I_=$ is the set of buyers who are not budget constrained, and $\Ic = \{i:\betasti < 1\} = \{ i: \deltasti = 0\} $ is the set of buyers that exhaust their budget. \footnote{Without \cref{as:scs} we only have $ \{i:\betasti < 1\} \subset \{i:\deltasti = 0\}$ and $\{i:\deltasti > 0 \}\subset \{i:\betasti = 1\}$ by complementary slackness.}
The fast rate $o_p(t^{-1/2})$ implies that the platform can identify which buyers are budget constrained with high confidence.

\begin{corollary} \label{cor:set_I_estimate}
    Let 
    \cref{as:smo,as:scs} hold.
    Let $\hat I_= = \{i: \betagami \geq 1- \epsilon_t\}$ and $\hat I_< = [n] \setminus \hat I_=$ for some sequence $\epsilon_t$ such that $0\leq \epsilon_t = o(1)$ and $\sqrt t \epsilon_t \to  c \in (0,\infty ]$. Then 
    $ \P(\hat I_= = I_= \text{ and } \hat I_< = I_<) \to 1$.
\end{corollary}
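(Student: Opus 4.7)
The plan is to reduce the claim to a per-buyer statement and then handle the two regimes $I_=$ and $I_<$ separately. By construction $\hat I_< = [n] \setminus \hat I_=$ and by definition $I_< = [n] \setminus I_=$, so the event $\{\hat I_= = I_= \text{ and } \hat I_< = I_<\}$ coincides with $\{\hat I_= = I_=\}$, which in turn equals
\[
\bigcap_{i \in I_=} \{\betagami \geq 1 - \epsilon_t\} \;\cap\; \bigcap_{i \in I_<} \{\betagami < 1 - \epsilon_t\}\;.
\]
A union bound over $n$ buyers reduces the task to showing each of these events has probability tending to one.

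For buyers $i \in I_=$, I would exploit the fast-rate information already contained in \cref{thm:clt}: since $\betasti = 1$ and the projected-Hessian argument forces $\sqrt{t}(\betagami - \betasti) = o_p(1)$ on coordinates in $I_=$, we have $\sqrt{t}(1 - \betagami) \toprob 0$. Combined with $\sqrt{t}\epsilon_t \to c \in (0,\infty]$, a Slutsky-type comparison gives $\P\bigl(\sqrt{t}(1-\betagami) > \sqrt{t}\epsilon_t\bigr) \to 0$, hence $\P(\betagami \geq 1 - \epsilon_t) \to 1$. A cleaner alternative is to invoke \cref{lm:fast_beta_convergence} directly, which asserts $\P(\betagami = 1) \to 1$ for each $i \in I_=$; since $\epsilon_t \geq 0$ and the FPPE constraint forces $\betagami \leq 1$, the inclusion $\{\betagami = 1\} \subseteq \{\betagami \geq 1 - \epsilon_t\}$ is immediate and yields the same conclusion.

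For buyers $i \in I_<$, the argument is even simpler: $\betasti < 1$ introduces a constant gap that dominates both $\epsilon_t$ and the $O_p(t^{-1/2})$ fluctuation of $\betagami$. Setting $\Delta_i \defeq (1 - \betasti)/2 > 0$, strong consistency from \cref{thm:fppe_as_convergence} gives $\P(|\betagami - \betasti| < \Delta_i) \to 1$, while $\epsilon_t = o(1)$ ensures $1 - \epsilon_t > \betasti + \Delta_i$ for all sufficiently large $t$; on the intersection of these two events we have $\betagami < \betasti + \Delta_i < 1 - \epsilon_t$, so $\P(\betagami < 1 - \epsilon_t) \to 1$.

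The only real technical content lies in the fast-rate behaviour of $\betagami$ on $I_=$, which is already discharged by \cref{thm:clt} and \cref{lm:fast_beta_convergence}; at the corollary level there is essentially no obstacle. The lower-bound hypothesis $\sqrt{t}\epsilon_t \to c > 0$ serves only to guarantee that the detection threshold $1-\epsilon_t$ remains above the $o_p(t^{-1/2})$ fluctuation around $1$ on $I_=$ when one relies purely on the CLT rather than on the stronger exact-equality lemma; either way, combining the two per-buyer bounds via the union bound completes the argument.
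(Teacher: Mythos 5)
Your proof is correct and follows essentially the same route as the paper: for $i \in I_=$ you use the $o_p(t^{-1/2})$ rate from \cref{thm:clt} together with $\sqrt{t}\epsilon_t \to c > 0$, and for $i \in I_<$ you use consistency plus the constant gap $1 - \betasti > 0$; the paper's proof does exactly this, with the union bound over buyers left implicit.
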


Finally, \cref{thm:clt} implies fast revenue convergence if $I_= = \emptyset$.
In this case, the influence function $D_\REV(\t) = 0$ for all $\t$ because $\cH \betast = \mubarst$ if $I_= = \emptyset$ (\cref{lm:fppe_relation}), and $(\betast)\tp \must(\t) = \pst(\t)$. 
Then \cref{eq:rev_clt} gives
$\sqrt t (\REVgam - \REVst) = o_p(1)$.
Intuitively, 
if $\betagami < 1$ for all $i$, then all buyers' budgets are exhausted in the observed FPPE, and so we have
$
    \REVgam = \sumiton b_i$. 
By the convergence $\betagam \toprob \betast$ and that $\betasti < 1$ for all $i$, we know that for large $t$ with high probability, $\betagami <1 $ for all $i$, and thus
$\REVgam =\sumiton b_i =\REVst$.
In that case, it must be that the asymptotic variance of revenue equals zero.
}

In an FPPE, individual utilities and Nash social welfare can be similarly defined. 
By applying the delta method and \cref{thm:clt} we can derive asymptotic distributions for these quantities, using the fact that they are smooth functions of $\beta$. See Appendix~\ref{sec:full:cor:clt_u_and_nsw} for more details.


\subsubsection{Asymptotic Local Minimax Optimality}

Given the asymptotic distributions for $\beta$ and $\REV$, we will show that the observed FPPE estimates are optimal in an asymptotic local minimax sense.
Recall in \cref{sec:perturb_supply} we have defined the perturbed supply family $G_d$ of dimension $d$ with perturbation $(\alpha,g)$.

\paragraph{Asymptotic local minimax optimality for $\beta$.}
We first focus on estimation of pacing multipliers.\
For a given perturbation ${(\alpha,g)}$, we let $\betast_{\alpha,g}$, $\pst_{\alpha,g}$ and $\REVst_{\alpha,g}$ be the limit FPPE pacing multiplier, price and revenue under supply distribution $s_{\alpha,g}$.
Clearly $\betast = \betast_{0,g}$ for any $g$ and similarly for $\pst_{\alpha,g}$ and $\REV_{\alpha, g}$.
Let $L:\Rn \to \R$ be any symmetric quasi-convex loss. 
\footnote{A function is quasi-convex if its sublevel sets are convex.}
In asymptotic local minimax theory we are interested in the local asymptotic risk: given a sequence of estimators $\{ \hat \beta_t: \Theta^t \to \Rn \}_t$, 

\begin{equation*}
\LAR_\beta (\{\hat \beta_t\}_t )  \defeq 
    {\sup_{ g\in G_d, d\in \mathbb{N}}
    \lim_{c\to \infty}
    \liminf_{t \to \infty}
    \sup_{\|\alpha\|_2\leq \frac{c}{\sqrt t}}
    \E_{s_{\alpha,g}^{\otimes t}}[L(\sqrt{t} (\hat \beta_t - \betast_{\alpha,g} ))] } \;.
\end{equation*}
As an immediate application of Theorem 1 from \citet{duchi2021asymptotic}, it holds that 
\begin{align*}
    \inf_{ \{\hat \beta_t\}_t }  \LAR_\beta (\{\hat \beta_t\}_t )  \geq \E[L(\cN( 0, (\cH_B)\pinv \var(\must) (\cH_B)\pinv))] \; . 
\end{align*}
where the expectation is taken w.r.t.\ a normal specified above.
Moreover, the lower bound is achieved by the observed FPPE pacing multipliers $\betagam$ according to the normality result in \cref{thm:clt}.

\paragraph{Asymptotic local minimax optimality for revenue estimation.}
For pacing multipliers, the result is a direct application of the perturbation result from \citet{duchi2021asymptotic}. The result for revenue estimation is more involved.
Given a symmetric quasi-convex loss $L:\R\to\R$, we define the local asymptotic risk for any procedure $\{\hat r _t : \Theta^t \to \R\}$ that aims to estimate the revenue: 
\begin{align*} \hspace{-.2cm}
\LAR_\REV ( \{\hat r_t \}) \defeq
    \sup_{ g\in G_d, d\in \mathbb{N}}
    \lim_{c\to \infty}
    \liminf_{t \to \infty}
    \sup_{\|\alpha\|_2\leq \frac{c}{\sqrt t}}
    \E_{s_{\alpha,g} ^{\otimes t}}[L(\sqrt{t} (\hat r_t  - \REVst_{\alpha,g} ))] \;. 
\end{align*} 
\begin{theorem}[Asymptotic local minimaxity for revenue]
    \label{thm:rev_localopt}
   If \cref{as:smo,as:scs} hold, then 
       $\inf_{ \{\hat r_t\}} \LAR_\REV ( \{\hat r_t \}) \geq \E[L(\cN(0, \sigma\sq_\REV))] \;.
$
\end{theorem}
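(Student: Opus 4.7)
The plan is to reduce to the Hájek--Le Cam asymptotic local minimax theorem applied to the smooth one-parameter submodels $\{s_{\alpha,g}^{\otimes t}\}_{\alpha \in \R^d}$, and then identify the efficient asymptotic variance as $\sigma_\REV^2$ by optimizing over the perturbation direction $g$. The efficient influence function is already pinned down by \cref{thm:clt}: the expansion $\sqrt{t}(\REV^\gamma - \REV^*) = t^{-1/2}\sum_\tau D_\REV(\theta^\tau) + o_p(1)$ suggests (and ultimately forces) the pathwise derivative of the revenue functional in direction $g$ to equal $\E_s[D_\REV \cdot g]$.

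\paragraph{Steps.} First I would fix $g \in G_d$ and verify that the parametric family $\alpha \mapsto s_{\alpha,g}$ is differentiable in quadratic mean at $\alpha = 0$. Because $s_{\alpha,g}(\theta) = C^{-1}(1 + \alpha^\top g(\theta))\,s(\theta)$ is smooth in $\alpha$ and $\E_s[g] = 0$, the score at $\alpha=0$ is $g$ itself and the Fisher information matrix is $I_g = \E_s[gg^\top] = \cov(g)$. Second, the key technical step is to establish pathwise differentiability of the target:
\[
\REV^*_{\alpha,g} - \REV^* \;=\; \alpha^\top \E_s[g\, D_\REV] + o(\|\alpha\|).
\]
I would proceed by writing $\REV^*_{\alpha,g} = H_{\alpha,g}(\betast_{\alpha,g}) + \sum_i b_i \log \betast_{\alpha,g,i}$ and differentiating through the perturbed dual EG program. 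Under \cref{as:smo,as:scs}, the active set $I_=$ is locally constant (since $\deltasti > 0$ strictly for $i \in I_=$ and the gradient $-\nabla H(\betast) = \deltast$ perturbs continuously), so $\betast_{\alpha,g}$ is characterized by a smooth reduced first-order condition on the inactive coordinates $I_<$. The implicit function theorem, applied with the invertible projected Hessian $\cH_B$ on $I_<$, yields differentiability of $\alpha \mapsto \betast_{\alpha,g}$, and then the envelope theorem collapses the contribution of $\partial_\beta H$ to zero, leaving precisely $\E_s[g D_\REV]$.

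Third, for each fixed $g, d$ the Hájek--Le Cam LAM theorem (e.g.\ van der Vaart, Thm.~25.21) applied to the LAN submodel with differentiable target $\psi_g(\alpha) := \REV^*_{\alpha,g}$ gives
\[
\liminf_{t\to\infty} \inf_{\hat r_t} \sup_{\|\alpha\|_2 \leq c/\sqrt{t}} \E_{s_{\alpha,g}^{\otimes t}}\!\big[L\big(\sqrt{t}(\hat r_t - \REV^*_{\alpha,g})\big)\big] \;\geq\; \E\!\big[L\big(\cN(0,\, \dot\psi_g^\top I_g^{-1} \dot\psi_g)\big)\big].
\]
Swapping $\inf$ and $\sup$ in the usual direction, this bound passes through to $\inf_{\{\hat r_t\}} \LAR_\REV$. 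Taking the supremum over $g \in G_d$ and $d \in \mathbb{N}$ reduces to
\[
\sup_{g:\,\E_s[g]=0} \E_s[g D_\REV]^\top \cov(g)^{-1} \E_s[g D_\REV],
\]
which, by Cauchy--Schwarz in $L^2(s)$ (using that $D_\REV$ is centered), is maximized by any $g$ spanning $D_\REV$, producing the value $\E_s[D_\REV^2] = \sigma_\REV^2$. Symmetry and quasi-convexity of $L$ preserve monotonicity under this identification.

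\paragraph{Main obstacle.} The hard part is the pathwise differentiability claim: unlike the pacing multiplier vector (where the conclusion follows essentially from \citet{duchi2021asymptotic}[Thm.~1] applied to the constrained EG program), revenue is a scalar derived functional depending on both the perturbed supply and the perturbed equilibrium. Tracking how the boundary constraints $\beta_i \le 1$ interact with the perturbation requires \cref{as:scs} to pin down a neighborhood where the active set is frozen; without SCS, one could instead see a directional derivative that is not linear in $g$, and the normal lower bound would no longer be tight.
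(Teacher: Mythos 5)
Your proposal is correct and follows essentially the same route as the paper's proof: compute the pathwise derivative of $\alpha \mapsto \REVst_{\alpha,g}$ at $\alpha=0$, identify it as $\E_s[D_\REV\, g]$ using the perturbation of $\betast_{\alpha,g}$ through the projected Hessian (with \cref{as:scs} freezing the active set), and then invoke the standard LAN/local-minimax machinery plus an optimization over directions $g$ to produce the variance $\sigma^2_\REV = \E[D_\REV^2]$. The paper's version differentiates $\int f(\theta,\betast_{\alpha,g})\,s_{\alpha,g}\,\diff\theta$ directly and cites \citet{duchi2021asymptotic} for both the implicit-function-type derivative of $\betast_{\alpha,g}$ and the final Hájek--Le Cam step, whereas you re-derive the $\betast_{\alpha,g}$ derivative from first principles and make the Cauchy--Schwarz optimization over $g$ explicit; these are bookkeeping differences, not a different argument. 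One small caveat on phrasing: writing $\REVst_{\alpha,g} = H_{\alpha,g}(\betast_{\alpha,g}) + \sum_i b_i \log \betast_{\alpha,g,i}$, the envelope theorem kills the $\partial_\beta H$ contribution in the first term, but the $\sum_i b_i \log \betast_{\alpha,g,i}$ correction still contributes a term proportional to $\nabla_\alpha \betast_{\alpha,g}$ — this is exactly what produces the $(\mubarst)^\top D_\beta$ piece of $D_\REV$, so the conclusion is right, but ``leaving precisely $\E_s[g D_\REV]$'' should not be read as saying the implicit derivative of $\betast$ drops out entirely.
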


Proof in Appendix~\ref{sec:proof:rev_local_asym_risk}. In the proof we calculate the derivative 
of revenue w.r.t.\ $\alpha$, which in turn uses 
a perturbation result for constrained convex programs from \citet{shapiro1989asymptotic}.
Again, the lower bound is achieved by the observed FPPE revenue $\REVgam$ according to the normality result in \cref{thm:clt}.
Similar optimality statements can be made for $u$ and $\NSW$ by finding the corresponding derivative expressions.

\subsection{Variance Estimation and Inference}
\label{sec:fppe_inference}

To perform inference on $\betast$, we construct estimators for the influence functions \cref{eq:def_inf_func}. In turn, this requires estimators for 
the projected Hessian $\cH_B$ (\cref{def:cHB}) and the variance of the utility map $\must$ (\cref{eq:def:must}).
First, given a sequence of smoothing parameters $\varepsilon_{\cP,t} = o(1)$ and $\varepsilon_{\cP,t} \sqrt t \to c \in (0,\infty]$, we estimate the projection matrix $\cP$ by 
$
    \hat \cP \defeq \Diag(\indi(\betagami < 1 - \varepsilon_{\cP,t} )) \;.
$
For another sequence $\varepsilon_{\cH,t} = o(1)$ with $\varepsilon_{\cH,t} \sqrt t \to \infty$, we introduce a numerical difference estimator $\hat \cH$ for the Hessian matrix $\cH$, whose $(i,j)$-th entry is
\begin{align}
    \label{eq:def:hessian_estimate}
    \hat{\cH}_{i j}\defeq[H_t(\betagam_{++})-H_t(\betagam_{+-})
     -H_t(\betagam_{-+}) 
    +H_t(\betagam_{--})] / (4 \varepsilon_{\cH,t}^2 )
\end{align}  
with 
$\betagam_{\pm \pm} \defeq  \betagam \pm e_i \varepsilon_{\cH,t} \pm e_j \varepsilon_{\cH,t}$,
and $H_t$ is defined in \cref{eq:def_pop_eg}. Finally, $\hat \cH_B = \hat \cP \hat \cH \hat \cP$ is the estimator of $\cH_B$. 
Next, recall $\xgam = (\xitau)_{i,\tau}$ and $(\vitau)_{i,\tau}$ are the allocation and values in the finite FPPE.
Mimicking $\must$ in \cref{eq:def:must}, $\mubarst$ in \cref{eq:def:mubarst} and $\cov(\must)$, define the finite sample analogues
\begin{align}
   \mu^\tau = [x_1^{\tau} v_1^\tau, \dots, x_n^{\tau} v_n^\tau]\tp
   , \quad 
   \bar \mu= \frac1t \sumtau \mutau
   , \quad  
   \hat \Omega = \frac1t \sumtau (\mutau - \mubar) (\mutau - \mubar)\tp 
   \;.
\end{align}

With all the new notations, we define the estimators of influence functions in \cref{eq:def_inf_func} 
\begin{align*}
    \hat D_\beta^\tau  \defeq - (\hat \cH_B)\pinv (\mutau- \mubar) \;,
    \quad
    \hat D _\REV ^\tau \defeq \ptau - \REV^\gamma + (\mubar) \tp\hat D_\beta^\tau \;.
\end{align*}
Given that the asymptotic variances of $\betagam$ and $\REVgam$ are 
$\E[D_\beta D_\beta \tp]$ and $\E[D_\REV\sq]$, respectively, plug-in estimators for the (co)variance are naturally
\begin{align}
    \hat \Sigma_\beta \defeq \frac{1}{t} \sumtau \hat D_\beta^\tau  (\hat D_\beta^\tau )\tp = (\hat \cH_B)\pinv  \hat \Omega (\hat \cH_B)\pinv
    \; , \;\;
    \hat \sigma \sq_\REV \defeq \frac{1}{t} \sumtau (\hat D_\REV ^\tau)\sq \; .
    \label{eq:plugin variance}
\end{align}

\begin{theorem}
    \label{thm:variance_estimation}
    Let the conditions of \cref{thm:clt} 
    and the required rate conditions on $\varepsilon_{\cP,t}, \varepsilon_{\cH,t}$ hold. 
    Then $\hat \Sigma_\beta \toprob \Sigma_\beta$ and $\hat \sigma_\REV \sq \toprob \sigma_\REV \sq$. 
    Proof in Appendix~\ref{sec:proof:thm:variance_estimation}.
\end{theorem}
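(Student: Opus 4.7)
I would reduce both claims to the consistency of three building blocks: the projection estimator $\hat\cP$, the finite-difference Hessian $\hat\cH$, and the empirical covariance $\hat\Omega$, supplemented by $\mubar\toprob \mubarst$ and $\REVgam \toprob \REVst$. Once these are in place, the continuous mapping theorem gives $\hat\cH_B=\hat\cP\hat\cH\hat\cP \toprob \cH_B$, and because $\hat\cP$ coincides with $\cP$ with probability tending to one (argued below), the rank of $\hat\cH_B$ eventually matches that of $\cH_B$, so the pseudo-inverse is continuous along the limit and $(\hat\cH_B)\pinv \toprob (\cH_B)\pinv$. Slutsky then delivers $\hat\Sigma_\beta \toprob \Sigma_\beta$. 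For $\hat\sigma^2_\REV$, I would expand $\hat D^\tau_\REV = (\ptau - \REVgam) + \bar\mu\tp \hat D^\tau_\beta$ so that $\hat\sigma^2_\REV$ decomposes into the sample variance of $\pgam$, a cross term involving $(\hat\cH_B)\pinv(\mutau-\mubar)$, and the quadratic form $\bar\mu\tp \hat\Sigma_\beta \bar\mu$; each piece is handled by the already-established convergences plus a ULLN for $\frac1t\sumtau (\ptau)^2\toprob \E[(\pst)^2]$, and the limit is recognized as $\E[D_\REV(\theta)^2]$ from the definition of $D_\REV$.

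\textbf{The easy building blocks.} For $\hat\cP$: if $\betasti<1$, Theorem~\ref{thm:clt} gives $\betagami\toprob\betasti$ and since $\varepsilon_{\cP,t}=o(1)$ we have $\hat\cP_{ii}=1$ with probability $\to 1$; if $\betasti=1$, Corollary~\ref{cor:set_I_estimate} together with $\sqrt t\,\varepsilon_{\cP,t}\to c\in(0,\infty]$ gives $\hat\cP_{ii}=0$ with probability $\to 1$. For $\hat\Omega$: under \cref{as:smo} ties have $\P$-measure zero so the limit allocation is pure, making $\must (\must)\tp$ diagonal with entries $\xsti v_i^2$; in the finite market each item is also assigned to at most one buyer, so off-diagonal entries of $\hat\Omega$ vanish identically. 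For each diagonal entry I would split
\[
  \frac1t\sumtau \xtaui (\vitau)^2 = \frac1t\sumtau \xstsi (\vitau)^2 + \frac1t\sumtau (\xtaui - \xstsi)(\vitau)^2,
\]
where the first term is handled by the LLN, and the residual is bounded by $\vbar^2$ times the empirical measure of items on which the observed and limit allocations disagree. This measure is $o_p(1)$ because $\betagam\toprob\betast$ shrinks the tie-band whose $\P$-measure vanishes as the band width goes to zero, using \cref{as:smo}. The same argument yields $\mubar\toprob \mubarst$, while $\REVgam\toprob\REVst$ is Theorem~\ref{thm:fppe_as_convergence}.

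\textbf{The main obstacle: the Hessian estimator.} Writing $\hat\cH_{ij}$ as a four-point second difference of $H_t$ at $\betagam$, I split it as the same second difference of $H$ plus that of $H_t-H$. The $H$-piece converges to $(\nabla\sq H(\betast))_{ij}=\cH_{ij}$ by a Taylor expansion using \cref{as:smo}, $\betagam\toprob\betast$, and $\varepsilon_{\cH,t}=o(1)$. The delicate residual is the second difference of $H_t-H$ divided by $\varepsilon_{\cH,t}^2$, since $H_t-H$ itself is only $O_p(t^{-1/2})$ pointwise. Here I would invoke stochastic equicontinuity of the empirical process $G_t(\beta) = \sqrt t (H_t(\beta)-H(\beta))$: the class $\{F(\cdot,\beta):\beta\in\text{nbhd of }\betast\}$ is uniformly bounded and Lipschitz in $\beta$ with constant at most $\vbar$, hence Donsker with well-controlled bracketing entropy. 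A maximal-inequality bound on the oscillation of $G_t$ on balls of radius $\varepsilon_{\cH,t}$, sharpened by the fact that the four-point second-difference operator annihilates both the constant and linear terms of any sufficiently smooth function, yields a numerator of order $o_p(\varepsilon_{\cH,t}^2)$ under the bandwidth $\sqrt t\,\varepsilon_{\cH,t}\to\infty$. Dividing by $\varepsilon_{\cH,t}^2$ leaves an $o_p(1)$ residual, so $\hat\cH\toprob\cH$. This is the step I expect to require the most care, since it is the only place in the proof where the non-smoothness of $H_t$ and the bandwidth choice for $\varepsilon_{\cH,t}$ interact nontrivially.
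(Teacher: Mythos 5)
Your high-level decomposition matches the paper's (show $\hat\cH\toprob\cH$, $\P(\hat\cP=\cP)\to 1$, $\hat\Omega\toprob\cov(\must)$, then Slutsky, then a term-by-term analysis of $\hat\sigma^2_\REV$), but two of your steps have issues, one minor and one substantive.

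\emph{Minor.} You assert that ``off-diagonal entries of $\hat\Omega$ vanish identically.'' They do not: $\hat\Omega = \frac1t\sum_\tau \mutau(\mutau)^\top - \mubar\mubar^\top$, and while the first summand has only one nonzero diagonal entry per $\tau$, the second does not, so $\hat\Omega_{ij} = -\mubar_i\mubar_j \ne 0$ for $i\ne j$. The same holds for $\cov(\must)$, so the conclusion you want still follows from $\mubar\toprob\mubarst$, but your stated diagonal structure is wrong. The paper instead invokes a uniform law of large numbers for random functions that are continuous $\P$-a.s.\ (Lemma on estimation of $\cov(\must)$, via Shapiro Theorem~7.53), applied to the class of maps $\t\mapsto D_f(\t,\beta)$ and $\t\mapsto D_f(\t,\beta)D_f(\t,\beta)^\top$; your shrinking tie-band argument is a serviceable alternative for the diagonal entries, but it must be supplemented by the $\mubar\to\mubarst$ off-diagonal correction.

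\emph{Substantive.} The Hessian step is where your sketch has a real gap. You propose to control the second difference of $H_t-H$ by function-level Donskerness of $\{F(\cdot,\beta)\}$ plus the remark that ``the four-point second-difference operator annihilates both the constant and linear terms of any sufficiently smooth function.'' But $H_t-H$ is piecewise linear, not $C^2$, so no Taylor-remainder argument applies to it directly; and a maximal inequality for the Donsker class $\{F(\cdot,\beta)\}$ only gives an oscillation bound of order $o_p(t^{-1/2})$, which is \emph{not} $o_p(\varepsilon_{\cH,t}^2)$ under the assumed rate $\sqrt t\,\varepsilon_{\cH,t}\to\infty$ (take $\varepsilon_{\cH,t}=t^{-d}$ with $d\in(\tfrac14,\tfrac12)$: then $t^{-1/2}/\varepsilon_{\cH,t}^2 = t^{2d-1/2}\to\infty$). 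The paper handles this by working at the \emph{gradient} level (\cref{lm:hessian_estiamtion}, adapted from Newey): a mean-value theorem for locally Lipschitz functions writes each first difference $(H_t-H)(\beta)-(H_t-H)(\betagam)$ as $(\zeta^*_t)^\top(\beta-\betagam)$ for some $\zeta^*_t$ in the subdifferential of $H_t-H$ at an intermediate point; the second-difference operator then kills the constant-coefficient part $\zeta_t^\top(\cdot)$ exactly, leaving $(\zeta^*_t-\zeta_t)^\top(\cdot)$, and the gradient-level stochastic equicontinuity \cref{lm:se_of_subgradient} (verified via the VC-subgraph class $\cF_2$ of gradients in \cref{lm:verifying_VC}, not the function class $\cF_1$) gives $\|\zeta^*_t-\zeta_t\|=o_p(t^{-1/2}+\varepsilon_{\cH,t})$, yielding the required $o_p(\varepsilon_{\cH,t}^2)$. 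To repair your argument you would need to replace ``bracketing-entropy/Donsker for $F(\cdot,\beta)$'' with this mean-value reduction to the subgradient process and establish Donskerness of the gradient class.
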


For any $\alpha \in (0,1)$, the $(1-\alpha)$-confidence regions for $\betast$ and $\REVst$ are 
\begin{align}
    \CR_\beta\defeq \betagam + (\chi_{n, \alpha} / \sqrt{t})\hat \Sigma_\beta^{1/2} \B \; ,\;\;
    \CI_\REV \defeq[\;\REVgam\pm z_{\alpha/2}  \hat \sigma_{\REV}/\sqrt t \;] \; .
   \label{eq:rev_variance}
\end{align} 
where $\chi_{n,\alpha}$ is the $(1-\alpha)$-th quantile of a chi-square distribution with degree $n$, $\B$ is the unit ball in $\Rn$, and $z_{\alpha/2}$ is the $(1-\frac{\alpha}{2})$-th quantile of a standard normal distribution.
The coverage rate of $\CI_\REV$ is empirically verified in \cref{sec:exp_rev_coverage}.

The rate condition on $\varepsilon_{\cH,t}$ suggests choosing $\varepsilon_{\cH,t} = t^{- d}$ for $0 < d < \frac12$. \cref{sec:exp_hessain} studies how the choice of $d$ affects the Hessian estimation numerically.

\subsubsection{Hessian-Free Variance Estimation under a Bid Gap Condition}
\new{
We now show that the Hessian estimation can be avoided if there is sufficient separation between the highest and second-highest bids. 
Define the gap between the highest and the second-highest bids under pacing multiplier $\beta$ by
\begin{align}
    \label{eq:def:bidgap}
    \bidgap(\beta,\theta) \defeq \max \{\betai \vithe  \} - \operatorname{secondmax}\{ \betai \vithe \} \;,\end{align} 
here $\operatorname{secondmax}$ is the second-highest entry; e.g., $\operatornamewithlimits{secondmax}([1,1,2]) = 1$.
The condition will require integrability of the inverse of the bid gap, i.e., $\E[\bidgap(\beta,\theta)\inv] < \infty$.
We introduce a new price function as follows.
If $I_= \neq \emptyset$,
we define 
\begin{align*}
    \tilde p^*(\t) =  p^*(\t) \indi\{ \max_{i \in I_=} \betasti \vithe = \max_{i \in [n]} \betasti \vithe \}.
\end{align*}
If $I_= = \emptyset$, then we define $\tilde p ^*(\t) = 0$. 
The new price function $\tilde p \st$ preserves the price of an item $\t$ if it is won by a buyer in $I_=$, and $\tilde p \st$ sets the price to $0$ otherwise.
For the simplified revenue variance estimator we need an estimator of $I_=$, which naturally is $\hat I_= = \{i: \betagami \geq  1 - \varepsilon_{\cP, t} \}$. 
Recall revenue $\REVgam = \frac1t \sumtau \ptau$ and price $\pgam = [p^1, \dots, p ^ t ] \tp $. 
Define $\tilde p ^\tau = \ptau \indi \{ \max_{i \in \hat I_=} \betagami \vitau = \max_{i \in [n]} \betagami \vitau \} $, $\bar {\tilde p} = \frac1t \sumtau \tilde p ^\tau$. 
Define the simplified variance estimators (neither of which requires Hessian estimation):
\begin{align}
    \label{eq:simplified variance estimator}
  \hat \sigma\sq_{\REV, \textup{sim}}= \frac1t \sumtau (\tilde p^\tau - \bar {\tilde p})\sq\;,\;
  \hat \Sigma _{\beta, \textup{sim}} = \hat \cP \Diag((\betagami)\sq b_i \inv) \hat \Omega \Diag((\betagami) \sq b_i \inv) \hat \cP 
  \;.
\end{align}

\begin{theorem}[Hessian-Free Inference]
    \label{thm:simplified_rev_var}

    Let the conditions of \cref{thm:clt} 
    and the required rate condition on $\varepsilon_{\cP,t}$ hold. 
    If additionally $\E[\bidgap (\betast, \t)\inv] < \infty$, then 
    $\sigma\sq_\REV = \var (\tilde p \st)$ and 
    $\Sigma_\beta = \cP \Diag((\betasti)\sq b_i \inv) \cov(\must) \Diag((\betasti)\sq b_i\inv) \cP$.
    Moreover, the Hessian-free variance estimators are consistent, $\hat \Sigma _{\beta, \textup{sim}} \toprob \Sigma_\beta$ and $\hat \sigma\sq_{\REV, \textup{sim}} \toprob \sigma\sq_\REV$.
    Proof in \cref{sec:proof:simplified_var_est}.
\end{theorem}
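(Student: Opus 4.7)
The plan is to reduce all four claims in the theorem to a single population identity: under the bid gap integrability condition, $\nabla^2 \fbar(\betast) = 0$. Granting this, and since $H(\beta) = \fbar(\beta) - \sum_i b_i \log \beta_i$, we have $\cH = \Diag(b_i/(\betasti)^2)$, hence $(\cH_B)^\dagger = \cP \Diag((\betasti)^2/b_i) \cP$ (using that $\cP$ commutes with diagonal matrices). Substitution into $\Sigma_\beta = (\cH_B)^\dagger \cov(\must) (\cH_B)^\dagger$ yields the simplified form for $\Sigma_\beta$. For the revenue identity, I compute $D_\REV(\theta)$ explicitly: using $\mubarsti = b_i/\betasti$ for $i \in I_<$ (from \cref{eq:beta_u_relation_fppe} combined with \cref{as:scs}), the row vector $(\mubarst)^\top (\cH_B)^\dagger$ collapses to $\sum_{i \in I_<} \betasti e_i^\top$. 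Consequently, $(\mubarst)^\top (\cH_B)^\dagger \must(\theta) = \sum_{i \in I_<} \betasti \xst_i(\theta) v_i(\theta)$ equals $p^*(\theta)$ when $\theta$ is won by some $i \in I_<$ and zero otherwise, i.e., equals $p^*(\theta) - \tilde p^*(\theta)$. Substituting then simplifies $D_\REV(\theta)$ to $\tilde p^*(\theta) - \E[\tilde p^*]$, so $\sigma^2_\REV = \var(\tilde p^*)$.

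The Hessian-vanishing claim is the crux. Fix a direction $\Delta$ and define the Taylor remainder $R(s) = \fbar(\betast + s\Delta) - \fbar(\betast) - s(\mubarst)^\top \Delta$. Whenever $\bidgap(\betast,\theta) > 2|s|\|\Delta\|_\infty \vbar$, a direct bid-comparison argument shows that the argmax of $\beta_i v_i(\theta)$ is preserved under $\beta \mapsto \betast + s\Delta$, so $f(\theta,\cdot)$ is locally linear around $\betast$ and contributes zero to the integrand of $R(s)$. Hence $R(s)$ is supported on $S_s = \{\theta : \bidgap(\betast,\theta) < 2|s|\|\Delta\|_\infty \vbar\}$, where the integrand is bounded by $2|s|\|\Delta\|_\infty \vbar$, yielding $|R(s)| \leq 2|s|\|\Delta\|_\infty \vbar \cdot \P(S_s)$. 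The bid-gap integrability upgrades the trivial Markov bound to $\P(\bidgap < v) = o(v)$: by dominated convergence $\E[\indi(\bidgap<v)/\bidgap] \to 0$, and this expectation dominates $\P(\bidgap<v)/v$. Therefore $|R(s)| = o(s^2)$ for every $\Delta$, giving $\Delta^\top \nabla^2 \fbar(\betast) \Delta = 0$. Since $\fbar$ is convex, $\nabla^2 \fbar(\betast)$ is PSD, and a PSD matrix with identically vanishing quadratic form is the zero matrix.

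For the consistency claims, I combine $\betagam \toprob \betast$ (\cref{thm:fppe_as_convergence}), $\hat \cP \toprob \cP$ (\cref{cor:set_I_estimate} under the rate condition on $\varepsilon_{\cP,t}$), and $\hat \Omega \toprob \cov(\must)$ (established inside the proof of \cref{thm:variance_estimation}). The continuous mapping theorem then yields $\hat \Sigma_{\beta,\textup{sim}} \toprob \Sigma_\beta$. For $\hat \sigma^2_{\REV,\textup{sim}}$, write it as $t^{-1}\sum_\tau (\tilde p^\tau)^2 - (\bar{\tilde p})^2$; the nontrivial ingredient is that the data-dependent indicator appearing in $\tilde p^\tau$ converges to its population analogue. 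This uses that ties under $\betast$ have $\P$-measure zero (implied by $\bidgap > 0$ a.s.), that $\hat I_= = I_=$ with probability approaching one, and a Glivenko-Cantelli argument over the relevant indicator class indexed by $\beta$. The main obstacle is the Hessian-vanishing step, specifically sharpening the Markov-type probability bound to the $o(v)$ estimate needed for an $o(s^2)$ remainder; this is exactly where the bid gap integrability assumption is used, and it is the only place in the proof where the new hypothesis enters.
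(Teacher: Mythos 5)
Your proposal is correct and takes essentially the same route as the paper: exploit the bid-gap integrability to force $\nabla^2\fbar(\betast)=0$ (so $\cH$ and $(\cH_B)^\dagger$ become diagonal), observe that $\mubarsti=b_i/\betasti$ on $I_<$ collapses $(\mubarst)^\top(\cH_B)^\dagger\must(\theta)$ to $\pst(\theta)-\tilde p^*(\theta)$, and then push consistency through $\P(\hat I_==I_=)\to1$ and a uniform law of large numbers over the relevant indicator-weighted price class. The one deviation is that you re-derive $\nabla^2\fbar(\betast)=0$ from scratch via the $o(s^2)$ bound on the second-order function remainder (using $\P(\bidgap<v)=o(v)$ from Markov plus dominated convergence), whereas the paper simply cites \cref{thm:int_implies_hessian}, whose proof runs through the Lipschitz bound on the gradient map and dominated convergence on the difference quotient; both arguments are sound and hinge on exactly the same integrability hypothesis, so this is a presentational rather than substantive difference.
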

}

\subsection{Application: A/B Testing in First-Price Auction Platforms}
\label{sec:ab testing}
Consider an auction market with $n$ buyers with a continuum of items $\Theta$ with supply function $s$.
Now suppose that we are interested in the effect of deploying some new technology (e.g. new machine learning models for estimating click-through rates in the ad auction setting).
To model treatment application we introduce the \emph{potential value functions}
$$ \hspace{-.2cm} v(0) \defeq (v_1(0,\cdot),\dots, v_n(0,\cdot)),\;v(1)\defeq (v_1(1,\cdot),\dots, v_n(1,\cdot)).$$
If item $\theta$ is exposed to treatment $w \in \{0,1\}$, then its value to buyer $i$ will be $v_i(w,\theta)$. 

Suppose we are interested in estimating the change in the auction market when treatment 1 is deployed to the entire item set $\Theta$. 
In this section we describe how to do this using A/B testing, specifically for estimating the treatment effect on revenue. 
Formally, we wish to look at the difference in revenues between the markets 
$$
    \FPPE(b, v(0), s) \text{ and }  \FPPE(b,v(1), s),
$$
where $\FPPE(b, v(0), s)$ is the market with treatment 1, and $\FPPE(b,v(1), s)$ is the one with treatment 0. 
The treatment effects on revenue is defined as $$\tau_{\REV} \defeq \REVst(1) - \REVst(0)\;,$$
where $\REVst(w)$ is revenue in the equilibrium $\FPPE(b,v(w), s)$.

\new{
The A/B test framework we put forward above is rather general.
This formalism is able to model treatments which ultimately affect the market via shifting the value distribution. We list several examples below.
\begin{itemize}
    \item 
    User interface (UI) changes (on the item side).
    Adjustments to ad aesthetics (e.g., font styles, ad placement, or link position) and user-side UI (e.g., button sizes, creating a holiday-themed user interface, running promotions during a major sports event) can be seen as modifying the experience that every user impression—our ``item''—undergoes. They affect each user's interaction with the content, thus shifting the value distribution of those items.
    \item Platform algorithm improvements (internal to the platform).
    Refinements in the value prediction algorithms, or the introduction of new ranking methods, and implementing ads quality filters all shift the distribution of item values.
    \item Addition (or deletion) of buyers. Say we add a buyer in B. We simply create a dummy buyer in A whose value function is zero for all items to represent that new buyer.
    \end{itemize}
}

We will refer to the experiment design as \emph{budget splitting with item randomization}.
The design works in two steps, and closely mirrors how A/B testing is conducted at large tech companies.

\textbf{Step 1. Budget splitting.}
    We create two markets, and every buyer is replicated in each market.
    For each buyer $i$ we allocate $\pi b_i$ of their budget to the market with treatment $w=1$, and the remaining budget, $(1-\pi)b_i$, to the market with treatment $w=0$.
    Each buyer's budget is managed separately in each market.

\textbf{Step 2. Item randomization.}
Let $(\theta^1,\theta^2,\dots)$ be i.i.d.\ draws from the supply distribution $s$.
For each sampled item, it is applied treatment $1$ with probability $\pi$ and treatment $0$ with probability $1-\pi$.  
The total A/B testing horizon is $t$. When the end of the horizon is reached, two observed FPPEs are formed. Each item has a supply of $\pi/t_1$ in the 1-treated market and $(1-\pi)/t_0$ in the 0-treated market. The $1/t_1$ is the scaling required for our CLTs and the $\pi$ factor ensures the budget-supply ratio agrees with the limit market; 
due to FPPE scale-invariance, we could equivalently rescale budgets.

Let $t_0$ be the number of $0$-treated items, and $t_1$ be the number of $1$-treated items. Conditional on the total number of items $t=t_1 + t_0$, the random variable $t_1$ is a binomial random variable with mean $\pi t$. Let $\gamma(0) = (\theta^{1,1},\dots, \theta^{1,t_1})$ be the set of $0$-treated items, and similarly $\gamma(1) = (\theta^{0,1},\dots, \theta^{0,t_0})$.
The total item set $\gamma = \gamma(0) \cup \gamma(1)$. 
The observables in the described A/B testing experiment are 
$$ \hspace{-.1cm}
\oFPPE \big(\pi b, v(1), \tfrac{\pi}{t_1}, \gamma(1)  \big),\; 
    \oFPPE \big( (1-\pi ) b,  v(0), \tfrac{1-\pi}{t_0}, \gamma(0)\big ), $$
both defined in \cref{def:finite_fppe}.
Let $\REVgam(w)$ denote the observed revenue in the $w$-treated market.
The estimator of the treatment effect on revenue is 
$$
     \hat \tau_{\REV} \defeq \REVgam(1) - \REVgam(0).$$

For fixed $(b, s)$, the variance $\sigma\sq_\REV$ in \cref{thm:clt} is a functional of the value functions. We will use $\sigma\sq_\REV(w)$ to represent the revenue variance in the equilibrium $\FPPE(b,v(w), s)$.
Each variance can be estimated using \cref{eq:plugin variance}.

\begin{theorem}[Revenue treatment effects asymptotic normality]
    \label{thm:clt_ab_testing}
    Suppose \cref{as:smo} and \cref{as:scs} hold in the limit markets $\FPPE(b,v(1),s)$ and $\FPPE(b,v(0),s)$. Then 
 $\sqrt{t} (\hat \tau_\REV - \tau_\REV) \tod\cN\allowbreak \big(0, \frac{ \sigma\sq_\REV(1)}{\pi} \allowbreak + \allowbreak\frac{ \sigma\sq_\REV(0)}{(1-\pi)} \big) .$
Proof in Appendix~\ref{sec:proof:thm:clt_ab_testing}.
\end{theorem}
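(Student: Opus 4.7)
The plan is to apply the single-market FPPE central limit theorem (\cref{thm:clt}) to each of the two sub-markets separately, and then combine them using cross-arm independence and a Slutsky-type argument to convert the $\sqrt{t_w}$-scaling into the $\sqrt{t}$-scaling advertised in the statement.

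First, I would use the scale-invariance of FPPE to reinterpret each arm as a canonical single-market finite FPPE. The $w$-treated market $\oFPPE(\pi_w b, v(w), \pi_w/t_w, \gamma(w))$ (writing $\pi_1=\pi$ and $\pi_0=1-\pi$) has both budgets and per-item supplies scaled by the common factor $\pi_w$, so it has the same equilibrium pacing multipliers, and hence the same per-unit prices, as $\oFPPE(b, v(w), 1/t_w, \gamma(w))$. This reduction makes each arm's observables statistically equivalent to the standalone finite FPPE the earlier theory was developed for, with budgets $b$, total supply one, and i.i.d.\ sample size $t_w$.

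Second, I would condition on the sample-size vector $(t_1,t_0)$. By the thinning property of i.i.d.\ Bernoulli assignment, conditional on $(t_1,t_0)$ the item sets $\gamma(1)$ and $\gamma(0)$ are independent i.i.d.\ samples from $s$, of the respective sizes. Since \cref{as:smo,as:scs} are assumed to hold in both limit markets $\FPPE(b,v(w),s)$, \cref{thm:clt} applies in each arm and delivers, conditionally on $(t_1,t_0)$,
\begin{align*}
\sqrt{t_w}\bigl(\REVgam(w)-\REVst(w)\bigr) \;=\; \tfrac{1}{\sqrt{t_w}} \sum_{\tau} D_\REV^{(w)}(\theta^{w,\tau}) + o_p(1) \;\tod\; \cN\!\bigl(0,\sigma_\REV^2(w)\bigr),
\end{align*}
where $D_\REV^{(w)}$ is the influence function from \cref{eq:def_inf_func} built from the primitives of $\FPPE(b,v(w),s)$. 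Cross-arm independence given $(t_1,t_0)$ lifts to joint conditional convergence to an independent pair of normals.

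Third, I would translate to the $\sqrt{t}$-scaling. Since $t_1\sim \mathrm{Binomial}(t,\pi)$, the SLLN gives $t_w/t\toas \pi_w$, so $\sqrt{t/t_w}\toas 1/\sqrt{\pi_w}$. Combining this with the conditional CLT via Slutsky produces
\begin{align*}
\sqrt{t}\bigl(\REVgam(w)-\REVst(w)\bigr) \;\tod\; \cN\!\Bigl(0,\tfrac{\sigma_\REV^2(w)}{\pi_w}\Bigr),
\end{align*}
jointly across $w\in\{0,1\}$ with independent limits. Subtracting the two arms and using $\sqrt{t}(\hat\tau_\REV-\tau_\REV)=\sqrt{t}(\REVgam(1)-\REVst(1))-\sqrt{t}(\REVgam(0)-\REVst(0))$ gives the claimed normal with variance $\sigma_\REV^2(1)/\pi+\sigma_\REV^2(0)/(1-\pi)$.

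The main obstacle is the rigorous handling of the \emph{random} sample sizes $t_w$ when invoking the CLT. The clean route is to pass the conditional CLT to an unconditional one by dominated convergence on characteristic functions: the conditional characteristic functions converge to the Gaussian characteristic function whenever $t_w\to\infty$, which occurs almost surely, and they are uniformly bounded by $1$, so the unconditional characteristic function converges as well. Equivalently one can invoke Anscombe's theorem on the partial sums of the i.i.d.\ $L^2$ influence function increments $D_\REV^{(w)}(\theta^{w,\tau})$, whose uniform continuity hypothesis is trivially satisfied since the summands have finite variance. Everything else---independence across arms, and the application of \cref{thm:clt} itself---is immediate from the conditional i.i.d.\ structure and the standing assumptions.
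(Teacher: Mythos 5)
Your proof is correct but follows a genuinely different route from the paper's. The paper's argument also starts from the scale-invariance reduction to canonical markets, but then passes through the explicit linear expansion from \cref{thm:clt}: it introduces the linearized statistic $\beta^{\lin,k}(w)=\betast(w)+\frac1k\sum_{\tau\le k}D_\beta(w,\theta^{w,\tau})$ and, in \cref{lm:asym_equivalence_beta}, shows that $\sqrt t\,(\betagam(w)-\beta^{\lin,t_w}(w))=o_p(1)$ via a random-index $o_p(1)$ lemma (\cref{lm:random_index_op}) and that $\sqrt t\,(\beta^{\lin,t_w}(w)-\beta^{\lin,\lfloor\pi_w t\rfloor}(w))=o_p(1)$ via Kolmogorov's maximal inequality over the fluctuating window of partial sums, after which the remaining quantity is a deterministic-length i.i.d.\ sum and the ordinary CLT finishes. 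You instead condition on the realized sizes $(t_1,t_0)$; conditionally each arm is a fixed-size finite FPPE to which \cref{thm:clt} applies verbatim and the two arms are independent, and you pass from the conditional to the unconditional limit law by dominated convergence on characteristic functions, using $t_w/t\toas\pi_w$ together with the fact that pointwise convergence of characteristic functions to a continuous limit is automatically uniform on compacta. Your route sidesteps both the maximal-inequality step and the random-index $o_p(1)$ lemma and uses only the distributional conclusion of \cref{thm:clt} rather than the explicit expansion, so it is arguably more modular and shorter. What it gives up is the explicit asymptotically linear representation for $\sqrt t\,(\hat\tau_\REV-\tau_\REV)$, which the paper reuses unchanged to obtain the treatment-effect CLTs for utilities and Nash social welfare. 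One caveat: your secondary appeal to Anscombe's theorem would, by itself, control only the leading i.i.d.\ partial sum and would still need a separate argument for the $o_p(1)$ remainder evaluated at the random index $t_w$, so the dominated-convergence route you lead with is the one that is actually self-contained.
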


Based on the theorem, an A/B testing procedure is the following.
Compute the revenue variance as \cref{eq:plugin variance} for each market, obtaining $ \hat \sigma\sq_\REV(1)$ and $ \hat \sigma\sq_\REV(0)$, and form the confidence interval
\begin{align}
    \hat \tau_\REV \pm z_{\alpha/2} \bigg( \frac{ \hat \sigma\sq_\REV(1)}{\pi} \allowbreak + \allowbreak\frac{ \hat \sigma\sq_\REV(0)}{(1-\pi)}  \bigg).
    \label{eq:ab_rev_ci}
\end{align}
If zero is on the left (resp.\ right) of the CI, we conclude that the new feature increases (resp.\ decreases) revenue with $(1-\alpha)\times 100\%$ confidence. 
If zero is in the interval, the effect is undecided.
See \cref{sec:experiment_abtest} for a semi-synthetic study verifying the validity of this procedure.

\section{The Differentiability of the EG Objective}
\label{sec:analytical_properties_of_dual_obj}
In this section, we provide lower-level conditions on the market's primitives and equilibrium such that \cref{as:smo} is implied.

We start with the differential structure of $f(\t,\b)=\max_i \betai\vithe$. 
The function $f(\t, \b)$ is a convex function of $\b$ and its subdifferential $\partial_\b f(\t, \b)$ is the convex hull of $\{ v_i e_i \in \Rnp: \text{index $i$ such that $\betai\vithe = \max_k \betak v_k(\t)$}  \}$, with $e_i$ being the base vector in $\Rn$.
When $\max_i \betai v_i(\t)$ is attained by a unique $i^*$, the function $f$ is differentiable. In that case, the $i$-th entry of $\nabla_\b f(\t,\b)$ is $v_i(\t)$ for $i=i^*$ and zero otherwise. 


\subsection{First-order Differentiability} \label{sec:first_diff}
The $\bidgap$ function in \cref{eq:def:bidgap} is useful for characterizing 
first-order differentiability
of $\fbar (\b) = \E_s[f(\t, \b)]$. 
When there is a tie for an item $\theta$, we have $\bidgap(\beta,\theta) = 0$. 
When there is no tie for an item $\theta$, the gap $\bidgap(\beta,\theta)$ is strictly positive.
The gap function characterizes smoothness of $f$:
$f(\cdot, \theta)$ is differentiable at $
\beta$
iff $\bidgap(\beta,\theta)$ is strictly positive.

\begin{theorem}[First-order differentiability]
    \label{thm:first_differentiability}
    The following are equivalent. 
    (i) The dual objective $H$ is differentiable at a point $\beta$. 
    (ii) The function $\t \mapsto \bidgap(\beta,\theta)$ is strictly positive $s$-almost surely:
    \begin{align} \label{eq:as:notie}
       \P( \{ \t: \bidgap(\beta,\theta) > 0 \} ) = 1
        \;.
     \end{align}
    (iii) 
    The set of items that incur ties under pacing profile $\b$ is $s$-measure zero: 
    $\P (\theta \in \Theta: \betai \vithe = \betak v_k(\theta)\text{ for some $i\neq k$}) = 0 $.

    When one and thus all of the above conditions hold for some $\b$, 
    the gradient $\nabla_\b f(\t, \b)$ is well-defined for $s$-almost every $\t$, and 
    $\nabla \fbar (\beta) = \E[\nabla f(\t, \b)]$.
    Proof and further technical remarks given in \cref{proof:sec:analytical_properties_of_dual_obj}.
\end{theorem}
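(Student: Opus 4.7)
The plan is to reduce the statement to a general result about expectations of convex functions. Since $H(\beta) = \bar f(\beta) - \sum_i b_i \log \beta_i$ and the log term is smooth on $\R^n_{++}$, differentiability of $H$ at $\beta$ is equivalent to differentiability of $\bar f(\beta) = \E[f(\theta, \beta)]$ at $\beta$, so I focus on $\bar f$ throughout. The equivalence (ii) $\Leftrightarrow$ (iii) is essentially a definitional unpacking: by the definition of $\bidgap$ in \cref{eq:def:bidgap}, we have $\bidgap(\beta, \theta) = 0$ exactly when two or more indices attain $\max_j \beta_j v_j(\theta)$, i.e., when a tie occurs at the top of the auction; hence (ii) holds iff the set of items at which the maximum bid is tied has $s$-measure zero, which is (iii).

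The substantive step is (i) $\Leftrightarrow$ (ii). For any fixed $\theta$, the convex function $f(\theta, \cdot)$ has subdifferential $\operatorname{conv}\{v_i(\theta) e_i : i \in \operatorname*{argmax}_j \beta_j v_j(\theta)\}$, so $f(\theta, \cdot)$ is differentiable at $\beta$ iff $\bidgap(\beta, \theta) > 0$, with gradient $v_{i^*}(\theta) e_{i^*}$ for the unique maximizer $i^* = i^*(\theta)$. To transfer this pointwise characterization to $\bar f$, I work with directional derivatives. For any direction $h \in \R^n$, the difference quotient $\tau^{-1}[f(\theta, \beta + \tau h) - f(\theta, \beta)]$ is monotone as $\tau \downarrow 0$ and bounded in absolute value by $\|h\|_\infty \cdot \max_i \beta_i v_i(\theta) / \min_j \beta_j \leq C \|h\| \bar v$ on a neighborhood of $\beta$, so dominated convergence gives
\begin{align*}
\bar f'(\beta; h) = \E[f'(\theta, \beta; h)].
\end{align*}
A convex function is differentiable at a point iff its directional derivative is linear in direction, equivalently iff $\bar f'(\beta; h) + \bar f'(\beta; -h) = 0$ for every $h$. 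Applying this to $\bar f$ and using the identity above, together with the pointwise nonnegativity $f'(\theta, \beta; h) + f'(\theta, \beta; -h) \geq 0$ inherited from convexity of $f(\theta, \cdot)$, yields that $\bar f$ is differentiable at $\beta$ iff $f(\theta, \cdot)$ is differentiable at $\beta$ for $s$-a.e.\ $\theta$. Quantifying over a countable dense set of directions $h$, which suffices by convexity to detect differentiability, lets me swap the "for every $h$" and the "a.s." statement cleanly. This establishes (i) $\Leftrightarrow$ (ii).

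Once differentiability is granted, the formula $\nabla \bar f(\beta) = \E[\nabla_\beta f(\theta, \beta)]$ is immediate: for $s$-a.e.\ $\theta$, the directional derivative is $f'(\theta, \beta; h) = \langle \nabla_\beta f(\theta, \beta), h \rangle$ with $\nabla_\beta f(\theta,\beta) = v_{i^*(\theta)}(\theta) e_{i^*(\theta)}$, and integrating the identity for $\bar f'(\beta;h)$ gives $\langle \nabla \bar f(\beta), h \rangle = \E[\langle \nabla_\beta f(\theta, \beta), h \rangle]$ for every $h$. The main obstacle I anticipate is the measure-theoretic bookkeeping needed to justify the interchange of expectation and directional derivative, and to switch the "for all $h$" quantifier past an "a.s." quantifier; both are handled uniformly via the bound $\|\nabla_\beta f(\theta, \beta)\| \leq \bar v$ on a neighborhood of $\beta$, which supplies an $L^1$ dominating function, and by restricting attention to a countable dense family of test directions.
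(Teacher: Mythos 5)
Your proof is correct and, structurally, follows the same route as the paper; the difference is that where the paper handles the substantive step (i)~$\Leftrightarrow$~(ii) by a single citation to Proposition~2.3 of \citet{bertsekas1973stochastic} (which gives that $\fbar$ is differentiable at $\beta$ if and only if $f(\theta,\cdot)$ is differentiable at $\beta$ for $s$-almost every $\theta$), you prove that fact directly. Your ingredients---interchanging expectation and directional derivative by dominated convergence, justified by the local Lipschitz bound $|f(\theta,\beta+\tau h)-f(\theta,\beta)|\leq\vbar\,\|\tau h\|_\infty$; the criterion that a finite convex function $g$ is differentiable at $\beta$ iff $g'(\beta;h)+g'(\beta;-h)=0$ for all $h$; the pointwise nonnegativity of $f'(\theta,\beta;h)+f'(\theta,\beta;-h)$ to turn $\E[\cdot]=0$ into an a.s.\ statement; and a countable dense set of test directions to exchange the ``for all $h$'' and ``a.s.''\ quantifiers---are exactly the ingredients one would use to prove the cited proposition, and the paper's own remark following the theorem sketches the one-sided version of this directional-derivative computation. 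Your version is more self-contained and makes the domination argument explicit; the paper's is shorter by treating the interchange lemma as a black box. One small point worth flagging (it affects the paper's statement as much as your definitional unpacking of (ii)~$\Leftrightarrow$~(iii)): as literally written, condition~(iii) forbids \emph{any} pair of equal bids $\beta_i v_i(\theta)=\beta_k v_k(\theta)$, which is strictly stronger than a tie at the top; the correct equivalence, and the one you and the paper's $\Theta_{\mathrm{diff}}$ characterization actually use, restricts the tie to indices attaining the maximum, which is precisely the event $\bidgap(\beta,\theta)=0$.
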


\subsection{Second-order Differentiability} \label{sec:second_diff}

Given the neat characterization of differentiability of the dual objective via the 
gap function $\bidgap(\beta,\theta)$, 
it is natural to explore higher-order smoothness, which is needed for some of our asymptotic normality results.
\new{On the negative side, Appendix \cref{eg:cont_value}  
gives an example where \cref{eq:as:notie} holds at a point $\b$ and $\fbar$ is differentiable in a neighborhood of $\b$, yet $\fbar$ is not twice differentiable at $\b$.}
We now provide sufficient conditions that imply twice differentiability of $H$.

\begin{theorem}[Second-order differentiability, informal]
    \label{thm:second_order_informal}
    If any of the following conditions hold, then $\fbar$ and thus $H$ are $C^2$ at a point $\beta$. (i) A stronger form of \cref{eq:as:notie} holds: $\E[\bidgap(\beta,\t)\inv] < \infty$. (ii) The angular component of the random vector $v=(v_1,\dots,v_n):\Theta \to \Rnp$ is smoothly distributed. 
    (iii) $\Theta = [0,1]$, $s$ is the Lebesgue measure, the valuations $v_i(\cdot)$'s are linear functions, integrating to~1, with distinct intercepts, and $\b$ is the equilibrium inverse bang-per-buck in LFM.
    See Appendix~\ref{sec:analytical_formal} for formal statements.
\end{theorem}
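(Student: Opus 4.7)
The plan is to establish, under each of the three sufficient conditions separately, that $\fbar(\b) = \E[\max_i \betai \vithe]$ is $C^2$ in a neighborhood of $\b$. Since $-\sum_i b_i \log \betai$ is smooth on $\Rnpp$, this suffices for \cref{as:smo}. As a preliminary step in every case, I would invoke \cref{thm:first_differentiability} to upgrade the pointwise no-tie condition \cref{eq:as:notie} to differentiability of $\fbar$ on a full neighborhood of $\b$, using continuity of $(\b,\t)\mapsto \bidgap(\b,\t)$ and continuity of $\b\mapsto \E[v_{i^*(\t,\b)} e_{i^*(\t,\b)}]$. With the gradient formula $\nabla \fbar(\b) = \sum_i \E[v_i \indi_{A_i(\b)}] e_i$, where $A_i(\b) = \{\t : \betai \vithe = \max_k \beta_k v_k(\t)\}$ is the winning set of buyer $i$, the remaining task reduces to differentiating these set-indexed expectations once more.

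For condition (i), I would apply a coarea-type formula to express $\nabla\sq \fbar(\b)$ as a sum of surface integrals over the pairwise tie sets $T_{ij}(\b) = \{\t : \betai \vithe = \beta_j v_j(\t)\}$, with integrand weighted by $v_i v_j / \|\nabla_\t(\betai v_i - \beta_j v_j)\|$. Markov's inequality gives $\P(\bidgap(\b,\t) \leq \eps) \leq \eps\, \E[\bidgap(\b,\t)\inv]$, which together with the coarea change of variable converts the hypothesis $\E[\bidgap(\b,\t)\inv] < \infty$ into absolute convergence of these surface integrals. Continuity of $\nabla\sq \fbar$ in $\b$ then follows from dominated convergence with the $\bidgap\inv$ envelope, after verifying that the envelope is locally uniform in $\b$ (this uses continuity of $\b\mapsto \bidgap(\b,\t)$ together with a uniform-integrability argument).

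For condition (ii), I would polar-decompose $v(\t) = r(\t)\omega(\t)$ with $\omega = v/\|v\|$, push the integral defining $\fbar$ forward to the nonnegative unit sphere, and exploit smoothness of the angular density: the tie sets cut the sphere into regions $\{\omega : \betai \omega_i = \max_k \beta_k \omega_k\}$ whose boundaries depend real-analytically on $\b$, so smoothness of the pushforward density transfers directly to $C^2$ smoothness of $\fbar$. This is precisely where the closed-form Hessian expression developed via \citet{kim1990cube} and flagged in the paper's contributions enters. For condition (iii), the linear-on-$[0,1]$ structure reduces each pairwise tie equation $\betai v_i(\t) = \beta_j v_j(\t)$ to an affine equation in $\t$, with at most one root by the distinct-intercept hypothesis; at the LFM equilibrium one can use the explicit LFM structure to argue that the finitely many crossing points are in general position, so $\fbar$ is a piecewise polynomial in $\b$ that is $C^\infty$ on a neighborhood of the equilibrium.

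The hard part will be condition (i): while \cref{eq:as:notie} already makes the first derivative a well-posed integral, the Hessian forces us to track the infinitesimal deformation of all $\binom{n}{2}$ tie surfaces simultaneously as $\b$ varies, and to assemble their contributions into a symmetric matrix that is continuous in $\b$. The two delicate points are (a) verifying that the coarea formula is available without additional regularity beyond $v_i \in L^\infty_+$ and the $\bidgap\inv$ integrability, and (b) upgrading pointwise existence of $\nabla\sq \fbar$ to continuity in $\b$. Both can be handled by the differential-geometric Hessian expression alluded to in the introduction, which organizes the surface contributions in a way amenable to dominated convergence.
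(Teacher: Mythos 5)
Your plans for conditions (ii) and (iii) are broadly aligned with the paper's proofs in \cref{sec:analytical_formal}. For (ii) the paper uses the change of variables $t_k = v_k/v_1$ rather than a spherical decomposition and then differentiates an integral with $\b$-dependent endpoints, but the mechanism — smoothness of the scale-free component of the value density controls smoothness of $\fbar$ — is the one you describe. For (iii), \cref{lm:linearv} establishes that the allocation breakpoints $a_i^*(\b)$ are differentiable (in fact rational, not polynomial) functions of $\b$ on a neighborhood of $\betast$ where the allocation pattern is stable; your ``general position'' remark glosses over that stability step, but the idea is right.

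Condition (i) is where your plan goes wrong, and the error is structural. Under $\E[\bidgap(\b,\t)^{-1}]<\infty$, the paper's \cref{thm:int_implies_hessian} shows that $\nabla^2\fbar(\betast)=0$; there is no surface-integral contribution to compute. The key observation (\cref{lm:epsilon_as_lip}) is that for each fixed $\theta$ with positive bid gap, the pointwise gradient $G(\b,\theta)=\nabla_\b f(\theta,\b)$ is \emph{locally constant} in $\b$: whenever $\|h\|_\infty$ is small relative to $\bidgap(\b,\theta)/\vbar$ the winning index is unchanged and $G(\b+h,\theta)=G(\b,\theta)$, and in general $\|G(\b+h,\theta)-G(\b,\theta)\| \le 6\vbar^2\|h\|/\bidgap(\b,\theta)$. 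Hence the difference quotient $\|G(\betast+h,\theta)-G(\betast,\theta)\|/\|h\|$ tends pointwise to zero, and the integrability of $\bidgap(\betast,\cdot)^{-1}$ is exactly the dominating envelope needed to pass the limit inside the expectation. Your coarea picture — surface integrals over tie sets weighted by $v_i v_j / \|\nabla_\t(\betai v_i - \beta_j v_j)\|$ — is the right picture for the closed-form Hessian in \cref{sec:closedformhessian} under a Lebesgue density assumption, but it is incompatible with condition (i): if the value distribution has a continuous density that is positive on a tie surface, then $\bidgap$ decays linearly near that surface and $\E[\bidgap^{-1}]=\infty$, so the hypothesis already rules out nonzero surface contributions. \cref{eg:cont_value} makes this concrete: the integrability condition holds precisely on the region where $\nabla^2\fbar=0$ and fails on the regions where $\nabla^2\fbar\neq 0$. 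Finally, the coarea approach presupposes a density on $\Rn$ and so does not even apply to the discrete-value case, which is one of the main cases condition (i) is designed to cover.
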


\new{
    We briefly comment on the inverse bid gap integrability condition. 
    By \cref{thm:first_differentiability} we already know a necessary and sufficient condition for first-order differentiability is that for all items (up to a measure-zero set) there is a positive bid gap. 
    The integrability condition essentially guarantees that for most items, the bid gap is sufficiently  positive.
    We prove sufficiency for twice differentiability by first showing that the bid gap is a Lipschitz constant for the gradient of the EG objective, and then apply the dominated convergence theorem.
}

We show in Appendix~\ref{sec:closedformhessian} that when $H$ is twice differentiable, the Hessian matrix of $H$ has a closed-form expression.

\section{Experiments}

\subsection{Synthetic Experiment}

\subsubsection{Hessian Estimation}
\label{sec:exp_hessain}
Recall that a key component in the variance estimator is the Hessian matrix, which 
we estimate by the finite-difference method in \cref{eq:def:hessian_estimate}.
Finite difference estimation requires the smoothing parameter $\varepsilon_t$. The smoothing $\varepsilon_t$ is used to (1) estimate the active constraints and (2) construct the numerical difference estimator $\hat \cH$. 
\cref{thm:variance_estimation} suggests a choice of 
$\varepsilon_t = t^{-d}$ for some $0 < d < \tfrac12$. 
In \cref{app:hessian_estimation}, we investigate the effect of $d$ numerically. Here we give high-level take-aways.
We find that $d$ represents a bias-variance trade-off. For small $d$, the variance of the estimated value $\hat \cH_{ii}$ is small and yet bias is large. For a large $d$ variance is large and yet the bias is small (the estimates are stationary around some point). Our experiments suggest using $d \in (0.32, 0.47)$.

\subsubsection{Visualization of the FPPE Distribution}
\label{sec:exp_fast_rate}
Next we look at how the FPPE distribution behaves in a simple setting.
We choose the FPPE instances as follows.
Consider a finite FPPE with $n=25$ buyers and $t = 1000$ items. Let $U_i$ be i.i.d.\ uniform random variables on $[0,1]$. 
Buyers' budgets are generated by $b_i = U_i + 1$ for $i=1,\dots, 5$ and $b_i = U_i$ for $i= 6, \dots, 25$. The extra budgets are to ensure we observe $\betasti = 1$ for the first few buyers.
The valuations $\{ v_1, \cdots, v_n\}$ i.i.d. uniform, exponential, or truncated standard normal distributions. 
Under each configuration we form 100 observed FPPEs, and plot the histogram of each $\sqrt{t}( \betagami - \betasti)$. 
The population EG \cref{eq:pop_deg} is a constrained stochastic program and can be solved with stochastic gradient based methods. The true value $\betast$ is computed by the dual averaging algorithm \citep{xiao2010dual}. The mean square error decays as $\E[\| \beta^{\text{da}, t} - \betast\|\sq] = O (t \inv) $ with $t$ being the number of iterations, and so if we choose $t$ large enough, we should still observe asymptotic normality for the quantities $\sqrt t(\beta^{\text{da}, t} - \betagam)$.

\textit{Results.}
\begin{figure}
    \centering
    \includegraphics[scale=.3]{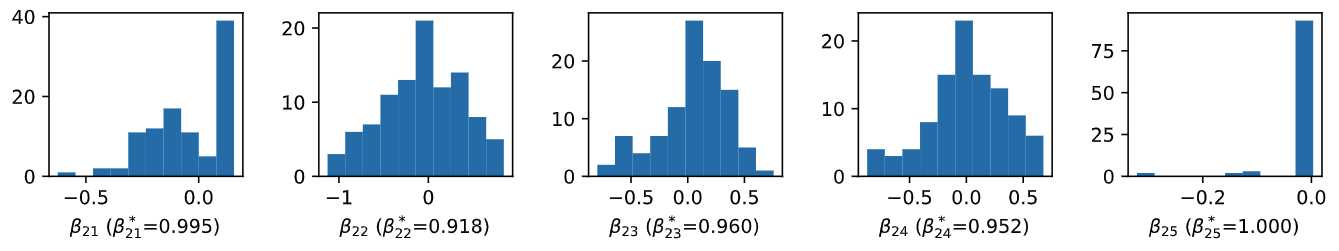}
    \caption{Finite sample distributions of $\sqrt t (\betagami - \betasti)$ of 5 buyers in an FPPE.
    We see that for buyer 25, its finite-sample pacing multiplier is exactly 1 for most of the time.
    For buyer 21, its limit pacing multiplier is very close to 1 and so its distribution is not normal for small samples. For buyers 22 -- 24, their finite sample distribution is close to normal distributions.
    The full figure is in \cref{fig:nonnormality_unif}.}
    \label{fig:nonnormality_normal_compressed}
\end{figure}
\cref{fig:nonnormality_normal_compressed} shows five out of 25 distributions for pacing multipliers.
Full plots for all three distributions are given in \cref{fig:nonnormality_unif,fig:nonnormality_expo,fig:nonnormality_normal}.
We see that (i) if $\betasti < 1$ then the finite sample distribution is close to a normal distribution, and (ii) if $\betasti = 1$ (or very close to $1$, such as $\beta_{14, 21}$ in the uniform value plots, $\beta_{20, 23}$ in exponential), the finite sample distribution puts most of the probability mass at 1.
For cases where $\betasti$ is close, but not very close, to 1, we need to further increase the number of items to observe normality.

\subsection{Semi-real Experiment: Nash Social Welfare Estimation in Instagram Notification System}
\label{sec:exp_instagram}

Notifications are important in enhancing the user experience
and user engagement in mobile apps.
Nevertheless, an excessive barrage of notifications can be disruptive for users. Typically, a mobile application has various notification types, overseen by separate teams, each with potentially conflicting objectives. And so it is necessary to regulate notifications and send only those of most value to users. 
\citet{instagramdata} propose to use Fisher market equilibrium-based methods to efficiently send notifications, where they treat the opportunity to send a user a notification as an item, and different types of notifications as buyers. 
In this section, we use the inference method developed in \cref{sec:lfm_inference} to quantify uncertainty in equilibrium-based notification allocation methods.

\textit{The data.}
The dataset released by \citet{instagramdata}
contains 
about $400,000$ generated notifications of four types for a subset of about $60,000$ Instagram users from September 14--23, 2022.
The four types of notifications (buyers) are 
likes, 
daily digest of stories,
feed suite organic campaign (notification about new
posts on the user's feed), 
and comments subscribed.
The value $v_i(\theta)$ of a notification type~$i$ to a user~$\theta$ at a specific time is predicted by the platform's algorithm and available in the dataset as a numerical value in $[0,1]$.
The budgets of notification types are also given.
For a user-notification type pair, we average over the whole time window and use the average to represent $v_i(\theta)$, resulting in a user-notification type matrix.
However, even after aggregation over time, there are lots of missing values, i.e., many users do not have every notification type generate a potential notification.

\textit{Value imputation and simulation by the Gaussian copula.}
We assume the values in the notification system admit the following representation. 
There exists unique monotone functions $f_i:\R \to [0,1]$, such that $(v_1, \dots, v_4) = (f_1(Z_1), \dots, f_4 (Z_4))$, where $Z = [Z_1, \dots, Z_4]$ follows a multivariate Gaussian distribution with standard normal marginals. Such an assumption is equivalent to assuming the value distribution possesses a Gaussian copula (\citealt[Lemma 1]{zhao2020missing} and \citealt[Lemma 1]{liu2009nonparanormal}).
Given this representation, we propose a two-step simulation method. In the first step, we learn the monotone functions by matching the quantiles of values with the quantiles of a standard normal. 
We use isotonic regression to learn the monotone functions.
Second, given the learned functions $\hat f_i$ and inverses  $\hat f_i \inv : [0,1] \to \R$, we transform $v_i$ to $\hat f_i \inv(v_i)$, and compute the covariance matrix of $\hat f_i \inv(v_i)$, denoted $\hat \Sigma$.
Even though some values are missing, the covariance $\Sigma$ can still be estimated by $\hat \Sigma$ if values are missing completely at random. 
\footnote{
    \new{
        The validity of the copula imputation method relies on the missing completely at random assumption (MCAR) \citep{zhao2020missing}, i.e., values and missingness are independent. 
        Unfortunately, we cannot determine whether this is true unless accessing the missing data. 
        MCAR will be assumed in this experiment.
    }
}
Now to simulate a new item for buyers,
we draw $Z \sim N(0, \hat \Sigma)$, and return $[\hat f_1 (Z_1),\dots, \hat f_4 (Z_4)] \tp $ as the value. 
There are multiple advantages to this method. First, the dependence structure of the available dataset is preserved. Second, the generated values are within the range $[0,1]$ as the original data are. Third, the marginal distribution of values are also preserved in the simulated data. 

As a final step to mimic realistic data, since some users may turn off notifications of certain types, the values of those notifications will be zero. We simulate this by setting certain values to zero 
according to the sparsity pattern in the original dataset.
See \cref{fig:fair_noti_data} for a comparison between original dataset and simulated data.

\begin{figure}
    \centering
    \begin{minipage}{.45\textwidth}
      \centering
      \includegraphics[width=\linewidth]{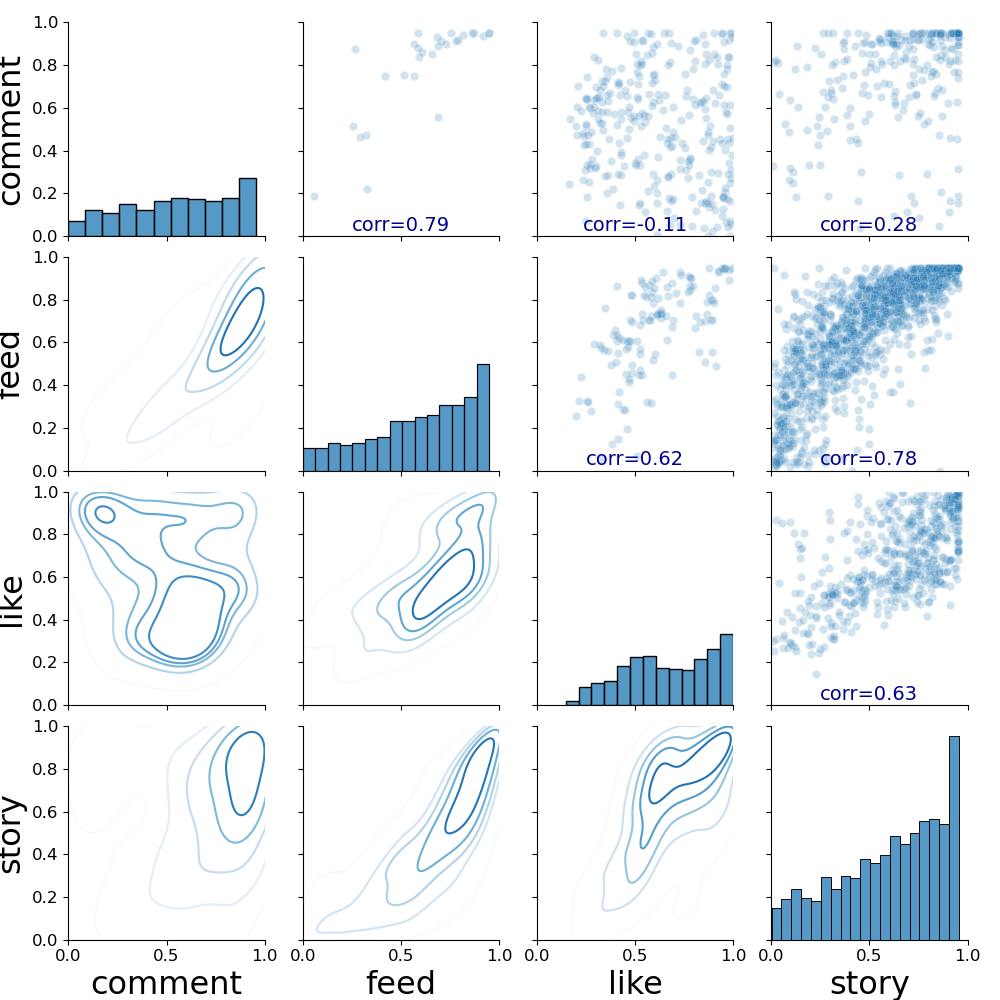}
      \label{fig:test1}
    \end{minipage}%
    \hspace{1cm}
    \begin{minipage}{.45\textwidth}
      \centering
      \includegraphics[width=\linewidth]{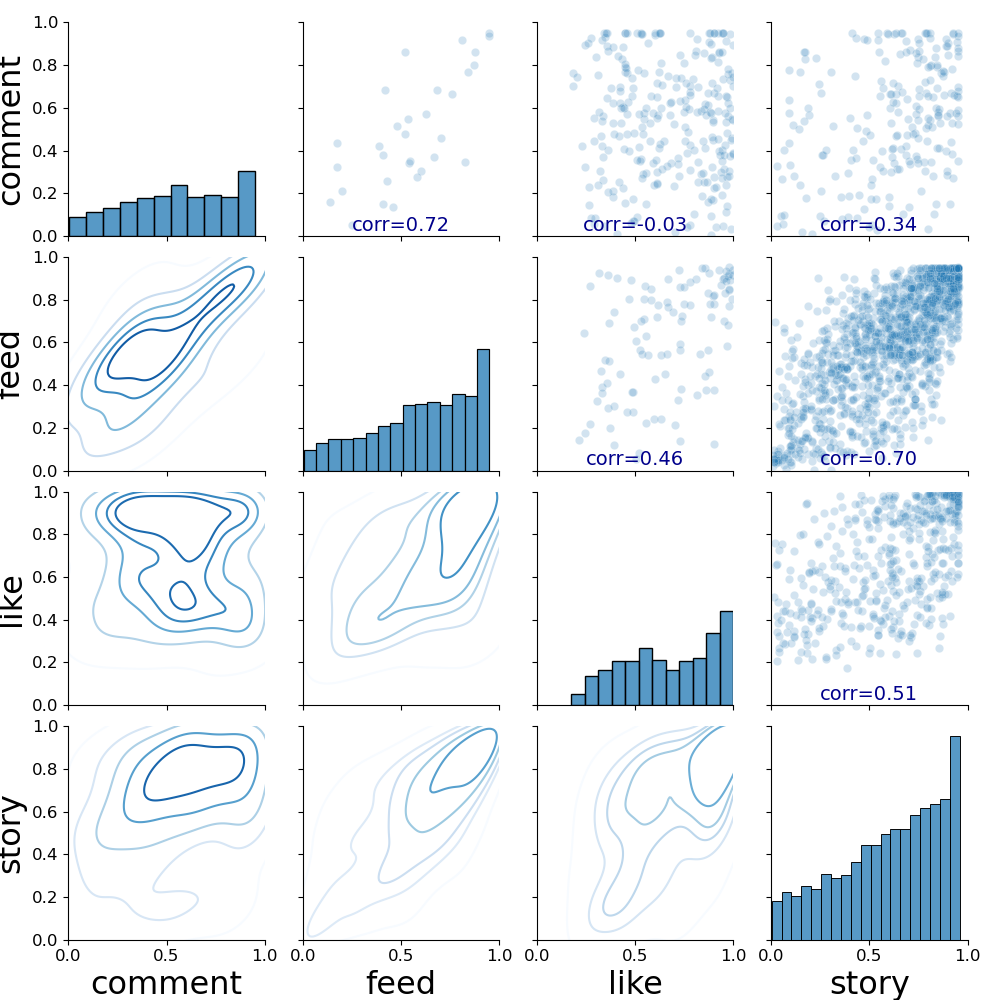}
      \label{fig:test2}
    \end{minipage}
    \caption{\new{The fair notification allocation data. Left: original data with missing values. Right: simulated data.}}
    \label{fig:fair_noti_data}
    \end{figure}

\textit{Setup and results.}
We apply the confidence interval in \cref{thm:ci_lnsw} and study the coverage properties.
The nominal coverage rate is set to 95\%. 
First, we do see that even for a small sample size of 100, the nominal coverage rate is achieved. And as we increase the size of markets $t$, the coverage maintains at around 95\% and the width of the CI shrinks roughly at the rate $1/\sqrt t$. 
\begin{table}
    \centering
    \begin{tabular}{l|r}
        \toprule
        items & (coverage rate, width of CI) \\
        \midrule
        100 & (0.94, 0.88) \\
        200 & (0.95, 0.63) \\
        400 & (0.93, 0.43) \\
        600 & (0.97, 0.35) \\
        \bottomrule
    \end{tabular}
    \caption{Coverage rate of log Nash social welfare in fair notification allocation. To help interpret the CI width, the log Nash social welfare in the limit market is around -16.}
    \label{tbl:nsw}
\end{table}

\subsection{Semi-real Experiment: A/B Testing of Revenue in First-Price Auction Platforms}
\label{sec:exp_ipinyou}
In this section we apply our revenue estimation method to a real-world dataset, the iPinYou dataset \citep{liao2014ipinyou}.
The iPinYou dataset \citep{liao2014ipinyou} contains raw log data of the bid, impression, click, and conversion history on the iPinYou platform
in the weeks of 
March 11--17, 
June 8--15
and October 19--27.
We use the impression and click data of 5 advertisers on June 8, 2013, containing a total of 1.8 million impressions and 1,200 clicks.
As in the main text, let $i \in \{1,2,3,4,5\}$ index advertisers (buyers) and let $\tau$ index impressions/users (items in FPPE terminology).
The five advertiser are labeled by number and their categories are given: 1459 (Chinese e-commerce), 3358 (software), 3386 (international e-commerce) and 3476 (tire).
From the raw log data, the following dataset can be extracted.
The response variable is a binary variable $ \click_i^\tau \in \{0,1\}$ that indicates whether the user clicked the ad or not. 
The relevant predictors include 
a categorical variable $\AE$ of three levels that records from which ad-exchange the impression 
was generated, a categorical variable $\RG$ of 35 levels indicating provinces of user IPs, and finally 44 boolean variables, each a $\TG$, indicating whether a user belongs to certain user groups  defined based on 
demographic, geographic and other information.
We select the top-10 most frequent user tags and denote them by $\TG_1,\dots,\TG_{10} \in \{0,1\}$. 
Both $\AE$ and $\TG$ are masked, and we do not know their real-world meaning.

\textit{Simulate advertisers with logistic regression.}
The raw data contains only five advertisers. In order to simulate new realistic advertiser, we fit a logistic regression and then perturb the fitted coefficients to generate more advertisers.
We posit the following logistic regression model for click-through rates (CTRs).
For a user $\tau$ that saw the ad of advertiser $i$, the click process is governed by
\begin{align*}
   & \CTR_i^\tau = \P(\click_i^\tau = 1 \given \thetau) = 
    \frac{1}{1+\exp (w_i\tp \theta^\tau)},
   \\
   &\theta^\tau = [1, \AE_2, \AE_3, \RG_2,\dots, \RG_{35}, \TG_1, \dots, \TG_{10}]\in \{1\} \times \{0,1\}^{46}
\end{align*}
where the weight vectors $w_i \in \R^{47}$ are the coefficients to be estimated from the data.
Note that $\AE_1$ and $\RG_1$ are absorbed in the intercept.
By running 5 logistic regressions, we obtain regression coefficients $w_1, w_2, \dots, w_5$.
To visualize the fitted regression, in \cref{fig:real} we show the estimated click-through rate distributions of the five advertisers. The diagonal plots are the histogram of CTRs, and the off-diagonal panels are the pair-wise scatter plots of CTRs.
To generate more advertisers, we take a convex combination of the coefficients $w_i$'s, add uniform noise, and obtain a new parameter, say $w'$. Given an item, the CTR of the newly generated advertisers will be $\frac{1}{1 + \exp(\theta \tp w')}$. 
The value distribution is the historical distribution of the simulated advertisers' predicted CTRs of the 1.8 million impressions.

\begin{figure}[ht!]
    \centering
    \includegraphics[scale = 0.6]{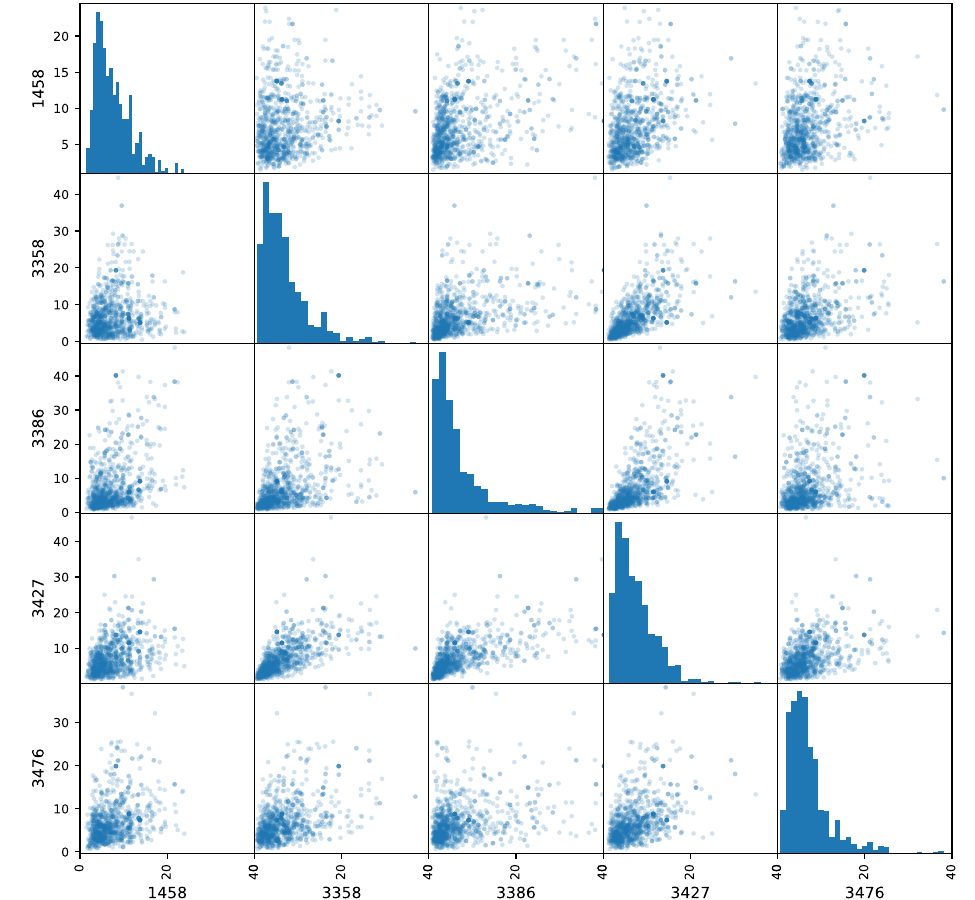}
    \caption{Click-through rate (in basis points, i.e. 0.01\%) distributions from logistic regression. }
    \label{fig:real}
\end{figure}

\subsubsection{Revenue Coverage with Hessian Estimation}
\label{sec:exp_rev_coverage}
In this section we aim to produce confidence interval of the revenue with 
the CI constructed from \cref{eq:plugin variance}, with a focus on the effect of Hessian estimation on the coverage.
Firstly, the sum equals $n$ times the average price-per-utility of advertisers, a measure of efficiency of the system. Secondly, since most quantities in FPPE, such as revenue and social welfare, are smooth functions of \pmrs, being able to perform inference about a linear combination of $\beta$'s indicates the ability to infer first-order estimates of those quantities.

\textit{Setup.}
An experiment has parameters $(t,n, d, \alpha)$. Here 
$t$ is the number of items,
$n$ the number of advertisers, 
and $\alpha$ is the proportion of advertisers that are not budget-constrained (i.e., $\beta = 1$).
Parameter $d$ is a tuning parameter of the revenue variance estimator. It is the exponent of the finite-difference stepsize $\varepsilon_t$ in \cref{eq:def:hessian_estimate}, i.e,  $\varepsilon_{\cH, t}= t ^{-d}$.
To control $\alpha$ in the experiments, we select budgets as follows. Give infinite budgets to the first $\lfloor \alpha n \rfloor $ advertisers. Initialize the rest of the advertisers' budgets randomly, and keep decreasing their budgets until their \pmrs are strictly less than 1.
In the experiment $(t,n, d, \alpha)$, we first compute the \pmr in the limit market using dual averaging \citep{xiao2010dual,gao2021online,liao2022dualaveraging}.
Then we sample one FPPE by drawing values from the synthetic value distribution obtained previously.
Now given one FPPE, apply the formula in \cref{eq:rev_variance}
to construct CI and record 
coverage.
The reported coverage rate for an experiment with parameters $ (t,n, d, \alpha)$ is averaged over 100 FPPEs.

\textit{Results.}
Representative results are presented in \cref{tbl:rev_coverage_short}; we present the full table in \cref{tbl:rev_coverage_full}. As the number of item increases, we observe the empirical coverage rate achieving the nominal 90\% coverage rate, while the width of confidence interval is narrowing.
We also observe that the confidence interval is robust against the Hessian estimation and the proportion of unpaced buyers; for different choices of 
exponent in the differencing stepsize ($\varepsilon_{\cH, t}$ in \cref{eq:def:hessian_estimate}) and proportion of unpaced buyers ($\alpha$), the coverage performance remains similar.

\begin{figure}[ht]
    \centering
    \includegraphics[scale = 0.6]{fig_real_value_dist.pdf}
    \caption{Click-through rate (in basis points, i.e. 0.01\%) distributions from logistic regression. }
    \label{fig:real}
\end{figure}

\begin{table}[ht]
\centering
\small
\caption{Coverage of revenue CI. $\alpha$ = proportion of $\beta_i = 1$, $d$ is the exponent in finite difference stepsize $\epsilon_t = t^{-d}$. \new{Numbers in parentheses represent
the lengths of CIs. Nominal coverage rate is 90\%.}}
\label{tbl:rev_coverage_short}
\begin{tabular}{llllllllllllll}
\toprule
         & buyers & \multicolumn{4}{l}{20} & \multicolumn{4}{l}{50} & \multicolumn{4}{l}{80} \\
         & $\alpha$ &                     0.05 &                     0.10 &                     0.20 &                     0.30 &                     0.05 &                     0.10 &                     0.20 &                     0.30 &                     0.05 &                     0.10 &                     0.20 &                     0.30 \\
\midrule $d$ & items &                          &                          &                          &                          &                          &                          &                          &                          &                          &                          &                          &                          \\
\midrule
\multirow{4}{*}{0.40} & 100 &  \makecell{0.79\\(1.72)} &  \makecell{0.79\\(1.82)} &  \makecell{0.93\\(1.87)} &   \makecell{0.9\\(1.81)} &  \makecell{0.87\\(1.84)} &  \makecell{0.81\\(1.91)} &  \makecell{0.89\\(1.82)} &  \makecell{0.88\\(2.00)} &  \makecell{0.81\\(1.89)} &  \makecell{0.89\\(1.89)} &  \makecell{0.97\\(1.97)} &   \makecell{0.9\\(1.95)} \\
         & 200 &  \makecell{0.88\\(1.33)} &  \makecell{0.88\\(1.36)} &  \makecell{0.87\\(1.35)} &   \makecell{0.9\\(1.34)} &  \makecell{0.88\\(1.32)} &  \makecell{0.93\\(1.36)} &  \makecell{0.89\\(1.37)} &  \makecell{0.94\\(1.40)} &  \makecell{0.87\\(1.37)} &  \makecell{0.93\\(1.37)} &  \makecell{0.88\\(1.42)} &  \makecell{0.86\\(1.42)} \\
         & 400 &  \makecell{0.84\\(0.93)} &  \makecell{0.88\\(0.99)} &  \makecell{0.93\\(0.99)} &  \makecell{0.91\\(0.98)} &   \makecell{0.9\\(0.95)} &  \makecell{0.94\\(0.98)} &  \makecell{0.92\\(0.98)} &  \makecell{0.84\\(1.00)} &  \makecell{0.88\\(0.97)} &  \makecell{0.85\\(0.98)} &  \makecell{0.86\\(1.01)} &  \makecell{0.85\\(1.01)} \\
         & 600 &  \makecell{0.89\\(0.76)} &  \makecell{0.88\\(0.79)} &   \makecell{0.9\\(0.81)} &  \makecell{0.89\\(0.80)} &   \makecell{0.8\\(0.77)} &  \makecell{0.87\\(0.80)} &  \makecell{0.81\\(0.80)} &  \makecell{0.92\\(0.83)} &  \makecell{0.86\\(0.80)} &  \makecell{0.83\\(0.80)} &  \makecell{0.97\\(0.83)} &  \makecell{0.89\\(0.83)} \\
\bottomrule
\end{tabular}
\end{table}

\subsubsection{Coverage Comparison against a Simple I.I.D.\ Price Model}
\label{sec:exp_rev_coverage}
\new{

In this section we compare the Hessian-based revenue variance estimator in \cref{eq:plugin variance} with $\varepsilon_{\cH,t} = t^{-0.4}$, the Hessian-free version in \cref{eq:simplified variance estimator}, and a baseline estimator based on an i.i.d.\ price model.

The baseline variance estimator is $\hat \sigma \sq _{\text{naive}} = \frac1t \sumtau (\ptau - \bar p) \sq $, where $\bar p = \frac1t \sumtau \ptau$. The limit of this estimator is  
$
    \sigma_{\text{naive}} \sq = \var(p^*(\t)) 
$.
The statistical model behind this method would be to assume that $\{ p^1,\dots, p^t\}$ are iid draws from some distribution, while totally ignoring the auction mechanism and the bidding behavior of the buyers. 
This is in contrast to the correct formula $\var(\tilde p ^*)$ in \cref{thm:simplified_rev_var} when a bid-gap condition holds.
We note that this assumption is not correct, since the pacing multipliers change as a function of which items are sampled. Nonetheless, it has the following intuitive appeal: if we assume that the pacing multipliers \emph{are fixed}, then this estimator does correctly estimate the price variance, regardless of market size (though note that revenue variance is not the same as price variance). Moreover, our results do show that the pacing multipliers, while not fixed, do concentrate around the limit pacing multipliers as the market grows.

We compare the confidence interval of the revenue with 
the CI constructed from \cref{eq:rev_variance} against the naive CI described above.
The reported coverage rate is over 500 FPPEs.

\textit{Results.}
Representative results are presented in \cref{tbl:rev_coverage_new_vs_naive_short}.
We present the full table in \cref{tbl:rev_coverage_new_vs_naive_full}.
We see that (1) the Hessian-free method is  significantly better in the market where all buyers are budget exhausted ($I_= = \emptyset$, $\alpha = 0$ in the table). The Hessian-free method  is able to determine that all the budgets will be exhausted, and thus the revenue equals the sum of budgets deterministically, while the naive method ignores this simple fact and puts an unnecessary interval on revenue.
Note that the three methods will coincide when $\alpha = 1$ in large samples. 
(2) Although overall the three methods achieve the nominal coverage rate, the Hessian-based method is theoretically valid under mild assumptions on the FPPE, while the Hessian-free method is valid under an additional bid-gap assumption but is computationally cheaper since it does not need Hessian estimation.



    
\begin{table}[ht!]
    \caption{
        \new{
            Coverage comparison between the Hessian-free CI in \cref{eq:simplified variance estimator}, the naive CI, and Hessian-based CI in \cref{eq:plugin variance}. $\alpha$ = proportion of $\beta_i = 1$. In each cell we present the coverage rate and the average CI widths in parentheses. Nominal rate = 90\%. The number of repetitions in each cell = 500.
        }
    }
    \label{tbl:rev_coverage_new_vs_naive_short}
    \centering
    
    \begin{tabular}{llll}
    \toprule
     & items & 100 & 800 \\
    buyers & $\alpha$ &  &  \\
    \midrule
    \multirow[t]{6}{*}{80} & 0.0 & \makecell{{0.88(0.04)}| 1.00(0.75)| 1.00(0.72)} & \makecell{{1.00(0.00)}| 1.00(0.28)| 1.00(0.27)} \\
     & 0.05 & \makecell{{0.93(1.05)}| 0.90(0.95)| 0.90(0.94)} & \makecell{{0.94(0.38)}| 0.90(0.34)| 0.90(0.34)} \\
     & 0.1 & \makecell{{0.95(1.15)}| 0.90(0.96)| 0.90(0.94)} & \makecell{{0.95(0.39)}| 0.90(0.34)| 0.90(0.34)} \\
     & 0.2 & \makecell{{0.92(1.28)}| 0.90(1.14)| 0.89(1.12)} & \makecell{{0.92(0.43)}| 0.91(0.41)| 0.91(0.41)} \\
     & 0.3 & \makecell{{0.90(1.27)}| 0.89(1.19)| 0.89(1.17)} & \makecell{{0.89(0.43)}| 0.88(0.43)| 0.88(0.43)} \\
     & 1.0 & \makecell{{0.90(1.33)}| 0.88(1.27)| 0.88(1.25)} & \makecell{{0.87(0.45)}| 0.87(0.45)| 0.87(0.45)} \\
    \cline{1-4}
    
    \end{tabular}
    \end{table}

}

\subsubsection{Treatment Effect Coverage}
\label{sec:experiment_abtest}
\textit{Setup.}
In this experiment, we fix the differencing stepsize in Hessian estimation to be 
$\varepsilon_{\cH,t} = t^{-0.4}$ and the proportion of unpaced buyers to be 30\%, which 
is a realistic number for real-world auction platforms.

An experiment has parameters $(t,n, \pi)$, where $t$ is the number of 
items, $n$ the number of users, and $\pi$ the treatment probability (see \cref{sec:ab testing}). To model treatment application, we use the shift of value distribution. Choose two sets of logistic regression parameters, $\{ w_i(0) \}_i$
and $\{w_i (1)\}$. Then if a user $\t$ is applied treatment $\omega \in \{0,1\}$, then its value to buyer $i$ will be $ 1 / (1 + \exp(w_i(\omega) \tp \theta))$, $i\in[n]$.
In an experiment, 
the limit revenues in the two limit markets $\FPPE(b,v(0),s,\Theta)$ and $\FPPE(b,v(1), s,\Theta)$, and the limit treatment effect will be calculated first.
Then we perform the a/b test experiments 100 times, and construct 100 CIs for the treatment effect. 
We report coverage rates and widths of CIs.

\textit{Results.} Representative results are presented in \cref{tbl:abtest_short}; in \cref{tbl:abtest_full} we present the full results.
First, the overall coverage rates across different market setups and treatment probabilities $\pi$ are around the nominal 90\%, and the width of confidence interval shrinks as sample size grows.
Second, when the number of items is sufficiently large (say over 400), we observe 
a $U$-shape relationship between the width of CI and treatment probability $\pi$; the CI widths are wider when $\pi$ is close to the extreme points (say 0.1 and 0.9) than when $\pi$ stays away from the extreme points.
This is explained by the treatment effect variance formula in \cref{thm:clt_ab_testing}. Holding the two variances fixed, the treatment effect variance tends to infinity if we send $\pi \to 0$ or $1$.

\begin{table}[ht]
\centering
\caption{Coverage of treatment effect. $\pi$ = treatment probability, the finite difference stepsize $\epsilon_t = t^{-0.4}$, proportion of unpaced buyers $\beta_i=1$ is 30\%.
\new{Numbers in parentheses represent
the lengths of CIs. Nominal coverage rate is 90\%.}}
\label{tbl:abtest_short}
\begin{tabular}{llllll}
\toprule
   & items &                      100 &                      200 &                      400 &                      600 \\
\midrule buyers & $\pi$ &                          &                          &                          &                          \\
\midrule
\multirow{5}{*}{50} & 0.1 &  \makecell{0.88\\(8.45)} &  \makecell{0.88\\(4.75)} &  \makecell{0.92\\(1.87)} &  \makecell{0.88\\(1.54)} \\
   & 0.3 &  \makecell{0.95\\(3.49)} &  \makecell{0.96\\(2.37)} &  \makecell{0.86\\(1.28)} &  \makecell{0.93\\(1.03)} \\
   & 0.5 &  \makecell{0.92\\(3.76)} &  \makecell{0.95\\(5.05)} &   \makecell{0.9\\(1.26)} &  \makecell{0.94\\(0.98)} \\
   & 0.7 &  \makecell{0.85\\(3.10)} &  \makecell{0.95\\(2.27)} &  \makecell{0.98\\(1.85)} &  \makecell{0.95\\(1.28)} \\
   & 0.9 &  \makecell{0.76\\(3.36)} &  \makecell{0.87\\(2.87)} &  \makecell{0.92\\(2.78)} &  \makecell{0.96\\(9.33)} \\
\bottomrule
\end{tabular}
\end{table}

\section{Conclusion}

We introduced a theory of statistical inference for Fisher markets, resource allocation systems that deploy the CEEI mechanism, and first-price auction platforms. 
We showed that quantities observed in the finite market equilibrium observed from these systems are good estimators of their corresponding limit market values. 
We presented convergence rate results, asymptotic distribution characterizations, local minimax optimality results, and constructed confidence interval tools.
Finally, we showed how to use these tools to develop a theory of statistical inference in A/B testing under competition effects.

A few open questions remain. 
In practice, the item arrival process exhibits nonstationarity and seasonality. A statistical theory for LFM and FPPE that incorporates temporal dependence is desirable.
It would also be desirable to design a notion of online confidence intervals for the limit market, since, in practice, items typically arrive on platforms sequentially. 
For FPPE we assumed the absence of degenerate buyers in \cref{as:scs}; lifting this assumption would be interesting. 
Finally, we restricted our attention to the first-price setting for FPPE. In practice, second-price auctions are also widespread. A theory of statistical inference for second-price auctions is also desirable, though we expect it to be significantly weaker, due to computational complexity barriers, as well as non-uniqueness issues.

\new{
Beyond linear Fisher markets and FPPE, it would be interesting to investigate whether our proof techniques apply to other equilibrium models captured by mathematical programs.
The Eisenberg-Gale convex program is known to work for several utility classes beyond linear utilities (as we study) including Leontief, constant-elasticity-of-substitution (CES), and Cobb-Douglas utilities~\citep{nisan2007algorithmic}.
Later work has also showed that convex program variants exist for e.g. spending-restricted and utility-restricted versions of Fisher markets~\citep{cole2017convex}.
In all these cases, it is possible to derive dual programs similar to the one we leverage, and they have a sum over prices in the objective~\citep{cole2017convex}, which may lend itself to our stochastic approximation approach (SAA). 
Several different convex programs are known for the Arrow-Debreu exchange model as well, see e.g.~\citet{devanur2016rational}. 

A/B testing using one market is interesting.
When both treatments are applied to the same market, the market equilibrium can still be described, and inferences about it can be made, using our theory. 
    However, under this design, there are two layers of interference: 
    (1) treatments 1 and 0 interfere with each other now that they are in the same market, and (2) the interference induced by market equilibrium. 
    In this new setting, there are many fundamental questions that would need to be answered, such as what is the correct notion of treatment effect, and whether a given treatment effect can be estimated from observation of a single market. 
    We believe this is a very interesting problem, and many of our tools can probably be used to analyze this problem in an FPPE framework. But the first step would be to formulate the types of treatment effects one would even work with, and we think this is a whole new paper worth of material to work out.

    Buyer-side treatments, such as modifications to the advertiser UI for specifying auction parameters like target audiences and return-on-investment, are also of interest. However, since ad campaigns are typically configured only once, the budget-split design is not a natural fit for studying these treatments unless all buyers are required to use both the treatment and control UIs twice.
}

\ACKNOWLEDGMENT{This research was supported by the Office of Naval Research awards N00014-22-1-2530 and N00014-23-1-2374, and the National Science Foundation awards IIS-2147361 and IIS-2238960. 
This work has been presented at
Conference on Neural Information Processing Systems (2022), 
International Conference on Machine Learning (2023),
INFORMS annual meeting (2023)
Stanford Rising Star Workshop (2024), 
Workshop on Marketplace Innovation (2024), 
and the Ads Experimentation team at Meta. 
We thank attendees of all these talks for many helpful discussions.}

\bibliographystyle{informs2014} 
\bibliography{refs.bib} 
\ECSwitch



\begin{APPENDICES}
    \renewcommand{\theHsection}{A\arabic{section}}


\ECHead{Proofs for Analytical Properties of the Dual Objective}

\section{Analytical Properties of the Dual Objective}
\label{sec:analytical_formal}

\subsection{Formal Statements}
Based on our differentiability characterization, it is natural to search for a stronger form of \cref{eq:as:notie} and hope that such a refinement could lead to second-order differentiability. 
\new{\cref{thm:second_order_informal} provided three sufficient conditions for second-order differentiability.
Condition (i) gives two such refinements of \cref{eq:as:notie}. }
Condition (ii) is motivated by the idea that the expectation operator tends to produce smooth functions. 
The exact smoothness requirement is presented in the appendix, which we show is easy to verify for 
several common distributions.
Finally, Condition (iii) considers the linear-valuations setting of~\citet{gao2022infinite}, where the authors provide tractable convex programs for computing the infinite-dimensional equilibrium. 
Here we give another interesting properties of this setup by showing that the dual objective is $C^2$. Generalization to piecewise linear value functions will also be discussed at the end.

The tied buyers for an item will be useful for later discussions.
Let $I(\beta, \theta) = \argmax_i{\beta_i v_i(\theta)}$ be the set of maximizing indices, which could be non-unique. 
We say there is \emph{no tie for item $\theta$ at $\beta$} if $I(\beta,\theta)$ is single-valued, in which case we use $i(\beta,\theta)$ to denote the unique maximizing index. 
Moreover, by Theorem 3.50 from~\citet{beck2017first}, the subgradient $\partial_\beta f(\theta, \b)$ is the convex hull of the set $\{v_ie_i, i\in I(\beta,\theta) \}$. When $I(\beta,\theta)$ is single-valued, the subgradient set is a singleton, and thus $f$ is differentiable.

\subsubsection*{Markets with sufficient bid gap}

A natural idea is to search for a stronger form of \cref{eq:as:notie} and hope that such a refinement could lead to second-order differentiability. 
In particular, this section is concerned with statement (i) of \cref{thm:second_order_informal}. First we show the condition based on the expectation.

\begin{theorem}
    \label{thm:int_implies_hessian}
    Suppose $H$ is differentiable in a neighborhood of $\betast$
    and 
    \begin{align}
        \E \bigg[\frac{1}{\bidgap(\betast,\theta)^{}}\bigg] = \int_\Theta \frac{1}{\bidgap(\betast,\theta)^{}} s(\theta) \diff \t < \infty
        \;,
        \tag{\small\textsc{INT}}
        \label{eq:as:UI}
    \end{align}
    then $H$ is twice differentiable at $\betast$. Furthermore, it holds $\nabla\sq \fbar (\betast) = 0 $ and $\nabla \sq H(\betast) =   \Diag(b_i/(\betasti)\sq)$.  

    Proof in \cref{proof:sec:analytical_properties_of_dual_obj}.
\end{theorem}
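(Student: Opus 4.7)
The plan is to reduce the claim to a statement about $\bar f$ and then exploit the piecewise-linear structure of $f(\theta, \cdot) = \max_i \beta_i v_i(\theta)$. Since $H(\beta) = \bar f(\beta) - \sum_i b_i \log \beta_i$ and the logarithmic term is $C^\infty$ near $\betast \in \Rnpp$ with Hessian $-\Diag(b_i/\beta_i^2)$, the existence and value of $\nabla^2 H(\betast)$ is determined by whether $\nabla^2 \bar f(\betast)$ exists; given the conclusion $\nabla^2 \bar f(\betast) = 0$, the Hessian formula $\nabla^2 H(\betast) = \Diag(b_i/(\betasti)^2)$ follows immediately. So the entire task is to establish $\nabla^2 \bar f(\betast) = 0$.

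Since $H$ is differentiable in a neighborhood of $\betast$ by hypothesis, \cref{thm:first_differentiability} guarantees that for every $\beta$ in this neighborhood, ties have $s$-measure zero and $\nabla \bar f(\beta) = \E[\nabla_\beta f(\theta, \beta)]$, where $\nabla_\beta f(\theta, \beta) = v_{i(\beta,\theta)}(\theta) \, e_{i(\beta,\theta)}$ on the full-measure set where the maximizer is unique. To show $\nabla^2 \bar f(\betast) = 0$, I will prove the first-order estimate $\|\nabla \bar f(\betast + \delta) - \nabla \bar f(\betast)\|_2 = o(\|\delta\|_2)$ as $\delta \to 0$. For any fixed $\theta$ with no tie at $\betast$ and $\|\delta\|$ small enough to preserve the winner, the integrand $\nabla_\beta f(\theta, \betast + \delta) - \nabla_\beta f(\theta, \betast)$ vanishes identically. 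Hence only the set of items whose winners switch contributes.

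The key geometric step is the containment
\[
    A(\delta) \defeq \{\theta : i(\betast + \delta, \theta) \neq i(\betast, \theta)\} \subseteq \{\theta : \bidgap(\betast, \theta) \leq 2\bar v \|\delta\|_2\}.
\]
To see this, if $j = i(\betast + \delta, \theta) \neq i^* = i(\betast, \theta)$, then $(\betast_j + \delta_j) v_j(\theta) \geq (\betast_{i^*} + \delta_{i^*}) v_{i^*}(\theta)$, which rearranges to
\[
    \bidgap(\betast, \theta) \leq \betast_{i^*} v_{i^*}(\theta) - \betast_j v_j(\theta) \leq \delta_j v_j(\theta) - \delta_{i^*} v_{i^*}(\theta) \leq 2\bar v \|\delta\|_\infty.
\]
Since $\|\nabla_\beta f(\theta, \cdot)\|_2 \leq \bar v$ everywhere, bounding the integrand uniformly yields
\[
    \|\nabla \bar f(\betast + \delta) - \nabla \bar f(\betast)\|_2 \leq 2\bar v \, \P(A(\delta)) \leq 2\bar v \, \P\!\left(\bidgap(\betast, \theta) \leq 2\bar v \|\delta\|_2\right).
\]

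The main analytical obstacle is to upgrade \cref{eq:as:UI} into the quantitative tail bound $\P(\bidgap \leq c) = o(c)$ as $c \to 0^+$. Using $\E[1/\bidgap] = \int_0^\infty \P(\bidgap \leq s)/s^2\,ds < \infty$ together with monotonicity of the CDF $G(s) = \P(\bidgap \leq s)$, one gets
\[
    \frac{G(c)}{2c} \leq \int_c^{2c} \frac{G(s)}{s^2} \,ds,
\]
and the right-hand side tends to $0$ as $c \to 0$ by absolute continuity of the integral. Plugging this into the previous display with $c = 2\bar v \|\delta\|_2$ yields $\|\nabla \bar f(\betast + \delta) - \nabla \bar f(\betast)\|_2 / \|\delta\|_2 \to 0$, which proves $\bar f$ is twice differentiable at $\betast$ with zero Hessian. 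Combining with the log term recovers $\nabla^2 H(\betast) = \Diag(b_i/(\betasti)^2)$.
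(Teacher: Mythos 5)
Your proof is correct and establishes the theorem by a genuinely different mechanism from the paper's. Both arguments rest on the same geometric observation: the pointwise gradient $\nabla_\beta f(\theta, \cdot)$ is locally constant around any $\beta$ at which item $\theta$ has a positive bid gap, and a perturbation $\delta$ can only switch the winner of $\theta$ when $\bidgap(\betast, \theta) \lesssim \vbar \|\delta\|$. The paper converts this into a pointwise estimate $\|\nabla f(\theta, \betast + h) - \nabla f(\theta, \betast)\| / \|h\| \le O(\vbar^2) / \bidgap(\betast, \theta)$ (its Lemma~\ref{lm:epsilon_as_lip}) and then invokes dominated convergence, with $1/\bidgap$ as the dominating function, to conclude that $\nabla^2 \fbar(\betast)$ exists and equals the expectation of the pointwise (zero) limit. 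You instead integrate directly: the gradient difference is supported on the winner-switching set $A(\delta)$ and bounded there by $2\vbar$, so $\|\nabla \fbar(\betast + \delta) - \nabla \fbar(\betast)\| \le 2\vbar \, \P(A(\delta)) \le 2\vbar \, \P(\bidgap \le 2\vbar\|\delta\|_\infty)$, and the whole argument reduces to the tail bound $\P(\bidgap \le c) = o(c)$, which you extract from the integrability hypothesis via the layer-cake identity and monotonicity of the CDF. Your route is more elementary in that it avoids an abstract convergence theorem and replaces it with an explicit CDF-tail calculation; it also makes the role of \textsc{INT} sharper, since it isolates precisely which quantitative consequence (superlinear vanishing of the small-gap probability) is doing the work and would translate any known tail rate on $\bidgap$ directly into a modulus of continuity for $\nabla \fbar$. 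The paper's DCT route is shorter once its Lipschitz-type lemma is available and packages the argument in a modular, reusable form. Both are valid proofs of the same statement.
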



We compare the integrability condition in the above theorem with \cref{eq:as:notie}.
Both \cref{eq:as:UI} and \cref{eq:as:notie} can be interpreted as a form of robustness of the market equilibrium. The quantity $\bidgap(\beta,\theta)$ measures the advantage the winner of item $\theta$ has over other losing bidders. The larger $\bidgap(\beta,\theta)$ is, the more slack there is in terms of perturbing the pacing multiplier before affecting the allocation at $\theta$.
In contrast to \cref{eq:as:notie} which only imposes an item-wise requirement on the winning margin,
the above assumption requires the margin exists in a stronger sense. Concretely, such a moment condition on the margin function $\bidgap$ represents a balance between how small the margin could be and the size of item sets for which there is a small winning margin.

Second we consider the condition based on the essential supremum.
For any buyer $i$ and her winning set $\Theta_i^*$, there exists a positive constant $\eps_i > 0$ such that
        \begin{align} 
            \label{eq:as:win_margin}
            \betasti  v_i(\theta) \geq \max_{k\neq i} \beta^*_k v_k(\theta) + \eps_i \, , \forall \theta \in \Theta_i^*
            \tag{\small{GAP}} 
            \quad \Leftrightarrow \quad
            \esssup_{\theta \in\Theta} 1/\bidgap(\beta,\theta) < K < \infty  
        \end{align} 
It requires that the buyer wins the items without tying bids uniformly over the winning item set. 
The existence of a constant $K<\infty$ such that $1/\bidgap(\beta,\theta) < K$ for almost all items makes a stronger requirement than \cref{eq:as:UI}.
From a practical perspective, it is also evidently a very strong assumption: for example, it won't occur with many natural continuous valuation functions. Instead, the condition requires the valuation functions to be discontinuous at the points in $\Theta$ where the allocation changes.
Empirically, since $\betagam$ is a good approximation of $\betast$ for a market of sufficiently large size, 
\cref{eq:as:win_margin} can be approximately verified by replacing $\betast$ with $\betagam$.
As a trade-off, \cref{eq:as:UI} is a weaker condition than \cref{eq:as:win_margin} but is harder to verify in practical application.

Below we present two examples where \cref{eq:as:UI} holds.
\begin{example}[Discrete Values]
    Suppose the values are supported on a discrete set, i.e., $[v_1,\dots, v_n]\in  \{V_1,\dots, V_K  \} \subset \Rn$ a.s.\ 
    Suppose there is no tie for each item at $\betast$. Then \cref{eq:as:win_margin} and thus \cref{eq:as:UI} hold.
\end{example}

\begin{example}[Continuous Values]
    \label{eg:cont_value}
    Here we give a numeric example of market with two buyers where \cref{eq:as:UI} holds.
    Suppose the values are uniformly distributed over the sets $\{v \geq 0: v_2 \leq 1, v_2 \geq 2v_1 \}$ and $ \{v \geq 0: v_1 \leq 1, v_2 \leq  \frac12 v_1 \}$.
    See \cref{fig:cont_value} for an illustration. 
    By calculus, we can show the map $\fbar(\beta) = \E[\max\{v_1\beta_1, v_2 \beta_2 \}]$ is 
    \begin{align*}
       \fbar (\beta)
       = 
       \begin{cases}
        \big(\frac{5}{12} - \frac13 \frac{\beta_1}{\beta_2}\big) \beta_1 + \frac{2\beta_2}{3 \beta_1} \beta_2
        &\text{if $ \beta_2 \geq 2\beta_1$}
        \\
        \frac13 (\beta_1 + \beta_2) 
        &\text{if $\tfrac12  \beta_1 < \beta_2 < 2\beta_1$}
        \\
        \big(\frac{5}{12} - \frac13 \frac{ \beta_2}{\beta_1}\big) \beta_2 + \frac{2\beta_1 }{3 \beta_2}\beta_1 
        &\text{if $\beta_2 \leq \frac12 \beta_1$}
       \end{cases}
       \;.
    \end{align*}
   
    Next, we derive the Hessian of $\fbar$. 
    On $\{\tfrac12  \beta_1 < \beta_2 < 2\beta_1 \}$ we have $\nabla \sq \fbar = 0$.
    In the region $\{ \beta_2 > 2\beta_1\}$, the Hessian is
    \begin{align*}
        \nabla\sq \fbar(\beta)= \begin{bmatrix} \frac{2\,{\beta_{2}}^2}{3\,{\beta_{1}}^3} & -\frac{2\,\beta_{2}}{3\,{\beta_{1}}^2}\\ -\frac{2\,\beta_{2}}{3\,{\beta_{1}}^2} & \frac{2}{3\,\beta_{1}} \end{bmatrix} 
        \;.
    \end{align*}
    The Hessian on the region $\{ \beta_2 < \frac12 \beta_1 \}$ has a completely symmetric expression by switching $\beta_1$ and $\beta_2$.
    The Hessian can also be derived using formulas in \cref{sec:closedformhessian}.
    From here we can see the function $\fbar$ is $C^2$ except on the lines $\beta_2=2\beta_1$ and $\beta_2 = \beta_1/2$. 

    Consider $\b$ on $B = \{ \beta > 0: \frac12 \beta_1 < \beta_2 < 2 \beta_1 \}$. \cref{eq:as:UI} holds.
    And that $\nabla\sq \fbar (\b) = 0 $ on $B$, which agrees with \cref{thm:int_implies_hessian}.

    Consider $\beta$ in the interior of region $\{\beta > 0: \beta_2 > 2 \beta_1\}$. The function $\fbar$ is twice continuously differentiable but \cref{eq:as:UI} does not hold.

    Consider $\beta$ on the ray $\{\b > 0: \beta_2 = 2 \beta_1\}$. For these $\b$'s the set $\{\t: \bidgap(\b,\t) = 0 \}$ is measure zero and yet $\nabla \sq \fbar (\b)$ is not twice differentiable at these $\b$'s.
    This implies \cref{eq:as:notie} does not necessarily imply twice differentiability.
\end{example}
\begin{figure}[ht]
    \centering
    \includegraphics[scale=.4]{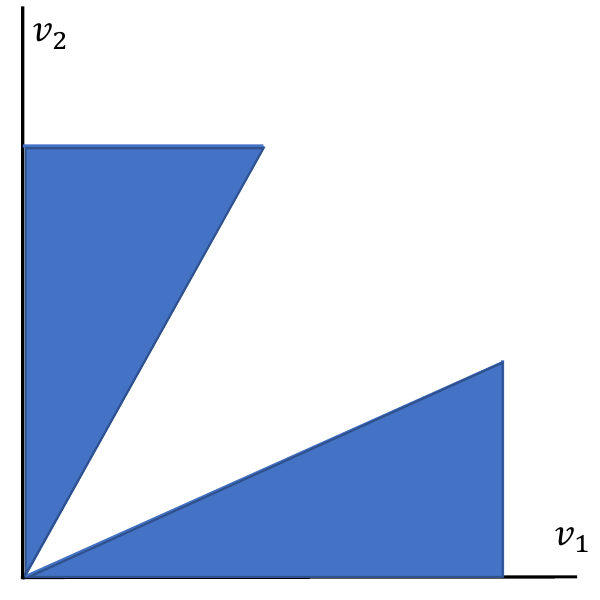}
    \caption{Value distribution in \cref{eg:cont_value}}
    \label{fig:cont_value}
\end{figure}
\subsubsection*{Markets with linear values}
Now we consider the condition (iii) of \cref{thm:second_order_informal}: linear valuations.
We adopt the setup in Section~4 from~\citet{gao2022infinite} where we impose an extra normalization on the values; the reasoning extends to the cases where at any point there are at most two lines intersecting at that point.  
Suppose the item space is $\Theta = [0,1]$ with supply $s(\theta) = 1$. The valuation of each buyer $i$ is linear and nonnegative: $v_i(\theta) = c_i \theta + d_i\geq 0$. Moreover, assume the valuations are normalized so that $\int_{[0,1]}v_i\diff \theta  = 1 \Leftrightarrow c_i/2 + d_i = 1$.
Assume the intercepts of $v_i$ are ordered such that $2 \geq d_{1}>\cdots>d_{n} \geq 0$.

We briefly review the structure of equilibrium allocation in this setting. By Lemma~5 from~\citet{gao2022infinite}, there is a unique partition $0 = a^*_0 < a_1^*<\cdots<a_n^*=1$ such that buyer $i$ receives $\Theta_i=\left[a_{i-1}^*, a_i^*\right]$. In words, the item set $[0,1]$ will be partitioned into $n$ segments and assigned to buyers $1$ to $n$ one by one starting from the leftmost segments. Intuitively, buyer 1 values items on the left of the interval more than those on the right, which explains the allocation structure.
Moreover, the equilibrium prices $p^*(\cdot)$ are convex piecewise linear with exactly $n$ linear pieces, corresponding
to intervals that are the pure equilibrium allocations to the buyers.

\begin{theorem}\label{thm:linear_value_implies_hessian}
    \new{In a market with linear values}, the dual objective $H$ is $C^2$ at $\betast$.
    Proof in \cref{proof:sec:analytical_properties_of_dual_obj}.
\end{theorem}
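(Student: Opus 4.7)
\textit{Proof Plan.}
The plan is to leverage the explicit equilibrium structure for linear valuations from Lemma~5 of \citet{gao2022infinite}: there is a unique partition $0 = a^*_0 < a^*_1 < \cdots < a^*_n = 1$ with buyer $i$ receiving $\Theta_i^* = [a^*_{i-1}, a^*_i]$, and equilibrium prices are convex piecewise linear with exactly $n$ pieces. In particular, at each interior breakpoint $a_i^* \in (0,1)$, only buyers $i$ and $i+1$ tie, i.e.\ $\betasti v_i(a_i^*) = \betast_{i+1} v_{i+1}(a_i^*)$ while every other buyer $k \notin \{i,i+1\}$ strictly loses at $a_i^*$. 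Using the normalization $c_i = 2(1-d_i)$ and the ordering $d_1 > \cdots > d_n$, one gets $c_1 < \cdots < c_n$, so the $n$ linear pieces $\beta v_1,\dots,\beta v_n$ are naturally ordered by slope on the upper envelope.

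The key step will be to locally parametrize the crossover points in $\b$. For $i=1,\dots,n-1$, solving $\beta_i(c_i\theta + d_i) = \beta_{i+1}(c_{i+1}\theta + d_{i+1})$ in $\theta$ yields the rational function
\begin{align*}
a_i(\beta) = \frac{\beta_{i+1} d_{i+1} - \beta_i d_i}{\beta_i c_i - \beta_{i+1} c_{i+1}} \;.
\end{align*}
Since $a_i(\betast) = a^*_i$ is well-defined and finite, the denominator is nonzero at $\betast$, so $a_i(\cdot)$ is $C^\infty$ on a neighborhood $\mathcal{N}$ of $\betast$. By continuity, shrinking $\mathcal{N}$ if necessary, one has $0 < a_1(\b) < \cdots < a_{n-1}(\b) < 1$ for $\b \in \mathcal{N}$, so the breakpoints remain well-ordered in the interior.

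Next, I would verify that the upper-envelope combinatorial structure persists on $\mathcal{N}$: for every $\b \in \mathcal{N}$, $\max_k \beta_k v_k(\theta) = \beta_i v_i(\theta)$ for all $\theta \in [a_{i-1}(\beta), a_i(\beta)]$. At $\betast$ and at the interior point $a_i^*$, the gap $\min_{k \notin \{i,i+1\}} \big(\betasti v_i(a_i^*) - \betast_k v_k(a_i^*)\big)$ is strictly positive by the ``only two-way ties'' property; by joint continuity in $(\b,\theta)$ over the compact interval, this strict dominance is preserved on $[a_{i-1}(\b), a_i(\b)]$ for $\b$ in a small enough neighborhood. Consequently, on $\mathcal{N}$,
\begin{align*}
\bar f(\b) = \sum_{i=1}^{n} \int_{a_{i-1}(\b)}^{a_i(\b)} \beta_i\bigl(c_i \theta + d_i\bigr)\, \diff\theta \;,
\end{align*}
with $a_0 \equiv 0$, $a_n \equiv 1$. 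The integrand is polynomial in $\theta$ and the limits are $C^\infty$ in $\b$, so $\bar f$, and hence $H(\b) = \bar f(\b) - \sum_i b_i \log \beta_i$, is $C^\infty$ (in particular $C^2$) at $\betast$.

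The main obstacle I anticipate is rigorously establishing the second step: that the combinatorial structure of the upper envelope is stable under perturbations of $\b$. This requires ruling out both the emergence of a new three-way tie at some $a_i(\b)$ and the possibility of some buyer $k$ overtaking buyer $i$ somewhere strictly inside $[a_{i-1}(\b), a_i(\b)]$. Both are handled by the strict-dominance-plus-continuity argument sketched above, using the compactness of $[0,1]$ and the finite number of buyers; the remark at the start of \cref{thm:linear_value_implies_hessian} about extending to piecewise-linear valuations where at most two lines meet at a point is exactly the hypothesis that makes this step go through.
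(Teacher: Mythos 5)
Your proposal is correct and follows essentially the same route as the paper: use the breakpoint formula $a^*_i(\beta) = (\beta_{i+1}d_{i+1}-\beta_i d_i)/(\beta_i c_i - \beta_{i+1}c_{i+1})$ from Lemma~5 of \citet{gao2022infinite}, show these breakpoints are smooth in a neighborhood of $\betast$ where the allocation structure $\Theta_i(\beta)=[a^*_{i-1}(\beta),a^*_i(\beta)]$ persists, and then differentiate the resulting explicit integral expression for $\fbar$. The combinatorial-stability step you flag as the main obstacle is exactly what the paper isolates as \cref{lm:linearv}, proved by constructing an explicit $\ell_\infty$-ball on which the two chains of inequalities $\beta_1 d_1 > \cdots > \beta_n d_n$ and $\beta_1 c_1 < \cdots < \beta_n c_n$ and the ordering of the breakpoints all survive.
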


The above result also extends to most cases of piecewise linear (PWL) valuations discussed in Section~4.3 of~\citet{gao2022infinite}). 
In the PWL setup there is a partition of $[0,1]$, $A_{0}=0 \leq A_{1} \leq \cdots \leq A_{K-1} \leq A_{K}=1$, such that all $v_{i}(\theta)$'s are linear on $ [A_{k-1}, A_{k}]$. 
At the equilibrium of a market with PWL valuations, we call an item $\theta$ an \emph{allocation breakpoint} if there is a tie, i.e., $I(\betast, \theta)= \argmax _i \betai \vithe$ is multivalued.
Now suppose the following two conditions hold:
(i) none of the allocation breakpoints coincide with any of the valuation breakpoints $\{A_k\}$, 
and 
(ii) at any allocation breakpoint there are exactly two buyers in a tie.
Under these two conditions, one can show that in a small enough neighborhood of the optimal pacing multiplier $\betast$, 
the allocation breakpoints are differentiable functions of the pacing multiplier.
This in turn implies twice differentiability of the dual objective by repeating the argument in the proof of \cref{thm:linear_value_implies_hessian}. 
\new{Intuitively, this occurs because under the two conditions, the PWL case behaves like the linear case in a sufficiently-small neighborhood around the equilibrium.}
However, if either condition (i) or (ii) mentioned above breaks, the dual objective is not twice differentiable.


\subsubsection*{Markets with angularly smooth values}
We first use a change of variable, and let $z = v / \|v \|_2 $ be the projection of $v$ onto the unit sphere. It represents the angular component of the vector $v$.
Let $f(v, \b) = \max_i \betai\vithe$.
Using a change of variable $z = v/\|v\|_2$ and $r =\|v\|_2$ and homogeneity of $f$, 
the integral $\int f( v, \b) f_v \diff v$ can be written as 
\begin{align}
    \int _{S_n} \int_0^\infty f(rz, \b) r^{n-1} f_v(rz) \diff r \diff z 
    = 
    \int _{S_n} f(z, \b) \bigg( \int_0^\infty r^{n} f_v(rz) \diff r \bigg) \diff z  
\end{align}
where $\diff z$ is the surface measure on the unit ball $S_n$ in $\Rn$.
From this representation, it is not unreasonable to say
that if $\int_0^\infty  r^{n} f_v(rz) \diff r$ is a smooth function of the angular component $z$,
then $\fbar(\b)$ will also be a smooth function of $\b$.
Taking $v_i = r z_i$, $v_k = z_k/z_i$, and so $ \int_0^\infty  r^{n} f_v(rz) \diff r =\int_0^\infty v_i^n  f_v(v_iv_1, v_iv_2,\dots, v_i, \dots, v_iv_n) \diff v_i/ (z_i)^n$.
So equivalently 
we require smoothness of $v_{-i} \mapsto \int_0^\infty v_i^n f_v(v_iv_1, v_iv_2,\dots, v_i, \dots, v_iv_n) \diff v_i$.

Now we make this precise.
First we introduce some extra notations. For each $i\in[n]$, define the map $\sigma_i: \Rnp \to \Rnp$,
$$\sigma_i(v) = [v_1v_i,\,  \dots,\,v_{i-1}v_i,\, v_i,\, v_{i+1}v_i,\, \dots,\, v_nv_i ] \tp $$ for $i\in[n]$, which multiplies all except the $i$-th entry of $v$ by $v_i$. 

\begin{definition}[Angular regularity]
    Let $f: \Rnp \to \Rp$ be the probability density function (w.r.t.\ the Lebesgue measure) of a positive-valued random vector with finite first moment.
    We say the density $f$ is {angularly regular} if for all $ h_i(v_{-i} )\defeq \int_0^\infty f \big(\sigma_i(v)\big) v_i^n \diff v_i\,$, $ i \in [n]$, it holds
    (i) $h_i$ is continuous on $\R^{n-1}_{++}$, and
    (ii) all lower dimensional density functions of $h_i$ are continuous (treating $h_i$ as a scaled probability density function).  
\end{definition}

\begin{theorem}\label{thm:smooth_density_implies_hessian}
    Assume the random vector $[v_1,\dots, v_n]: \Theta \to \Rnp$ has a distribution absolutely continuous w.r.t.\ the Lebesgue measure on $\Rn$ with density function $f_v$.
    If $f_v$ is angularly regular, then $H$ is twice continuously differentiable on $\Rnpp$.

    Proof in \cref{proof:sec:analytical_properties_of_dual_obj}.
\end{theorem}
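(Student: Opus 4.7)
The plan is to establish twice continuous differentiability by a change of variables that ``integrates out'' the max operator, reducing $\bar f$ to a sum of multiple integrals whose dependence on $\beta$ lives entirely in the upper limits of integration; the hypothesis of angular regularity will then feed directly into an iterated Leibniz's rule argument.

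First, decompose $\bar f(\beta) = \E[\max_i \beta_i v_i]$ by partitioning $\Rnp$ into the winning regions $R_i(\beta) = \{v : \beta_i v_i > \beta_k v_k \text{ for all } k\neq i\}$ (the tie set has measure zero since $f_v$ is absolutely continuous), so that $\bar f(\beta) = \sum_i \int_{R_i(\beta)} \beta_i v_i f_v(v)\,dv$. On each $R_i(\beta)$, perform the change of variables $v = \sigma_i(u)$, i.e.\ $u_k = v_k/v_i$ for $k\neq i$ and $u_i = v_i$; the Jacobian is $u_i^{n-1}$, and the winning condition becomes the axis-aligned box $u_k < \beta_i/\beta_k$ for $k\neq i$. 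Integrating out $u_i$ first and applying the definition of $h_i$ yields
\[
    \bar f(\beta) = \sum_{i=1}^n \beta_i\, G_i(\beta), \qquad G_i(\beta) = \int_0^{\beta_i/\beta_1}\!\!\!\cdots\int_0^{\beta_i/\beta_n} h_i(u_{-i})\,du_{-i},
\]
where the product of integrals is understood to skip the $i$-th coordinate.

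Second, show that each $G_i$ is $C^2$ on $\Rnpp$ by an iterated application of Leibniz's rule. Since the integrand $h_i$ does not depend on $\beta$ and the upper limits $\beta_i/\beta_k$ are smooth and positive on $\Rnpp$, the first partial derivative $\partial G_i/\partial \beta_j$ is a sum (over boundaries involving $\beta_j$) of integrals over $(n-2)$-dimensional faces of the box, with $h_i$ evaluated at $u_l = \beta_i/\beta_l$ for the relevant $l$; continuity of $\partial G_i/\partial \beta_j$ follows from condition (i) of angular regularity (continuity of $h_i$) together with condition (ii) applied at the $(n-2)$-dimensional level, since these face integrals are precisely the lower-dimensional marginals of $h_i$ evaluated at boundary points. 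Differentiating once more produces integrals over $(n-3)$-dimensional faces of $h_i$; condition (ii) at this level guarantees their continuity. Because $\bar f$ is a finite sum of products $\beta_i\,G_i(\beta)$ of smooth and $C^2$ functions, $\bar f$ itself is $C^2$ on $\Rnpp$, and therefore so is $H(\beta) = \bar f(\beta) - \sum_i b_i \log\beta_i$.

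The main obstacle is the careful bookkeeping to verify that every boundary contribution generated in the iterated differentiation is indeed a lower-dimensional marginal of $h_i$ of the form covered by condition (ii), and to justify the dominated-convergence step needed to interchange differentiation and integration. This is why we work on $\Rnpp$: as $\beta$ varies over a compact subset of $\Rnpp$, each upper limit $\beta_i/\beta_k$ is bounded away from $0$ and $\infty$, so the lower-dimensional marginals of $h_i$ evaluated at these boundary values are uniformly finite, and the continuity claims in the Leibniz step reduce cleanly to condition (ii) of angular regularity without any tail-integrability surprises.
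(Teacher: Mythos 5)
Your setup — partitioning into winning regions, the change of variables $u_k = v_k/v_i$ with Jacobian $u_i^{n-1}$, and recognizing $h_i$ as the marginal after integrating out $u_i$ — matches the paper's exactly, and the identity $\bar f(\beta) = \sum_i \beta_i G_i(\beta)$ is correct. But there is a gap in the differentiation step that the paper's route avoids.

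The paper works with the gradient, not with $\bar f$ itself: since $\bar f$ is degree-1 homogeneous and (under the hypotheses) differentiable, $(\nabla \bar f)_i(\beta) = \int v_i\, \indi(V_i(\beta))\, f_v(v)\, dv = G_i(\beta)$ — the boundary terms from the moving winning regions cancel in the first derivative of $\bar f$ because $\max_k \beta_k v_k$ is continuous across the interfaces (this is the envelope-theorem cancellation implicit in the identity $\bar f = \sum_i\beta_i G_i$). Consequently the paper only needs $G_i$ to be $C^1$, and this follows directly from Lemma~\ref{lm:integral_is_cont_diff} applied to $\Phi_i(w) = \int_0^{w_1}\cdots\int_0^{w_{n-1}} h_i(u)\,du$: continuity of $h_i$ together with continuity of its lower-dimensional marginals gives $\Phi_i\in C^1$, and composing with the smooth map $\beta\mapsto(\beta_i/\beta_k)_{k\neq i}$ gives $G_i\in C^1$, hence $\bar f\in C^2$.

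By contrast, your route requires $G_i\in C^2$, and this is where the iterated Leibniz step fails. Consider $n=3$, $i=1$: $\partial G_1/\partial\beta_2 = -(\beta_1/\beta_2^2)\int_0^{\beta_1/\beta_3} h_1(\beta_1/\beta_2, u_3)\,du_3$. Differentiating again with respect to $\beta_2$ (or with respect to $\beta_1$) produces a term containing $\int_0^{\beta_1/\beta_3} \partial_1 h_1(\beta_1/\beta_2, u_3)\,du_3$, i.e.\ a \emph{derivative} of $h_i$ with respect to its first argument. Angular regularity only asserts that $h_i$ and its lower-dimensional marginals are continuous; it says nothing about $h_i$ being differentiable. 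So your second Leibniz pass — ``differentiating once more produces integrals over $(n-3)$-dimensional faces'' — misses the terms where the differentiation variable also appears inside the evaluated integrand, not just in an integration limit. Those terms are present in every repeated-index or $\beta_i$-involving second partial of $G_i$, and they cannot be controlled under the stated hypotheses. (The identity $\partial G_k/\partial\beta_j + \sum_i\beta_i\,\partial^2 G_i/\partial\beta_j\partial\beta_k = 0$, which would make these problematic terms cancel in $\nabla^2\bar f$, presupposes that $G_i$ is already twice differentiable, so it cannot be used to rescue the argument.) The fix is to differentiate $\bar f$ once first — getting $\nabla\bar f = (G_1,\dots,G_n)$ — and then apply the iterated-integral lemma only once to show each $G_i$ is $C^1$.
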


The above regularity conditions are easy to verify when the values are i.i.d.\ draws from a distribution. In that case, many smooth distributions supported on the positive reals fall under the umbrella of the described regularity. Below we examine three cases: the truncated Gaussian distribution, the exponential distribution and the uniform distribution.

When values are i.i.d.\ truncated standard Gaussians, the joint density $f(v) =c_1 \prod_{i=1}^n \exp(- v_i\sq / 2)$ and 
$h_i (v_{-i}) 
= c_1  \int_\Rp v_i^n \exp ({- \frac12 v_i\sq ( 1 + \sum_{k \neq i} v_k\sq)}) \diff v_i = c_2 (\sum_{k \neq i} v_k\sq)^{-n/2},$
which are regular. Here $c_i$, $i=1,2$, are appropriate constants.
Similarly, for the i.i.d.\ exponential case with the rate parameter equal to one, the density $f(v) = \prod_{i=1}^n \exp(- v_i)$ and $h_i (v_{-i}) =  (\sum_{k\neq i} v_k)^{-n}$ satisfy the required continuity conditions. 
Finally, suppose the values are i.i.d.\ uniforms on $[0,1]$. The joint density is $f(v) = \prod_{i=1}^n \indi\{0<v_i < 1\}$ and for example, if $i=1$, $h_1(v_{-1}) = (\min\{1, v_2\inv, \dots, v_n\inv \})^{n+1}/(n+1)$, which also satisfies the required continuity conditions.

\subsection{Proofs}
\label{proof:sec:analytical_properties_of_dual_obj}

\begin{proof}[Proof of \cref{thm:first_differentiability}]
    Recall $f(\theta, \b) = \max_i \beta_i \vithe$. Note $f$ is differentiable at $\beta$ if and only if $ \bidgap(\beta,\theta) > 0$. 
    Let $\Theta_{\mathrm{diff}}(\beta) \defeq \{ \theta: f(\theta, \b) \text{ is continuously differentiable at $\beta$} \}$. Then 
    \begin{align*}
        \Theta_{\mathrm{diff}}(\beta) = \bigg\{\theta: \frac{1}{\bidgap(\beta,\theta)} < \infty \bigg\} = \{ \theta: I(\beta,\theta) \text{ is single-valued}\} 
        \;.
    \end{align*}
    
    By Proposition 2.3 from~\citet{bertsekas1973stochastic} we know $\fbar(\beta) = \E[f(\t,\b)] = \int_\Theta f(\t,\b) s(\theta) \diff \t$ is differentiable at $\beta$ if and only if $\int \indi( \Theta_{\mathrm{diff}}(\beta)) s(\t)\dt = 1$.
    From here we obtain  \cref{thm:first_differentiability}.

\end{proof}
\begin{remark}

    Suppose \cref{eq:as:notie} holds in a neighborhood $N$ of $\betast$,
    i.e., $\frac{1}{\bidgap(\beta,\theta)}$ is finite a.s.\ for each $\beta \in N$, then by \cref{thm:first_differentiability} we know $H$ is differentiable on $N$.
    In fact, a stronger statement holds: $H$ is \textbf{continuously} differentiable on $N$.
    See Proposition 2.1 from \citet{shapiro1989asymptotic}.
\end{remark}
\begin{remark}[Comment on \cref{thm:first_differentiability}]
    We briefly discuss why differentiability is related to the gap in buyers' bids.
Recall $\fbar(\beta)= \E[\max_i \beta_i v_i(\theta)]$.
Let $\delta \in \Rnp$ be a direction with positive entries, and let $I(\beta, \theta) = \argmax_i{\beta_i v_i(\theta)}$ be the set of winners of item $\theta$ which could be multivalued. Consider the directional derivative of $\fbar$ at $\beta$ along the direction $\delta$:
\begin{align*}
    & \lim_{t \downarrow 0} 
     \E\bigg[\frac{\max_i (\beta_i + t \delta_i) v_i(\theta) - \max_i \beta_i v_i(\theta)}{t}\bigg] 
    \\
    & =  
    \E \bigg[\lim_{t \downarrow 0} \frac{\max_i (\beta_i + t \delta_i) v_i(\theta) - \max_i \beta_i v_i(\theta)}{t}\bigg] 
    \\
    & =
    \E\Big[\max_{i \in I(\beta,\theta)} v_i (\theta) \delta_i\Big] 
    \;,
\end{align*}
where the exchange of limit and expectation is justified by the dominated convergence theorem.
Similarly, the left limit is
\begin{align*}
    \lim_{t \uparrow 0} \E\bigg[\frac{\max_i (\beta_i + t \delta_i) v_i(\theta) - \max_i \beta_i v_i(\theta)}{t}\bigg] = \E\Big[\min_{i \in I(\beta,\theta)} v_i(\theta) \delta_i\Big] 
    \;.
\end{align*}
If there is a tie at $\beta$ with positive probability, i.e., the set $I(\beta,\theta)$ is multivalued for a non-zero measure set of items, then the left and right directional derivatives along the direction $\delta$ do not agree. Since differentiability at a point $\beta$ implies existence of directional derivatives, we conclude differentiability implies \cref{eq:as:notie}.

\end{remark}

\begin{proof}[Proof of \cref{thm:int_implies_hessian}]
    By assumption,
    there is a neighborhood of $\betast$, say $N$, on which $H$ is differentiable.
    By \cref{thm:first_differentiability},
    for $\b\in N$, $f$ is differentiable at $\b$ almost surely.
    Define $G: N \times \Theta \to \Rnp$, $G(\b,\t) = \nabla f(\t, \b)$.
    To compute the Hessian w.r.t.\ the first term $\fbar$, we look at the limit 
    \begin{align}
        \label{eq:G_ratio}
        \lim_{h \to 0} \E \bigg[\frac{G(\betast + h,\theta) - G(\betast, \theta)}{\|h\|}\bigg]
        \;.
    \end{align}

    Suppose we could exchange expectation and limit in \cref{eq:G_ratio}, then the above expression would become zero: for a fixed $\theta$, since \cref{eq:as:notie} holds at $\betast$, i.e, $\bidgap(\betast,\theta) > 0$, we apply \cref{lm:epsilon_as_lip} and obtain
    $
        \lim_{h \to 0}  {(G(\betast + h,\theta) - G(\betast, \theta))}/{\|h\|} = 0.
    $
    This implies that $H$ is twice differentiable at $\betast$ with Hessian $\nabla\sq H(\betast) = \nabla\sq \Psi(\betast)$.
    It is then natural to ask for sufficient conditions for exchanging limit and expectation.

    \begin{lemma}[$\bidgap(\beta,\theta)$ as   Lipschitz parameter of $G$] 
        \label{lm:epsilon_as_lip}
      Let $\b, \b' \in N$. 
        \begin{itemize}
            \item If $\|\b - \b'\|_\infty \leq \bidgap(\beta,\theta)/\vbar$ then 
            $G(\beta',\theta) = G(\beta,\theta)$.
            \item It holds ${ \| G(\beta ',\theta) - G(\beta, \theta) \|_2}{} \leq 6 \vbar\sq \cdot \frac{1}{ \bidgap(\beta,\theta)} \|\b' - \b\|_2 $. 
        \end{itemize}
\end{lemma}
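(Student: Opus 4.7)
The proof rests on the piecewise-constant structure of $G(\beta,\theta) = \nabla_\beta f(\theta, \beta)$. Whenever $\bidgap(\beta,\theta) > 0$, the maximum $\max_i \betai \vithe$ is attained at a unique index $i^*(\beta,\theta)$, so $G(\beta,\theta) = v_{i^*(\beta,\theta)}(\theta)\, e_{i^*(\beta,\theta)}$; in particular $\|G(\beta,\theta)\|_2 \leq \vbar$, and $G(\cdot,\theta)$ is locally constant on any open set of $\beta$'s sharing the same unique winner. The case $\bidgap(\beta,\theta)=0$ is vacuous for both assertions.

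For the first assertion, I would fix $\theta$, set $i^* = i^*(\beta,\theta)$, and for any $\beta'$ and $k \neq i^*$ expand
\begin{align*}
\beta'_{i^*} v_{i^*}(\theta) - \beta'_k v_k(\theta) & = \bigl(\beta_{i^*}v_{i^*}(\theta) - \beta_k v_k(\theta)\bigr) + (\beta'_{i^*} - \beta_{i^*})\,v_{i^*}(\theta) - (\beta'_k - \beta_k)\,v_k(\theta) \\
& \geq \bidgap(\beta,\theta) - \|\beta - \beta'\|_\infty \bigl(v_{i^*}(\theta) + v_k(\theta)\bigr) \\
& \geq \bidgap(\beta,\theta) - 2\vbar \,\|\beta - \beta'\|_\infty .
\end{align*}
As soon as $\|\beta - \beta'\|_\infty$ is at most a small constant multiple of $\bidgap(\beta,\theta)/\vbar$, the right-hand side is positive, so $i^*$ remains the unique maximizer at $\beta'$ and $G(\beta',\theta) = v_{i^*}(\theta)\,e_{i^*} = G(\beta,\theta)$.

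For the Lipschitz bound, I plan a two-case analysis. If $\|\beta - \beta'\|_\infty$ falls below the threshold from the first assertion, then $G(\beta',\theta) = G(\beta,\theta)$ and the inequality is trivial. Otherwise $\|\beta-\beta'\|_2 \geq \|\beta-\beta'\|_\infty$ is at least a constant multiple of $\bidgap(\beta,\theta)/\vbar$, so combining with the trivial bound $\|G(\beta',\theta) - G(\beta,\theta)\|_2 \leq 2\vbar$ yields $\|G(\beta',\theta) - G(\beta,\theta)\|_2 \leq C\,\vbar^2\,\|\beta-\beta'\|_2/\bidgap(\beta,\theta)$ for some absolute constant $C$; choosing the thresholds consistently delivers $C = 6$. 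The main obstacle is not conceptual but purely constant-tracking: the cutoff in the first assertion must be coordinated with the case split in the second so the final Lipschitz constant matches the stated $6$. A minor subtlety is that $G(\beta',\theta)$ need not be single-valued when $\bidgap(\beta',\theta)=0$, but any subgradient of $f(\theta,\cdot)$ at $\beta'$ is a convex combination of vectors $v_k(\theta)\,e_k$ with $v_k(\theta) \leq \vbar$ and hence has $\ell_2$-norm at most $\vbar$, leaving the argument intact.
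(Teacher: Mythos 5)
Your proposal is correct and follows essentially the same strategy as the paper's proof: establish that the winning index is locally constant when the perturbation is small relative to $\bidgap(\beta,\theta)/\vbar$, then split on whether $\|\beta'-\beta\|_\infty$ is above or below that threshold, using the trivial bound $\|G(\beta',\theta)-G(\beta,\theta)\|_2\le 2\vbar$ in the latter case. (Your threshold $\bidgap/(2\vbar)$ would in fact yield the slightly sharper constant $4$; the paper uses $\bidgap/(3\vbar)$, which produces the stated $6$.)
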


    By \cref{lm:epsilon_as_lip}, we know the ratio 
    ${(G(\betast + h,\theta) - G(\betast, \theta))}/{\|h\|}$ is dominated by $6\vbar \bidgap(\betast,\theta)\inv$, which by \cref{eq:as:UI} is integrable. 
    By dominated convergence theorem, we can exchange limit and expectation, and the claim follows.
\end{proof}
    
\begin{proof}[Proof of \cref{lm:epsilon_as_lip}]

    Note that for any $\beta$ and $\theta$ with $\bidgap(\beta,\theta) > 0$, and any $\beta' = \beta + h$,
\begin{align} 
    \label{eq:lip_of_G}
    \frac{ \| G(\beta + h,\theta) - G(\beta, \theta) \|_2}{\|h\|_2} \leq 6 \vbar\sq \cdot \frac{1}{ \bidgap(\beta,\theta)}
    \;.    
\end{align}
To see this, we notice that on one hand, if $\| h \|_\infty\leq \epsilon / (3\vbar )$ where $\epsilon = \bidgap(\beta,\theta)$, then for $i = i(\beta,\theta)$ and all $\theta \in \Theta_i(\beta)$,
\begin{align*}
    \beta'_i v_i(\theta) 
    & = (\betai + h_i) v_i(\theta)
    \\
    &\geq \betai v_i(\theta) -  \epsilon /3 \tag{A}
    \\
    &\geq \beta_k v_k(\theta) + \epsilon -  \epsilon /3
    \tag{B}
    \\
    &\geq \beta'_k  v_k(\theta)  - \epsilon /3 + \epsilon -  \epsilon /3 \tag{C} 
    \;,
\end{align*}
where $(A)$ and $(C)$ use the fact $\|h\|_\infty \leq \epsilon / (3\vbar)$, and (B) uses the definition of $\epsilon$.
This implies $\argmax_i \beta'_i \vithe = \argmax_i \betai \vithe$ and thus
$G(\beta + h ,\theta) - G(\beta  ,\theta) = 0$.
On the other hand, if $\| h \|_\infty > \epsilon / (3\vbar)$, then $\| h\|_2\geq  \|h\|_\infty > \epsilon / (3\vbar)$. Using the bound $\| G \|_2 \leq \vbar$, we obtain \cref{eq:lip_of_G}.
This completes proof of \cref{lm:epsilon_as_lip}.
\end{proof}

\begin{proof}[Proof of \cref{thm:linear_value_implies_hessian}]
    Recall the normalization on values $c_i/2 + d_i = 1$.
    By Lemma 5 from~\citet{gao2022infinite}, we know that at the LFM equilibrium the there exists unique breakpoints 
    $ 0 = a_0^* < a_{1}^{*}<\cdots<a_{n}^{*}=1$, $a^*_i = (-\betasti d_i + \betast_{i+1}d_{i+1}) / (\betasti c_i - \betast_{i+1} c_{i+1}) $, 
    such that buyer $i$ receives the item set $[a_{i-1}^*, a_i^*] \subset \Theta = [0,1]$. 
    Moreover, it holds 
    \begin{align*}
        & \beta^*_1 d_1 > \beta^*_2 d_2 > \dots > \beta^*_n d_n \;,
        \\
        & \beta^*_1 c_1 < \beta^*_2 c_2 < \dots < \beta^*_n c_n \;.
    \end{align*}

    Now we consider a small enough neighborhood $N$ of $\betast$. For each $\beta\in N$, we define the 
    breakpoint $a_i^*(\beta)= (-\beta_i d_i + \beta_{i+1}d_{i+1}) / (\beta_i c_i - \beta_{i+1} c_{i+1}) $ by solving for $\theta$ through 
    $\beta_i(c_i \theta + d_i) = \beta_{i+1}(c_{i+1} \theta + d_{i+1})$ for $i\in[n-1]$, $a_0^*(\beta) = 0$, and $a_n^*(\beta) = 1$. 
    Let $\Theta_i(\beta) = \{\theta \in \Theta: v_i(\theta)\beta_i \geq v_k(\theta)\beta_k, \forall k\neq i \}$. 
    \begin{lemma}
        \label{lm:linearv}
        There is a neighborhood $N$ of $\betast$, so that 
        $\Theta_i(\beta) = [a_{i-1}^*(\beta), a_{i}^*(\beta) ]$ for $\beta \in N$. 
    \end{lemma}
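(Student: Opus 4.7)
The plan is to view each term $\b_i v_i(\t) = \b_i c_i \t + \b_i d_i$ as a line in the $(\t,\text{bid})$-plane with slope $\b_i c_i$ and intercept $\b_i d_i$, and to use continuity of the slope/intercept map $\beta \mapsto (\b_i c_i, \b_i d_i)_{i=1}^n$ to transport the equilibrium upper-envelope decomposition from $\betast$ to a small neighborhood.

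First, since at $\betast$ we have the strict inequalities $\b_1^* c_1 < \cdots < \b_n^* c_n$ and $\b_1^* d_1 > \cdots > \b_n^* d_n$ (by Lemma~5 of \citet{gao2022infinite}), continuity gives a neighborhood $N_0$ of $\betast$ on which the same strict orderings persist. On $N_0$ each denominator $\b_i c_i - \b_{i+1} c_{i+1}$ is bounded away from $0$, so the map $\b \mapsto a_i^*(\b)$ is continuous, and because $0 = a_0^*(\betast) < a_1^*(\betast) < \cdots < a_{n-1}^*(\betast) < a_n^*(\betast) = 1$, continuity supplies a smaller neighborhood $N \subset N_0$ on which $0 < a_1^*(\b) < \cdots < a_{n-1}^*(\b) < 1$.

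Second, I will fix $\b \in N$ and $i \in [n]$ and show that $\b_i v_i(\t) \geq \b_k v_k(\t)$ for every $k$ and every $\t \in [a_{i-1}^*(\b), a_i^*(\b)]$. By definition of $a_{i-1}^*(\b)$ and $a_i^*(\b)$, the line $\b_i v_i$ meets $\b_{i-1} v_{i-1}$ at $a_{i-1}^*(\b)$ and meets $\b_{i+1} v_{i+1}$ at $a_i^*(\b)$. The slope ordering on $N_0$ makes $\b_i v_i$ steeper than $\b_{i-1} v_{i-1}$ and flatter than $\b_{i+1} v_{i+1}$, so on $[a_{i-1}^*(\b), a_i^*(\b)]$ we obtain $\b_i v_i \geq \b_{i-1} v_{i-1}$ and $\b_i v_i \geq \b_{i+1} v_{i+1}$ at once. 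For a non-adjacent index $k < i-1$, the slope ordering gives $\b_k c_k < \b_{i-1} c_{i-1}$ and the two lines $\b_k v_k$ and $\b_{i-1} v_{i-1}$ cross at a point $\theta_{k,i-1}(\b)$ given by the same rational formula as $a_i^*(\b)$. By continuity and the equilibrium fact that $\theta_{k,i-1}(\betast) \leq a_{i-1}^*$ (this follows from Lemma~5 of \citet{gao2022infinite}, which forces buyer $k$'s winning set to lie to the left of $a_{i-1}^*$), one can shrink $N$ so that $\theta_{k,i-1}(\b) \leq a_{i-1}^*(\b)$ throughout $N$. Combined with $\b_{i-1} v_{i-1} \leq \b_i v_i$ on $[a_{i-1}^*(\b), a_i^*(\b)]$, this yields $\b_k v_k \leq \b_i v_i$ on the same interval. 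The case $k > i+1$ is symmetric.

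Third, I will combine these inequalities to conclude $[a_{i-1}^*(\b), a_i^*(\b)] \subset \Theta_i(\b)$, and verify the reverse inclusion by a covering argument: the intervals $[a_{i-1}^*(\b), a_i^*(\b)]$ partition $[0,1]$, and strict inequalities in the open subintervals ensure no overlap between $\Theta_i(\b)$ and $\Theta_j(\b)$ except at their common endpoints.

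The main obstacle is the non-adjacent step in the second paragraph: we need $\theta_{k,\ell}(\b)$ to remain on the ``correct'' side of $a_{i-1}^*(\b), a_i^*(\b)$ for every pair $(k,\ell)$ with $|k-\ell|\geq 2$. At $\betast$ these containments hold thanks to the equilibrium structure of Lemma~5 of \citet{gao2022infinite}, possibly with equality in the boundary cases; the point requiring care is whether those boundary intersections can cross into the target interval under small perturbations of $\b$. Because there are only finitely many pairs $(k,\ell)$ and each $\theta_{k,\ell}(\b)$ is a continuous function that is strictly separated from $(a_{i-1}^*, a_i^*)$ at $\betast$ (by the strict slope/intercept ordering), a single uniform choice of $N$ suffices.
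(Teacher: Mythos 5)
Your first and third steps match the paper: establish on a neighborhood $N$ that the slope ordering $\beta_1 c_1 < \cdots < \beta_n c_n$, the intercept ordering $\beta_1 d_1 > \cdots > \beta_n d_n$, and the breakpoint ordering $0 < a_1^*(\beta) < \cdots < a_{n-1}^*(\beta) < 1$ all persist (the paper does this with explicit constants rather than abstract continuity, but that is a cosmetic difference).

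Where you diverge, and where a gap sits, is the second step. You track the non-adjacent intersection points $\theta_{k,\ell}(\beta)$ and try to keep them outside $[a_{i-1}^*(\beta),a_i^*(\beta)]$ by perturbing from $\betast$, justifying the needed strict separation at $\betast$ ``by the strict slope/intercept ordering.'' That justification is incorrect as stated: the slope/intercept orderings alone do not even guarantee that every line touches the upper envelope, let alone locate $\theta_{k,\ell}$. Concretely, the three lines $10$, $\theta+1$, $100\theta$ have strictly increasing slopes and strictly decreasing intercepts, yet the middle line is nowhere on the upper envelope and $a_1=9 > a_2 = 1/99$. What actually delivers the separation is the breakpoint ordering, and once you are willing to invoke it, the cleaner and fully sufficient route is a transitivity (``chain'') argument that never looks at $\theta_{k,\ell}$ at all: for $\theta\in[a_{i-1}^*(\beta),a_i^*(\beta)]$ and any $j<i$, since $\theta\geq a_{i-1}^*(\beta) > a_{i-2}^*(\beta) > \cdots > a_j^*(\beta)$ and adjacent lines swap order precisely at their shared breakpoint with the steeper line winning to the right, one gets $\beta_j v_j(\theta)\leq\beta_{j+1}v_{j+1}(\theta)\leq\cdots\leq\beta_i v_i(\theta)$; the symmetric chain handles $j>i$. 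This is what the paper's proof is implicitly doing when it asserts ``For $\theta\in[0,a_1^*(\beta)]$, $\max_i\beta_i(c_i\theta+d_i)$ is achieved by $i=1$. Similarly for $i=2,\dots,n$.'' Your approach can be repaired (the strict separation at $\betast$ does hold, but it must be derived from the strict breakpoint ordering via the chain argument, after which the perturbation step becomes redundant), so the recommendation is to drop the pairwise-intersection tracking entirely and use the chain argument directly on any $\beta\in N$.
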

    
    When exists, the gradient is always $ \nabla \fbar(\beta) 
     = \sumiton e_i \int \indi(\Theta_i(\b)) v_i(\t) s(\t)\dt$.
    For $\b \in N$, it further simplifies to
    \begin{align*}
        \nabla \fbar(\beta)  
        & = \sumiton e_i \int \indi([a_{i-1}^*(\beta) ,a_i^*(\beta)]) (c_i \theta + d_i) \diff \theta
        \\
        & = \sumiton e_i \bigg( \frac{c_i}{2}\big( [a_{i}^*(\beta)]\sq - [a_{i-1}^*(\beta)]\sq \big) + d_i \big(a_{i}^*(\beta) - a_{i-1}^*(\beta)\big) \bigg) 
        \;.
    \end{align*}
    From this we see that continuous differentiability of the breakpoints $a^*_i (\b)$ implies continuous differentiability of $\nabla  \fbar$.
    This finishes the proof of \cref{thm:linear_value_implies_hessian}.
\end{proof}

\begin{proof}[Proof of \cref{lm:linearv}]
    We construct such a neighborhood $N$. Define $$\delta = \min
    \bigg\{
        \frac12 \ubar{\Delta}_{\beta d} / \bar{\Delta}_d
        , \frac12 \ubar{\Delta}_{\beta c} / \bar{\Delta}_c 
        ,
        \frac {1}{4} \ubar{\Delta}_a \ubar{\Delta}_{\beta c} / \vbar 
        \bigg\}\; ,$$ where 
        $\ubar{\Delta}_a\defeq \min |a_i - a_{i-1}|$,
    $ \ubar{\Delta}_{\beta c} \defeq \min \{  \beta^*_{i-1}c_{i-1} -\beta^*_i c_i \} > 0$, 
    $ \bar{\Delta}_c \defeq \max_i\{c_{i-1} - c_{i}\} > 0$, and $ \ubar{\Delta}_{\beta d} > 0$ and $ \bar{\Delta}_d > 0$ are similarly defined. 
    Let $N = \{ \beta: \|\beta - \betast\|_\infty < \delta \}$.
    The neighborhood $N$ is constructed so that on $N$ it holds 
    \begin{align*}
        & \beta_1 d_1 > \beta_2 d_2 > \dots > \beta_n d_n \; ,
        \\
        & \beta_1 c_1 < \beta_2 c_2 < \dots < \beta_n c_n \; ,
        \\
        & 0 = a_0^*(\beta) < a_{1}^{*} (\beta) <\cdots<a_{n}^{*}(\beta)=1 \; ,
    \end{align*}
    where the first inequality follows from 
    $\delta \leq \frac12 \ubar{\Delta}_{\beta d} / \bar{\Delta}_d
    $, the second inequality from $\delta \leq \frac12 \ubar{\Delta}_{\beta c} / \bar{\Delta}_c$, and the third inequality follows from $\delta \leq \ubar{\Delta}_a \ubar{\Delta}_{\beta c} / (4\vbar)$, where $\vbar = \max_i \sup_{\theta \in [0,1]} c_i \theta + d_i$.
    For $\t\in[0,a_1^*(\b)]$,
    $\max_i \beta_i (c_i\t + d_i)$ is achieved by $i=1$.
    Similarly for $i = 2, \dots, n$.
    So we have shown $\Theta_i(\beta) = [a_{i-1}^*(\beta), a_{i}^*(\beta) ]$ for $\beta \in N$.
\end{proof}

\begin{proof}[Proof of \cref{thm:smooth_density_implies_hessian}]
    We need the following technical lemma on the continuous differentiability of integral functions.
    \begin{lemma} [Adapted from Lemma 2.5 from~\citet{wang1985distribution}]
        \label{lm:integral_is_cont_diff}
        Let $u = [u_1, \dots, u_n]\in \Rnpp$, and $$I(u) = \int_{0}^{u_1} \diff t_1 \dots \int_{0}^{u_n}  h (t_1, t_2,\dots, t_n)  \diff t_n \; ,$$ where $h$ is a continuous density function of a probabilistic distribution function on $\Rnp$ and such that all lower dimensional density functions are also continuous. Then the integral $I(u)$ is continuously differentiable.
    \end{lemma}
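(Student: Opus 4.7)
My plan is to establish continuous differentiability of $I$ on $\Rnpp$ by showing that each first-order partial $\partial I/\partial u_k$ exists and is jointly continuous in $u$, from which $I\in C^1$ follows by the standard criterion.

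Fix an index $k\in[n]$. The first step is to apply Fubini's theorem (valid since $h\ge 0$ is integrable) to reorder the iterated integral defining $I(u)$ so that $t_k$ is the outermost variable:
\begin{align*}
I(u) = \int_0^{u_k} g^{(k)}_u(t)\, dt, \qquad g^{(k)}_u(t) \defeq \int_{\prod_{j\ne k}[0,u_j]} h(t_1,\ldots,t_{k-1},t,t_{k+1},\ldots,t_n)\, dt_{-k}.
\end{align*}
Because $h$ is continuous (hence bounded on compact sets) and the region $\prod_{j\ne k}[0,u_j]$ is bounded, dominated convergence gives continuity of $t\mapsto g^{(k)}_u(t)$. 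The fundamental theorem of calculus then yields
\begin{align*}
\frac{\partial I}{\partial u_k}(u) = g^{(k)}_u(u_k) = \int_{\prod_{j\ne k}[0,u_j]} h(t_1,\ldots,t_{k-1},u_k,t_{k+1},\ldots,t_n)\, dt_{-k}.
\end{align*}

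The second step is to verify that the map $u\mapsto \partial I/\partial u_k(u)$ is jointly continuous on $\Rnpp$. Take a sequence $u^{(m)}\to u^0$ in $\Rnpp$. Choose a compact rectangle $K\subset\Rnp$ containing every $\prod_j[0,u_j^{(m)}]$ for large $m$, and rewrite the partial as an integral over the fixed set $K_{-k}$ by inserting indicators:
\begin{align*}
\frac{\partial I}{\partial u_k}(u^{(m)}) = \int_{K_{-k}} h\bigl(t_1,\ldots,u_k^{(m)},\ldots,t_n\bigr)\prod_{j\ne k}\indi\{t_j\le u_j^{(m)}\}\, dt_{-k}.
\end{align*}
By continuity of $h$, the integrand converges pointwise (off the Lebesgue-null boundary where some $t_j=u_j^0$) to the $u^0$-integrand, and is dominated by $\sup_K h\cdot \indi_{K_{-k}}$, which is integrable. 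Dominated convergence then gives convergence of the integrals, i.e., continuity of $\partial I/\partial u_k$ at $u^0$. Running this argument for each $k$ establishes $I\in C^1(\Rnpp)$.

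The main obstacle is that the limits of integration themselves depend on the variable $u$ with respect to which we wish to differentiate, so the textbook differentiation-under-the-integral-sign statement and a naive DCT do not apply verbatim. My remedy is the indicator-function reformulation on a fixed compact dominating set, which converts the moving-domain issue into an ordinary pointwise-convergence problem on a fixed domain to which DCT applies cleanly. The hypothesis that all lower-dimensional density functions are continuous is not strictly needed for this $C^1$ conclusion (continuity of $h$ together with compactness of the integration region suffices), but it would be the natural input if one wished to iterate the argument to obtain higher-order smoothness.
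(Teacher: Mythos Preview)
The paper does not supply its own proof of this lemma; it is quoted as an adaptation of Lemma~2.5 in \citet{wang1985distribution} and used as a black box in the proof of \cref{thm:smooth_density_implies_hessian}. Your self-contained argument via Fubini, the fundamental theorem of calculus, and dominated convergence on a fixed compact box is correct and is essentially the natural elementary proof one would expect.

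Your closing remark is also accurate and worth making explicit: in the adapted version stated here, the integration region $\prod_{j\ne k}[0,u_j]$ is compact, so continuity of $h$ alone already furnishes the uniform bound $\sup_K h<\infty$ needed for DCT, and the extra hypothesis on continuity of the lower-dimensional marginals is not used. That hypothesis is inherited from Wang's original formulation on all of $\Rn$, where the partial derivative is an integral over the unbounded set $\prod_{j\ne k}(-\infty,u_j]$; there the ``bounded on compacta'' trick is unavailable, and continuity of the marginals is what supplies the required integrable envelope. In the present bounded-domain setting it is redundant for $C^1$.
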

    \begin{remark}
        The difference between the above lemma and the original statement is that  the original theorem works with density $h$ and integral function $I(u)$ both defined on $\Rn$, while the adapted version works with density $h$ and integral $I(u)$ defined only on $\Rnpp$. 
    \end{remark}

    The gradient expression is $$\nabla \fbar(\beta)  = \sumiton e_i \int v_i \indi(V_i(\beta)) f_v(v)\diff v \; ,$$
    where the set $V_i(\beta) = \{ v \in \Rnp: v_i \beta_i \geq v_k \beta_k, k\neq i\}, i \in [n]$, is the values for which buyer $i$ wins. 
    For now, we focus on the first entry of the gradient, i.e., $\int v_1 \indi(V_1(\beta)) f_v(v)\diff v$.
    We write the integral more explicitly as follows. By Fubini's theorem,
    \begin{align}
        & \int v_1 \indi(V_1(\beta)) f_v(v)\diff v
        \\
        &  = \int_{0}^{\infty} \diff v_1 \int_0^{\frac{\beta_1 v_1}{\beta_2}} \diff v_2 \int_0^{\frac{\beta_1 v_1}{\beta_3}} \diff v_3 \dots 
        \int_0^{\frac{\beta_1 v_1}{\beta_n}}
        \underbrace{(v_1 f_v(v_1,\dots, v_n) )}_
        {=: A_1(v)}\diff v_n 
        \label{eq:integral_before_change_of_var}
        \;.
    \end{align}
    To apply the lemma we use a change of variable. Let $t = T(v) = [v_1,\frac{v_2}{v_1}, \dots, \frac{v_n}{v_1}]$ and $v = T\inv (t) = [t_1, t_2 t_1, \dots, t_n t_1]$. Then \cref{eq:integral_before_change_of_var} is equal to 
    \begin{align}
        \int_{0}^{\infty} \diff t_1 \int_0^{\frac{\beta_1}{\beta_2}} \diff t_2 \int_0^{\frac{\beta_1 }{\beta_3}} \diff t_3 \dots 
        \int_0^{\frac{\beta_1}{\beta_n}} \underbrace{\big(t_1^n f_v(t_1, t_2 t_1, \dots, t_n t_1)\big)}_
        {=: A_2(t)}
        \diff t_n 
        \;.
        \label{eq:after_change_of_var}
    \end{align}
    Note $\E[v_1(\theta)] = \int_\Rnpp A_1(v) \diff v= \int_\Rnpp A_2(t) \diff t = 1 $.
    We use Fubini's theorem and obtain 
    \begin{align*}
        \cref{eq:after_change_of_var} = \int_0^{\frac{\beta_1}{\beta_2}} \diff t_2 \int_0^{\frac{\beta_1 }{\beta_3}} \diff t_3 \dots 
        \int_0^{\frac{\beta_1}{\beta_n}} h(t_{-1}) \diff t_{n}
        \;,
    \end{align*}
    where we have defined $h(t_{-1}) = \int_\Rp t_1^n f_v(t_1, t_2 t_1, \dots, t_n t_1) \diff t_1$. 
    By the smoothness assumption on $h$ and \cref{lm:integral_is_cont_diff}, 
    we know that the map $u_{-1} \mapsto \int_{0}^{u_2} \diff t_2 \dots \int_{0}^{u_n}  h (t_{-1})  \diff t_n $ is $C^1$ for all $u_{-1} \in \Rnmpp $. 
    Moreover, 
    the map $\beta \mapsto [\frac{\beta_1}{\beta_2}, \dots, \frac{\beta_1}{\beta_n}]$ is $C^1$. We conclude the first entry of $\nabla \fbar(\beta)$ is $C^1$ in the parameter $\beta$. 
    A similar argument applies to other entries of the gradient. We complete the proof of \cref{thm:smooth_density_implies_hessian}.
\end{proof}

\subsection{Closed-form Expression for Hessian}
\label{sec:closedformhessian}

In this section we derive a closed form expression for $\nabla \sq \fbar (\b)$ for some $\b \in \Rnp$ using tools from differential geometry.
Let $f_v(v)$ be the density of values w.r.t.\ the Lebesgue measure on $\Rn$. Let $\cK = \nabla\sq \fbar(\beta)$. Then $\nabla\sq H(\b)= \cK + \Diag(b_i / (\betai)\sq)$.
Fix a buyer $k$. Now we derive $\cK_{k,h}$ for $h\in [n]$. 
We need to introduce a few sets.
Let
\begin{align*}
    & V = \{ v \in \Rnp:  \betai\vi \leq \betak\vk, i \in [n] \}
    \;,
    \\
    & S_h = \{ v \in \Rnp: \betak\vk = \betah \vh, \betai\vi \leq \betak\vk, i \neq k, h \} = V \cap  \{v: \betak\vk = \betah \vh\} 
    \;,
    \\
    & \Pi_ h = \{ v_{-k} \in \R^{n-1}_+: \betai\vi \leq \betah \vh, i \neq k, h \}
    \;.
\end{align*}
\begin{figure}[ht!]
    \centering
    \includegraphics[scale=0.5]{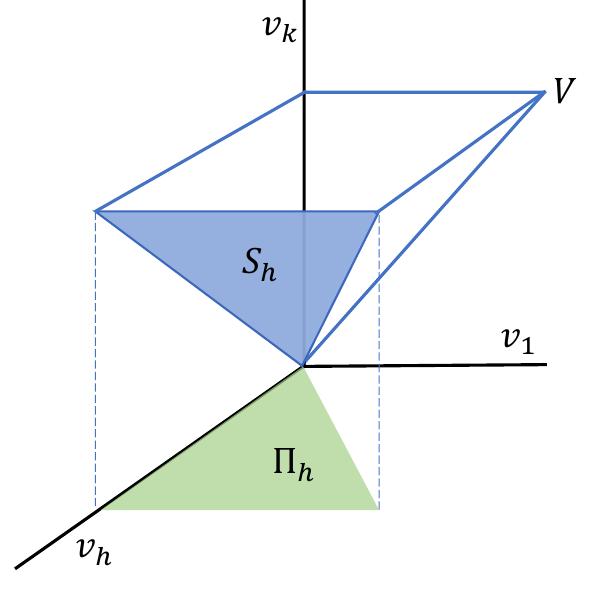}
    \caption{Illustration of $V$, $S_h$ and $\Pi_h$.}
    \label{fig:winningcone}
\end{figure}
The cone $V$, which we call the \emph{winning cone} of buyer $k$, 
is the set of valuation vectors such that buyer $k$ wins under the pacing profile $\b$.
The set $S_h$ is a face of $V$, which represents
the values for which there is a tie between buyers $k$ and $h$.
It is clear $\Pi_h$ is the projection of $S_h$ onto $\{v \in \Rn: v_h = 0 \}$.
In \cref{fig:winningcone} we present an illustration of these sets.

We will show that the Hessian $\mathcal K$ can be characterized as
\begin{align}
    & \cK_{k,h} = - \frac{\betah}{\betak\sq} \int_{\Pi_h} \vh\sq f_v(v_1, \dots, v_{k-1}, \frac{\betah\vh}{\betak}, v_{k+1}, \dots, v_n) \diff v_{-k}
    \quad \text{ for } h \neq k \;,
    \\
    & \cK_{k,k} = \sum_{h\neq k} \frac{\betah\sq}{\betak^3} \int_{\Pi_h} \vh\sq f_v (v_1, \dots, v_{k-1}, \frac{\betah\vh}{\betak}, v_{k+1}, \dots, v_n) \diff v_{-k}  \;.
\end{align}
The formulae indicate that 
only the value of $f_v$ on the faces $S_h$, $h\neq k$ matters for the Hessian.

Let $\Delta = \betak e_k - \betah e_h$ and 
$\Delta_u = \betak e_k - u e_h$ for a scalar $u$.
The $k$-th entry of $\nabla \fbar (\b)$ is $\int \indi(V(\b)) \vk f_v \diff v$ and so the second-order derivative of $\fbar$ can be written as
\begin{align*}
    \cK_{k,h} = \frac{\partial}{\partial \betah} \int \indi(V(\b)) \vk f_v \diff v
    \;.
\end{align*}
Define the rotation matrix 
\begin{align*}
    T_u = \bigg(I_n - \frac{\Delta_u \Delta_u \tp}{\| \Delta_u \|_2\sq}\bigg)\bigg(I_n - \frac{\Delta \Delta \tp}{\| \Delta \|_2\sq}\bigg) + \frac{1}{\|\Delta\|_2 \|\Delta_u\|_2} \Delta_u\Delta\tp
    \;.
\end{align*}
The map $T_u$ is a diffeomorphism that maps $\{v: \Delta \tp v \geq 0\}$
to the region $\{v: \Delta_u \tp v \geq 0\}$.
It can be seen that $T_u (\Delta / \|\Delta\|_2) = \Delta_u / \|\Delta_u\|_2$ and $T_{\betah} = I_n$.
Notice that if we increase $\betah$, $h\neq k$, the face $S_h$ rotates inward in $V$, pivoting at the origin.
By a result from Section 5 in \citet{kim1990cube}, 
the derivative of the volume of a parametrized region can be written as 
a surface integral on the boundary of the region. Concretely, in our 
case, we have that
\begin{align*}
    \frac{\partial}{\partial \betah} \int \indi(V(\b)) \vk f_v \diff v
    =
    \int \indi(S_h) \vk f_v  n(v)\tp \Big(\frac{\partial}{\partial u} T_u v \Big| _{u=\betah}\Big)
    \diff \sigma_h \;,
\end{align*}
where $n(v)$ is the normal vector of $S_h$ pointing inside the winning cone $V$, and $\diff \sigma_h$ is the surface measure on $S_h$.
Now plug in 
\begin{align*}
    & \frac{\partial}{\partial u} T_u v \Big| _{u=\betah} = - \frac{\vh}{\|\Delta\|_2\sq} \Delta \text{ for $ v \in S_h$} 
    \;,
    \\
    & n(v) = \frac{\Delta}{\|\Delta\|_2}\;, \quad \diff \sigma_h = \sqrt{1 + (\betah/ \betak)\sq} \diff v_{-k}
    \;,
\end{align*}
and obtain 
\begin{align*}
    & \int \indi(S_h) \vk f_v  n(v)\tp \Big(\frac{\partial}{\partial u} T_u v \Big| _{u=\betah}\Big)
    \diff \sigma_h 
    \\
    & = - \int_{S_h} \vk \vh f_v \|\Delta\|_2\inv \diff \sigma_h
    \\
    & = - \frac{\betah}{\betak\sq} \int_{\Pi_h} \vh\sq f_v(v_1, \dots, v_{k-1}, \frac{\betah\vh}{\betak}, v_{k+1}, \dots, v_n) \diff v_{-k} 
    \;.
\end{align*}
Now we investigate $\cK_{k,k}$. 
Notice that if we increase $\betak$, all $n-1$ faces, $\{S_h\}_{h\neq k}$, of the winning cone $V$ rotate outward, pivoting at the origin. And so we could again use the results from \citet{kim1990cube}.
However, we show a simpler approach to deriving $\cK_{k,k}$ using first-order homogeneity of the function $\fbar(\b)=\E[\max_i \betai \vithe]$.
By Euler's homogenous function theorem, 
we have $\fbar(\b) = \sumiton \betai (\partial / \partial \betai) \fbar(\b)$.
Taking $(\partial / \partial \betak)$ on both sides we obtain 
$$\frac{\partial }{ \partial \betak } \fbar(\b) =
\frac{ \partial }{\partial \betak } \fbar(\b) + \betak \cK_{k,k} + \sum_{h\neq k}  \betah \cK_{k,h} \;,$$ 
and thus $ \cK_{k,k} = -\sum_{h\neq k} \betah \cK_{k,h} / \betak$.

\newpage
\ECHead{Proofs of Fisher Market Results}

\section{Appendix to Linear Fisher Market}

\subsection{Fairness, efficiency, and Scale Invariance}\label{sec:lfm_scale_invariance}

A major use case for LFM is fair and efficient allocation of resources. 
Similar to the classical finite LFM, the infinite LFM enjoys fairness and efficiency properties~\citep{gao2022infinite}. Let $\xst$ be an equilibrium allocation. 
First, this allocation is \textit{Pareto optimal}, meaning there does not exist an $\tilde x \in (L_+^\infty) ^ n$, $\sumi \tilde x_i \leq 1$, such that $\int v_i \tilde x _i s\dt \geq \int v_i \xsti  s\dt $ for all $i$ and one of the inequalities is strict.
The allocation $\xst$ is \textit{envy-free} in a budget-weighted sense, meaning $\int v_i \xsti  s\dt/ b_i \geq \int \new{ v_i} \xst_j  s\dt / b_j$ for all $i\neq j$. 
Finally, it is \textit{proportional}: $\int v_i \xsti  s\dt \geq \int v_i s \dt \cdot (b_i / \sum_{i'}b_{i'}) $, that is, each buyer gets at least the utility of its proportional share allocation.

LFM enjoys certain scale-invariance properties.
First, buyers cannot change the equilibrium by scaling their value functions.
Suppose $\alpha_1,\dots, \alpha_n$
are positive scalars, and that
$(x,p)$ are the equilibrium allocations and prices in $\LFM(b,v,s,\Theta)$.
Then $(x,p)$ will also be the equilibrium quantities in $\LFM(b,(\alpha_1 v_1,
\dots ,\alpha_n v_n), s, \Theta)$.
This is easily seen by the fact that valuation scaling does not change the demand function.
Second, if all buyers' budgets are scaled by the same factor, or the supply is scaled by the same factor, the equilibrium does not change. That is, if $\alpha_1,\alpha_2$ are two positive scalars, and $(x,p) = \LFM(b,v,s,\Theta)$, then $(x, (\alpha_1  / \alpha_2) p) = \LFM(\alpha _1 b, v, \alpha_ 2 s, \Theta)$.
These scale-invariances hold for finite LFM as well \footnote{
    That is, 
    $(\xgam,\pgam) \in \oLFM (b,v,\sigma, \gam)$ implies 
$  (\xgam,\pgam)\in \oLFM ( b, (\alpha_i v_i), \sigma, \gam)$, 
and 
$(x, (\alpha_1/\alpha_2) p) \in \oLFM(\alpha_1 b, v, \alpha_2 \sigma, \gam)$
}.
Based on the invariance, we impose the normalization that 
$\sumiton b_i = 1$ and that all buyers' expected values are 1, i.e., 
$\int v_i (\t) s(\t)\diff \t = 1$.
Then in the limit LMF, the budge-supply ratio is 
$\sumiton b_i / \int s \diff \t = 1$.
With the normalization, we have $C_\LFM = \prod_{i=1}^{n} [b_i/2, 2]$.
In order for the budget-to-supply ratio to match in the sampled finite LFM and the limit LFM, we 
use supplies of $1/t$ for each item in the finite LFM.
Thus we study finite LFM of the form $\oLFM(b,v, 1/t, \gamma)$.

\subsection{Consistency} \label{sec:consistency}

\begin{theorem}[Consistency]\label{thm:consistency}
    It holds that 
    \begin{enumthmresult}
        \item NSW and individual utilities in the finite LFM are strongly consistent estimators of their limit LFM counterparts, i.e., $\sumiton b_i \log (u^\gam_i) \toas  \sumiton b_i \log (u^*_i)$ and $u^\gam_i \toas u^*_i.$
        \label{it:thm:consistency:1}
        
        \item The pacing multiplier in the finite LFM is a strongly consistent estimator of its limit LFM counterpart, i.e., $\beta^\gam_i \toas \beta^*_i$.
        \label{it:thm:consistency:2}


        \item Convergence of approximate market equilibrium: 
        $\limsup _{t} \cB^\gam(\epsilon) \subset \cB^*(\epsilon) \text { for all } \epsilon \geq 0$ 
        and 
        $\limsup _{t} \cB^\gam(\epsilon_t) \subset \cB^*(0) = \{\betast\} \text { for all } \epsilon_t \downarrow 0$. Recall the approximate solutions set, $\cBgam$ and $\cBst$, are defined in  \cref{eq:def:cBstar}.
        \label{it:thm:consistency:4}
    \end{enumthmresult}
    Proof in \cref{sec:proof:thm:consistency}.

\end{theorem}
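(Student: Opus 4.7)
The plan is to leverage the convex program characterization $\betast = \argmin_{\Rnpp} H$ and $\betagam = \argmin_{\Rnpp} H_t$ together with classical sample-average-approximation tools for convex stochastic programs. My first step is to establish uniform convergence of $H_t$ to $H$ on every compact subset of $\Rnpp$. For any fixed $\beta \in \Rnpp$, the summand $F(\thetau, \beta) = \max_i \betai \vitau - \sum_i b_i \log \betai$ is bounded (since $v_i \leq \vbar$), so the strong law of large numbers gives $H_t(\beta) \toas H(\beta)$. Intersecting null sets over a countable dense subset of $\Rnpp$ yields a single almost-sure event on which $H_t \to H$ pointwise on that dense set. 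Since $H_t$ and $H$ are finite convex functions on the open convex set $\Rnpp$, a classical result (pointwise convergence of convex functions on a dense set implies uniform convergence on compacta, e.g., Rockafellar--Wets) upgrades this to $\sup_{\b \in K}|H_t(\b) - H(\b)| \toas 0$ for every compact $K \subset \Rnpp$.

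The second step is to localize the minimizers. Coercivity of $H$ on $\Rnpp$ (explosion as $\betai \to 0$ from $-b_i\log\betai$, and as $\|\b\| \to \infty$ from the $\max_i \betai v_i$ term, using $\E v_i = \nui > 0$) places $\betast$ inside the compact set $C_\LFM$ defined in \cref{eq:def_C_LFM}. A matching a.s.\ coercivity of $H_t$ for large $t$ follows from the SLLN applied to $\frac{1}{t}\sumtau \vitau$, which stays bounded below by, say, $\nui/2$ eventually; this produces a compact set $K_0 \subset \Rnpp$ that eventually contains $\betagam$ and the approximate sets $\cBgam(\epsilon)$ almost surely. Combined with uniform convergence from the first step applied on $K_0$, and with the uniqueness of $\betast$ (standard for LFM equilibria), the usual argmin-continuity argument yields $\betagam \toas \betast$, proving \cref{it:thm:consistency:2}.

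For \cref{it:thm:consistency:4}, the same uniform convergence on $K_0$ yields $\inf H_t \toas \inf H$. For any sequence $\beta_t \in \cBgam(\epsilon) \subset K_0$ with $\beta_t \to \beta_\infty$, we have $H(\beta_\infty) = \lim_t H_t(\beta_t) \leq \lim_t(\inf H_t + \epsilon) = \inf H + \epsilon$, so $\beta_\infty \in \cBst(\epsilon)$; the case $\epsilon_t \downarrow 0$ is analogous and collapses the upper limit to $\{\betast\}$ by uniqueness. Finally, \cref{it:thm:consistency:1} follows immediately from \cref{it:thm:consistency:2} and the continuous mapping theorem, since $u_i^\gam = b_i/\betagami$ and $\NSW^\gam = \sum_i b_i \log(b_i/\betagami)$ are continuous functions of $\betagam$ on a neighborhood of $\betast \in \Rnpp$.

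The main obstacle I anticipate is the uniform-in-$t$ coercivity needed in the localization step: the argmins live in the noncompact domain $\Rnpp$, and we must rule out escape to the boundary $\{\betai = 0\}$ or to infinity for the random programs $H_t$. This is not automatic from pointwise SLLN and instead requires a quantitative lower bound on $\frac{1}{t}\sumtau \vitau$ (to force the max term to diverge as $\|\b\| \to \infty$) paired with the deterministic blow-up of $-b_i\log\betai$ at the boundary. Both ingredients are standard but are the genuine source of technical care, since without them the convexity-based uniform convergence on compacta alone is insufficient to conclude consistency.
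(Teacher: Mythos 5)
Your proposal is correct and rests on essentially the same machinery as the paper's proof: pointwise SLLN on a countable dense subset of $\Rnpp$, upgraded to uniform convergence on compacta by convexity, combined with a coercivity/localization argument and an argmin-continuity step, with Part~\ref{it:thm:consistency:4} following from the same uniform-convergence/cluster-point reasoning. The only notable differences in route are presentational. The paper extends $H_t$ and $H$ to $\bar\R$-valued closed proper convex functions $\tilde H_t,\tilde H$ on all of $\Rn$ (checking lower semicontinuity of $\tilde H$ via Fatou and level-boundedness of $\tilde H_t$), and then frames the whole argument in the language of epi-convergence, citing Rockafellar--Wets Theorems 7.17, 7.31, and Proposition 7.33; you work directly with finite convex functions on the open set $\Rnpp$ and invoke the pointwise-to-uniform-on-compacta fact (which is one half of the same Theorem 7.17). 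Each choice buys something: the epi-convergence wrapper makes the approximate-solution containment (Part~\ref{it:thm:consistency:4}) an immediate citation, while your localization approach avoids the lsc/properness bookkeeping. The other small divergence is Part~\ref{it:thm:consistency:1}: the paper derives NSW consistency directly from $\inf H_t \toas \inf H$ and strong duality, whereas you derive it from Part~\ref{it:thm:consistency:2} via $\NSWgam = \sum_i b_i \log(b_i/\betagami)$ and the continuous mapping theorem; both are valid, and both papers and you obtain utility consistency from $u_i^\gam = b_i/\betagami$. Your anticipated obstacle (uniform-in-$t$ coercivity to rule out escape to the boundary or to infinity) is exactly the issue the paper handles in its Step 4 via the SLLN on $\frac1t\sumtau v_i(\thetau)$ and the deterministic blow-up of $-b_i\log\beta_i$, so your diagnosis of the technical crux matches the paper.
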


We briefly comment on \cref{it:thm:consistency:4}.
The set limit result can be interpreted from a set distance point of view.
We define the inclusion distance from a set $A$ to a set $B$ by $d_\subset(A,B ) \defeq \inf _\epsilon\{\epsilon \geq 0: A \subset\{y: \operatorname{dist}(y, B) \leq \epsilon\}\}$ where $\operatorname{dist}(y, B) 
\defeq \inf\{\|y - b\| : b\in B \}$. 
Intuitively, $d_\subset(A,B )$ measures how much one should enlarge $B$ such that it covers $A$.
Then for any sequence $\epsilon_n \downarrow 0$, by the second claim in \cref{it:thm:consistency:4}, we know $d_\subset(\cB^\gam(\epsilon_t), \{\betast\}) \to 0$. This shows that the set of approximate solutions of $H_t$ with increasing accuracy centers around $\betast$ as market size grows.

\section{Technical Lemmas for LFM}
\label{sec:tech_lemmas}

We abbriviate $C_\LFM$ to $C$, and recall 
that under the normalization that $\sumiton b_i  = 1$ and $\nu_i = \int v_i s \dt = 1$, the set $C 
= \prod_{i=1}^n [b_i/2, 2] \subset \R^n$.

\begin{lemma} 
    \label{lm:value_concentration} 
    Define the event $A_t = \{ \betagam \in C   \}$
    .
    (i) If $t \geq 2\vbarsq {\log(2n/\eta)}$, then $\P(A_t) \geq \P(\frac12 \leq \frac1t \sumtau \vithetau \leq 2,\forall i ) \geq 1- \eta$.
    \label{it:lm:value_concentration:1}
    (ii) It holds $\P( A_t \text{ eventually}) = 1$.
    \label{it:lm:value_concentration:2}
    Proof in \cref{sec:tech_lemmas}.
\end{lemma}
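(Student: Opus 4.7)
The plan is to reduce the event $A_t=\{\betagam\in C\}$ to a concentration event on the empirical means $\hat\nu_i\defeq \frac1t\sumtau\vitau$, and then apply Hoeffding's inequality together with a union bound over the $n$ buyers. Under the normalization $\sumiton b_i=1$ and $\nu_i=\E[v_i]=1$, the finite-market analogue of the dual bounds for LFM (cf.\ \citet{gao2022infinite}, which is invoked right before \cref{eq:def_C_LFM}) gives
\[
    \frac{b_i}{\hat\nu_i}\;\leq\;\betagami\;\leq\;\frac{\sum_{i'}b_{i'}}{\min_{i'}\hat\nu_{i'}}=\frac{1}{\min_{i'}\hat\nu_{i'}} ,
\]
by running the same argument as in the limit case but with $s$ replaced by the empirical supply $\frac1t\sum_\tau \delta_{\thetau}$. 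Hence, if $\tfrac12\leq\hat\nu_i\leq 2$ for every $i$, then $b_i/2\leq\betagami\leq 2$, i.e.\ $\betagam\in C$. This gives the first inequality in claim (i) as a deterministic implication.

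For the second inequality in (i), I would bound $\P(\tfrac12\leq\hat\nu_i\leq 2,\forall i)$ from below via Hoeffding. Since $v_i(\t)\in[0,\vbar]$ a.s., Hoeffding's inequality yields
\[
    \P\!\bigl(|\hat\nu_i-1|\geq \tfrac12\bigr)\;\leq\;2\exp\!\bigl(-t/(2\vbarsq)\bigr),
\]
and a union bound over $i\in[n]$ gives
\[
    \P\!\bigl(\exists\,i:\;|\hat\nu_i-1|\geq \tfrac12\bigr)\;\leq\;2n\exp\!\bigl(-t/(2\vbarsq)\bigr).
\]
Solving $2n\exp(-t/(2\vbarsq))\leq \eta$ for $t$ yields precisely the threshold $t\geq 2\vbarsq\log(2n/\eta)$ stated in the lemma.

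For (ii), the SLLN applied buyer-by-buyer gives $\hat\nu_i\toas \nu_i=1$ for each $i\in[n]$, so a.s.\ for all sufficiently large $t$ we have $\tfrac12<\hat\nu_i<2$ simultaneously for all $n$ buyers, which by the first step implies $\betagam\in C$ eventually. Alternatively, one can invoke Borel--Cantelli on the Hoeffding bound in (i), which is summable in $t$. I do not anticipate substantive obstacles; the only subtlety is verifying the finite-sample dual bound on $\betagami$ in terms of $\hat\nu_i$, which follows from exactly the same KKT/first-order reasoning used for the limit market, with the empirical measure replacing $s$.
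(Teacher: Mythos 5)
Your proof is correct and follows essentially the same route as the paper's: reduce $A_t$ to the event $\{\tfrac12\leq\hat\nu_i\leq 2\,,\forall i\}$, apply Hoeffding plus a union bound to get part (i), and use Borel--Cantelli (or SLLN) for part (ii). The only cosmetic difference is in the containment step: you invoke the finite-market analogue of the Gao--Kroer dual bound $b_i/\hat\nu_i\leq\betagami\leq 1/\min_{i'}\hat\nu_{i'}$, whereas the paper derives a slightly tighter per-buyer bound $b_i/\hat\nu_i\leq\betagami\leq 1/\hat\nu_i$ directly from $b_i\hat\nu_i\leq\ugami\leq\hat\nu_i$ (the lower bound being the proportional-share property of the finite LFM under the normalization $\sum_i b_i=1$) and the relation $\betagami=b_i/\ugami$. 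Both bounds yield $\betagam\in C$ on the concentration event, so the distinction is immaterial here.
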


\begin{proof}[Proof of \cref{lm:value_concentration}]
    Recall the event $A_t = \{ \betagam \in C   \}$.
Define $\vbarit = \frac1t \sumtau \vithetau$.

        First we notice concentration of values implies membership of $\betagam$ to $C$, i.e.,
        $\{ 1/2 \leq \vbarit \leq 2,\forall i \} \subset \{\betagam \in C \}$. To see this, note that
        $\ugami \leq \frac1t \sumtau\vithetau$ and $\ugami \geq \frac1t\frac{b_i}{\sumiton b_i} \sumtau \vithetau$, and through the equation $\betagami = b_i / \ugami$ the inclusion follows.
        Note $0\leq \vithetau\leq \vbar$ is a bounded random variable with mean $\E[\vithetau] =1 $. By Hoeffding's inequality we have
        $   
            \P(|\vbarit - 1| \geq \delta)\leq 2\exp(-\frac{2\delta\sq t}{\vbar\sq})
        $.
        Next we use a union bound and obtain
        \begin{align}
            \label{eq:hoeffding_average_values}
            \P(\betagam \notin C) \leq \P \bigg(
            \bigcup_{i=1}^n \big\{ |\vbarit - 1| \geq \delta \big\} \bigg) \leq 2n \exp \Big(-\frac{2\delta\sq t}{\vbar\sq}\Big)
            \;.
        \end{align}

        By setting $2n \exp(-\frac{2\delta\sq t}{\vbar\sq}) = \eta$ and $\delta = 1/2$ and solving for $t$ we obtain item (i) in claim.
    
        To show item (ii),
        we use the Borel-Cantelli lemma. By choosing $\delta  = 1/2$ in the \cref{eq:hoeffding_average_values} we know $\P(A_t^c) \leq \P(\{ 1/2 \leq \vbarit \leq 2,\forall i \}^c)\leq 2n\exp(-t/(2\vbarsq))$. Then we have
        $
            \sum_{t=1}^\infty \P(A_t^c) < \infty
            \;.
        $
        By the Borel-Cantelli lemma it follows that $\P(\{A_t^c \text{ infinitely often}\}) = 0$, or equivalently $\P(A_t \text{ eventually}) = 1$.
    \end{proof}
    
\begin{lemma}[Smoothness and Curvature] \label{lm:smooth_curvature}
    It holds that 
both $H$ and $H_t$ are $L$-Lipschitz and $\lambda$-strongly convex w.r.t\ the $\ell_\infty$-norm on $C$ with $L = 2n+\vbar$ and $\lambda = \ubar{b}/4$. Moreover, $H_t$ and $H$ are $(\vbar + 2\sqrt{n})$-Lipschitz w.r.t.\ $\ell_2$-norm.
\end{lemma}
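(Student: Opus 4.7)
The plan is to prove the claims by separately analyzing the two pieces of $F(\theta, \beta) = f(\theta, \beta) - \sum_{i=1}^n b_i \log \beta_i$: the max term $f(\theta, \beta) = \max_i \beta_i v_i(\theta)$ and the log-barrier term. Lipschitz estimates will follow from subgradient norm bounds (via the duality of $\ell_\infty$ with $\ell_1$, and self-duality of $\ell_2$), and strong convexity will follow from the log barrier's positive-definite diagonal Hessian, since $f$ is only convex.

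First I would bound the subgradient of $F(\theta,\cdot)$ on $C$. Any subgradient of $f(\theta,\cdot)$ at $\beta$ lies in the convex hull of $\{v_i(\theta) e_i : i \in I(\beta,\theta)\}$, so any such subgradient $g$ has $\|g\|_1 \leq \bar v$ and $\|g\|_2 \leq \bar v$. The gradient of $-\sum_i b_i \log \beta_i$ is the vector $(-b_i/\beta_i)_i$; on $C$ we have $\beta_i \geq b_i/2$, hence each coordinate has magnitude at most $2$, yielding $\ell_1$ norm at most $2n$ and $\ell_2$ norm at most $2\sqrt n$. Combining by the triangle inequality, every subgradient of $F(\theta,\cdot)$ on $C$ has $\ell_1$ norm at most $\bar v + 2n$ and $\ell_2$ norm at most $\bar v + 2\sqrt n$. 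By the standard subgradient-norm characterization of Lipschitz continuity, $F(\theta,\cdot)$ is $(2n+\bar v)$-Lipschitz w.r.t.\ $\ell_\infty$ (its dual is $\ell_1$) and $(\bar v + 2\sqrt n)$-Lipschitz w.r.t.\ $\ell_2$ on $C$, uniformly in $\theta$. Averaging over $\theta$ (either under $s$ or under the empirical measure $P_t$) preserves the same Lipschitz constants, giving the claims for $H$ and $H_t$.

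For strong convexity, I would first observe that $f(\theta,\cdot)$ is convex (as a max of linear functions in $\beta$), so it contributes nonnegatively to the curvature. The log-barrier $-\sum_i b_i \log \beta_i$ is twice continuously differentiable with diagonal Hessian having entries $b_i/\beta_i^2$; on $C$, since $\beta_i \leq 2$, this entry is at least $b_i/4 \geq \ubar b/4$. Hence the minimum eigenvalue of the Hessian of the log-barrier is at least $\ubar b/4$ on $C$, which delivers $\ubar b/4$-strong convexity w.r.t.\ the $\ell_2$ norm. Using $\|\cdot\|_\infty \leq \|\cdot\|_2$, strong convexity w.r.t.\ $\ell_2$ implies strong convexity w.r.t.\ $\ell_\infty$ with the same constant $\ubar b/4$. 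Since $f(\theta,\cdot) + (-\sum_i b_i \log \beta_i)$ inherits this modulus, so does $F(\theta,\cdot)$, and averaging preserves strong convexity, finishing the proof for both $H$ and $H_t$.

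There is no genuine obstacle here; the only subtle point is being careful about whether strong convexity is stated with respect to $\ell_2$ or $\ell_\infty$, and observing that one direction of the comparison $\|\cdot\|_\infty \leq \|\cdot\|_2$ lets strong convexity in the stronger norm ($\ell_2$) transfer to the weaker one ($\ell_\infty$) without loss of constant. The bounds on $C$ are used in two places: to bound the log-barrier gradient (via $\beta_i \geq b_i/2$) for Lipschitz continuity, and to lower-bound the log-barrier Hessian (via $\beta_i \leq 2$) for strong convexity.
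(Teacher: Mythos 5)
Your proof is correct and follows essentially the same approach as the paper: decompose into the max term plus the log barrier, bound each separately, and for strong convexity use the Hessian lower bound of the log barrier on $C$ combined with convexity of the max term. The only presentational difference is that you derive the Lipschitz constants from subgradient dual-norm bounds while the paper estimates $|H_t(\beta)-H_t(\beta')|$ directly, but the underlying calculation (bounding $\|\partial f\|_1 \leq \bar v$ and $\|\nabla\Psi\|_1 \leq 2n$ on $C$, etc.) is identical.
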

\begin{proof}[Proof of \cref{lm:smooth_curvature}]

    Now we verify that $H_t$ and $H$ are $(\vbar + 2n)$-Lipschitz on the compact set $C$ w.r.t.\ the $\ell_\infty$-norm. For $\beta,\beta' \in C$,
    \begin{align*}
        & |H_t(\beta)  - H_t(\beta')| 
        \\
        & \leq \frac1t\sumtau \big|\max_i \{\vithetau \beta_i \} - \max_i \{\vithetau \beta_i' \}\big| + 
        \sumiton b_i \big| \log \beta_i -  \log \beta_i'\big|
        \\
        & \leq \vbar \| \beta - \beta'\|_\infty + \sumiton b_i \cdot \frac{1}{\ubarbetai/2} |\beta_i - \beta_i'|
        \\
        & = (\vbar + 2n) \| \beta - \beta'\|_\infty
        \;.
    \end{align*}
    This concludes the $(\vbar + 2n)$-Lipschitzness of $H_t$ on $C$. Similar argument goes through for $H$. 
    From the above reasoning we can also conclude $|H_t(\beta)  - H_t(\beta')|\leq \vbar \|\beta-\beta'\|_2 + 2 \|\beta - \beta'\|_1 \leq (\vbar + 2\sqrt{n})\|\beta - \beta'\|_2$. This concludes $(\vbar + 2\sqrt n)$-Lipschitzness of $H_t$ w.r.t.\ $\ell_2$-norm.

    Recall $H = \fbar + \Psi$ where $\fbar(\beta) = \E[\max_i\{ v_i(\theta) \betai \}]$ and $\Psi(\beta) = -\sumiton b_i \log \beta_i$. The function $\Psi$ is smooth with the first two derivatives 
    \begin{align*}
        \nabla\Psi(\beta) =  - [b_1/\beta_1, \dots, b_n/\beta_n]\tp, \quad \nabla\sq \Psi(\beta) =  \Diag( \{ b_i/(\betai)\sq \})
        \;.
    \end{align*}
    It is clear that for all $\beta \in C$ it holds $\betai\leq 2$. So 
    $\nabla\sq \Psi(\beta) \succ \min_i\{b_i/4\} I = \lambda I$.
    To verify the strong-convexity w.r.t\ $\|\cdot \|_\infty$ norm, we note
    for all $\beta', \beta \in C$.
    \begin{align*}
        H(\beta ')  - H(\beta) - \lg z + \nabla \Psi(\beta), \beta' - \beta  \rg \geq (\lambda/2) \|\beta' - \beta \|_2\sq \geq (\lambda/2) \|\beta' - \beta \|_\infty\sq
        \;,
    \end{align*}
    where $z \in \partial \fbar(\beta)$ and $z + \nabla \Psi(\beta) \in \partial H(\beta)$.
    This completes the proof.
\end{proof}

\begin{lemma}[Estimation of $\cov(\must)$]
    \label{thm:covnablaf_estimation}
    Let $\betast$ be a deterministic point in $\Rnpp$ and  
    $\betagam = \betast + o_p(1)$.
    Recall $f$ defined in \cref{eq:def:F} and $H$ in \cref{eq:def_pop_eg}.
    Suppose $H$ be differentiable on a neighborhood of $\betast$.
    Let $ \mu^\tau \in \partial_\b f(\thetau, \betagam)$ be any selection.
    Then 
    \begin{align}
        & \frac1t \sumtau \mutau  - \E[\nabla f(\t, \betast)] = o_p(1), 
        \\
        & \frac1t \sumtau \mutau (\mutau )\tp  -\E[ \nabla f(\t, \betast) \nabla f(\t, \betast) \tp ] = o_p(1)
    \end{align}
\end{lemma}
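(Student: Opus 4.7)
The plan is to decompose the target quantity as
\begin{align*}
\frac{1}{t}\sumtau \mutau - \E[\nabla f(\t,\betast)]
= \underbrace{\frac{1}{t}\sumtau \big(\mutau - \nabla f(\thetau,\betast)\big)}_{\text{(I)}}
+ \underbrace{\Big(\frac{1}{t}\sumtau \nabla f(\thetau,\betast) - \E[\nabla f(\t,\betast)]\Big)}_{\text{(II)}} .
\end{align*}
Term (II) is $o_p(1)$ by the usual weak law of large numbers: the integrand is uniformly bounded by $\vbar$, and $\nabla f(\t,\betast)$ is well-defined at $s$-almost every $\t$ because $H$ is differentiable at $\betast$, which by \cref{thm:first_differentiability} forces $\P[\bidgap(\betast,\t)=0]=0$.

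The main work is to control term (I), where $\mutau$ is a subgradient selection at the \emph{random} point $\betagam$ rather than at the fixed point $\betast$. The key device is \cref{lm:epsilon_as_lip}: for any $K>0$, if $\bidgap(\betast,\thetau)>K$ and $\|\betagam-\betast\|_\infty \leq K/\vbar$, then $\partial_\b f(\thetau,\betagam)$ is the singleton $\{\nabla f(\thetau,\betast)\}$, so any selection $\mutau$ necessarily coincides with $\nabla f(\thetau,\betast)$. Since every subgradient has Euclidean norm at most $\vbar$, this gives the pointwise bound
\begin{align*}
\|\mutau - \nabla f(\thetau,\betast)\|
\leq 2\vbar\,\indi[\bidgap(\betast,\thetau)\leq K]
+ 2\vbar\,\indi[\|\betagam-\betast\|_\infty > K/\vbar] .
\end{align*}
Averaging over $\tau$, the WLLN applied to the first indicator yields the limit $2\vbar\cdot\P[\bidgap(\betast,\t)\leq K]$ in probability, while the second indicator is $o_p(1)$ by the hypothesis $\betagam=\betast+o_p(1)$. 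Hence, with probability tending to one,
\begin{align*}
\limsup_{t\to\infty}\; \frac{1}{t}\sumtau \|\mutau - \nabla f(\thetau,\betast)\| \;\leq\; 2\vbar\cdot \P[\bidgap(\betast,\t)\leq K].
\end{align*}
Letting $K\downarrow 0$ and invoking $\P[\bidgap(\betast,\t)=0]=0$ shows that (I) is $o_p(1)$, yielding the first claim.

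For the outer-product identity the exact same decomposition works. The per-summand difference $\mutau(\mutau)\tp - \nabla f(\thetau,\betast)\nabla f(\thetau,\betast)\tp$ has spectral norm bounded by $2\vbar^{2}$ uniformly and vanishes whenever $\mutau = \nabla f(\thetau,\betast)$, so the bid-gap/consistency argument above transfers with $2\vbar$ replaced by $2\vbar^2$; the WLLN analogue of (II) holds because the matrix-valued integrand is uniformly bounded in operator norm. The main obstacle is that $\betagam$ depends on the entire sample $\{\theta^1,\dots,\theta^t\}$, so the summands $\mutau$ are not i.i.d.\ and a direct WLLN is unavailable. \cref{lm:epsilon_as_lip} is the crucial enabler: it converts $\ell_\infty$-closeness of $\betagam$ to $\betast$ into \emph{exact equality} of subgradient selections on the large set of items with sufficient bid gap at $\betast$, thereby reducing the non-i.i.d.\ problem to an i.i.d.\ average plus a small truncation error.
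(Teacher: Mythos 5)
Your proof is correct, but it takes a genuinely different route from the paper's. The paper defines a measurable gradient selection $D_{f,i}(\b,\t)=v_i(\theta)\prod_k \indi(\beta_i v_i(\theta)\ge\beta_k v_k(\theta))$, observes that $D_f(\cdot,\t)$ is continuous at each $\b$ in a neighborhood $N$ of $\betast$ for $s$-almost every $\t$, and then invokes a uniform law of large numbers for Carath\'eodory-type random functions (Theorem 7.53 of \citet{shapiro2021lectures}) to get $\sup_{\b\in N}\|P_t D_f(\cdot,\b)-\E[\nabla f(\cdot,\b)]\|=o_p(1)$, after which consistency of $\betagam$ finishes the job. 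You instead avoid the ULLN machinery entirely: you decompose into a fixed-point LLN plus a correction, and control the correction by the truncation bound $\|\mutau-\nabla f(\thetau,\betast)\|\le 2\vbar\,\indi[\bidgap(\betast,\thetau)\le K]+2\vbar\,\indi[\|\betagam-\betast\|_\infty>K/\vbar]$, with \cref{lm:epsilon_as_lip} supplying the exact-equality-of-subgradients step and \cref{thm:first_differentiability} supplying $\P[\bidgap(\betast,\t)=0]=0$ so that $K\downarrow 0$ kills the residual. The trade-off is that your argument is more elementary and self-contained (only the WLLN for bounded i.i.d.\ variables and the bid-gap lemma are used) and makes the source of the error transparent, while the paper's ULLN approach is shorter to state once the cited theorem is admitted and gives a genuinely uniform-in-$\b$ statement as a byproduct. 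Two small technical notes on your write-up: you should restrict to the probability-one event on which $f(\thetau,\cdot)$ is differentiable at $\betast$ for every $\tau$ so that $\nabla f(\thetau,\betast)$ in term (II) is well defined; and the threshold in \cref{lm:epsilon_as_lip} is really $\bidgap/(3\vbar)$ as the lemma's proof shows, so your condition should read $\|\betagam-\betast\|_\infty\le K/(3\vbar)$ — a harmless constant change.
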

\begin{proof}[Proof of \cref{thm:covnablaf_estimation}]
    By the differentiability condition, there exists a neighborhood $N$ of $\betast$ so that 
    for all $\b \in N $ 
    it holds
    $\P(\t: f \text{ is differentiable at $\b$}) = 1$.
    Define $D_f: \Rnp \times \Theta \to \Rn$, 
    $D_{f,i} (\b, \t) = v_i(\theta)\prod_{k=1}^n \indi( \betai \vithe \geq \beta_k v_k(\theta) )$.
    Then $D_f = \nabla f$ if $f$ is differentiable at $\b$. 
    Moreover, for $\b \in N$, it holds $\P(\t: D_f(\t, \b) \text{ not continuous at $\b$}) = 0$.
    By Theorem 7.53 of~\citet{shapiro2021lectures} (a uniform law of large number result for continuous random functions), it holds 
    \begin{align}
        & \sup_{\b \in N} \big\| \frac1t\sumtau D_f(\thetau, \b) - \E[\nabla f(\t, \b)]\big \| = o_p(1),
        \\
        & \sup_{\b \in N} \| \frac1t\sumtau D_f(\thetau, \b) D_f(\thetau, \b)\tp - \E[\nabla f(\t, \b) \nabla f(\t, \b)\tp]\|_2 = o_p(1) 
    \end{align}
    By $\betagam \toprob \betast$, we know $\P(\betagam \in N) \to 1$.
    And under event $\{ \betagam \in N\}$, it must be that $\mutau = D_f( \thetau, \betagam,)$. The desired claim is proved.
\end{proof}

\begin{definition}[Definition 7.29 in~\citet{shapiro2021lectures}]
  \label{def:epiconvergence}
    A sequence $f_k:\Rn \to \bar \R$, $k=1,\dots$, of extended real valued functions epi-converge to a function $f:\Rn \to \bar \R$, if for any point $x\in \Rn$ the following conditions hold

    (1) For any sequence $x_k \to x$, it holds $\liminf _{k \rightarrow \infty} f_{k}\left(x_{k}\right) \geq f(x)$,

    (2) There exists a sequence $x_k \to x$ such that $\limsup _{k \rightarrow \infty} f_{k}\left(x_{k}\right) \leq f(x)$.
\end{definition}



\begin{definition}
    A function $f:\Rn \to \bar \R$ is level-coercive if $\liminf_{\|x\|\to \infty} f(x) / \|x\| > 0$. It is equivalent to $\lim _{\|x\| \rightarrow+\infty} f(x)=+\infty$.
    This is Definition 3.25, \citet{rockafellar2009variational}, see also Definition 11.11 and Proposition 14.16 from \citet{bauschke2011convex}
\end{definition}
\begin{lemma}[Corollary 11.13, \citet{rockafellar2009variational}]
    \label{lm:coercivity_level_boundedness}
    For any proper, lsc function $f$ on $\Rn$, level coercivity implies level boundedness. When $f$ is convex the two
properties are equivalent.
\end{lemma}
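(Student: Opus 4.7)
The plan is to prove the two directions separately. First I would establish that level coercivity implies level boundedness for every proper lsc function, which follows directly from the definition; then I would handle the convex case, where the reverse implication requires the recession function machinery.

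For the first direction, suppose $\liminf_{\|x\|\to\infty} f(x)/\|x\| = c > 0$. Fix $\alpha \in \R$ and choose $R > 0$ large enough that $f(x) \geq (c/2)\|x\|$ whenever $\|x\| > R$. Then the sublevel set $\{x : f(x) \leq \alpha\}$ is contained in the closed ball of radius $\max\{R, 2\alpha/c\}$, because any $x$ outside this ball would satisfy $f(x) \geq (c/2)\|x\| > \alpha$. Since $\alpha$ was arbitrary, every sublevel set is bounded, which is level boundedness.

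For the reverse implication under convexity, I would introduce the recession function $f^\infty : \Rn \to \bar \R$, defined by $f^\infty(d) = \lim_{t\to\infty}(f(x_0 + td) - f(x_0))/t$ for any $x_0 \in \dom f$. For proper lsc convex $f$, convexity makes the difference quotient nondecreasing in $t$, so the limit exists, is independent of $x_0$, and $f^\infty$ is itself proper lsc convex and positively homogeneous of degree one. A standard characterization (the recession-function criterion in \citet{rockafellar2009variational}) says that such an $f$ is level bounded if and only if $f^\infty(d) > 0$ for every $d \neq 0$. Assuming level boundedness, lower semicontinuity of $f^\infty$ together with compactness of the unit sphere $S = \{d : \|d\| = 1\}$ yields $\kappa := \min_{d \in S} f^\infty(d) > 0$.

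To close the argument, I would show $\liminf_{\|x\|\to\infty} f(x)/\|x\| \geq \kappa$, which by (the contrapositive of) the first direction is exactly level coercivity. Take any sequence $\{x_n\}$ with $\|x_n\|\to\infty$ and $f(x_n)/\|x_n\| \to \ell$; I would show $\ell \geq \kappa$. Fix $x_0 \in \dom f$, set $t_n = \|x_n - x_0\|$ and $d_n = (x_n - x_0)/t_n$, and pass to a subsequence along which $d_n \to d \in S$. Monotonicity of $t \mapsto (f(x_0 + t d') - f(x_0))/t$ for each fixed $d'$ (a direct consequence of convexity) lets me lower bound each difference quotient by $f^\infty(d_n)$, and lower semicontinuity of $f^\infty$ along $d_n \to d$ then gives $\ell \geq f^\infty(d) \geq \kappa > 0$. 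The main obstacle is precisely this last step: pulling the $\liminf$ inside the recession limit along a variable direction requires both the monotonicity of the difference quotient (convexity) and a compactness-plus-lsc argument to control $d_n \to d$; without convexity the recession function does not enjoy these properties and the equivalence genuinely breaks.
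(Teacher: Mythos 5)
The paper does not prove this lemma; it cites it verbatim as Corollary 11.13 of \citet{rockafellar2009variational}, so there is no paper proof to compare against. Your proposal must therefore stand on its own.

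Your first direction is fine, and the overall strategy for the convex direction (recession function $f^\infty$, characterization of level boundedness by $f^\infty(d)>0$ for $d\neq 0$, compactness of the unit sphere plus lsc of $f^\infty$ to get $\kappa>0$) is the standard and correct route. However, the last step has a direction error that breaks the argument. For convex $f$ and fixed direction $d'$, the difference quotient $t \mapsto [f(x_0 + t d') - f(x_0)]/t$ is \emph{nondecreasing}, so its limit $f^\infty(d')$ is its \emph{supremum}. In other words, $[f(x_0 + t d') - f(x_0)]/t \leq f^\infty(d')$ for every finite $t$; $f^\infty(d_n)$ is an \emph{upper} bound on the difference quotient at $(t_n, d_n)$, not a lower bound as you claim. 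With the inequality in the correct direction, lsc of $f^\infty$ along $d_n \to d$ gives $\liminf_n f^\infty(d_n) \geq \max\{\ell, f^\infty(d)\}$, which does not order $\ell$ and $f^\infty(d)$, so the conclusion $\ell \geq \kappa$ does not follow.

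The fix is to compare slopes at a \emph{fixed intermediate scale} $s$ rather than at $t_n$, and to use lsc of $f$ itself rather than of $f^\infty$. Fix $s>0$; since $t_n \to \infty$, eventually $s < t_n$, and monotonicity of the difference quotient (i.e.\ the three-slope inequality) along the ray $x_0 + t d_n$ gives
\begin{align}
\frac{f(x_0 + s d_n) - f(x_0)}{s} \;\leq\; \frac{f(x_n) - f(x_0)}{t_n}.
\end{align}
Let $n\to\infty$: the right side tends to $\ell$ (because $f(x_0)/t_n \to 0$ and $t_n/\|x_n\|\to 1$), while on the left, $x_0 + s d_n \to x_0 + s d$ and lower semicontinuity of $f$ yields $\liminf_n f(x_0+s d_n)\geq f(x_0+sd)$. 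Hence $[f(x_0 + s d) - f(x_0)]/s \leq \ell$ for every $s>0$, and sending $s\to\infty$ gives $f^\infty(d) \leq \ell$. Combined with $f^\infty(d)\geq\kappa>0$, this establishes $\liminf_{\|x\|\to\infty} f(x)/\|x\| \geq \kappa$, which is level coercivity by definition. Equivalently, one can observe that $(d_n, f(x_n)/t_n)$ is (up to a vanishing correction) $\lambda_n (x_n - x_0, f(x_n)-f(x_0))$ with $\lambda_n\downarrow 0$ and $(x_n, f(x_n))\in\operatorname{epi} f$, so the limit $(d,\ell)$ lies in the horizon cone $(\operatorname{epi} f)^\infty = \operatorname{epi} f^\infty$, giving $f^\infty(d)\leq\ell$ directly; this is essentially the argument Rockafellar and Wets use.
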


\begin{lemma}[Theorem 7.17,~\cite{rockafellar2009variational}] \label{lm:def_of_epiconv}
    Let $h_n: \R^d \to \bar \R$, $h:\R^d\to \bar \R$ be closed convex and proper. Then $h_n \toepi h$ is equivalent to either of the following conditions.

    (1) There exists a dense set $A \subset \R^d$ such that $h_n(v) \to h(v)$ for all $v\in A$.

    (2) For all compact $C \subset \dom h$ not containing a boundary point of $\dom h$, it holds 
    $$\lim_{n\to \infty} \sup_{v\in C} |h_n(v) - h(v)| = 0
    \;.
    $$
\end{lemma}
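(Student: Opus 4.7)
The plan is to establish the equivalences by the chain epi-convergence $\Rightarrow$ (1) $\Rightarrow$ (2) $\Rightarrow$ epi-convergence, leveraging two classical facts about closed proper convex functions on $\R^d$: (a) they are locally Lipschitz on the interior of their effective domains, and (b) pointwise control at enough points forces uniform control on compact subsets of that interior. Throughout, I would pick the dense set $A$ in (1) to consist of $\operatorname{int}(\dom h)$ together with countably many points in $\R^d \setminus \dom h$, so that the values $h(v)=+\infty$ are automatically matched by $h_n(v)\to +\infty$ (this can be read off from the $\liminf$-condition applied at any $v\notin\dom h$).

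For epi-convergence $\Rightarrow$ (1), fix $v\in\operatorname{int}(\dom h)$. The $\liminf$-condition applied to the constant sequence $v_k\equiv v$ yields $\liminf_n h_n(v)\geq h(v)$. A recovery sequence $v_n\to v$ with $\limsup_n h_n(v_n)\leq h(v)$ exists by the $\limsup$-condition; a pointwise-bounded subsequence of $\{h_n\}$ on a ball around $v$ is automatically equi-Lipschitz there by convexity (the oscillation over a slightly larger ball controls the Lipschitz constant on a smaller one), so $|h_n(v_n)-h_n(v)|\to 0$ and thus $\limsup_n h_n(v)\leq h(v)$. To go from (1) to (2), I would take $C\subset\operatorname{int}(\dom h)$ compact, enlarge to a compact $C'\subset\operatorname{int}(\dom h)$ containing $C$ in its interior, pick finitely many points $v_1,\dots,v_N\in A$ whose convex hull contains $C'$, and use pointwise convergence at those vertices to get a uniform upper bound for $\{h_n\}$ on $C'$; a subgradient at any interior reference point supplies a uniform lower bound, hence an equi-Lipschitz bound on $C$. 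A standard $\varepsilon$-net argument on $C\cap A$ then upgrades pointwise convergence on a dense subset to uniform convergence on $C$.

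For (2) $\Rightarrow$ epi-convergence, the easy half is the interior of $\dom h$: uniform convergence on compact neighborhoods yields $h_n(v)\to h(v)$, which gives both epi-convergence conditions with $v_k\equiv v$. For $v$ on $\partial\dom h$ or outside $\dom h$, the $\liminf$-condition follows by sandwiching $h_n(v_k)$ below using a convex lower bound $h_n(v_k)\geq h_n(w)+\langle g_n,v_k-w\rangle$ for $w\in\operatorname{int}(\dom h)$ with uniformly controlled subgradients $g_n$ on a compact set, then sending $w\to v$ and invoking lower semicontinuity of $h$. The $\limsup$-condition at $v\in\dom h$ is obtained by taking an interior sequence $w_k\to v$ along which $h(w_k)\to h(v)$ (which exists since $h$ restricted to the line segment from any interior point to $v$ is continuous up to the boundary by closedness and convexity), and then diagonalizing using uniform convergence on compact subsets from (2).

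The main obstacle is the passage (1) $\Rightarrow$ (2): one needs a \emph{uniform} equi-Lipschitz bound for the tail of $\{h_n\}$ on a neighborhood of $C$, not just for a fixed convex function. Convexity does essentially all of the work, but it must be deployed carefully: the uniform upper bound at finitely many vertices controls the upper bound on their convex hull, and this then controls oscillation (hence Lipschitz constants) on slightly smaller compact sets via the standard convex-analysis inequality relating local Lipschitz constants to local oscillation. Handling this uniformly in $n$ is what requires the dense set assumption to be strong enough to pin down an affinely independent collection of witnesses around $C$.
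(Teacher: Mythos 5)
This lemma is a citation; the paper does not contain a proof but instead attributes it verbatim to Theorem~7.17 of \citet{rockafellar2009variational}. So the comparison here is between your proof attempt and the textbook result it invokes, and there are two issues worth flagging.

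\textbf{A genuine gap in the statement you are trying to prove.} The paper's restatement of condition (2) restricts to compact $C \subset \dom h$ not containing a boundary point of $\dom h$, i.e., $C \subset \operatorname{int}(\dom h)$. With that restriction, (2) does \emph{not} imply epi-convergence. Take $d=1$, $h=\delta_{[0,1]}$, $h_n=\delta_{[0,2]}$: then $h_n\equiv h\equiv 0$ on every compact subset of $(0,1)$, so (2) holds, yet $\liminf_n h_n(1.5)=0<+\infty=h(1.5)$, so $h_n\not\toepi h$. The actual Rockafellar--Wets result does not carry the restriction $C\subset\dom h$: it ranges over all compact $C\subset\R^d$ avoiding $\partial(\dom h)$, which includes compacta in the open complement of $\operatorname{cl}(\dom h)$, where uniform convergence to $+\infty$ is the nontrivial extra content; and it requires $\operatorname{int}(\dom h)\neq\emptyset$ for the equivalence with (2). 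Your proof of (2)~$\Rightarrow$~epi-convergence fails exactly at this point: the subgradient bound $h_n(x_k)\geq h_n(w)+\langle g_n,x_k-w\rangle$ with $w\in\operatorname{int}(\dom h)$ and uniformly bounded $g_n$ can never produce $\liminf_n h_n(x_n)=+\infty$ when $v\notin\operatorname{cl}(\dom h)$, because $w$ cannot approach $v$ through $\operatorname{int}(\dom h)$, and the right-hand side stays bounded. That is not a fixable step under the paper's version of~(2); it is the place where the restated lemma is simply wrong. (The paper's downstream use is unaffected, since in the application $\dom \tilde H=\Rnpp$ is open and full-dimensional and the paper only uses the directions (1)~$\Rightarrow$~epi-convergence and epi-convergence~$\Rightarrow$~(2), which do hold.)

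\textbf{A gap in your epi-convergence~$\Rightarrow$~(1) step.} At $v\in\operatorname{int}(\dom h)$ you assert that ``a pointwise-bounded subsequence of $\{h_n\}$ on a ball around $v$ is automatically equi-Lipschitz there,'' and then deduce $h_n(v)\to h(v)$ from a single recovery sequence $v_n\to v$. But you never establish that $\{h_n\}$ \emph{is} pointwise bounded above on a ball around $v$ — a single recovery sequence only controls $h_n(v_n)$ at one point per $n$, which does not give a uniform upper bound on a neighborhood. The standard argument (and the one Rockafellar--Wets uses) constructs a simplex with vertices $v_0,\dots,v_d\in\operatorname{int}(\dom h)$ around $v$, takes recovery sequences $v_i^{(n)}\to v_i$ at each vertex, notes that $v$ is eventually in the convex hull of $\{v_i^{(n)}\}_i$ with barycentric weights converging to those of $v$ in the simplex, and then gets an upper bound on $h_n(v)$ from the convex combination; shrinking the simplex and using continuity of $h$ at $v\in\operatorname{int}(\dom h)$ gives $\limsup_n h_n(v)\leq h(v)$. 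That simplex step is the key ingredient missing from your argument, and without it your equi-Lipschitz claim is unsupported. Your (1)~$\Rightarrow$~(2) step, by contrast, is essentially correct in outline (cover $C'$ by finitely many simplices with vertices in $A$, get uniform upper bound, deduce equi-Lipschitz constants, then use an $\varepsilon$-net in $A$), though the quantifier management of ``finitely many points in $A$ whose convex hull contains $C'$ and lies in $\operatorname{int}(\dom h)$'' should be spelled out.
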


\begin{lemma}[Proposition 7.33,~\citet{rockafellar2009variational}] \label{lm:inf_conv}
    Let $h_n: \R^d \to \bar \R$, $h:\R^d\to \bar \R$ be closed and proper. If $h_n$ has bounded sublevel sets and $h_n \toepi h$, then $\inf_v h_n(v) \to \inf_v h(v)$.
\end{lemma}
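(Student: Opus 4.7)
The plan is to prove the two inequalities $\limsup_n \inf_v h_n(v) \leq \inf_v h(v)$ and $\liminf_n \inf_v h_n(v) \geq \inf_v h(v)$ separately, which together give the claim. The first inequality follows directly from the ``recovery sequence'' clause of epi-convergence, while the second requires the bounded sublevel sets hypothesis to prevent near-minimizers from escaping to infinity. Throughout I read ``bounded sublevel sets'' in the uniform sense: for each $\alpha \in \R$ the union $\bigcup_n \{v : h_n(v) \leq \alpha\}$ is bounded (or at least eventually contained in a common bounded set), which is the usual assumption in Proposition 7.33 of Rockafellar-Wets.

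For the upper bound, fix any $v^* \in \R^d$ with $h(v^*) < \infty$; such a point exists because $h$ is proper (if $\inf h = +\infty$, both sides equal $+\infty$ and the claim is trivial). By part~(2) of \cref{def:epiconvergence} there exists $v_n \to v^*$ with $\limsup_n h_n(v_n) \leq h(v^*)$. Since $\inf_v h_n(v) \leq h_n(v_n)$, taking $\limsup$ gives $\limsup_n \inf h_n \leq h(v^*)$. Taking the infimum over $v^*$ yields $\limsup_n \inf h_n \leq \inf h$.

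For the lower bound I would argue by contradiction. Suppose $\liminf_n \inf h_n =: c < \inf h$, and pass to a subsequence (still indexed by $n$) with $\inf h_n \to c$. Pick near-minimizers $v_n$ satisfying $h_n(v_n) \leq \inf h_n + 1/n$, so $h_n(v_n) \to c$. Choose any $\alpha \in (c, \inf h)$; then eventually $h_n(v_n) \leq \alpha$, so $v_n$ lies eventually in the bounded set $\bigcup_n \{h_n \leq \alpha\}$. Extract a further subsequence with $v_n \to \bar v \in \R^d$. By part~(1) of \cref{def:epiconvergence},
\begin{align*}
h(\bar v) \leq \liminf_n h_n(v_n) = c < \inf h,
\end{align*}
contradicting the definition of $\inf h$. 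Hence $\liminf_n \inf h_n \geq \inf h$.

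The main obstacle, and the place where both hypotheses are genuinely used, is the lower-bound step: without uniformly bounded sublevel sets the near-minimizers $v_n$ could diverge, so no limit point $\bar v$ would exist to which the $\liminf$ half of epi-convergence could be applied; conversely, without epi-convergence we could not transfer the limiting value $c$ attained along $\{v_n\}$ back to $h(\bar v)$. Everything else is routine manipulation of infima, and the properness/closedness assumptions are only used to ensure the infima are well-defined extended real numbers and that $\{h(v^*) : h(v^*) < \infty\}$ is nonempty.
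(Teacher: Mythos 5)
This lemma is cited from Rockafellar and Wets (Proposition~7.33) and the paper gives no proof of its own, so the only meaningful comparison is with the textbook argument, which your proof correctly reconstructs. You rightly read ``bounded sublevel sets'' as eventually-uniform level-boundedness (the hypothesis Rockafellar--Wets actually state); the pointwise reading is insufficient: $h_n(x)=\min\{n|x-n|,\,|x|+1\}$ epi-converges to $h(x)=|x|+1$ with each $h_n$ coercive, yet $\inf h_n\equiv 0$ while $\inf h=1$. Two details you leave implicit are worth noting but are easily filled in: the near-minimizer choice $h_n(v_n)\le\inf h_n+1/n$ presupposes $\inf h_n>-\infty$, which holds eventually because lower semicontinuity plus compactness of the sublevel set $\{h_n\le\alpha\}$ force the infimum to be attained and properness then makes it finite; and invoking the liminf half of epi-convergence along the extracted subsequence is legitimate because epi-limits pass to subsequences (extend $v_{n_j}$ by the constant $\bar v$ off the subsequence and observe that the full-sequence $\liminf$ is a lower bound on the subsequence $\liminf$). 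None of these affect the validity of your argument.
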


\begin{lemma}[Theorem 7.31,~\citet{rockafellar2009variational}] \label{lm:appro_sol_conv}
    Let $h_n: \R^d \to \bar \R$, $h:\R^d\to \bar \R$ satisfy $h_n \toepi$ and $-\infty < \inf h < \infty$. Let $S_{n}(\varepsilon)=\left\{\theta \mid h_{n}(\theta) \leq \inf h_{n}+\varepsilon\right\}$ and $S(\varepsilon)=\left\{\theta \mid h(\theta) \leq \inf h+\varepsilon\right\}$. Then $\limsup_n S_n(\varepsilon) \subset S(\varepsilon)$ for all $\varepsilon \geq 0$, and $\limsup_n S_n(\varepsilon_n) \subset S(0)$ whenever $\varepsilon_n \downarrow 0$.

\end{lemma}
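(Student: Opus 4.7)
The plan is to prove both containments by unwinding the definition of outer limit of sets and combining the two halves of epi-convergence with the definition of $S_n(\varepsilon)$. Throughout, I will use the standard fact that $\limsup_n S_n$ consists of those $x$ for which there exists a subsequence $\{n_k\}$ and points $x_k \in S_{n_k}$ with $x_k \to x$.

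First, I would establish the preliminary inequality $\limsup_n \inf h_n \le \inf h$. For any $y \in \R^d$ with $h(y) < \infty$, part~(2) of \cref{def:epiconvergence} supplies a sequence $y_n \to y$ with $\limsup_n h_n(y_n) \le h(y)$; since $\inf h_n \le h_n(y_n)$ this gives $\limsup_n \inf h_n \le h(y)$, and taking the infimum over $y$ yields the bound. Combined with the hypothesis $\inf h > -\infty$, this in particular forces $\inf h_{n_k}$ to be finite for all sufficiently large $k$ along any subsequence we pick, since otherwise we would produce an $x$ with $h(x) = -\infty$.

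For the first claim, fix $\varepsilon \ge 0$ and let $x \in \limsup_n S_n(\varepsilon)$. Pick a subsequence $\{n_k\}$ and $x_k \in S_{n_k}(\varepsilon)$ with $x_k \to x$. By definition of $S_{n_k}(\varepsilon)$,
\begin{equation*}
h_{n_k}(x_k) \le \inf h_{n_k} + \varepsilon .
\end{equation*}
Apply part~(1) of \cref{def:epiconvergence} along this subsequence to get $h(x) \le \liminf_k h_{n_k}(x_k)$. Taking $\liminf$ on the right-hand side of the displayed inequality and using $\liminf_k \inf h_{n_k} \le \limsup_n \inf h_n \le \inf h$ gives $h(x) \le \inf h + \varepsilon$, i.e.\ $x \in S(\varepsilon)$. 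For the second claim, the argument is identical except that $\varepsilon$ is replaced by $\varepsilon_{n_k}$ and one uses $\liminf_k \varepsilon_{n_k} = 0$ (since $\varepsilon_n \downarrow 0$) to conclude $h(x) \le \inf h$, i.e.\ $x \in S(0)$.

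The only delicate point is the passage through possibly infinite quantities. If $\inf h_{n_k} = -\infty$ along a further subsequence, then $h_{n_k}(x_k) = -\infty$ there and the $\liminf$ argument would give $h(x) = -\infty$, contradicting $\inf h > -\infty$; so one discards such indices. The case $\inf h_{n_k} = +\infty$ means $S_{n_k}(\varepsilon) = \emptyset$ (or equal to $\R^d$ if we interpret $+\infty + \varepsilon = +\infty$; either way this does not obstruct the bound). Handling these extended-arithmetic technicalities carefully is the main obstacle, but once one notes that the hypothesis $\inf h > -\infty$ together with the $\limsup$-bound on $\inf h_n$ confines attention to a regime where $\inf h_{n_k}$ is finite, the proof reduces to the clean two-line chain of inequalities above.
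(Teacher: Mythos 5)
This lemma is cited in the paper as Theorem~7.31 of Rockafellar and Wets, and the paper does not reproduce a proof of it; there is therefore no paper proof to compare against. Your argument is the standard one and it is correct. The preliminary bound $\limsup_n \inf h_n \le \inf h$ from part~(2) of \cref{def:epiconvergence}, together with the hypothesis $\inf h > -\infty$, confines attention to a regime where $\inf h_{n_k}$ is eventually finite; the key chain of inequalities then follows from the $\liminf$ half of epi-convergence applied to the (suitably extended to a full sequence) points $x_k \in S_{n_k}$. Two small points you brush past that a referee might ask you to make explicit: first, part~(1) of \cref{def:epiconvergence} is stated for full sequences, so when invoking it along the subsequence $\{n_k\}$ you should pad $x_k$ to a sequence indexed by all $n$ (e.g.\ set it equal to $x$ off the subsequence) and use that the liminf over the full sequence lower-bounds $h(x)$ and is dominated by the liminf along the subsequence; second, in the $\varepsilon_n \downarrow 0$ case you use not merely $\liminf_k \varepsilon_{n_k} = 0$ but the stronger $\varepsilon_{n_k} \to 0$, which is what actually lets you drop the $\varepsilon_{n_k}$ term from $\liminf_k(\inf h_{n_k} + \varepsilon_{n_k})$. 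With those clarifications the proof is complete.
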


\begin{lemma}[Theorem 5.7,~\citet{shapiro2021lectures}, Asymptotics of SAA Optimal Value] \label{lm:clt_optimal_value}
    Consider the problem $$\min_{x\in X}  f(x) = \E[F(x,\xi)]$$ where $X$ is a nonempty closed subset of $\R^n$, $\xi$ is a random vector with probability distribution
    $P$ on a set $\Xi$ and $F:X\times \Xi \to \R$. Assume the expectation is well-defined, i.e., $f(x)<\infty$ for all $x\in X$. Define the sample average approxiamtion (SAA) problem $$\min_{x\in X}  f_N(x) = \frac1N \sum_{i=1}^N F(x,\xi_i)$$ where $\xi_i$ are i.i.d.\ copies of the random vector $\xi$. Let $v_N$ (resp., $v^*$) be the optimal value of the SAA problem (resp., the original problem). Assume the following.
\begin{enumconditions}
    \item The set $X$ is compact. \label{it:lm:clt_optimal_value:1}
    \item For some point $x\in X$ the expectation $\E[F(x,\xi)\sq]$ is finite. \label{it:lm:clt_optimal_value:2}
    \item There is a measurable function $C:\Xi \to \R_+$ such that $\E[C(\xi)\sq] < \infty$ 
    and $\left|F(x, \xi)-F\left(x^{\prime}, \xi\right)\right| \leq C(\xi)|| x-x^{\prime} \|$ for all $x, x' \in X$ and almost every $\xi \in \Xi$.
    \label{it:lm:clt_optimal_value:3}
    \item The function $f$ has a unique minimizer $x^*$ on $X$. \label{it:lm:clt_optimal_value:4}
\end{enumconditions}
Then $${v}_{N}=  {f}_{N}(x^*)+o_{p}(N^{-1 / 2})
\;, 
\quad \sqrt{N} ( v_N - v^*) \tod \cN\big(0, \var
\big({F(x^*, \xi)}\big) \big)
\;.
$$ 
\end{lemma}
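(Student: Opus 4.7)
The plan is to show the two conclusions in the order they are stated: first the ``stochastic expansion'' $v_N = f_N(x^*) + o_p(N^{-1/2})$, and then the CLT conclusion, which drops out by applying the classical Lindeberg--Lévy CLT to $f_N(x^*) - v^*$ and using condition \ref{it:lm:clt_optimal_value:2}. The main content is therefore the first claim.

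The approach is the standard ``empirical process'' argument. Let $x_N \in \arg\min_{x \in X} f_N(x)$ (which exists by continuity of $f_N$ and compactness of $X$, \ref{it:lm:clt_optimal_value:1}). First I would establish strong consistency $x_N \to x^*$ a.s. This follows from a uniform law of large numbers $\sup_{x \in X} |f_N(x) - f(x)| \to 0$ a.s., which holds because the class $\{F(x,\cdot) : x \in X\}$ is totally bounded in $L^1$ (using compactness of $X$ in \ref{it:lm:clt_optimal_value:1} together with the $L^1$-Lipschitz condition \ref{it:lm:clt_optimal_value:3}), combined with the unique-minimizer assumption \ref{it:lm:clt_optimal_value:4} via the standard Argmax continuous mapping argument.

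Next comes the key step. The Lipschitz condition \ref{it:lm:clt_optimal_value:3} with square-integrable envelope $C(\xi)$ implies, by a bracketing-entropy bound (e.g., Theorem~19.5 in van der Vaart), that $\{F(x,\cdot) : x \in X\}$ is a Donsker class. Consequently the empirical process $G_N(x) \defeq \sqrt{N}(f_N(x) - f(x))$ is stochastically equicontinuous at $x^*$: for any $x_N \to_p x^*$,
\[
G_N(x_N) - G_N(x^*) = o_p(1).
\]
Now use the optimality inequalities $f_N(x_N) \le f_N(x^*)$ and $f(x_N) \ge f(x^*)$:
\[
0 \le f_N(x^*) - v_N = \big(f_N(x^*) - f(x^*)\big) - \big(f_N(x_N) - f(x_N)\big) + \big(f(x^*) - f(x_N)\big) \le \tfrac{1}{\sqrt N}\big(G_N(x^*) - G_N(x_N)\big),
\]
so $f_N(x^*) - v_N = o_p(N^{-1/2})$, giving the first conclusion. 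Finally, the ordinary CLT applied to the i.i.d.\ sum $f_N(x^*) = \frac{1}{N}\sum_i F(x^*,\xi_i)$ yields $\sqrt{N}(f_N(x^*) - v^*) \tod \cN(0, \var(F(x^*,\xi)))$, and Slutsky combined with the expansion gives the stated distributional limit for $\sqrt{N}(v_N - v^*)$.

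The main obstacle is the Donsker/stochastic-equicontinuity step. Everything else is essentially ``turning the crank'' on compactness and uniqueness. The Donsker property does not come for free from \ref{it:lm:clt_optimal_value:3} alone; one needs to combine the pointwise $L^2$ bound at a single $x$ from \ref{it:lm:clt_optimal_value:2} with the Lipschitz modulus bound from \ref{it:lm:clt_optimal_value:3} to control the envelope function and the bracketing numbers $N_{[\,]}(\epsilon\|C\|_{L^2}, \mathcal{F}, L^2)$. Once that bracketing bound is in hand, the rest of the argument is essentially a ``chain'' of standard facts, and the CLT conclusion follows immediately from Slutsky's theorem.
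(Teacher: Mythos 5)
The paper does not prove this lemma itself; it cites Theorem~5.7 of Shapiro, Dentcheva, and Ruszczy\'nski (2021) as a black box. So the correct point of comparison is the standard proof in that reference, which proceeds by the functional delta method: one shows that $\sqrt{N}(f_N - f)$ converges weakly in $C(X)$ to a Gaussian process $Y$ (this is the Donsker step, established via bracketing exactly as you describe), observes that the infimum functional $\phi \mapsto \inf_X \phi$ is Lipschitz (sup-norm, constant one) and Hadamard directionally differentiable at $f$ with derivative $\delta \mapsto \inf_{x\in S^*}\delta(x)$, and then invokes the delta theorem to conclude $\sqrt{N}(v_N - v^*) \tod \inf_{x\in S^*}Y(x)$; under the unique-minimizer hypothesis \ref{it:lm:clt_optimal_value:4}, $S^*=\{x^*\}$ and the limit specializes to $Y(x^*) \sim \cN(0,\var(F(x^*,\xi)))$.

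Your argument is correct and reaches the same conclusion by a more hands-on route: instead of passing through the delta theorem, you directly sandwich $f_N(x^*) - v_N$ using the two optimality inequalities and then show the residual is $o_p(N^{-1/2})$ via stochastic equicontinuity of $G_N$ at $x^*$. The inequality chain is right: $0 \le f_N(x^*) - f_N(x_N) \le N^{-1/2}\bigl(G_N(x^*) - G_N(x_N)\bigr)$ once you use $f(x^*)\le f(x_N)$, and the stochastic-equicontinuity step converts $x_N \toprob x^*$ (Euclidean) into $\rho_{L^2}(x_N,x^*) \toprob 0$ via the Lipschitz modulus from \ref{it:lm:clt_optimal_value:3}, so $G_N(x_N) - G_N(x^*) = o_p(1)$. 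Both proofs lean on the same Donsker ingredient, and you correctly note that \ref{it:lm:clt_optimal_value:2} plus \ref{it:lm:clt_optimal_value:3} plus compactness give a square-integrable envelope, which is needed and does not come from \ref{it:lm:clt_optimal_value:3} alone. The trade-off: the delta-method proof yields the limit $\inf_{S^*}Y$ for free even when the minimizer is not unique, whereas your argument bakes in uniqueness from the outset; yours is a bit more elementary and avoids the machinery of Hadamard directional derivatives. One small thing worth stating explicitly is that $x_N$ denotes some measurable selection from $\arg\min_X f_N$ (which need not be a singleton at finite $N$), and that the a.s.\ consistency of any such selection follows from the uniform law of large numbers over the compact $X$ together with uniqueness of $x^*$; as written you gloss over the selection issue, but it is routine.
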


\section{Proofs of Main Theorems of LFM}
\label{sec:proof:thm:consistency}

\subsection{Proof of Theorem~\ref{thm:consistency}}
\begin{proof}[Proof of \cref{thm:consistency}]

    We show epi-convergence (see \cref{def:epiconvergence}) of $H_t$ to $H$.
    Epi-convergence is closely related to the question of whether we have convergence of the set of minimizers. In particular, epi-convergence is a suitable notion of convergence under which one can guarantee that the set of minimizers of the sequence of approximate optimization problems converges to the minimizers of the original problem.

    To work under the framework of epi-convergence, we extend the definition of $H_t$ and $H$ to the entire Euclidean space as follows. 
    We extend $\log$ to the entire real by defining $\log(x) = -\infty$ if $x<0$.
    Let
    \begin{align*}
        \tilde F(\t,\b) = 
        \begin{cases}
            F(\t,\b) = \max_i v_i(\theta)\betai - \sumiton b_i \log \betai & 
            \text{if $\beta \in \Rnpp$}
            \\
            + \infty & \text{else}
        \end{cases},
    \end{align*}
    and
    \begin{align*}
            \tilde H(\beta): \R^n \to \bar \R,  \beta \mapsto \begin{cases} H(\beta) & \text{if $\beta \in \Rnpp$} \\
            +\infty & \text{else} \end{cases},
            \quad 
            \tilde H_t(\beta): \R^n \to \bar \R,  \beta \mapsto \begin{cases} H_t(\beta) & \text{if $\beta \in \Rnpp$} \\
            +\infty & \text{else} \end{cases}.
    \end{align*}
    It is clear that for $\beta \in \Rn$ it holds $\tilde H(\beta) = \E[\tilde F(\t,\b)]$ and $\tilde H_t(\beta) = \frac1t \sumtau \tilde F(\thetau, \beta)$.
    In order to prove the result, we will invoke Lemmas~\ref{lm:def_of_epiconv}, \ref{lm:inf_conv}, and \ref{lm:appro_sol_conv}.
    To invoke those lemmas, we will need the following four properties that we each prove immediately after stating them.

    \begin{enumerate}
        \item {Check that $ \tilde H$ is closed, proper and convex, and $\tilde H_t$ is closed, proper and convex almost surely.} Convexity and properness of the functions $ \tilde H_t$ and $\tilde H$ is obvious.
        Recall for a proper convex function, closedness is equivalent to lower semicontinuity \citep[Page 52]{rockafellar1970convex}. 
        It is obvious that $\tilde H_t$ is continuous and thus closed almost surely. 

        It remains to verify lower semicontinuity of $\tilde H$,
        i.e., for all $\beta \in \Rn$, $\liminf_{\beta' \to \beta}\tilde H(\beta') \geq \tilde H(\beta)$.
        For any $\beta \in \Rn$, we have that 
        $\tilde f(\t,\b) \defeq \max_i \vithe \beta_i  + \delta_{\Rnp}(\beta)\geq 0$, where $\delta_A(\beta) = \infty $ if $\beta\notin A$ and $0$ if $\beta\in A$. 
        With this definition of $\tilde f$ we have 
        $\tilde F(\t,\b) =\tilde f(\t,\b)  - \sumiton b_i \log \beta_i$.
        Applying Fatou's lemma (for extended real-valued random variables), we get 
        $\liminf_{\beta'\to\beta} \E[\tilde f(\theta, \beta') ]\geq \E[\liminf_{\beta' \to \beta} \tilde f(\theta, \beta') ] \geq  \E[\tilde f(\t,\b) ]$ where in the last step we used lower semicontinuity of $\beta \mapsto \tilde f(\t,\b) $.
        And thus 
        \begin{align*}
            & \liminf_{\beta'\to\beta} \tilde H(\beta') 
            \\
            & = \liminf_{\beta'\to\beta}\E \bigg[\tilde f(\theta, \beta')   - \sumiton b_i \log \beta'_i \bigg] 
            \\
            & \geq \liminf_{\beta'\to\beta}\E[ \tilde f(\theta, \beta')  ] - \sumiton b_i \log \betai 
            \\
            & \geq  \E \bigg[ \tilde f(\t,\b)   - \sumiton b_i \log \betai \bigg]
            \\
            & =  \tilde H(\beta) 
        \end{align*}
        This shows $\tilde H$ is lower semicontinuous.
        \label{it:consistency:1}
        
        \item {Check $\tilde H_t$ pointwise converges to $\tilde H$ on $\Q^n$.}
        Let $\Q^n$ be the set of $n$-dimensional vectors with rational entries.
        For a fixed $\beta \in \Rn$, define the event $E_\beta \defeq\{   \lim_{t\to \infty} \tilde H_t(\beta)  = \tilde H (\beta)\}$.
        Since $\vithetau \leq \vbar$ almost surely by assumption, the strong law of large numbers implies that $\P( E_\beta) = 1$. Define 
        \begin{align*}
            E\defeq \Big\{ \lim_{t\to \infty} \tilde H_t(\beta)  = \tilde H (\beta), \text{ for all } \beta \in \Q^n \Big\} = \bigcap_{\beta \in \Q^n} E_\beta.
        \end{align*}
        Then by a union bound we obtain $\P(E^c) = \P(\bigcup_{\beta \in \Q^n} E_\beta^c) \leq \sum_{\beta \in \Q^n} \P(E_\beta^c)= 0$, implying $E$
        has measure one.
        \label{it:consistency:2}

        \item {Check $-\infty < \inf_\beta \tilde H < \infty$.} This is obviously true since valuations are bounded.
        \label{it:consistency:3}

        \item {Check that for almost every sample path $\omega$, $\tilde H_t$ has bounded sublevel sets (eventually).} 
        By \cref{lm:coercivity_level_boundedness}, this property is equivalent to eventual coerciveness of $\tilde H_t$, i.e., there is a (random) $N$ such that for all $t \geq N$, it holds $\lim_{\|\beta \| \to \infty} \tilde H_t (\beta) = +\infty $.
        By \cref{lm:value_concentration}, we know for almost every $\omega$, there is a finite constant $N_\omega$ such that for all $t\geq N_\omega$ it holds $\vbarit \geq 1/2$. Then it holds for this $\omega$, all $t\geq N_\omega$, and all $\beta\in \Rn$, 
        \$
            \tilde H_t(\beta) 
            & = 
            \frac1t \sumtau \max_i \vithetau  \beta_i -\sumiton b_i\log \beta_i 
            \\
            & \geq \max_i (\vbarit \beta_i)-\sumiton b_i\log \beta_i  
            \\
            & \geq \frac12 \|\beta\|_\infty -\sumiton b_i\log \beta_i  \to +\infty \quad \text{as $\|\beta\|\to \infty$}
            \;.
        \$
        This implies $\tilde H_t$ has bounded sublevel sets.
        \label{it:consistency:4}

    \end{enumerate}

    With the above \cref{it:consistency:1} and \cref{it:consistency:2}  we invoke \cref{lm:def_of_epiconv} and obtain that 
    \#\P\big( \tilde H_t(\beta) \toepi \tilde H(\beta) \big) = 1
    \;,
    \label{eq:epi_conv}\# 
    and that the convergence is uniform on any compact set.

    The epi-convergence result \cref{eq:epi_conv} along with \cref{it:consistency:4} allows us to invoke \cref{lm:inf_conv} and obtain 
    \#\inf_{\beta \in \Rn} \tilde H_t(\beta) \to \inf_{\beta \in \Rn} \tilde H(\beta) \text{ a.s.} \label{eq:min_conv}\#
    which also implies $\inf_{\Rnpp} H_t \to \inf_\Rnpp H$ a.s.

    With the epi-convergence result \cref{eq:epi_conv} along with \cref{it:consistency:3} we invoke \cref{lm:appro_sol_conv} and obtain 
    \begin{equation} \label{eq:containment}
        \begin{split} 
            \limsup _{t} \cB^\gam(\epsilon) \subset \cB^*(\epsilon) \text { for all } \epsilon \geq 0
            \;,\\
        \limsup _{t} \cB^\gam(\epsilon_t) \subset \cB^*(0) \text { for all } \epsilon_t \downarrow 0
        \;.
        \end{split}
    \end{equation}

    \underline{Putting together.}
    At this stage all statements in the theorem are direct implications of the above results. 
    
    \emph{Proof of \cref{it:thm:consistency:1}}
    
    Convergence of Nash social welfare follows from \cref{eq:min_conv} and strong duality, i.e., $\LNSW^\gam = \inf_{\beta \in \Rnpp } H_t(\beta) + \sumiton (b_i\log b_i - b_i)$ and $\LNSW^* = \inf_{\beta \in \Rnpp } H(\beta)+\sumiton (b_i\log b_i - b_i)$. 

    \emph{Proof of \cref{it:thm:consistency:2}}
    
    Now we show consistency of the pacing multiplier via 
    \cref{lm:def_of_epiconv} and \cref{lm:inf_conv}. 
    Recall the compact set $C = \prod_{i=1}^n  [\ubarbetai/2, 2\betabar] = \prod_{i=1}^n [b_i/2, 2] \subset \R^n$. 
    By construction, $\betast \in C$. 
    First note that for almost every sample path~$\omega$, $1/2 \leq \vbarit \leq 2$ eventually, and thus $\beta^\gamma_i = b_i / u^\gam_i \leq b_i / (b_i \vbarit ) \leq 2$ and $\beta^\gam_i \geq b_i/2$ eventually. 
    So $\betagam \in C$ eventually.
    Now we can invoke \cref{lm:def_of_epiconv} Item (2) to get 
    \begin{align}
        \lim_{t\to\infty}\sup_{\beta \in C} | H_t(\beta) - H(\beta)| \to 1 \quad\text{a.s.}
    \label{eq:as conv ht h}
    \end{align}
    Now we can show that the value of $H$ on the sequence $\beta^\gam$ converges to the value at $\beta^*$:
    \$ 0\leq \lim_{t \to \infty}  H(\beta^\gam) - H(\betast) = \lim_{t \to \infty} [H (\beta^\gam) - H_t(\beta^\gam)] + \lim_{t \to \infty} [H_t(\beta^\gam) - H(\betast)] = 0
    \;.
    \$
    Here the first term tends to zero due to \eqref{eq:as conv ht h},
    and the second term by \cref{eq:min_conv}. 
    For any limit point of the sequence $\{\beta^\gam\}_t$, $\beta^\infty$, by 
    lower semicontinuity of $H$,
    \$ 0 &\leq H(\beta^\infty) - H(\betast) \leq \liminf_\ttinf H(\beta^\gam) - H(\betast) = 0
    \;.
    \$
    So it holds that $H(\beta^\infty) = H(\betast)$ for all limit points $\beta^\infty$. By uniqueness of the optimal solution $\betast$, we have $\beta^\gam \to \betast$ a.s. 
    
    \emph{Proof of \cref{it:thm:consistency:4}}
    
    Convergence of approximate equilibrium follows from \cref{eq:containment}.
    \end{proof}

\subsection{Proof of Theorem~\ref{thm:lnsw_concentration}}
\label{sec:proof:thm:lnsw_concentration}
\begin{proof}[Proof of \cref{thm:lnsw_concentration}]

    We abbreviate
    $C_\LFM$ to $C$, and recall 
    its definition $C 
    = \prod_{i=1}^n [b_i/2, 2] \subset \R^n$ and the normalization
    $\sumiton b_i = 1$ and $\nu_i = 1$ for all $i$.
    Recall the event $A_t = \{\betagam \in C \}$. 
    By \cref{lm:value_concentration} we know that if $t \geq 2 {\vbarsq {\log(4n/\eta)}}$ then 
    event $A_t$ happens
    with probability $\geq 1-\eta/2$. 
    Now the proof proceeds in two steps.

    \paragraph{Step 1. A covering number argument. }
    Let $\cB^o$ be an $\epsilon$-covering of the compact set $C$, i.e, for all $\beta \in C$ there is a $\beta^o(\beta) \in \cB^o$ such that $\| \beta - \beta^o(\beta) \|_\infty \leq \epsilon$. It is easy to see that such a set can be chosen with cardinality bounded by $|\cB^o|\leq (2/\epsilon)^n$. 

    Recall $H_t$ and $H$ are $L$-Lipschitz w.r.t.\ $\ell_\infty$-norm on $C$.
    Using this fact we get the following uniform concentration bound over the compact set $C$. 
    \begin{align*}
        & \sup_{\beta \in C} |H_t (\beta) - H(\beta)|
        \\
        & \leq \sup_{\beta \in C} \big\{  |H_t(\beta) - H_t(\beta^o(\beta))|
        + |H(\beta) - H(\beta^o(\beta))|
        + |H_t(\beta^o(\beta)) - H(\beta^o(\beta))|\big\}
        \\
        & \leq 2(\vbar + 2n) \epsilon + \sup_{\beta^o \in \cB^o}|H_t(\beta^o) - H(\beta^o)|
        \;.
    \end{align*}

    Next we bound the second term in the last expression. For some fixed $\beta \in C$, let $X^\tau \defeq \max_i \vithetau \beta_i$ and let its mean be $\mu$. Note $0 \leq X^\tau  \leq \vbar \|\beta\|_\infty \leq 2\vbar$ due to $\beta \in C$. So $X^\tau$'s are bounded random variables. By Hoeffding's inequality we have
    \begin{align*}
        \P\big(|H_t(\beta) - H(\beta)| \geq \delta\big)
        = \P\bigg( \Big|\frac1t \sumtau X^\tau - \mu\Big| \geq \delta\bigg)
        \leq 2\exp\bigg(-\frac{\delta\sq t }{2\vbarsq}\bigg)
        \;.
    \end{align*}
    By a union bound we get
    \begin{align*}
        \P\bigg(\sup_{\beta^o \in \cB^o}|H_t(\beta^o) - H(\beta^o)| \geq \delta\bigg)
        \leq 2|\cB^o| \exp\bigg(-\frac{\delta\sq t }{2\vbarsq}\bigg) \leq 2\exp\bigg(-\frac{\delta \sq t }{2\vbarsq} + n\log(2/\epsilon)\bigg)
        \;.
    \end{align*}
    Define the event 
    \begin{align}
        \label{eq:def:eventEt}
    E_t \defeq \Big\{\sup_{\beta^o \in \cB^o}|H_t(\beta^o) - H(\beta^o)| 
    \leq \frac{2\vbar}{\sqrt{t}} \sqrt{\log(4/\eta) + n\log(2/\epsilon)} 
    =: \iota \Big\} 
    \;.
    \end{align}
    By setting $2\exp(-{\delta \sq t }/{(2\vbarsq)} + n\log(2/\epsilon)) = \eta/2$ and solving for $\eta$, we have that $\P(E_t) \geq 1-\eta/2$.

    \paragraph{Step 2. Putting together. }
    Recall the event $A_t = \{ \betagam \in C   \}$.
    Now let events $A_t$ and $E_t$ hold. Note $\P(A_t \cap E_t) \geq 1-\eta$ if $t \geq 2 {\vbarsq {\log(4n/\eta)}}$. Then 
    \begin{align}
        & \Big| \sup_{\beta\in\Rnpp} H_t(\beta) - \sup_{\beta\in\Rnpp} H(\beta) \Big| 
        \notag
        \\
        & = \Big| \sup_{\beta\in C} H_t(\beta) - \sup_{\beta\in C} H(\beta)\Big|  
        \notag
        \\
        & \leq \sup_{\beta \in C} |H_t(\beta)-H(\beta)| 
        \notag
        \\
        & \leq 2(\vbar + 2n) \epsilon + \iota 
        \;,
        \label{eq:08251314}
    \end{align}
    where the first equality is due to event $A_t$ and the last inequality is due to event $E_t$ defined in \cref{eq:def:eventEt}.
    Now we choose the discretization error as $\epsilon = \frac{1}{\sqrt{t} (\vbar + 2n)}$. 
    Then, the expression in \cref{eq:08251314} can be upper bounded as follows.
    \begin{align*}
        & 2(\vbar + 2n) \epsilon + \iota 
        \\
        & = \frac{2}{\sqrt{t}} + 
        \frac{2\vbar}{\sqrt t}  \sqrt{\log(4/\eta) + {n} {\log(2\sqrt{t} (\vbar + 2n))}}
        \;.
    \end{align*}
    This completes the proof.
\end{proof}

\subsection{Proof of Theorem~\ref{thm:high_prob_containment}}
\label{sec:proof:thm:high_prob_containment}
    \begin{proof}[Proof of \cref{thm:high_prob_containment}]

    The proof idea of this theorem closely follows Section 5.3 of~\citet{shapiro2021lectures}.

    We first need some additional notations. Define the approximate solutions sets of surrogate problems as follows: For a closed set $ A\subset \Rnpp$, let 
    \begin{align*}
        \cB^*_A(\epsilon) &\defeq \{ \beta \in A: H(\beta) \leq \min_A H  +\eps  \}
        \;,
        \\
        \cB^\gam_A(\epsilon) &\defeq\{ \beta \in A: H_t(\beta) \leq \min_A H_t  +\eps \}
        \;.
    \end{align*}

    In words, they solve the surrogate optimization problems which are defined with a new constraint set $A$. Note that if $\betast \in A$ then $\cB^*_A(\epsilon) = A\cap \cB^*(\eps)$. Recall on the compact set $C$, both $H_t$ and $H$ are $L$-Lipschitz and $\lambda$-strongly convex w.r.t\ the $\ell_\infty$-norm, where $L = (\vbar + 2n)$ and $\lambda = \ubar{b}/4$.

    Let $r \defeq \sup\{ H(\beta) - H^*: \beta \in C \}$. Then if $\epsilon \geq r$ then $C\subset \cBst(\epsilon)$ and the claim is trivial. Now we assume $\epsilon < r$. 
    
    Define $a = \min\{2\epsilon, (r + \eps)/2 \}$. Note $\eps < a < r$. Define $S = C\cap \cBst(a)$. The role of $S$ will be evident as follows. We will show that, with high probability, the following chain of inclusions holds 
    \begin{align*}
        \cBgamC(\delta) \overset{(1)}{\subset} \cBgamS (\delta) \overset{(2)}{\subset} \cBstS(\epsilon) \overset{(3)}{\subset} \cBstC(\epsilon) \, .
    \end{align*}
    
    \textbf{Step 1. Reduction to discretized problems.}
    We let $S'$ be a $\nu$-cover of the set $S = \cBst(a) \cap C$.
    Let $X = S' \cup \{\betast \}$. 
    In this part the goal is to show 
    \begin{align*}
        \P\big(     \cBgamC(\delta)\subset \cBstC(\eps)
        \big) 
        \geq 
        \P\big(\cB^\gam_X(\delta')  \subset \cB^*_X(\epsilon') \big)
    \end{align*}
    where 
    \begin{align*}
        \nu = (\eps' -\delta')/4 > 0 \,, \quad \delta' =  \delta + L\nu > 0
        \,, \quad \epsilon' = \epsilon - L\nu >0
        \,.
    \end{align*}

    First, we claim
\begin{claim} \label{claim:1}
It holds    $
    \cB^\gam_X(\delta')  \subset \cB^*_X(\epsilon') 
    \implies \cBgamS (\delta) \overset{}{\subset} \cBstS(\epsilon)
    $ (Inclusion (2)).
\end{claim}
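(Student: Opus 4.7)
My plan is a standard covering-number transfer argument, with the Lipschitz constant $L$ of both $H$ and $H_t$ on $C$ used to swap tolerances between the continuous set $S$ and its finite discretization $X$.

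Let me take $\beta \in \cBgamS(\delta)$; the goal is $\beta \in \cBstS(\epsilon)$. The first step is to produce a discrete proxy: since $S' \subset X$ is a $\nu$-cover of $S$ and $\beta \in S$, there exists $\beta^o \in S'$ with $\|\beta - \beta^o\|_\infty \leq \nu$. I will then show $\beta^o$ inherits approximate sample-optimality on $X$. Because $H_t$ is $L$-Lipschitz on $C \supset S$, I get
\begin{align*}
H_t(\beta^o) \leq H_t(\beta) + L\nu \leq \min_S H_t + \delta + L\nu = \min_S H_t + \delta'.
\end{align*}
Since $X \subset S$, I have $\min_X H_t \geq \min_S H_t$, which upgrades the above to $H_t(\beta^o) \leq \min_X H_t + \delta'$, i.e., $\beta^o \in \cBgamX(\delta')$.

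Now I apply the hypothesis $\cBgamX(\delta') \subset \cBstX(\epsilon')$, yielding $H(\beta^o) \leq \min_X H + \epsilon'$. The key observation here is that $\betast \in X$ by construction, and $\betast$ is the unconstrained population minimizer, so $\min_X H = H(\betast) = \min_S H$ (the last equality because $\betast \in S$, which holds as $a > 0$ gives $\betast \in C \cap \cBst(a)$). Thus $H(\beta^o) \leq \min_S H + \epsilon'$.

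Finally, I transfer back to $\beta$ using $L$-Lipschitz continuity of $H$:
\begin{align*}
H(\beta) \leq H(\beta^o) + L\nu \leq \min_S H + \epsilon' + L\nu = \min_S H + \epsilon,
\end{align*}
by the definition $\epsilon' = \epsilon - L\nu$. Combined with $\beta \in S$, this gives $\beta \in \cBstS(\epsilon)$, establishing Claim 1. The argument is purely deterministic — no probabilistic content is used, only Lipschitz continuity on $C$ and the cover property of $S'$. There is no serious obstacle; the only bookkeeping subtlety is verifying that $\min_X H_t \geq \min_S H_t$ (because $X \subset S$) lines up the sample-optimality tolerances, and that $\betast \in X$ ensures $\min_X H = \min_S H$ so that the population tolerances line up in the opposite direction.
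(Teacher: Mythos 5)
Your proof is correct and follows essentially the same route as the paper's: cover $\beta \in \cBgamS(\delta)$ by a point $\beta^o \in X$, transfer the $\delta$-optimality of $\beta$ for $H_t$ on $S$ to $\delta'$-optimality of $\beta^o$ for $H_t$ on $X$ via Lipschitzness and $X \subset S$, apply the hypothesis to get $\epsilon'$-optimality of $\beta^o$ for $H$ on $X$, and transfer back via Lipschitzness of $H$ and the fact that $\betast \in X$ pins $\min_X H = \min_S H = H^*$. The only cosmetic difference is that the paper routes the final bound through $\min_C H$ rather than $\min_S H$, but these coincide since $\betast$ lies in both $C$ and $S$.
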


    Next, we show 
    \begin{claim}\label{claim:2}
        Inclusion (2) implies Inclusion (1):
    $  \cBgamS (\delta) \overset{}{\subset} \cBstS(\epsilon)
    \implies \cBgamC(\delta) \overset{}{\subset} \cBgamS (\delta)\,.$   
    \end{claim}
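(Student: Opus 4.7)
The plan is to prove Claim~2 by contradiction, combining the hypothesized containment (2) with the convexity of $H_t$, $H$, and $S = C \cap \cBst(a)$. Suppose there exists $\beta \in \cBgamC(\delta) \setminus \cBgamS(\delta)$. First I would dispose of the easy case: if $\beta \in S$, then since $S \subset C$ gives $\min_S H_t \geq \min_C H_t$, the inequality $H_t(\beta) \leq \min_C H_t + \delta$ combined with $\min_S H_t = \min_C H_t$ (which holds once the $H_t$-minimizer $\hat\beta_C := \argmin_C H_t$ is shown to lie in $S$) would immediately give $\beta \in \cBgamS(\delta)$, a contradiction. So the nontrivial case is $\beta \notin S$, i.e.\ $H(\beta) > H^* + a$.

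For the main case, I would exploit the convexity of $C$ and the continuity of $H$ to travel along the segment $\beta_\lambda = (1-\lambda)\beta + \lambda \beta^*$ from $\beta$ to $\beta^* \in \mathrm{int}(S)$, locating a boundary point $\beta_{\lambda_0}$ with $H(\beta_{\lambda_0}) = H^* + a$ for some $\lambda_0 \in (0,1)$. Because $a > \epsilon$, this boundary point lies in $S$ but outside $\cBstS(\epsilon)$, so hypothesis (2) forces $\beta_{\lambda_0} \notin \cBgamS(\delta)$, giving $H_t(\beta_{\lambda_0}) > \min_S H_t + \delta$. Applying convexity of $H_t$ along the same segment produces the opposite-direction bound $H_t(\beta_{\lambda_0}) \leq (1-\lambda_0)H_t(\beta) + \lambda_0 H_t(\beta^*)$, and I would chain the two with $H_t(\beta) \leq \min_C H_t + \delta$ to extract a contradiction.

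The main obstacle is controlling the slack $H_t(\beta^*) - \min_S H_t$: without additional control, this slack could be large enough to make the two bounds consistent. I expect the argument therefore to invoke the underlying uniform-concentration event from Step~1, namely $\sup_X |H_t - H| \leq \nu$ on the $\nu$-cover $X = S' \cup \{\betast\}$, then extend it by the $\ell_\infty$-Lipschitz constant $L$ of \cref{lm:smooth_curvature} to $\sup_S |H_t - H| \leq (1+2L)\nu$. Using that $\betast$ is the global minimizer of $H$, this yields $H_t(\betast) - \min_S H_t \leq 2(1+2L)\nu$. The calibration $\nu = (\epsilon' - \delta')/4$ with $\delta' = \delta + L\nu$ and $\epsilon' = \epsilon - L\nu$ in the excerpt is precisely tuned so that this slack is strictly smaller than $\delta$, sealing the convexity-based contradiction; the easy case $\beta \in S$ is handled simultaneously because strong convexity of $H$ combined with the same concentration event forces $\hat\beta_C$ near $\betast$ and thus into $\mathrm{int}(S)$, yielding $\min_C H_t = \min_S H_t$.
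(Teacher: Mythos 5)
The main case has a genuine gap: you chose the wrong endpoint for the segment, which converts the paper's deterministic lemma into a probabilistic statement that in fact fails for $\delta=0$. The paper's argument travels from $\betadia$ to $\betabar = \argmin_{\beta\in S} H_t(\beta)$, not to $\betast$. That choice is the crux: $H_t(\betabar) = \min_S H_t$ \emph{by definition}, so the convexity bound $H_t(\beta^{c}) \leq \max\{H_t(\betadia), H_t(\betabar)\} \leq \min_S H_t + \delta$ holds exactly, deterministically. The hypothesis (Inclusion (2)) is then used to bound the $H$-value of $\betabar$, namely $H(\betabar) - H^* \leq \epsilon$ (since $\betabar \in \cBgamS(\delta) \subset \cBstS(\epsilon)$), which anchors the intermediate value argument from the other side. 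With your endpoint $\betast$ you lose the exact identity $H_t(\betabar)=\min_S H_t$; you must estimate the slack $H_t(\betast) - \min_S H_t$, and even under the concentration event you invoke, the resulting chained inequality reduces to needing $\text{slack} < \delta$, which cannot hold when $\delta = 0$ --- a case the theorem explicitly allows (and which is exactly the case used to derive \cref{cor:H_concentration,cor:beta_u_concentration}). More fundamentally, Claim 2 is stated and used in Step 3 of the theorem's proof as an unconditional implication between set inclusions, so it must be proved deterministically; importing the Step-1/Step-2 concentration event here is not merely unnecessary but changes the content of the claim.

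A smaller issue: in your ``easy case'' you pivot from the one-sided inequality $\min_S H_t \geq \min_C H_t$ (which you correctly derive from $S \subset C$) to requiring the \emph{equality} $\min_S H_t = \min_C H_t$, and then observe that proving that requires showing $\hat\beta_C \in S$. This extra work is not needed: the one-sided inequality already gives $H_t(\beta) \leq \min_C H_t + \delta \leq \min_S H_t + \delta$, so $\beta \in \cBgamS(\delta)$ directly. The paper dispatches this case in a single line precisely this way.
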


    Proofs of \cref{claim:1} and \cref{claim:2} are deferred after the proof of \cref{thm:high_prob_containment}. At a high level, \cref{claim:1} uses the covering property of the set $X$. \cref{claim:2} exploits convexity of the problem.

    Finally, we show Inclusion (3) $ \cBstS(\epsilon) \overset{}{\subset} \cBstC(\epsilon)$.  
    Note that $\betast$ belongs to both $C$ and $S$. And thus for any $\beta \in \cBstS (\epsilon)$, it holds $H(\beta) \leq \min_X H + \eps = \Hst + \eps = \min_S H + \eps \,.$ We obtain $\beta \in \cBstC(\epsilon)$.

    To summarize, \cref{claim:1} shows that $\cB^\gam_X(\delta')  \subset \cB^*_X(\epsilon') $ implies Inclusion (2).
    Inclusion (3) holds automatically.
    By \cref{claim:2} we know Inclusion (2) implies Inclusion (1). So it holds deterministically that 
    \begin{align*}
        \{\cB^\gam_X(\delta')  \subset \cB^*_X(\epsilon') \} 
        \subset \{ \cBgamC(\delta)\subset  \cBstC(\eps) \} 
        \;.
    \end{align*}

    \textbf{Step 2. Probability of inclusion for discretized problems.}
    Now we bound the probability $\P(\cB^\gam_X(\delta')  \subset \cB^*_X(\epsilon') )$.

    For now, we forget the construction $X = S' + \{ \betast\}$ where $S'$ is a $\nu$-cover of $S$.
    Let $X\subset C$ be any discrete set with cardinality $|X|$.

    Let $\betastX \in \argmin_X H$ be a minimizer of $H$ over the set $X$. For $\beta \in X$ define the random variable $Y^\tau_\beta \defeq F(\thetau, \betastX) - F(\thetau, \beta)$.  Also let $\mu_\beta \defeq \E[Y^\tau_\beta]$, which is well-defined by the i.i.d.\ item assumption. Let $D\defeq \sup_{\beta \in X} \| \beta - \betastX\|_\infty$.

    Consider any $0 \leq \delta ' < \epsilon'$. If $X- \cB^*_X(\epsilon')$ is empty, then all elements in $X$ are $\epsilon'$-optimal for the problem $\min_X H$. Next assume $X- \cB^*_X(\epsilon')$ is not empty. We upper bound the probability of the event $\cB^\gam_X(\delta') \not \subset \cB^*_X(\epsilon')$.
    \begin{align}
        &\P \big(\cB^\gam_X(\delta') \not \subset \cB^*_X(\epsilon')\big)
        \notag
        \\
        &= \P \big(\text{there exists } \beta \in X- \cB^*_X(\epsilon'),\,  H_t(\beta)\leq H_t(\betastX) + \delta'\big)
        \notag
        \\
        &\leq \sum_{\beta \in X- \cB^*_X(\epsilon')} \P \big(H_t(\beta)\leq H_t(\betastX) + \delta'\big)
        \notag
        \\
        &= \sum_{\beta \in X- \cB^*_X(\epsilon')} \P\bigg(\frac1t \sumtau Y^\tau_\beta \geq - \delta'\bigg)
        \notag
        \\
        &\leq \sum_{\beta \in X- \cB^*_X(\epsilon')} \P\bigg(\frac1t \sumtau Y^\tau_\beta - \mu_\beta \geq \eps' - \delta'\bigg) \tag{A}
        \\
        &\leq \sum_{\beta \in X- \cB^*_X(\epsilon')} \exp\Big(-\frac{2t (\eps'-\delta')\sq}{L_f\sq \|\beta - \betast \|_\infty\sq}\Big) \tag{B}
        \\ 
        &\leq |X| \exp\Big(-\frac{2t (\eps'-\delta')\sq}{L_f\sq \|\beta - \betastX \|_\infty\sq}\Big) \,.
        \label{eq:bound_discretization}
    \end{align}
    Here in (A) we use the fact that $\mu_\beta = H(\betastX) - H(\beta) > - \eps' $ for $\beta \in X- \cB^*_X(\epsilon')$. In (B), using $L_f$-Lipschitzness of $f$ on the set $C$, we obtain $|Y^\tau_\beta - \mu_\beta| \leq 2L_f \|\beta - \betastX \|_\infty$ and then apply Hoeffding's inequality for bounded random variables. Setting \cref{eq:bound_discretization} equal to $\alpha$ and solving for $t$, we have that if 
    \begin{align}\label{eq:t_for_discrete_X}
        t \geq c' \cdot \frac{L_f\sq D\sq}{(\eps' - \delta')\sq } \Big(\log|X| + \log\frac1\alpha\Big) \,,
    \end{align}
    then $\P\big(\cB^\gam_X(\delta') \not \subset \cB^*_X(\epsilon')\big) \leq \alpha$. Note the above derivation applies to any finite set $X\subset S$.

    Now we use the construction $X= S' + \{ \betast\}$.
    Then the cardinality of $X$ can be upper bounded by $(4/\nu)^n$. Note since $\betast \in X$ it holds $\betast = \betastX$.
    We apply the result in \cref{eq:t_for_discrete_X} with the following parameters
    \begin{align*}
        & \nu = (\eps' -\delta')/(4L)
        \,, \quad 
        \delta' =  \delta + L\nu
        \,, \quad 
        \epsilon' = \epsilon - L\nu
        \,, \quad 
        \epsilon' - \delta' = \frac12(\epsilon - \delta)\,,
    \\
        & D =  \min \{\sqrt{2a/\lambda} , 2 \}
        \,, 
        \quad |X| \leq 
        \Big(\frac{16L}{\eps - \delta}\Big)^n 
        \,.
    \end{align*}
    We justify the choice of $D$.   First, $S \subset C$ implies $D \leq 2$.
    By the $\lambda$-strong convexity of $H$ on $C$: for all $\beta \in X\subset S \subset \cBst(a)$, it holds
    \begin{align*}
        &(1/2) \lambda \|\beta - \betastX\|^2_\infty = (1/2) \lambda \|\beta - \betast\|^2_\infty \leq H(\beta) - \Hst \leq a
        \\
        \implies 
        & D = \sup_{\beta\in X}\|X - \betastX\|_\infty \leq \sqrt{2a/\lambda} \,. 
    \end{align*}
    Substituting these quantities into the bound \cref{eq:t_for_discrete_X} the expression becomes
    \begin{align*}
        t 
        & \geq  c'\cdot \frac{ L_f\sq }{ (\eps -\delta)\sq } 
        \cdot 
        \min \bigg\{\frac{2a}{\lambda },4\bigg\} 
        \cdot 
        \bigg(n\log\Big(\frac{16 L }{\eps - \delta}\Big) + \log \frac1\alpha\bigg) 
        \,.
    \end{align*}
    Here $c'$ is an absolute constant that changes from line to line.
    Moreover, noting that $a \leq 2\epsilon$ and $\delta \leq \epsilon/2$ implies $a / (\epsilon - \delta)\sq \leq 8/\epsilon$, we know that if 
    \begin{align} \label{eq:final_t_bound}
        t & \geq c'\cdot  L_f \sq  \min \bigg\{ \frac{1}{ \lambda \epsilon} , \frac{1}{\epsilon\sq} \bigg\} \cdot \bigg(n\log\Big(\frac{16L}{\eps - \delta}\Big) + \log \frac1\alpha\bigg)
        \;,
    \end{align}
    then
    $\P\big(\cB^\gam_X(\delta')  \subset \cB^*_X(\epsilon')\big) \geq 1- \alpha$. 
    By plugging in $L_f = \vbar $, $L = (2n+\vbar)$ and $\lambda = \ubar{b} / 4$, we know
    $\P \Big( \cBgamS (\delta) \overset{}{\subset} \cBstS(\epsilon)\Big) \geq 1-\alpha $ as long as 
    \begin{align*}
        t \geq c'\cdot  \vbar \sq  \min \bigg\{ \frac{1}{ \ubar{b}\epsilon} , \frac{1}{\epsilon\sq} \bigg\} \cdot \bigg(n\log\Big(\frac{16 (2n+\vbar) }{\eps - \delta}\Big) + \log \frac1\alpha\bigg) 
        \;.
    \end{align*}

    \textbf{Step 3. Putting together.}
    By \cref{lm:value_concentration},
    if $t \geq 2\vbarsq {\log(2n/\alpha)}$ then $\betagam \in C$ with probability $\geq 1 - \alpha$. Under the event $\betagam \in C$, it holds $\cBgamC(\delta) = C \cap \cBgam(\delta)$. Since $\betast\in C$ it holds that $\cBstC(\eps) = C\cap \cBst(\eps)$.
    Moreover, if $t$ satisfies the bound in \cref{eq:final_t_bound}, we know Inclusion (2) holds with probability $\geq 1- \alpha$, which then implies Inclusion (1). 
    So if $t$ satisfies the two requirements, $t \geq 2\vbarsq {\log(2n/\alpha)}$ and \cref{eq:final_t_bound}, then with probability $\geq 1- 2\alpha$,
    \begin{align*}
        C \cap \cBgam(\delta) = \cBgamC(\delta ) \subset \cBstC(\eps) = C\cap \cBst(\eps)\,.
    \end{align*}

\end{proof}

\begin{proof}[Proof of \cref{claim:1}]
    To see this, for $\beta \in \cBgamS(\delta)$ let $\beta' \in X$ be such that $\|\beta - \beta' \|_\infty \leq \nu$. By Lipschitzness of $H_t$ on $C$, we know
    \begin{align*}
        H_t(\beta')
        & \leq H_t(\beta)+L\nu \tag{Lipschitzness of $H_t$}
        \\
        &\leq \min_S H_t + \delta + L\nu  \tag{$\beta \in \cBgamS(\delta)$}
        \\ 
        &\leq \min_X H_t + \delta + L\nu \tag{$X\subset S$} 
        \\
        & =  \min_X H_t + \delta'\,.
    \end{align*}
    This implies the membership $\beta' \in \cBgamX(\delta')$. Furthermore, we have 
    \begin{align*}
        \cBgamX(\delta') \subset \cBstX(\eps') \subset \cBstC (\epsilon') \,.
    \end{align*} 
    Here the first inclusion is simply the assumption that $\cB^\gam_X(\delta')  \subset \cB^*_X(\epsilon')$. The second inclusion follows by the construction of $X$; since $\betast \in X$, we know $\cBstX(\epsilon') \subset \cBstC (\epsilon')$ and thus $\min_X H = \min_X H= \Hst$.  We now obtain
    \begin{align*}
        \beta' \in \cBstC (\epsilon') \,.
    \end{align*}

    Using the Lipschitzness of $H$ on $C$, we have for all $\beta \in \cBgamS(\delta)$
    \begin{align*}
        H(\beta) &\leq H(\beta') + L\nu \tag{Lipschitzness of $H$}
        \\
        &\leq \min_C H + \epsilon' + L\nu \tag{$\beta' \in \cBstC (\epsilon') $}
        \\
        & = \min_C H + \epsilon \,.
    \end{align*}
    So we conclude $\beta \in \cBstC(\epsilon)$, implying $\cBgamS(\delta) \subset  \cBstC(\epsilon)$.     This completes the proof of \cref{claim:1}.
\end{proof}

\begin{proof}[Proof of \cref{claim:2}]

    This claim relies on convexity of the problem.
    
        Assume, for the sake of contradiction, there exists $\betadia \in \cBgamC(\delta)$ but $\betadia \not \in \cBgamS(\delta)$. The only possibility this can happen is $\betadia \in C$ but $\betadia \not \in S = C\cap \cBst(a)$. So $\betadia \not \in \cBst(a)$ (note $a < r$ implies the set $C- \cBst(a)$ is not empty), which by definition means 
        \begin{align}\label{eq:betadia_gt_a}
            H(\betadia) - \Hst > a \,.
        \end{align}
        Now define 
        \begin{align*}
            \betabar =  \argmin_{\beta \in S} H_t(\beta) \in \cBgamS(\delta) \,.
        \end{align*}
        By the assumption $\cBgamS (\delta) \overset{}{\subset} \cBstS(\epsilon)$, we know $\betabar \in \cBstS(\eps)$ and so
        \begin{align}\label{eq:betabar_leq_eps}
            H(\betabar) - \Hst \leq \epsilon \,.
        \end{align}
        
        Next, let $\beta^c = c \betabar + (1-c) \betadia$ with $c\in [0,1]$, which is a point lying on the line segment joining the two points $\betabar$ and $\betadia$. By the optimality of $\betadia \in \cBgamC(\delta)$ and $\betabar \in C$, we know $H_t(\betadia) \leq H(\betabar) + \delta$.
        By convexity of $H_t$, we have for all $c\in[0,1]$,
        \begin{align}\label{eq:line_segment_implications}
            H_t(\beta^c) \leq \max\{ H_t(\betabar), H_t(\betadia)\} \leq H_t(\betabar) + \delta \,.
        \end{align} 
    
        Now consider the map $K: [0,1] \to \R_+, c \mapsto H(\beta^c) - \Hst$. Since any convex function is continuous on its effective domain~\cite[Corollary 10.1.1]{rockafellar1970convex}, we know $H$ is continuous. Continuity of $H$ implies continuity of $K$.
        Note $K(0) = H(\betadia) - \Hst > a$ by \cref{eq:betadia_gt_a} and $K(1) = H(\betabar) - \Hst \leq \epsilon$ by \cref{eq:betabar_leq_eps}. By intermediate value theorem, there is $c^* \in [0,1]$ such that $\eps < H(\beta^{c*}) - \Hst < a$. Moreover, by $H(\beta^{c*}) - \Hst < a$ and $\beta^{c*} \in C$ we obtain $\beta^{c*} \in S = \cBst(a) \cap C$. In addition, recalling $H_t(\beta^{c*}) \leq H_t(\betabar) + \delta$ (\cref{eq:line_segment_implications}), we conclude by definition $\beta^{c*}\in \cBgamS(\delta)$.
        
        At this point we have shown the existence of a point $\beta^{c*}$ such that
        \begin{align*}
            \beta^{c*} \in \cBgamS(\delta) \,, \quad \beta^{c*} \not \in \cBst(\eps)\,.
        \end{align*}
        This clearly contradicts the assumption $\cBgamS (\delta) \overset{}{\subset} \cBstS(\epsilon) = \cBst(\epsilon) \cap S$.
        This completes the proof of \cref{claim:2}.
    
    \end{proof}

\begin{proof}[Proof of \cref{cor:H_concentration}]
    Under the event $\{ \betagam \in C\}$, the set $C \cap \cBgam (0) = \{ \betagam \}$. 
    Moreover, $\betagam \in C \cap \cBst (\epsilon)$ implies $H(\betagam) \leq H(\betast) + \epsilon$. This completes the proof.
\end{proof}

\begin{proof}[Proof of \cref{cor:beta_u_concentration}]
    Under the event $\{ \betagam \in C\}$, we use strong convexity of $H$ over $C$ w.r.t.\ $\ell_2$-norm and obtain $\frac{\lambda}{2} \| \betagam - \betast\|_2 \sq \leq H(\betagam) - H(\betast)$ where $\lambda = \ubar{b}/4$ is the strong-convexity parameter. 

    For the second claim we use the equality $\betagami = b_i / \ugami$ and $\betasti = b_i / \usti$. 
    For $\beta, \beta' \in C$, it holds $|\frac{1}{\betai} - \frac{1}{\beta'_i} | \leq \frac{4}{{b_i}^2} | \betai - \beta'_i|$.
    And so $\| \ugam - \ust \|_2 = \sumi (b_i)\sq (\frac{1}{\betagami} - \frac{1}{\betasti})\sq \leq \sumi \frac{16}{(b_i)\sq} | \betagami - \betasti|\sq \leq \frac{16}{(\ubar{b})\sq} \| \betagam - \betast\|_2\sq$. 
    So we obtain $\|\ugam -\ust\|_2 \leq \frac{4}{\ubar{b}} \|\betagam - \betast\|_2$.
    We complete the proof.
\end{proof}

\subsection{Proof of Theorems~\ref{thm:normality} and \ref{thm:clt_beta_u}} 
\label{sec:proof:thm:normality}
\begin{proof}[Proof of \cref{thm:normality}]
    By strong duality of EG programs, $\NSWgam - \NSWst = H_t(\betagam) - H\st(\betast)$.
    Denote $H_t(\betagam)$ by $H^\gam$ and $H(\betast)$ by $H^*$.
    We aim to apply \cref{lm:clt_optimal_value} to our problem. To do this we first introduce surrogate problems 
    \begin{align*}
        H^\gam_C \defeq \min_{\beta \in C} H_t(\beta) 
        \;,\quad  
        H^*_C   \defeq \min_{\beta \in C} H(\beta)
        \;.
    \end{align*}
    Since $\betast \in C$ we know $H^*_C = \Hst$. We write down the decomposition
    \begin{align*}
        \sqrt{t} (H^\gam - \Hst) = \sqrt{t} (H^\gam -  H^\gam_C ) + \sqrt{t}( H^\gam_C  - H^*_C)
        \;.
    \end{align*}

    For the first term we show that $\sqrt{t} (H^\gam -  H^\gam_C ) \toprob 0$. Choose any $\eps > 0$, define the event $A^\eps_t = \{ \sqrt{t} |H^\gam - H^\gam_C | \geq \eps\}$. By \cref{lm:value_concentration} we know that with probability 1, $\betagam \in C$ eventually and so $H^\gam - H^\gam_C = 0$ eventually. This implies $\P( (A^\epsilon_t)^c \text{ eventually}) = 1 \Leftrightarrow$ 
    $\P(A^\eps_t \text{ infinitely often}) = 0$. By Fatou's lemma, 
    $
        \limsup_{t\to\infty} \P (A^\eps_t) \leq \P (\limsup_{t\to\infty}A^\eps_t ) = 0
        .
    $ 
    We conclude for all $\epsilon > 0$, $\lim_{t\to\infty} \P(\sqrt{t} | H^\gam -  H^\gam_C | > \epsilon)=0$.

    For the second term, we invoke \cref{lm:clt_optimal_value} and obtain $\sqrt{t}( H^\gam_C  - H^*_C) \tod \cN(0,\var[F(\theta, \betast)])$, where we recall $F(\t,\b) = \max_i \betai v_i(\theta) - \sumiton b_i \log \beta_i$.
    To do this we verify all hypotheses in \cref{lm:clt_optimal_value}.
    \begin{itemize}
        \item The set $C$ is compact and therefore \cref{it:lm:clt_optimal_value:1} is satisfied. 
        \item The function $F$ is finite for all $\beta \in \Rnpp$ and thus \cref{it:lm:clt_optimal_value:2} holds. 
        \item The function $F(\cdot, \theta)$ is $(2n+\vbar)$-Lipschitz on $C$ for all $\theta$, and thus \cref{it:lm:clt_optimal_value:3} holds. 
        \item \cref{it:lm:clt_optimal_value:4} holds because the function $H$ has a unique minimizer over $C$. 
    \end{itemize}
Now we calculate the variance term. 
    \begin{align*}
        \var (F(\theta, \betast)) = \var(f(\t, \betast))  = \var (\max_i  v_i(\theta) \betasti )
        = \var (p^*(\theta) )
    \end{align*}
    By Slutsky's theorem, we obtain the claimed result.

\end{proof}

\begin{proof}[Proof of Theorem~\ref{thm:clt_beta_u}]

    We verify all the conditions in Theorem 2.1 from~\citet{hjort2011asymptotics}.
    This theorem is handy since it uses convexity and avoids verifying stochastic equicontinuity of certain processes.

    Because $H$ is $C^2$ at $\betast$, there exists a neighborhood $N$ of $\betast$ 
    such that $H$ is continuously differentiable on $N$. By \cref{thm:first_differentiability} this implies that 
    the random variable $\bidgap(\beta,\cdot)\inv$ is finite almost surely for each $\beta \in N$. This implies $I(\beta,\theta)$ is single valued a.s.\ for $\beta \in N$.

    Define $$D(\theta)\defeq \nabla F(\theta, \betast) = \nabla f(\theta, \betast) - \nabla \Psi(\betast)$$ where we recall the subgradient $\nabla f(\theta, \betast) = e_{i(\betast,\theta)}v_{i(\betast,\theta)}$ and $i(\betast,\theta) = \argmax_i \betasti v_i(\theta)$ is the winner of item $\theta$ when the pacing multiplier of buyers is $\betast$. 
    By optimality of $\betast$ we know $\nabla H (\betast)=\E[D(\theta)] = 0$. Moreover, by twice differentiability of $H$ at $\betast$, the following expansion holds: 
    \begin{align*}
        H(\betast + h) - H(\betast) = \frac12 h \tp \big(\nabla \sq H(\betast)\big) h + o(\| h\|_2\sq)
        \;.
    \end{align*}
    
    For any $h \to 0$, define
    $$R(\theta) \defeq (F(\theta, \betast + h) - F(\theta, \betast) - D(\theta) \tp h) / \|h \|_2$$ measure the first-order approximation error.
    To invoke Theorem 2.1 from~\citet{hjort2011asymptotics}, we check the following stochastic version of differentiability condition holds 
    \begin{align}
        \E [R(\theta, h) \sq ] = o(1) \quad \text{as } 
        h \to 0
        \;.
        \label{eq:diff_of_D}
    \end{align} 
    By $H$ being differentiable at $\betast$, 
    we know $R(\t, h)  \toas 0$. Since we assume $\max_i \esssup v_i(\t) < \infty$, we know the sequence of random variables $R(\t, h)$ is bounded. We conclude \cref{eq:diff_of_D}.

    At this stage we have verified all the conditions in Theorem 2.1 from~\citet{hjort2011asymptotics}. Invoking the theorem we obtain
    \begin{align*}
        \sqrt{t} (\betagam - \betast) = - [\nabla \sq H(\betast)]\inv \Bigg( \frac{1}{\sqrt t}\sumtau D(\thetau) \Bigg) + o_p(1)
        \;.
    \end{align*} 
    In particular,  $\sqrt{t} (\betagam - \betast) \tod \cN (0, [\nabla \sq H(\betast)]\inv \cov(D) [\nabla \sq H(\betast)]\inv )$.
    Finally, note $\cov(D) = \cov(\must)$.

    \emph{Proof of Asymptotic Distribution for $\beta$.}
    This follows from the discussion above.

    \emph{Proof of Asymptotic Distribution for $u$.}  We use the delta method. Take $g(\beta) = [b_1/\beta_1,\dots, b_n/\beta_n]$. Then the asymptotic variance of $\sqrt{t}(g(\betagam) - g(\betast))$ is $\nabla g(\betast) 
    \tp 
    \Sigma_\beta
    \nabla g (\betast)$. Note $\nabla g(\betast)$ is the diagonal matrix $\Diag(\{-b_i/\betasti\sq \})$. 
    From here we obtain the expression for $\Sigma_u$.

\end{proof}

\subsection{Proof of \cref{thm:nsw_aym_risk}}
\label{sec:proof:thm:nsw_aym_risk}
\begin{proof}[Proof of \cref{thm:nsw_aym_risk}]

\textit{Pacing multiplier $\b$.}
The lower bound result for $\b$ is an immediate application of Theorem 1 from \citet{duchi2021asymptotic}. 

\textit{NSW and utility.}
Based on Le Cam's local asymptotic normality theory \citep{le2000asymptotics}, 
to establish the local asymptotic minimax optimality of a statistical procedure,
one needs to verify two things.
First, the class of perturbed distributions (the class $\{ s_{\alpha,g} \}_{\alpha,g}$ in our case) satisfies the locally asymptotically normal (LAN) condition \citep{vaart1996weak,le2000asymptotics}. This part is completed by Lemma 8.3 from \citet{duchi2021asymptotic} since our construction of perturbed supply distributions follows theirs.
Second, one should verify the asymptotic variance of the statistical procedure equals to the minimax optimal variance.

For a given perturbation ${(\alpha,g)}$, we let $\pst_{\alpha,g}$ and $\REVst_{\alpha,g}$ be the limit FPPE price and revenue under supply distribution $s_{\alpha,g}$.
Let $S_{\alpha,g}(\theta) = \nabla _\alpha \log s_{\alpha,g}(\theta)$ be the score function. 
So $\nabla_\alpha s_{\alpha,g} = s_{\alpha,g} S_{\alpha,g}$
and $\int S_{\alpha,g} s_{\alpha,g} \diff \t = 0$.
Obviously with our parametrization of $s_{\alpha,g}$ we have $S_{0,g}(\theta) = g(\theta)$ by \cref{eq:perturbed_is_roughly_expo}.

Let $\NSW_{\alpha,g}$ be the Nash social welfare under supply $s_{\alpha,g}$. 
Then $\NSW_{\alpha,g} = \int F(\t, \betast_{\alpha,g}) s_{\alpha,g} \diff \t + \text{constant that does not depend on $\alpha$}$.
So 
\begin{align*}
    \nabla_\alpha \NSW_{\alpha,g}
     = 
     \int [\nabla_\beta F( \t, \betast_{\alpha,g})\nabla_ \alpha \betast_{\alpha,g} + F( \t, \betast_{\alpha,g}) S_{\alpha,g}(\theta)] s_{\alpha,g}(\theta)\diff \theta
    = 0 + \int F( \t, \betast_{\alpha,g})  g(\theta) s_{\alpha,g}\diff \theta , 
\end{align*}
and $\nabla_\alpha \NSW_{\alpha,g} | _{\alpha = 0} = \int F(\t, \betast)g s\diff \t$.
Following the argument in \citet[Sec.\ 8.3]{duchi2021asymptotic} it holds that 
the asymptotic local mimimax risk $\geq \E[ L(\cN(0, \cov(F(\t, \betast))))] = \E[L(\cN(0,\sigma^2_{\NSW}))]$.

Let $u^*_{\alpha,g}$ be the utility under supply $s_{\alpha,g}$.
Note $u^*_{\alpha,g} = [b_1/\betast_{\alpha,g,1},\dots, b_n / \betast_{\alpha,g,n}]$.
By a perturbation result by Lemma 8.1 and Prop.\ 1 from \citet{duchi2021asymptotic}, under twice differentiability, 
$\nabla_\alpha \betast_{\alpha,g}|_{\alpha = 0} = - \cH\inv \E[ \nabla F( \t, \betast ) g(\t)\tp]$.
Then $\nabla_\alpha u^*_{\alpha,g}  | _{\alpha = 0} = \Diag(-b_i / (\betasti)\sq ) (\nabla_\alpha \betast_{\alpha,g} |_{\alpha = 0} ) = - \Diag(-b_i / (\betasti)\sq ) \cH\inv \E[\nabla F( \t, \betast) g(\t) \tp]  $.
We conclude the asymptotic local mimimax risk
is lower bounded by $\E[L(\cN(0, \Sigma_u))]$
where $\Sigma_u = \Diag(-b_i / (\betasti)\sq ) \cH\inv \E[\nabla F( \t, \betast ) \nabla   F( \t, \betast)\tp] \cH \inv \Diag(-b_i / (\betasti)\sq )$.
\end{proof}

\subsection{Proof of Theorem~\ref{thm:ci_lnsw}}
\label{sec:proof_var_est}
\begin{proof}[Proof of \cref{thm:ci_lnsw}]
    Define the functions 
        $\hat \sigma \sq (\beta)  \defeq \frac{1}{t} \sumtau( F(\thetau, \beta)  - H_t(\beta))\sq
        $ and 
        $\sigma \sq (\beta)  \defeq \var(F(\t,\b)) = \E[(F(\t,\b) - H(\beta))\sq ]
    $.
    We will show uniform convergence of $\hat \sigma \sq$ to $\sigma\sq$ on $C$, i.e., $\sup_{\beta \in C} |\hat \sigma \sq - \sigma\sq|  \toas 0$.
    We first rewrite $\sighatsq$ as follows 
    $\sighatsq(\beta) = {\frac{1}{t}   \sumtau \big(F(\thetau, \beta) - H(\beta)\big)\sq}
        - {(H_t(\beta)  - H(\beta))\sq  }
       := \I(\b) - \II(\b)
    $.
    By Theorem 7.53 of~\citet{shapiro2021lectures} (a uniform law of large number result for convex random functions), the following uniform convergence results hold
        $\sup_{\beta \in C} |\I(\beta)  - \sigma\sq(\beta) | \toas 0
        $
    and $ \sup_{\beta \in C} |\II(\beta)| \toas 0$.
    The above two inequalities imply $\sup_{\beta \in C} |\hat \sigma \sq - \sigma\sq|  \toas 0$. Note the variance estimator 
    $ \hat \sigma\sq_{\NSW} = \hat \sigma\sq (\betagam)$ 
    and the asymptotic variance 
    $ \sigma\sq_{\NSW} = \sigma \sq(\betast) $.
    By $\betagam \toas \betast$ we know, 
    $ 
        | \hat \sigma\sq_{\NSW} - \sigma\sq_{\NSW}| 
         = |\hat \sigma\sq(\betagam) - \sigma \sq(\betast) | 
         \leq |\hat \sigma\sq (\betagam) -  \sigma\sq (\betagam)|
        + |  \sigma\sq (\betagam) -  \sigma\sq(\betast) |
         \to 0 \text{ a.s.}
    $ where the first term vanishes by the uniform convergence just established, the second term by continuity of $\sigma \sq(\cdot)$ at $\betast$. 
    Now we have shown $\hat \sigma\sq_{\NSW}$ is a consistent variance estimator for the asymptotic variance. Then by Slutsky's theorem we know $ {\sqrt t { (\hat \sigma _\NSW)\inv}  ( \LNSW^\gam - \LNSW^*) }\tod \cN (0,1)$. This completes the proof of \cref{thm:ci_lnsw}.
\end{proof}

\newpage
\ECHead{Proofs of FPPE Results}
\section{Appendix to FPPE}
\subsection{Scale-Invariance of FPPE}
\label{sec:fppe_scale_invariance}

FPPE has some of the same scale-invariance properties as LFM.
In particular, scaling the budget and supply at the same time does not change the market equilibrium. That is, given a positive scalar $\alpha$, if $(\beta,p)$ are the equilibrium pacing multiplier and prices in the market $\FPPE(b,v,s,\Theta)$, then 
  $(\beta,   p)$ are
  the equilibrium quantities in the market $ \FPPE(\alpha b,  v, \alpha s, \Theta)$.
The same scale-invariance holds for the finite FPPE, i.e., 
if $(\beta, p) = \oFPPE(b,v,\sigma, \gamma)$, then $(\b, p) = \oFPPE(\alpha b, v, \alpha \sigma, \gamma)$.
Given the invariance, 
we can see that in order for the budget-supply ratio to match the limit market $\FPPE(b,v,s,\Theta)$, the finite market should be configured as either $\oFPPE(tb,v, 1, \gamma)$
or $\oFPPE(b,v,1/t, \gamma)$. We will study the latter, and simply refer to it as the finite FPPE.
Unlike LFM, FPPE does not enjoy invariance to valuation scaling, because buyers have a pacing multiplier of at most one in FPPE.

\section{Technical Lemmas for FPPE}
\label{sec:techlemma_fppe}

\begin{proof}[Proof of \cref{lm:fppe_relation}]
    To show the first equality, note 
    \begin{align*}
        \nabla H (\betast) &= \E[\nabla_\b \max_i \betai \vithe] - [b_1 / (\betast_1) , \dots, b_n / (\betast_n)]\tp
        \\ 
        & = \mubarst - [b_1 / (\betast_1) , \dots, b_n / (\betast_n)]\tp = - \deltasti
        \;.
    \end{align*}
    To show the second equality,
    note for any twice differentiable first-order homogenous function $\bar f: \Rn \mapsto \R$, 
    it must hold $\nabla\sq \fbar (\b)\beta = 0$.
    And we have $\nabla\sq H(\betast ) \betast = \Diag(b_i / (\betasti )\sq) \betast =  [ b_1 / \betast_1, \dots ,b_n / \betast_n]\tp$.
\end{proof}
\subsection{A CLT for constrained $M$-estimator}
We introduce a CLT result from \cite{shapiro1989asymptotic} that handles $M$-estimation when the true parameter is on the boundary of the constraint set. 
Throughout this section, when we refer to assumptions A1, A2, B2, etc, we mean those assumptions in \citet{shapiro1989asymptotic}.

Let $(\Theta, P)$ be a probability space. Consider $f: \Theta \times \R^n \to \R$ and a set $B \subset \R^n$. Let $\theta_1, \ldots, \theta_t$ be a sample of independent random variables with values in $\Theta$ having the common probability distribution $P$.
Let $\phi( \beta) = P f(\cdot, \beta) = \E[f(\theta, \beta)]$, and $\psi_t(\beta) = P_t f(\cdot,\beta) = \frac1t \sum_{i=1}^t f(\theta_i, \beta)$. Let $\beta_0$ be the unique minimizer of $\phi$ over $B$  (Assumption A4 in  \citet{shapiro1989asymptotic}).
Let $\vartheta_ t = \inf _B \psi_t$ and $\hat \beta $ be an optimal solution.

We begin with some blanket assumptions.
Suppose the geometry of $B$ at $\beta_0$ is given by functions $g_i(\beta)$ (Assumption B1), i.e., there exists a neighborhood $N$ such that
$$
B \cap {N}=\left\{\beta \in {N}: g_i(\beta)=0, i \in K ; g_i(\beta) \leq 0, i \in J\right\},
$$
where $K$ and $J$ are finite index sets and the constraints in $J$ are active at $\beta_0$, meaning   $g_i\left(\beta_0\right)=0$ for all $i \in J$. Assume the functions $g_i, i \in K \cup J$, are twice continuously differentiable in a neighborhood of $\beta_0$ (Assumption B2).
Define the Lagrangian function by
$
l(\beta, \lambda)=\phi(\beta)+\sum_{i \in K \cup J} \lambda_i g_i(\beta).
$
Let $\Lambda_0$ be the set of optimal Lagrange multipliers, i.e., $\lambda \in \Lambda_0$ iff
$
\nabla l\left(\beta_0, \lambda\right)=0
$ (assuming differentiability)
and $\lambda_i \geq 0, i \in J$.

\begin{lemma}[Theorems 3.1 and 3.2 from \citet{shapiro1989asymptotic}] \label{thm:the_shapiro_thm}
    Assume there exists a neighborhood $N$ of $\beta_0$ such that the following holds.
    \begin{enumconditions}
        \item Conditions on the objective function $f$ and the distribution $P$.
        \begin{itemize}
            \item 
        (Assumption~A1 in the original paper) For almost every $\theta$, $f(\theta,\beta)$ is a continuous function of $\beta$, and for all $\beta\in B$, $f(\theta,\beta)$ is a measurable function of $\theta$.
        \item 
        (Assumption~A2) The family $\{f(\theta, \beta)\}, \beta \in B$, is uniformly integrable.
        \item 
        (Assumption~A4) For all $\theta$, there exist a positive constant $K(\theta)$ such that
        $
        |f(\theta, w)-f(\theta, \beta)| \leq K(\theta)\|w-\beta\|
        $
        for all $\beta, w \in {N}$.
        \item 
        (Assumption~A5) For each fixed $\beta \in {N}, f(\theta, \cdot)$ is continuously differentiable at $\beta$ for almost every $\theta$.
        \item
        (Assumption~A6) The family $\{\nabla f(\theta, \beta)\}_{\beta \in {N}}$, is uniformly integrable.
        \item
        (Assumption~D) The expectation $\E[\left\|\nabla f\left(\theta, \beta_0\right)\right\|^2]$ is finite. 
        \item
        (Assumption~B4) The function $\phi$ is twice continuously differentiable in a neighborhood of $\beta_0$.
        \end{itemize}

        \item Conditions on the optimal solution.
        \begin{itemize}
            \item 
        (Assumption~B3) A constraint qualification, the Mangasarian-Fromovitz condition: The gradient vectors $\nabla g_i\left(\beta_0\right), i \in K$, are linearly independent,
        and there exists a vector $w$ such that
        $
        w \tp \nabla g_i\left(\beta_0\right)=0,  i \in K$ and $w \tp \nabla g_i\left(\beta_0\right)<0,  i \in J .
        $
        \item
        (Assumption~B5) Second-order sufficient conditions: 
        Let $C$ be the cone of critical directions
        \begin{align}
            \label{eq:def_C}
            C=\left\{w: w \tp \nabla g_i\left(\beta_0\right)=0, i \in K ; w \tp \nabla g_i\left(\beta_0\right) \leq 0, i \in J ; w \tp \nabla \phi\left(\beta_0\right) \leq 0\right\} .
        \end{align}
        The assumption requires that for all nonzero $w \in C$,
        $
        \max _{\lambda \in \Lambda_0} w \tp \nabla^2 l\left(\beta_0, \lambda\right) w>0,
        $
        \end{itemize}

        \item Stochastic equicontinuity, a modified version of Assumption~C1 in the original paper.
            For any sequence $\delta_t = o(1)$, the variable
            \begin{align}
                \label{lm:se_of_subgradient}
                \sup _{\beta:\| \beta -\beta_0\| \leq\delta_t} \frac{\left\|(\nabla \psi_t - \nabla\phi)(\beta)
                -(\nabla \psi_t - \nabla\phi)(\beta_0)
                \right\|}{t^{-1 / 2}+\left\|\beta-\beta_0\right\|} = o_p(1)
            \end{align}
            as $t \rightarrow \infty$. Here the supremum is taken over $\beta$ such that $\nabla \psi_t(\beta)$ exists. 
            \label{it:stoc_equic}
          
    \end{enumconditions}
    Then it holds that $ \hat\beta \toprob \beta_0$.
    Let
    \begin{align}
        \label{eq:def_zeta}
        \zeta_t=\nabla \psi_t\left(\beta_0\right)-\nabla \phi\left(\beta_0\right),
    \end{align}
    and 
    \begin{align}
        \label{eq:def_q}
        q(w)=\max_{  \lambda \in \Lambda_0} \{w \tp  \nabla^2 l\left(\beta_0, \lambda\right)w \}.
    \end{align}
    Then 
    $$\vartheta_t\defeq \inf _B \psi_t =\psi_t\left(\beta_0\right)+\min _{w \in C}\{w \tp \zeta_t +\frac{1}{2} q(w)\}+o_p (t^{-1} ).$$ 
    Furthermore, suppose for all $\zeta$ the function $w \mapsto w \tp \zeta+\frac{1}{2} q(w)$ has a unique minimizer $\bar{\omega}(\zeta)$ over $C$. Then $$\|\hat {\beta} -\beta_0-\bar{\omega}(\zeta_t)\|=o_p(t^{-1 / 2}) .$$
\end{lemma}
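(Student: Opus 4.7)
The plan is to follow the classical route for analyzing constrained $M$-estimators: establish consistency, introduce a rescaled local process, show it epi-converges to a quadratic program on the critical cone, and invoke an argmin continuous mapping theorem. First, I would deduce consistency $\hat\beta \toprob \beta_0$ from a uniform law of large numbers on compact subsets of $B$ (justified by A1, A2, A4, giving $\sup_{\beta \in K}|\psi_t(\beta)-\phi(\beta)| \toprob 0$ on compacts), together with the unique minimizer assumption and continuity of $\phi$. At this point one can restrict attention to a small neighborhood $N$ of $\beta_0$.

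Next I would pass to the localized process $Z_t(w) \defeq t[\psi_t(\beta_0 + w/\sqrt t) - \psi_t(\beta_0)]$ on $w \in \sqrt t (B-\beta_0)$ and decompose
$$ Z_t(w) = \sqrt t\, \zeta_t\tp w + \tfrac12 w\tp \nabla\sq \phi(\beta_0) w + r_t(w) + R_t(w),$$
where $r_t(w) = o(1)$ uniformly on $\|w\| \leq M$ comes from the second-order Taylor expansion of $\phi$ at $\beta_0$ (B4), and
$R_t(w) = t[(\psi_t-\phi)(\beta_0+w/\sqrt t) - (\psi_t-\phi)(\beta_0) - \zeta_t\tp (w/\sqrt t)]$
is the stochastic remainder. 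The key technical step is to show $\sup_{\|w\|\leq M} |R_t(w)| \toprob 0$; this is exactly what the stochastic equicontinuity condition (\ref{lm:se_of_subgradient}) delivers, after passing from gradient differences to function differences via a one-dimensional integration along line segments inside $N$.

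The remaining work is geometric: by B2--B3 (smoothness of the $g_i$ and the Mangasarian--Fromovitz constraint qualification), a standard linearization lemma from nonlinear programming shows that $\sqrt t(B-\beta_0)$ converges in the Painlev\'e--Kuratowski sense to the tangent cone at $\beta_0$, and adding the descent direction $w\tp \nabla\phi(\beta_0)\leq 0$ yields the critical cone $C$ in \cref{eq:def_C}. Using the KKT characterization of $\Lambda_0$, the quadratic term rewrites as $q(w) = \max_{\lambda\in\Lambda_0} w\tp \nabla\sq l(\beta_0,\lambda) w$ on $C$ (since Lagrange multipliers vanish the linear term for $w\in C$). Therefore $Z_t$ epi-converges to $w \mapsto \zeta\tp w + \tfrac12 q(w) + \delta_C(w)$ along the subsequence realizing $\zeta_t \to \zeta$, and the second-order sufficient condition B5 makes this limit coercive with a unique minimizer $\bar\omega(\zeta)$. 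Applying the argmin and argmin-value continuity theorem for epi-convergent, coercive convex programs yields both the expansion of $\vartheta_t$ and the estimator expansion $\|\hat\beta-\beta_0-\bar\omega(\zeta_t)\| = o_p(t^{-1/2})$.

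The main obstacle is verifying the stochastic equicontinuity step in the form (\ref{lm:se_of_subgradient}): controlling the gradient-difference modulus on shrinking neighborhoods typically needs empirical process tools (e.g.\ bracketing entropy of $\{\nabla f(\cdot,\beta)\}_{\beta\in N}$) rather than pointwise LLN, so most of the heavy lifting is absorbed into that condition. Everything else is local convex analysis plus Taylor expansion, and the second-order sufficient condition ensures the limiting variational problem is well-posed.
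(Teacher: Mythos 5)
This lemma is imported verbatim from Shapiro (1989); the paper offers no proof of it, only a one-sentence remark noting that the original Assumption~C1 can be weakened to a shrinking-neighborhood form by inspecting Shapiro's proof. Your reconstruction does not follow Shapiro's route. Shapiro develops a second-order parametric sensitivity expansion of the optimal value $u\mapsto\inf_B\{\phi+u\}$ in a Banach space of perturbations and then substitutes the random perturbation $u=\psi_t-\phi$; you instead follow the local-process/epi-convergence/argmin-CMT route of Geyer and Knight. That route can work, but as written it has a concrete gap.

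The gap is in the decomposition of $Z_t(w)=t[\psi_t(\beta_0+w/\sqrt t)-\psi_t(\beta_0)]$. Writing $\psi_t=\phi+(\psi_t-\phi)$ and Taylor-expanding $\phi$ gives
\begin{align*}
Z_t(w)=\sqrt t\,\nabla\phi(\beta_0)\tp w+\sqrt t\,\zeta_t\tp w+\tfrac12 w\tp\nabla\sq\phi(\beta_0)w+r_t(w)+R_t(w),
\end{align*}
and your display drops the first term $\sqrt t\,\nabla\phi(\beta_0)\tp w$. That term is where the geometry actually happens. On the tangent cone $T_B(\beta_0)$ one has $\nabla\phi(\beta_0)\tp w\ge 0$ by first-order optimality, so the term diverges to $+\infty$ on $T_B\setminus C$, which is what forces the limiting minimization onto the critical cone $C$; without it, there is no mechanism for your hand-insertion of the extra constraint $w\tp\nabla\phi(\beta_0)\le 0$. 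More importantly, the same term is also the source of the constraint-curvature pieces $\sum_i\lambda_i w\tp\nabla\sq g_i(\beta_0)w$ that convert $\nabla\sq\phi$ into $\nabla\sq l(\beta_0,\lambda)$ inside $q(w)$. A sequence $w_t\in\sqrt t(B-\beta_0)$ with $w_t\to w\in C$ must curve inward at rate $1/\sqrt t$ to keep $g_i(\beta_0+w_t/\sqrt t)\le 0$ when $g_i$ is nonlinear; feeding that second-order correction into $\sqrt t\,\nabla\phi(\beta_0)\tp w_t=-\sqrt t\sum_i\lambda_i\nabla g_i(\beta_0)\tp w_t$ produces exactly $\tfrac12\sum_i\lambda_i w\tp\nabla\sq g_i(\beta_0)w$ in the limit. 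Your parenthetical justification "since Lagrange multipliers vanish the linear term for $w\in C$" only shows $\nabla\phi(\beta_0)\tp w=0$ on $C$; it cannot conjure the $\lambda_i\nabla\sq g_i$ contributions, and first-order Painlev\'e--Kuratowski convergence of $\sqrt t(B-\beta_0)$ to $T_B(\beta_0)$ is not enough — you need the second-order tangent set, which is exactly what Shapiro's machinery tracks. (In the paper's own application the $g_i$ are affine, so $\nabla\sq g_i=0$ and the gap is invisible; but the lemma is stated for general $C^2$ constraints, where your sketch as written yields the wrong quadratic.)
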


\begin{remark}[The stochastic equicontinuity condition]
    By inspecting the proof, the original Assumption~C1, 
    $\sup _{\beta \in B \cap {N}} {\left\|\nabla \psi_t(\beta)-\nabla \phi(\beta)-\nabla \psi_t\left(\beta_0\right)+\nabla \phi\left(\beta_0\right)\right\|}/[{t^{-1 / 2}+\left\|\beta-\beta_0\right\|}] = o_p(1)$, which requires uniform convergence over a fixed neighborhood $N$, 
    can be relaxed to the uniform convergence in a shrinking neighborhood of $\beta_0$.
    The shrinking neighborhood condition is in fact
    standard, see, e.g., \citet{pakes1989simulation,newey1994large}.
\end{remark}

\begin{remark}
    The limit distribution of the minimizer is characterized by three objects: the limit distribution of $\zeta_t$ defined in \cref{eq:def_zeta}, the critical cone $C$ defined in \cref{eq:def_C} and the piecewise quadratic function $q$ defined in \cref{eq:def_q}.
\end{remark}


Hessian matrix estimation at the optimum $\beta_0$ can be done via the numerical difference method.

\begin{lemma}[Hessian estimation via numerical difference]
    \label{lm:hessian_estiamtion}
    This lemma is adapted from Theorem 7.4 from \citet{newey1994large}
    Recall $\phi(\beta) = Pf(\cdot, \beta)$, $\psi_t(\beta) = P_t f(\cdot, \beta)$ and $\zeta_t=\nabla \psi_t(\beta_0) - \nabla \phi(\beta_0)$. We are interested in the Hessian matrix $H = \nabla\sq\phi(\beta_0)$.
    Let $\beta_0$ be any point and let $\hat \beta$ be an estimate of $\beta_0$. 
    Assume 
    \begin{enumconditions}
    \item $\hat \beta - \beta_0 = O_p(t^{-1/2})$; \label{it:1} 
    \item $\phi$ is twice differentiable at $\beta_0$ with non-singular Hessian matrix $H$; 
    \label{it:2} 
    \item $\sqrt{t} \zeta_t \tod N(0, \Omega)$ for some matrix $\Omega$; 
    \label{it:3} 
    \item for any positive sequence $\delta_t = o(1)$,
    the stochastic equicontinuity condition \cref{lm:se_of_subgradient} holds. 
    \label{it:4} 
    \end{enumconditions}
    Suppose $\varepsilon_t \to 0$ and $\varepsilon_t \sqrt{t} \to \infty$. Then $\hat H \toprob H$, where $\hat H$ is the numerical difference estimator whose $(i,j)$-th entry is
\begin{align*} 
    \hat{H}_{i j}=& {[\psi_t(\hat{\beta}+e_i \varepsilon_t+e_j \varepsilon_t)-\psi_t(\hat{\beta}-e_i \varepsilon_t+
    {e}_j \varepsilon_t)-\psi_t(\hat{\beta}+e_i \varepsilon_t-e_j \varepsilon_t).} \\ 
    &+\psi_t(\hat{\beta}-e_i \varepsilon_t-e_j \varepsilon_t)] / 4 \varepsilon_t^2. 
\end{align*}
\end{lemma}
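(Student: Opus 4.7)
Write $\psi_t = \phi + R_t$ with $R_t := \psi_t - \phi$. By linearity of the symmetric second difference in the numerator of $\hat H_{ij}$,
\begin{align*}
4\varepsilon_t^2\, \hat H_{ij} = \Delta\phi + \Delta R_t,\qquad \Delta g := \sum_{s_1,s_2\in\{\pm 1\}} s_1 s_2\, g(\hat\beta + s_1 \varepsilon_t e_i + s_2 \varepsilon_t e_j).
\end{align*}
It suffices to show $\Delta\phi/(4\varepsilon_t^2) = H_{ij} + o_p(1)$ (the deterministic part) and $\Delta R_t/(4\varepsilon_t^2) = o_p(1)$ (the stochastic part).

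\emph{Deterministic part.} Let $d := \hat\beta - \beta_0$. Because $\varepsilon_t\sqrt{t}\to\infty$, hypothesis \cref{it:1} gives $\|d\| = O_p(t^{-1/2}) = o_p(\varepsilon_t)$. By \cref{it:2}, I can Taylor-expand $\phi(\beta_0+u) = \phi(\beta_0) + \nabla\phi(\beta_0)^\top u + \tfrac12 u^\top H u + r(u)$ with $r(u) = o(\|u\|^2)$. Apply this at $u_{s_1 s_2} := d + s_1\varepsilon_t e_i + s_2\varepsilon_t e_j$ and sum against the weights $s_1 s_2$. The constant and linear terms vanish because $\sum s_1 s_2 = 0$ and $\sum s_1 s_2 \cdot s_k = 0$ for $k=1,2$; in the quadratic form only the cross term $2 s_1 s_2\varepsilon_t^2 e_i^\top H e_j$ survives summation, contributing $8\varepsilon_t^2 H_{ij}$, which after the factor $\tfrac12$ yields $4\varepsilon_t^2 H_{ij}$. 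Since $\|u_{s_1 s_2}\| = O_p(\varepsilon_t)$, each remainder satisfies $r(u_{s_1 s_2}) = o_p(\varepsilon_t^2)$. Dividing by $4\varepsilon_t^2$ gives $\Delta\phi/(4\varepsilon_t^2) = H_{ij} + o_p(1)$.

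\emph{Stochastic part.} With probability one $\psi_t$ is convex, hence locally Lipschitz, and $\phi$ is $C^2$ near $\beta_0$, so $R_t$ is absolutely continuous along every coordinate line near $\beta_0$. The fundamental theorem of calculus yields, for the two choices $\beta = \hat\beta \pm \varepsilon_t e_i$,
\begin{align*}
R_t(\beta + \varepsilon_t e_j) - R_t(\beta - \varepsilon_t e_j) = \int_{-\varepsilon_t}^{\varepsilon_t} \partial_j R_t(\beta + s e_j)\,ds,
\end{align*}
and subtracting the two identities gives
\begin{align*}
\Delta R_t = \int_{-\varepsilon_t}^{\varepsilon_t}\!\Big[\partial_j R_t(\hat\beta + \varepsilon_t e_i + s e_j) - \partial_j R_t(\hat\beta - \varepsilon_t e_i + s e_j)\Big]ds.
\end{align*}
Insert and subtract $\partial_j R_t(\beta_0)$ inside the bracket. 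Every evaluation point $\beta$ above satisfies $\|\beta-\beta_0\|\le \|d\|+2\varepsilon_t = O_p(\varepsilon_t)$; picking any deterministic $\delta_t\to 0$ that contains these points with probability tending to one and invoking the stochastic equicontinuity hypothesis \cref{it:4} gives
\begin{align*}
\|\partial_j R_t(\beta) - \partial_j R_t(\beta_0)\| = o_p(1)\cdot(t^{-1/2} + \|\beta-\beta_0\|) = o_p(\varepsilon_t)
\end{align*}
uniformly over $s\in[-\varepsilon_t,\varepsilon_t]$. Thus the integrand is $o_p(\varepsilon_t)$ uniformly in $s$, the integral is $o_p(\varepsilon_t^2)$, and $\Delta R_t/(4\varepsilon_t^2) = o_p(1)$, as desired.

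\emph{Main obstacle.} The subtle point is the stochastic term: $\psi_t$ is not twice differentiable (in our Fisher market applications the integrand has max-type kinks), so no Taylor expansion of $\psi_t$ is available. Rewriting the symmetric difference as a line integral of $\partial_j\psi_t$---which exists almost everywhere by convexity---reduces the question to the stochastic equicontinuity of the gradient, which is exactly what \cref{it:4} provides. The two rate conditions $\varepsilon_t\to 0$ and $\varepsilon_t\sqrt t\to\infty$ play complementary roles: the former kills the Taylor remainder and the $\|\beta-\beta_0\|$ factor in the equicontinuity bound, while the latter guarantees $\|d\| = o_p(\varepsilon_t)$ and ensures the $t^{-1/2}$ floor in \cref{it:4} is absorbed into $o_p(\varepsilon_t)$.
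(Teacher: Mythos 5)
Your proof is correct, and the stochastic part takes a genuinely different route from the paper's. Both proofs handle the deterministic piece the same way (Taylor-expand $\phi$ around $\beta_0$; the symmetric second difference annihilates the constant, linear, and all quadratic terms except the $H_{ij}$ cross term). But for the stochastic remainder $R_t = \psi_t - \phi$, the paper bounds each of the four values $R_t(\hat\beta_{\pm\pm}) - R_t(\beta_0)$ separately via a Clarke mean-value theorem for locally Lipschitz functions: this produces a subgradient $\zeta^*_t$ at an intermediate point, which is split as $\zeta_t + (\zeta^*_t - \zeta_t)$, and then Condition~\ref{it:3} ($\sqrt{t}\zeta_t$ tight) is used to get $\|\zeta_t\|\|\beta-\beta_0\| = O_p(t^{-1/2}\varepsilon_t) = o_p(\varepsilon_t^2)$ while Condition~\ref{it:4} handles the remainder $\zeta^*_t - \zeta_t$. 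You instead represent the full symmetric second difference $\Delta R_t$ as a line integral of gradient increments; because the constant-level term $\partial_j R_t(\beta_0)$ cancels exactly in the iterated difference, only gradient \emph{differences} $\partial_j R_t(\beta) - \partial_j R_t(\beta_0)$ appear, and these are controlled by Condition~\ref{it:4} alone. This bypasses Condition~\ref{it:3} entirely in the stochastic bound, which is a genuine, if modest, economy: it makes clear that the CLT on $\zeta_t$ is not actually needed for Hessian consistency, only for the normality results elsewhere.

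One small caveat: you justify the FTC step by asserting ``$\psi_t$ is convex.'' That is true in the Fisher-market/FPPE instantiation (where $f(\theta,\cdot)$ is a max of linear functions plus a convex penalty) but it is not among the abstract Conditions~\ref{it:1}--\ref{it:4} of the lemma; what is really needed is that $R_t$ is locally Lipschitz near $\beta_0$ (hence absolutely continuous and a.e.\ differentiable along lines). The paper's own proof has the same implicit dependence --- Clarke's mean value theorem also requires local Lipschitzness of $\psi_t - \phi$ --- so this is a shared gap rather than a defect of your argument; still, it would be cleaner to replace ``convex'' with ``locally Lipschitz, which holds under the envelope/Lipschitz assumptions in force for the application.'' Everything else checks: the sign bookkeeping in the Taylor step is correct ($\sum_{s_1,s_2} s_1 s_2 = \sum s_1^2 s_2 = \sum s_1 s_2^2 = 0$, and $\sum (s_1 s_2)^2 = 4$), and the rate conditions enter exactly as you describe.
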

\begin{proof}[Proof of \cref{lm:hessian_estiamtion}]
    We provide a proof sketch following Theorem 7.4 from \citet{newey1994large} and 
    Lemma 3.3 in \citet{shapiro1989asymptotic}.
    By \cref{it:1} and $\invtroot = o(\varepsilon_t)$ we know for any vector $a\in\Rn$, it holds $\| \hat \beta + \varepsilon_t a - \beta_0 \| = O_p(\varepsilon_t)$. 
    Let $\beta = \hat \beta + a \varepsilon_t$.
    By a mean value theorem for locally Lipschitz functions (see \citet{Clarke1990}; the lemma is also used in the proof of Lemma 3.3 in \citet{shapiro1989asymptotic}),
    there is a (sample-path dependent) $\beta'$ on the segment joining $\beta$ and $\beta_0$ such that 
    $$ (\psi_t - \phi)(\beta) - (\psi_t - \phi)(\beta_0) = (\zeta\st_t)\tp (\beta - \beta_0).$$
    for some $\zeta\st_t \in \partial \psi_t(\beta') - \nabla \phi(\beta')$.
    Then 
    \begin{align}
        |(\psi_t - \phi)(\beta) - (\psi_t - \phi)(\beta_0)| 
        \notag
        &\leq \|\zeta_t \| \|\beta-\beta_0\| + \| \zeta\st_t - \zeta_t\|\|\beta-\beta_0\|
        \\
        & = \|\zeta_t \| \|\beta-\beta_0\| + o_p(\invtroot + \|\beta' - \beta_0 \|)\|\beta-\beta_0\| 
        \tag{by \cref{it:4} }
        \\
        &= O_p(\invtroot) O_p(\varepsilon_t) + o_p(\invtroot + O_p(\varepsilon_t)) O_p(\varepsilon_t ) \tag{by \cref{it:3} }
        \\ &= o_p(\varepsilon_t\sq) 
        \label{eq:from_psi_to_phi}
    \end{align}
    Next by \cref{it:2} we have a quadratic expansion
    \begin{align}
        \label{eq:quad_expan}
        \phi(\beta) - \phi(\beta_0) - \nabla \phi(\beta_0)\tp (\beta-\beta_0) - \frac12  (\beta-\beta_0) \tp H  (\beta-\beta_0) = o_p(\varepsilon_t\sq). 
    \end{align}
    Let $a_{\pm\pm} = \pm e_i \varepsilon_t \pm e_j \varepsilon_t$, $\hat\beta_{\pm\pm} = \hat\beta + a_{\pm \pm}$ and $ d_{\pm\pm} = \hat\beta_{\pm\pm} -\beta_0$.
    Then $d_{\pm\pm} = O_p(\varepsilon_t)$ and $d_{\pm\pm} = a_{\pm\pm} + o_p(\varepsilon_t)$.
    Applying the above bounds with $\beta \leftarrow \hat\beta_{\pm \pm} $, recalling the definition of $\hat H_{ij}$, we have 
    \begin{align}
        \hat H_{ij} &= [\psi_t(\hat\beta_{++})
        -\psi_t(\hat\beta_{-+})
        -\psi_t(\hat\beta_{+-})
        +\psi_t(\hat\beta_{--})] / (4\varepsilon_t\sq)
        \notag
        \\ 
        &=[\phi(\hat\beta_{++})
        -\phi(\hat\beta_{-+})
        -\phi(\hat\beta_{+-})
        +\phi(\hat\beta_{--}) + o_p(\varepsilon_t\sq)] / (4\varepsilon_t\sq)
       \tag{by \cref{eq:from_psi_to_phi}}
        \\ 
        &= [\nabla\phi(\beta_0) \tp (d_{++} - d_{-+} - d_{+-} + d_{--}) 
        \notag
        \\  & \quad
        + 
        \frac12( d_{++} \tp H d_{++} 
        - d_{+-} \tp H d_{+-}
        - d_{-+} \tp H d_{-+}
        + d_{--} \tp H d_{--}  ) + o_p(\varepsilon_t\sq) ] / 
        (4\varepsilon_t\sq)
        \tag{by \cref{eq:quad_expan}}
        \\
        & = [0 + 
        \frac12(
          a_{++} \tp H a_{++} 
        - a_{-+} \tp H a_{-+}
        - a_{+-} \tp H a_{+-}
        + a_{--} \tp H a_{--}  ) 
        + o_p(\varepsilon_t\sq) ] / 
        (4\varepsilon_t\sq)
        \notag
        \\
        &  = [
        4 \varepsilon_t\sq H_{ij}   + o_p(\varepsilon_t\sq) ] / 
        (4\varepsilon_t\sq)
        \notag
        \\ 
        &= H_{ij} + \frac{o_p(\varepsilon_t\sq)}{\varepsilon_t\sq} = H_{ij} +o_p(1).
        \notag
    \end{align}
    In the above we use $d_{++}\tp H d_{++} = (d_{++}-a_{++})\tp H d_{++}  + (d_{++}-a_{++})\tp H a_{++}+ a_{++}\tp H a_{++} = o_p(\varepsilon_t\sq) + a_{++}\tp H a_{++}$, and similarly for other terms.
    This completes the proof of \cref{lm:hessian_estiamtion}.
\end{proof}

The original conditions for \cref{lm:hessian_estiamtion} in \citet{newey1994large}
require the true parameter $\beta_0$ to lie in the interior of $B$. However, this condition is only used to derive the bound $\hat \beta - \beta_0 = O_p(t^{-1/2})$, which is assumed in our adapted version.

\subsection{Stochastic equicontinuity and VC-subgraph function classes}

Next we review classical results from the empirical process literature \citep{vaart1996weak,gine2021mathematical}.

We begin with the notions of Donsker function class and stochastic equicontinuity.

Let $(\Theta,P)$ be a probability space. 
Let $\cF$ be a class of measurable functions of finite second moment.
The class $\cF$ is called $P$-Donsker if 
a certain central limit theorem holds for the class of random variables $\{ \sqrt{t} (P_t - P) f : f\in\cF\}$,
where 
$P_t f = \frac1t \sum_{i=1}^t f(X_i)$ where $X_i$'s are i.i.d.\ draws from $P$.
Because Donskerness will be used as an intermediate step that we will not actually need to show directly or utilize directly,
we refer the reader to Definition~3.7.29 from \citet{gine2021mathematical} for a precise definition.


\begin{lemma}[Donskerness $\Leftrightarrow$ stochastic equicontinuity]\label{thm:donsker_se}
    \label{lm:donsker_to_SE}
 Let $d_P^2(f, g)=P(f-g)^2-(P(f-g))^2$ and consider the pseudo-metric space $\left(\mathcal{F}, d_P\right)$.
 Assume $\cF$ satisfies the condition $\sup_{f\in\cF}|f(x) - Pf | <\infty $ for all $x\in \Theta$. 
 Then the following are equivalent 
 \begin{itemize}
    \item $\cF$ is $P$-Donsker.
    \item $\left(\mathcal{F}, d_P\right)$ is totally bounded, and stochastic equicontinuity under the L2 function norm holds, i.e., for any $\delta_t = o(1)$,  
    $$
    \sup_{(f,g) \in [\delta_t]_{L^2}}|\sqrt{t}\left(P_t-P\right)(f-g)| = o_p(1)
    $$
    as $t\to \infty$,
    where $[\delta_t]_{L^2} = \{ (f,g):f, g \in \mathcal{F}, d_P(f, g) \leq \delta_t\}$.
 \end{itemize}
See Theorem~3.7.31 from \citet{gine2021mathematical}.
\end{lemma}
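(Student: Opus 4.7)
The plan is to reduce the equivalence to the standard characterization of weak convergence in $\ell^\infty(\cF)$, the space of bounded functions on $\cF$. Recall that $P$-Donskerness says that the empirical process $\mathbb{G}_t f := \sqrt{t}(P_t - P)f$, viewed as a random element of $\ell^\infty(\cF)$, converges weakly to a tight Borel Gaussian process $G_P$ whose covariance is $(f,g)\mapsto P(fg)-PfPg$, so that the intrinsic covariance pseudo-metric of $G_P$ equals $d_P$. The tool I will invoke is the general weak-convergence criterion of van der Vaart and Wellner (Theorems 1.5.4--1.5.7), which decomposes weak convergence in $\ell^\infty(\cF)$ into (i) convergence of finite-dimensional distributions and (ii) asymptotic tightness, and further characterizes asymptotic tightness of the empirical process by total boundedness of $(\cF,d_P)$ plus an asymptotic equicontinuity property.

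For the forward direction, assume $\cF$ is $P$-Donsker. Tightness of $G_P$ in $\ell^\infty(\cF)$ forces its sample paths to be a.s.\ uniformly $d_P$-continuous and forces $(\cF,d_P)$ to be totally bounded; otherwise no tight Gaussian process with covariance $d_P^2$ could exist on $\ell^\infty(\cF)$. Weak convergence of $\mathbb{G}_t$ then upgrades to asymptotic tightness, which by the cited criterion is equivalent to asymptotic equicontinuity: for every $\eta>0$ there exists $\delta>0$ with $\limsup_t \P(\sup_{(f,g)\in[\delta]_{L^2}}|\mathbb{G}_t(f-g)|>\eta)<\eta$. A standard $\eta$-$\delta$ extraction argument then converts this formulation into the sequential one demanded by the lemma: for every deterministic rate $\delta_t\downarrow 0$, $\sup_{(f,g)\in[\delta_t]_{L^2}}|\mathbb{G}_t(f-g)| = o_p(1)$.

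For the converse, assume total boundedness plus sequential stochastic equicontinuity. A contrapositive argument shows that the sequential form is in fact equivalent to the $\eta$-$\delta$ asymptotic equicontinuity above (if the latter failed at some $\eta$, one could extract a rate $\delta_t\downarrow 0$ and a family of bad pairs contradicting the hypothesis). Together with total boundedness, this yields asymptotic tightness of $\{\mathbb{G}_t\}$ in $\ell^\infty(\cF)$. Convergence of finite-dimensional distributions reduces to the classical multivariate CLT; the hypothesis $\sup_{f\in\cF}|f(x)-Pf|<\infty$ combined with the fact that $d_P$ is well defined (so that $f-Pf\in L^2(P)$ for every $f\in\cF$) guarantees the required second-moment integrability. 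Marginal CLT plus asymptotic tightness gives weak convergence to a tight Gaussian limit, which is Donskerness.

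The hard part, I expect, is the careful passage between the sequence-based stochastic equicontinuity appearing in the lemma and the $\eta$-$\delta$ form used in the standard tightness criterion. The passage is mechanical but one should be attentive to whether $\sup_{(f,g)\in[\delta]_{L^2}}|\mathbb{G}_t(f-g)|$ is a genuine random variable (measurability issues in $\ell^\infty(\cF)$ are typically handled using outer probability), and to the role of the pointwise boundedness assumption in ensuring well-defined probabilities and finite fidis variances without needing to postulate an integrable envelope separately.
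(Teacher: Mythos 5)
Your sketch is correct and reproduces the standard argument underlying the result. The paper does not prove this lemma itself but simply cites Theorem~3.7.31 of Gin\'e and Nickl (which in turn rests on the van der Vaart--Wellner characterization of weak convergence in $\ell^\infty(\cF)$ via fidis convergence plus asymptotic tightness, with tightness of the empirical process decomposed into total boundedness and asymptotic $d_P$-equicontinuity); your outline, including the diagonalization step translating the sequential form of equicontinuity into the $\eta$--$\delta$ form and the remark about outer probability for measurability, is exactly the proof that citation is standing in for.
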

\cref{lm:donsker_to_SE} reduces the problem of showing stochastic equicontinuity under the L2 function norm to showing Donskerness.
In order to show Donskerness, we will show that our function class is \emph{VC-subgraph}, which implies Donskerness.
At the end, we will connect stochastic equicontinuity under the L2 function norm to the stochastic equicontinuity that we need (see \cref{lm:ltwo_to_parameter}).

Let $\mathcal{C}$ be a class of subsets of a set $\Theta$. Let $A \subseteq \Theta$ be a finite set. We say that $\mathcal{C}$ \emph{shatters} $A$ if every subset of $A$ is the intersection of $A$ with some set $C \in \mathcal{C}$. The subgraph of a real function $f$ on $\Theta$ is the set
$
G_f=\{(s, t): s \in \Theta, t \in \mathbb{R}, t \leq f(s)\} .
$

\begin{definition}[VC-subgraph function classes]
A collection of sets $\mathcal{C}$ is a Vapnik-\v Cervonenkis class $(\mathcal{C}$ is $V C)$ if there exists $k<\infty$ such that $\mathcal{C}$ does not shatter any subsets of $\Theta$ of cardinality $k$.
A class of functions $\mathcal{F}$ is $V C$-subgraph if the class of sets $\mathcal{C}=\left\{G_f: f \in \mathcal{F}\right\}$ is $V C$.
See Definition~3.6.1 and 3.6.8 from \citet{gine2021mathematical}
\end{definition}

\begin{lemma} [VC subgraph + envelop square integrability $\implies$ Donskerness]
    \label{lm:vc_to_donsker}

    If $\cF$ is VC-subgraph, and there exists a measurable $F$ such that $f\leq F$ for all $f\in \cF$ with 
    $P F\sq < \infty$, then $\cF$ is $P$-Donsker.
    See Theorem 3.7.37 from \citet{gine2021mathematical}
\end{lemma}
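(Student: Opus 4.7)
Since this result is quoted directly from \citet{gine2021mathematical} (Theorem~3.7.37), my plan is to recall the standard chaining/entropy argument rather than develop anything new. The overall strategy proceeds in three stages: (i)~convert the combinatorial VC-subgraph property into a uniform metric-entropy bound, (ii)~combine that bound with the square-integrability of the envelope to verify the uniform entropy integral, and (iii)~invoke a uniform-entropy central limit theorem to conclude that $\cF$ is $P$-Donsker.

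First, I would pass from the VC dimension of the subgraph class $\{G_f : f \in \cF\}$ to $L^2$ covering numbers of $\cF$. The key tool is Haussler's inequality: if the subgraph class has VC dimension $v < \infty$, then for every probability measure $Q$ with $Q F^2 < \infty$ one has
\[
    N\bigl(\varepsilon \|F\|_{Q,2}, \cF, L^2(Q)\bigr) \leq C\, v\, (16 e)^v \varepsilon^{-2(v-1)}, \qquad 0 < \varepsilon < 1,
\]
for an absolute constant $C$. This bound is crucially \emph{uniform} in the measure $Q$, which is what distinguishes the VC-subgraph setting from bracketing-entropy arguments.

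Second, I would use this polynomial covering estimate to verify the uniform entropy integral
\[
    J(1,\cF,F) \;\defeq\; \int_0^1 \sup_Q \sqrt{\log N\bigl(\varepsilon \|F\|_{Q,2}, \cF, L^2(Q)\bigr)}\,\diff \varepsilon \;<\;\infty,
\]
where the supremum is over all finitely discrete probability measures $Q$ with $\|F\|_{Q,2}>0$. The bound from stage~(i) gives an integrand of order $\sqrt{v \log(1/\varepsilon)}$, which is integrable on $(0,1)$. Square integrability of the envelope, $P F^2 < \infty$, ensures that $F$ is a valid $L^2(P)$-envelope so that the uniform entropy bound can be applied with $Q = P$ (and along the empirical sequence).

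Finally, with $P F^2 < \infty$ and $J(1,\cF,F) < \infty$ in hand, I would invoke the uniform-entropy CLT for empirical processes (see, e.g., Theorem~2.5.2 of \citet{vaart1996weak} or Theorem~3.7.40 of \citet{gine2021mathematical}) to conclude that $\cF$ is $P$-Donsker, provided one has verified measurability/pointwise-measurability of $\cF$ (this is automatic for the countable or separable subclasses arising from our applications, and for VC-subgraph classes one typically reduces to a countable dense subclass). The only real obstacle is the first step, Haussler's inequality; once that combinatorial-to-metric bridge is crossed, the remaining steps are standard applications of chaining. Since the whole argument is textbook material, referencing \citet{gine2021mathematical} suffices.
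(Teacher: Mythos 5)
Your sketch is the standard Haussler--Koltchinskii--Pollard chaining argument, which is exactly what the cited Theorem~3.7.37 of \citet{gine2021mathematical} carries out; the paper itself gives no independent proof and simply defers to that reference, so there is nothing to compare beyond noting that your outline matches the textbook proof. One tiny caveat worth keeping in mind: the envelope must dominate $|f|$ (not merely $f$) for the covering-number machinery to apply, which is harmless in this paper since every function class considered is nonnegative.
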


Since VC-subgraph implies Donskerness which is equivalent to stochastic equicontinuity, our problem reduces to showing the VC-subgraph property. The following lemmas show how to construct complex VC-subgraph function classes from simpler ones, and will be used in our proof.
\begin{lemma}[Preservation of VC class of sets, Lemma 2.6.17 from \citet{vaart1996weak}] \label{lm:CV_set_intersection}
    If $\cC$ and $\cD$ are VC classes of sets. Then $\cC \cap \cD = \{ C \cap D: C\in \cC, D \in \cD \}$ and $\cC ^c = \{C^c : C \in \cC \}$ are VC. 
\end{lemma}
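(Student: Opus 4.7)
The plan is to handle the two closure properties separately, starting with the easier complement case and then applying the Sauer--Shelah lemma for the intersection case.

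For the complement class $\cC^c$, I would argue directly: suppose $A = \{a_1,\dots,a_k\} \subseteq \Theta$ is shattered by $\cC^c$. Then for every subset $S \subseteq A$, there exists $C \in \cC$ with $C^c \cap A = S$, equivalently $C \cap A = A \setminus S$. Since $S \mapsto A \setminus S$ is a bijection on the power set of $A$, $\cC$ also shatters $A$. Hence the VC dimension of $\cC^c$ equals that of $\cC$, and in particular $\cC^c$ is VC.

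For the intersection class $\cC \cap \cD$, let $V_\cC$ and $V_\cD$ be finite constants such that $\cC$ does not shatter any subset of $\Theta$ of cardinality $V_\cC$ and analogously for $\cD$. I would invoke the Sauer--Shelah lemma: for any finite set $A \subseteq \Theta$ with $|A| = n$, the trace $\{C \cap A : C \in \cC\}$ has cardinality at most $\sum_{j=0}^{V_\cC - 1} \binom{n}{j}$, which is $O(n^{V_\cC - 1})$ as $n \to \infty$; the analogous bound holds for $\cD$. Since every set of the form $(C \cap D) \cap A$ is determined by the pair $(C \cap A, D \cap A)$, the trace of $\cC \cap \cD$ on $A$ has cardinality at most the product of these two bounds, i.e.\ $O(n^{V_\cC + V_\cD - 2})$. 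If $\cC \cap \cD$ shattered $A$, this cardinality would have to equal $2^n$, which is impossible once $n$ exceeds some threshold $k = k(V_\cC, V_\cD)$. Therefore $\cC \cap \cD$ fails to shatter any set of cardinality at least this $k$, so $\cC \cap \cD$ is VC.

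The main step is really just the counting argument in the intersection case; the only ``obstacle'' is that it relies on the Sauer--Shelah lemma as a black box. I would cite Sauer--Shelah (e.g.\ Corollary 3.6.4 of \citet{gine2021mathematical}, or Corollary 2.6.3 of \citet{vaart1996weak}) rather than reprove it, since the original reference for this closure lemma (Lemma 2.6.17 of \citet{vaart1996weak}) treats it in essentially one line by this same reduction.
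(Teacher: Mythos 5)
Your argument is correct, and the complement case in particular is clean and complete. Note, however, that the paper does not prove this lemma at all—it simply cites Lemma 2.6.17 of \citet{vaart1996weak}—so there is no in-paper proof to compare against. Your reduction (Sauer--Shelah gives a polynomial bound $O(n^{V_\cC - 1}) \cdot O(n^{V_\cD - 1})$ on the number of traces of $\cC \cap \cD$ on an $n$-point set, which is eventually smaller than $2^n$) is exactly the standard argument used in the cited reference, so you have in effect reconstructed the textbook proof. One small presentational point: you might state explicitly that the threshold $k(V_\cC, V_\cD)$ can be taken as any $n$ for which $\sum_{j < V_\cC}\binom{n}{j} \cdot \sum_{j < V_\cD}\binom{n}{j} < 2^n$; this makes the ``impossible once $n$ exceeds some threshold'' step fully quantitative, though it is not strictly necessary for the qualitative VC conclusion the lemma asserts.
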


\begin{lemma}[Preservation of VC-subgraph function classes]
    \label{lm:vc_preservation}
    Let $\mathcal{F}$ and $\mathcal{G}$ be $V C$-subgraph classes of functions on a set $\Theta$ and $g: \Theta \mapsto \mathbb{R}$ be a fixed function. Then
    $\mathcal{F} \vee \mathcal{G}= \{f \vee g: f \in \mathcal{F}, g \in \mathcal{G}\}$,
    $\mathcal{F}+g=\{f+g: f \in \mathcal{F}\}$,
    $\mathcal{F} \circ \phi = \{f \circ \phi : f\in \mathcal{F}\}$ is VC-subgraph for fixed $\phi: \cX \to \Theta$, 
    and $\mathcal{F} \cdot g=\{f g: f \in \mathcal{F}\}$ are VC-subgraph.
    See Lemma~2.6.18 from \citet{vaart1996weak}
\end{lemma}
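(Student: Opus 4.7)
The plan is to translate each function-class operation into a corresponding set operation on subgraphs, and then appeal to the preservation properties of VC classes of sets (in particular Lemma~\ref{lm:CV_set_intersection}). Recall that $\mathcal{F}$ is VC-subgraph iff the collection of subgraphs $\{G_f : f \in \mathcal{F}\}$ is a VC class of subsets of $\Theta \times \mathbb{R}$. The overall strategy is to identify, for each operation, a simple geometric transformation that maps the subgraph of the operated function to something built from the original subgraphs.

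For $\mathcal{F} \vee \mathcal{G}$, I would observe that
$G_{f \vee g} = \{(s,t) : t \le f(s) \text{ or } t \le g(s)\} = G_f \cup G_g.$
So the class of subgraphs is $\{G_f \cup G_g : f\in\mathcal F,\,g\in\mathcal G\}$. Using De Morgan's law, $C \cup D = (C^c \cap D^c)^c$, and applying Lemma~\ref{lm:CV_set_intersection} twice (VC is preserved under complement and under pairwise intersection), this is a VC class. For $\mathcal{F} + g$, note that
$G_{f+g} = \{(s,t) : (s, t - g(s)) \in G_f\} = T^{-1}(G_f),$
where $T:(s,t) \mapsto (s, t - g(s))$ is a bijection of $\Theta \times \mathbb{R}$. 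Shattering is preserved under bijective preimages (a set $A$ is shattered by $\{T^{-1}(C_f)\}$ iff $T(A)$ is shattered by $\{C_f\}$), so the VC dimension is unchanged. For $\mathcal{F} \circ \phi$, the same preimage argument applies with the map $(x,t) \mapsto (\phi(x),t)$, since $G_{f \circ \phi}$ is the preimage of $G_f$ under this map and shattering is preserved under (not-necessarily-bijective) preimages by the standard VC argument: any set shattered in $\mathcal{X}\times\mathbb{R}$ pulls back to an injectively-labeled set in $\Theta\times\mathbb{R}$.

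The main obstacle will be $\mathcal{F} \cdot g$, since multiplication by a sign-varying function does not correspond to a single globally nice transformation. My plan is to decompose $\Theta$ into $\Theta_+ = \{g>0\}$, $\Theta_- = \{g<0\}$ and $\Theta_0 = \{g=0\}$, and argue separately. On $\Theta_+$, the subgraph restricted to $\Theta_+\times\mathbb{R}$ satisfies $t \le f(s)g(s) \Leftrightarrow t/g(s) \le f(s)$, so it is the preimage of $G_f$ under the bijection $(s,t)\mapsto (s, t/g(s))$ and shattering is preserved. On $\Theta_-$, the inequality flips to $t/g(s) \ge f(s)$, and one uses the supergraph (whose VC status is equivalent to that of the subgraph under the VC preservation under complements from Lemma~\ref{lm:CV_set_intersection}). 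On $\Theta_0$ the condition reduces to $t \le 0$, independent of $f$, hence a single fixed set. Finally, a set $A \subset \Theta\times\mathbb{R}$ is shattered by $\{G_{fg}\}$ only if its restrictions to the three regions are shattered simultaneously; since each component has bounded VC dimension, a Sauer-Shelah counting argument bounds the overall shatter coefficient, yielding finite VC dimension for $\{G_{fg}\}$.

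A cleaner alternative, which I would try first, is to invoke a general preservation principle: if $\Phi:\mathbb{R}^k \to \mathbb{R}$ is monotone in one coordinate and $\mathcal F_1,\ldots,\mathcal F_k$ are VC-subgraph, then $\Phi(\mathcal F_1,\ldots,\mathcal F_k)$ is VC-subgraph (which is essentially Lemma~2.6.18 of \citet{vaart1996weak}, the very result being cited). Applied with $\Phi(u,v) = u \vee v$, $u+v$, $uv$ (with fixed $v$, treating monotonicity in $u$ piecewise by sign of $v$) and viewing $f \circ \phi$ as a coordinate change on the base space, all four claims follow uniformly. I would adopt this unified route in the writeup, relegating the $g$-sign decomposition to the multiplication case as a technical subroutine.
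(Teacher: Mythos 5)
The paper does not give its own proof of this lemma: it is stated as a quoted result, with a pointer to Lemma~2.6.18 of van der Vaart and Wellner, so there is nothing in the paper to compare your argument against line by line. What you have written is, in essence, the proof that appears in the reference itself: identify $G_{f\vee g}=G_f\cup G_g$ and use preservation of VC set-classes under finite Boolean operations; realize $G_{f+g}$ and $G_{f\circ\phi}$ as preimages of $G_f$ under a fixed map of the base space (preimages cannot increase the VC index); and handle $\mathcal F\cdot g$ by partitioning $\Theta$ according to the sign of $g$, reducing to subgraphs on $\{g>0\}$, supergraphs on $\{g<0\}$ (equivalently, subgraphs of $-\mathcal F$), and a single fixed set on $\{g=0\}$, then bounding the shatter coefficient of the combined class by the product of the three restricted shatter coefficients via Sauer--Shelah. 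That decomposition and the product bound on shatter coefficients is exactly the right way to glue the pieces, and your argument is correct.

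One caution: the ``cleaner alternative'' at the end overstates what is available. There is no general VC-subgraph preservation theorem for arbitrary multivariate $\Phi:\mathbb R^k\to\mathbb R$ that is merely monotone in one coordinate; Lemma~2.6.18 in van der Vaart and Wellner handles a specific list of operations ($\vee$, $\wedge$, $-\mathcal F$, $\mathcal F+g$, $\mathcal F\cdot g$, $\mathcal F\circ\psi$, and $\phi\circ\mathcal F$ for monotone scalar $\phi$), each proved by a bespoke subgraph manipulation of the kind you already carried out. Treating $uv$ as ``piecewise monotone in $u$'' is not a shortcut---it is a restatement of the sign decomposition you already did. I would drop that paragraph and keep the direct argument, with one small tightening on $\{g<0\}$: rather than speaking of ``the supergraph, via complements,'' note that the supergraph of $f$ equals the subgraph of $-f$ reflected through $(s,t)\mapsto(s,-t)$, so its VC index equals that of $\{G_{-f}:f\in\mathcal F\}$, which is VC-subgraph by the $-\mathcal F$ case; this avoids fussing over whether the boundary $\{t=f(s)\}$ is included.
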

\begin{lemma}[Problem 9 Section 2.6 from \citet{vaart1996weak}]
    \label{lm:vcset_to_indicators}
    If a collection of sets $\mathcal{C}$ is a VC-class, then the collection of indicators of sets in $\mathcal{C}$ is a VC-subgraph class of the same index.
\end{lemma}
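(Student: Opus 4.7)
The plan is to show that any set of points in $\Theta \times \mathbb{R}$ that is shattered by the collection of subgraphs $\{G_{f_C} : C \in \mathcal{C}\}$ corresponds (after a harmless reduction) to a set in $\Theta$ that is shattered by $\mathcal{C}$, and conversely. This will immediately yield equality of the two VC indices.

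First I would unfold the definitions. Writing $f_C = \indi_C$, the subgraph is
\begin{align*}
G_{f_C} = \{(s,t) \in \Theta \times \mathbb{R} : t \leq \indi_C(s)\}.
\end{align*}
Hence for a candidate shattered set $\{(s_i,t_i)\}_{i=1}^k$, membership $(s_i,t_i) \in G_{f_C}$ splits into three regimes depending on $t_i$: if $t_i \leq 0$, membership holds for every $C$; if $t_i > 1$, membership fails for every $C$; and if $0 < t_i \leq 1$, membership is equivalent to $s_i \in C$.

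Second, I would use this trichotomy to show that if $\{(s_i,t_i)\}_{i=1}^k$ is shattered by $\{G_{f_C}\}_{C \in \mathcal{C}}$, then none of the $t_i$ can lie in $(-\infty,0] \cup (1,\infty)$. Indeed, an index $i$ with $t_i \leq 0$ is forced into every shattering subset, while an index $i$ with $t_i > 1$ is excluded from every shattering subset, and in either case $2^k$ distinct intersections cannot be achieved. Therefore every $t_i$ lies in $(0,1]$, and on the set of such points shattering of $\{(s_i,t_i)\}$ by $\{G_{f_C}\}$ is equivalent to shattering of $\{s_1,\dots,s_k\}$ by $\mathcal{C}$. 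Conversely, if $\mathcal{C}$ shatters $\{s_1,\dots,s_k\}$, then fixing any $t_i \in (0,1]$ for each $i$ gives a shattered set for $\{G_{f_C}\}$ by the same equivalence.

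Combining the two directions, the maximum cardinality of a shattered set in $\Theta \times \mathbb{R}$ equals the maximum cardinality of a shattered set in $\Theta$, so the VC-subgraph index of $\{f_C : C \in \mathcal{C}\}$ equals the VC index of $\mathcal{C}$. There is no real obstacle here; the only subtlety is making sure the boundary cases $t_i = 0$ and $t_i = 1$ are handled consistently with the convention $t \leq f_C(s)$ in the definition of the subgraph, which is why the proof treats $t_i \leq 0$ (always in) and $0 < t_i \leq 1$ (in iff $s_i \in C$) as the two relevant regimes.
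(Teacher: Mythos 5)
Your proof is correct. Note that the paper does not actually prove this lemma; it is cited as Problem 9 of Section 2.6 in van der Vaart and Wellner, so there is no in-text argument to compare against, and your write-up supplies the standard solution to that exercise.

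The trichotomy on $t_i$ is exactly right, and the forward direction correctly eliminates $t_i \leq 0$ and $t_i > 1$. One small point worth making explicit: the same reasoning that rules out those two regimes also forces the first coordinates $s_1,\dots,s_k$ to be distinct. If $s_i = s_j$ with $i \neq j$ and both $t_i, t_j \in (0,1]$, then $(s_i,t_i) \in G_{\indi_C} \Leftrightarrow s_i \in C \Leftrightarrow (s_j,t_j) \in G_{\indi_C}$, so the two points can never be separated and the collection of $2^k$ intersections again cannot be realized. With that observation the map $(s_i,t_i) \mapsto s_i$ is injective on any shattered set, and the identification of shattered sets in $\Theta\times\R$ with shattered sets in $\Theta$ is clean; the equality of VC indices then follows as you state.
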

In general, the VC-subgraph property is not preserved by multiplication, whereas Donskerness is. Thus, our proof will use the VC-subgraph property up until a final step where we need to invoke multiplication, which will instead be applied on the Donskerness property.
\begin{lemma}[Corollary 9.32 from \citet{kosorok2008introduction}]
    \label{lm:donsker_multiplication}
    Let $\cF$ and $\cG$ be Donsker, 
    then $\cF \cdot \cG$ is Donsker if both $\cF$ and $\cG$ are uniformly bounded.
\end{lemma}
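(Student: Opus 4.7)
The plan is to prove the product-preservation result via the standard decomposition, reducing it through Lemma~\ref{lm:donsker_to_SE} (Donsker $\Leftrightarrow$ total boundedness plus stochastic equicontinuity) to two manageable sub-problems. Let $M$ be a common uniform bound for $\cF$ and $\cG$. First, the bilinear estimate
\[
 d_P(fg, f'g') \;\leq\; M\,\bigl[d_P(f,f') + d_P(g,g')\bigr]
\]
(which uses $|f|,|g|,|f'|,|g'|\leq M$ to split $fg - f'g' = (f-f')g + f'(g-g')$ and take $L^2(P)$ norms) shows that $(\cF\cdot\cG, d_P)$ inherits total boundedness from the individual classes. The product class also has the square-integrable envelope $M^2$, so Lemma~\ref{thm:donsker_se} applies.

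The main step is stochastic equicontinuity of $\sqrt t(P_t - P)$ over $\cF\cdot\cG$. Fix any sequence $\delta_t = o(1)$ and any pair $(fg, f'g')$ with $d_P(fg, f'g') \leq \delta_t$. Use the identity
\[
  \sqrt t(P_t - P)(fg - f'g') \;=\; \sqrt t(P_t - P)\bigl((f-f')\,g\bigr) \;+\; \sqrt t(P_t - P)\bigl(f'\,(g-g')\bigr).
\]
Each summand is an empirical-process increment over a class of the form (small-$L^2(P)$ difference from $\cF$ or $\cG$) multiplied by a uniformly bounded Donsker factor. Thus it suffices to establish the following multiplier lemma: if $\cH$ is Donsker and $\cK$ is uniformly bounded and Donsker, then $\cH\cdot\cK$ is Donsker. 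This would yield asymptotic equicontinuity over the slices $\{(f-f')g : d_P(f,f') \leq C\delta_t\}$ and $\{f'(g-g'): d_P(g,g') \leq C\delta_t\}$ and combine to finish the argument.

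The heart of the matter is therefore the multiplier-Donsker step, which I expect to be the main obstacle. The cleanest route is through uniform entropy numbers: one shows
\[
  \sup_Q N\bigl(\varepsilon\,\|2M H\|_{Q,2},\, \cH\cdot\cK,\, L^2(Q)\bigr) \;\leq\; \sup_Q N\bigl(\tfrac{\varepsilon}{2}\|H\|_{Q,2}, \cH, L^2(Q)\bigr)\cdot \sup_Q N\bigl(\tfrac{\varepsilon}{2M}, \cK, L^\infty(Q)\bigr),
\]
using that for $h_i$ in an $\tfrac{\varepsilon}{2}$-bracket of $\cH$ and $k_j$ in an $\tfrac{\varepsilon}{2M}$-bracket of $\cK$ (rescaled by the envelopes), the products $h_i k_j$ form a valid $\varepsilon$-bracket of $\cH\cdot\cK$ in $L^2(Q)$ thanks to the uniform bound $M$. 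The envelope of $\cH\cdot\cK$ is $M\cdot H$ where $H$ is the envelope of $\cH$, hence square integrable. Finiteness of the uniform entropy integral for the product class then follows from finiteness for each factor, and a standard sufficient condition (e.g.\ Theorem~2.5.2 of van der Vaart--Wellner) delivers Donskerness of $\cH\cdot\cK$, closing the argument.
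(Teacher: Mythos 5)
The paper does not prove this lemma; it cites it directly as Corollary~9.32 of \citet{kosorok2008introduction}, whose actual proof goes through a Lipschitz preservation theorem (Kosorok Theorem~9.31, or van~der~Vaart--Wellner Theorem~2.10.6 / Example~2.10.8): the map $\phi(a,b)=ab$ is Lipschitz on $[-M,M]^2$, and Lipschitz transformations of uniformly bounded Donsker classes are Donsker. Your proposal takes a genuinely different route, and it has two real gaps.

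First, the reduction via $fg - f'g' = (f-f')g + f'(g-g')$ does not make progress. You take the supremum over pairs with $d_P(fg,f'g')\leq\delta_t$, but smallness of $d_P(fg,f'g')$ does \emph{not} force $d_P(f,f')$ or $d_P(g,g')$ to be small -- the factors can be far apart while the products stay close -- so the two summands are not empirical-process increments over shrinking slices of $\cF$ or $\cG$. Even if you repair this by switching to the product semimetric $\rho((f,g),(f',g'))=d_P(f,f')+d_P(g,g')$ (the right choice for total boundedness), each summand is then a supremum of $\sqrt t(P_t-P)$ over a set of the form $\{(f-f')g : d_P(f,f')\le\delta_t,\ g\in\cG\}$ in which $g$ ranges over the \emph{entire} class $\cG$. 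This is not an increment of the original process and cannot be bounded by asymptotic equicontinuity of $\cF$ alone; controlling it is exactly a ``multiplier'' problem of the same difficulty as the statement you set out to prove. Indeed the ``multiplier lemma'' you invoke -- Donsker times uniformly bounded Donsker is Donsker -- is at least as strong as the target, so the decomposition step is circular rather than a reduction.

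Second, the proposed proof of that multiplier lemma via uniform entropy cannot be run from the hypotheses you have. Finiteness of the uniform entropy integral is a \emph{sufficient} condition for the Donsker property, not a necessary one; there are Donsker classes (even bounded ones) for which the uniform entropy integral diverges. Likewise, the Donsker property gives you no control of $L^\infty(Q)$ covering numbers, which your displayed covering-number bound requires of $\cK$. So ``finiteness of the uniform entropy integral for the product class then follows from finiteness for each factor'' does not follow -- the premise is simply unavailable. A smaller issue worth flagging: the bilinear estimate $d_P(fg,f'g')\le M[d_P(f,f')+d_P(g,g')]$ does not hold as stated for the variance semimetric $d_P$ appearing in Lemma~\ref{lm:donsker_to_SE}; it does hold for the $L^2(P)$ norm, which is what you should use. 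None of these problems arise in the Lipschitz-preservation route, which avoids entropy conditions entirely.
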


For parametric function classes, if the parametrization is continuous in
a certain sense, then
stochastic equicontinuity holds w.r.t.\ the norm in the parameter space.

\begin{lemma}[From $L^2$-norm to parameter norm]
    \label{lm:ltwo_to_parameter}
    Suppose the function class $\cF = \{ f(\cdot, \beta), \beta\in B\}$, $B \subset \Rn$, is $P$-Donsker, with an envelope $F$ such that $P F\sq < \infty$. Suppose $\int\left[f(\cdot, \beta)-f\left(\cdot, \beta_0\right)\right]^2 d P \rightarrow 0 $ as $ \beta \rightarrow \beta_0$. Then for any positive sequence $\delta_t = o(1)$, it holds
    \begin{align}
        \label{eq:se_in_parameter}
        \sup _{\beta: \|\beta-\beta_0 \|<\delta_t}|
            \sqrt{t}\left(P_t-P\right)(            
            f(\cdot, \beta)- f\left(\cdot, \beta_0 \right))|=o_p(1).
    \end{align}
    See Lemma 2.17 from \citet{pakes1989simulation}; see also Lemma 1 from \citet{chen2003estimation}
\end{lemma}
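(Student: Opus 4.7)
The plan is to reduce stochastic equicontinuity with respect to the Euclidean parameter norm to stochastic equicontinuity with respect to the $L^2(P)$ pseudo-metric $d_P$, and then invoke the pseudo-metric version as a consequence of Donskerness via Lemma \ref{lm:donsker_to_SE}.

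First I would introduce the pseudo-metric modulus of continuity of the parametrization $\beta \mapsto f(\cdot,\beta) \in L^2(P)$ at $\beta_0$, namely
\[
\omega(\delta) \defeq \sup_{\|\beta - \beta_0\| \leq \delta} d_P\big(f(\cdot,\beta),\, f(\cdot,\beta_0)\big).
\]
Since $d_P(f,g)^2 = P(f-g)^2 - (P(f-g))^2 \leq P(f-g)^2$, the hypothesis that $\int[f(\cdot,\beta)-f(\cdot,\beta_0)]^2 dP \to 0$ as $\beta \to \beta_0$ implies $\omega(\delta) \to 0$ as $\delta \downarrow 0$. Given any sequence $\delta_t \downarrow 0$, setting $\eta_t \defeq \omega(\delta_t)$ therefore yields a sequence $\eta_t \downarrow 0$ with the property that every $\beta$ satisfying $\|\beta - \beta_0\| < \delta_t$ gives rise to a pair $(f(\cdot,\beta), f(\cdot,\beta_0))$ with $d_P$-distance at most $\eta_t$.

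Next I would apply Lemma \ref{lm:donsker_to_SE}: since $\cF$ is $P$-Donsker and has an $L^2(P)$-integrable envelope (so $\sup_{f \in \cF}|f(x)-Pf|$ is finite $P$-a.s., or at least the class is totally bounded in $d_P$), stochastic equicontinuity in $d_P$ holds. That is, for the sequence $\eta_t \downarrow 0$ constructed above,
\[
\sup_{(f,g) \in [\eta_t]_{L^2}} \bigl|\sqrt{t}(P_t-P)(f-g)\bigr| = o_p(1),
\]
where $[\eta_t]_{L^2} = \{(f,g)\in\cF \times \cF : d_P(f,g) \leq \eta_t\}$. Restricting the supremum to the sub-collection of pairs indexed by $\{\beta : \|\beta-\beta_0\|<\delta_t\}$, which is contained in $[\eta_t]_{L^2}$ by the previous step, immediately delivers \cref{eq:se_in_parameter}.

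The only delicate points are (i) verifying that the class $\cF$ indeed lands in the regime where Lemma \ref{lm:donsker_to_SE} applies (total boundedness in $d_P$ follows from Donskerness together with the envelope condition), and (ii) the standard measurability subtlety that the suprema in question should be interpreted as outer probability statements, as is customary in empirical process theory. The main conceptual point, and the main obstacle, is really just the bridge between the Euclidean neighborhood $\{\|\beta-\beta_0\|<\delta_t\}$ and the $d_P$-ball; this is handled uniformly through the modulus $\omega(\cdot)$, and the continuity-in-$L^2$ hypothesis is exactly what makes $\omega(\delta_t) \to 0$. No refined chaining or bracketing entropy computations are needed beyond what is already packaged inside Donskerness.
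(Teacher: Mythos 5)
The paper does not prove this lemma—it simply cites Lemma~2.17 of \citet{pakes1989simulation} and Lemma~1 of \citet{chen2003estimation}—so there is no internal proof to compare against. Your argument is correct and is essentially the standard reduction those references use: you define the modulus $\omega(\delta)$, use $d_P^2(f,g)\le P(f-g)^2$ together with the $L^2(P)$-continuity hypothesis to get $\omega(\delta)\to 0$, and then monotonically pass the supremum over the Euclidean ball $\{\|\beta-\beta_0\|<\delta_t\}$ into the $d_P$-ball $[\omega(\delta_t)]_{L^2}$, after which \cref{lm:donsker_to_SE} delivers the $o_p(1)$ bound. Two small points are handled correctly and worth noting explicitly if this were written up: the hypothesis ``$\int[f(\cdot,\beta)-f(\cdot,\beta_0)]^2\,dP\to 0$ as $\beta\to\beta_0$'' is a genuine limit statement and hence already gives a uniform bound on a small ball, which is exactly what makes $\omega(\delta_t)\downarrow 0$; and the envelope condition $PF^2<\infty$ (with $|f|\le F$) implies $\sup_{f\in\cF}|f(x)-Pf|\le F(x)+PF<\infty$ pointwise, which is the precondition \cref{lm:donsker_to_SE} requires before equating Donskerness with asymptotic $d_P$-equicontinuity. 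No gap.
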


\begin{lemma}[\citet{andrews1994asymptotics}]
    \label{lm:se}
    If for any $\delta_t = o(1)$ \cref{eq:se_in_parameter} holds,
    then for any random elements $\beta_t$ such that $\| \beta_t - \beta_0\|_2 =o_p(1)$, it holds 
    $ 
        \sqrt{t}(P_t - P) ( f(\cdot,\beta_t) - f(\cdot, \beta_0)) = o_p(1).
    $
    
\end{lemma}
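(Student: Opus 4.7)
\textbf{Proof plan for \cref{lm:se}.} The plan is to reduce the deterministic random-point control to the uniform control provided by hypothesis \cref{eq:se_in_parameter}, via a standard ``peeling'' at the random argument $\beta_t$. Concretely, for any $\epsilon>0$ I would decompose
\[
\P\Bigl(\bigl|\sqrt{t}(P_t-P)(f(\cdot,\beta_t)-f(\cdot,\beta_0))\bigr|>\epsilon\Bigr)
\le \P(\|\beta_t-\beta_0\|>\delta_t)
+\P\Bigl(\sup_{\beta:\|\beta-\beta_0\|\le \delta_t}\bigl|\sqrt{t}(P_t-P)(f(\cdot,\beta)-f(\cdot,\beta_0))\bigr|>\epsilon\Bigr),
\]
and I would choose a deterministic sequence $\delta_t\downarrow 0$ that sends both terms to zero.

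The construction of $\delta_t$ is the one subtlety. Because $\|\beta_t-\beta_0\|=o_p(1)$, for each integer $k\ge 1$ there exists $T_k$ (which I take strictly increasing in $k$) such that $\P(\|\beta_t-\beta_0\|>1/k)<1/k$ for all $t\ge T_k$. Defining $\delta_t=1/k$ for $T_k\le t<T_{k+1}$ yields a deterministic sequence with $\delta_t\downarrow 0$ and $\P(\|\beta_t-\beta_0\|>\delta_t)\to 0$. For the second term, I would apply the hypothesis \cref{eq:se_in_parameter} to this particular sequence $\delta_t$, which gives that the supremum is $o_p(1)$, so in particular its probability of exceeding $\epsilon$ tends to zero.

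Putting the two pieces together, the right-hand side of the display goes to zero for every $\epsilon>0$, which is exactly the conclusion $\sqrt{t}(P_t-P)(f(\cdot,\beta_t)-f(\cdot,\beta_0))=o_p(1)$. The main (and only real) obstacle is producing a single \emph{deterministic} $\delta_t$ that both shrinks to zero and dominates $\|\beta_t-\beta_0\|$ with high probability; the construction above, which is a diagonal argument using the definition of convergence in probability, handles this cleanly. No additional measurability or integrability hypotheses beyond those already appearing in the lemma are needed.
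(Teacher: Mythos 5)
Your proof is correct and is essentially the standard argument behind this result (the paper cites \citet{andrews1994asymptotics} rather than supplying a proof, and your decomposition plus diagonal construction of a deterministic $\delta_t \downarrow 0$ is exactly the argument there). The only cosmetic gaps are that $\delta_t$ should be declared (e.g., equal to $1$) for $t < T_1$, and strictly the event in the second term is interpreted via outer probability since the supremum is over an uncountable set — but both are routine and do not affect the argument.
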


\begin{lemma}\label{lm:verifying_VC}
    Given any $n$ fixed functions $v_i: \Theta \to \R$, $i\in [n]$,
    the following function classes
    \begin{align*}
    \cF_1 &= \{ \theta \mapsto  \max_i \{\beta_1v_1(\theta), \dots, \beta_n v_n(\theta)\}  :\beta \in B \}
    \\
    \cF_2 &= \{ \theta \mapsto  [D_{f,1}, \dots, D_{f,n}](\theta, \beta): \beta \in B \}  
    \\
    \cF_3 &= \{ \t \mapsto \max_i (\betai \vithe) \indi \{ \max_{i \in A} \betai\vithe = \max_{i \in [n]} \betai\vithe \} : \b \in B \}
    \end{align*}
    are VC-subgraph and Donsker.    
    Here $D_{f,i} (\theta, \beta) = v_i(\theta)\prod_{k=1}^n \indi( \betai \vithe \geq \beta_k v_k(\theta) )$ and $B=[0,1]^n$, and $A$ is a nonempty subset of $[n]$.
\end{lemma}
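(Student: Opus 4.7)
\textit{Proof plan.} My plan is to verify the VC-subgraph property for each class via the preservation rules (Lemmas~\ref{lm:CV_set_intersection}, \ref{lm:vc_preservation}, \ref{lm:vcset_to_indicators}), and then deduce the Donsker conclusion either from Lemma~\ref{lm:vc_to_donsker} (VC-subgraph plus square-integrable envelope) or from Lemma~\ref{lm:donsker_multiplication} (for $\cF_3$, where multiplication destroys the VC-subgraph property but preserves Donskerness since everything is uniformly bounded). The crucial uniform bound is that $|\betai v_i(\theta)| \leq \vbar$ for all $\beta \in B = [0,1]^n$, so the constant function $F \equiv \vbar$ serves as a square-integrable envelope for all three classes.

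\textit{Class $\cF_1$.} For each fixed $i$, the class $\{\theta \mapsto \betai v_i(\theta): \betai \in [0,1]\}$ lies in the one-dimensional linear space $\operatorname{span}\{v_i\}$, which is a classical VC-subgraph class. By iterated application of the $\vee$-preservation statement in Lemma~\ref{lm:vc_preservation}, the class $\cF_1 = \bigvee_{i=1}^n \{\betai v_i: \betai \in [0,1]\}$ is VC-subgraph. Combined with the envelope $\vbar$ and Lemma~\ref{lm:vc_to_donsker}, this gives Donskerness.

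\textit{Class $\cF_2$.} I will argue componentwise: fix $i$ and show $\{\theta \mapsto D_{f,i}(\theta,\beta) : \beta\in B\}$ is VC-subgraph. For each pair $(i,k)$, the function $\theta \mapsto \betai v_i(\theta) - \betak v_k(\theta)$ lies in the 2-dimensional space $\operatorname{span}\{v_i,v_k\}$, so by the standard VC-class result for finite-dimensional function spaces, the family of half-spaces $\cC_{ik} = \{\{\theta: \betai v_i(\theta) \geq \betak v_k(\theta)\} : \beta \in B\}$ is VC. Taking intersections over $k \neq i$ preserves the VC property (Lemma~\ref{lm:CV_set_intersection}), and passing to indicators gives a VC-subgraph class (Lemma~\ref{lm:vcset_to_indicators}). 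Multiplying by the fixed function $v_i$ again preserves VC-subgraph (Lemma~\ref{lm:vc_preservation}). Donskerness then follows from Lemma~\ref{lm:vc_to_donsker}.

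\textit{Class $\cF_3$ (the main obstacle).} Here the trouble is that each element is a product of a function from $\cF_1$ with an indicator, and general VC-subgraph is not preserved under multiplication. I will handle this in two steps. First, the indicator class $\cG = \{\indi\{\max_{i\in A}\betai v_i = \max_{i \in [n]}\betai v_i\}: \beta \in B\}$ is VC-subgraph: the underlying set can be rewritten as $\bigcap_{k\notin A}\bigcup_{i\in A}\{\theta: \betai v_i(\theta) \geq \betak v_k(\theta)\}$, which is a fixed finite Boolean combination of sets from the VC classes $\cC_{ik}$ built above, so the collection of such sets is VC by Lemma~\ref{lm:CV_set_intersection}, and the indicators form a VC-subgraph class by Lemma~\ref{lm:vcset_to_indicators}. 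Both $\cF_1$ and $\cG$ are VC-subgraph with the constant envelope $\vbar$, hence Donsker by Lemma~\ref{lm:vc_to_donsker}. Since both classes are uniformly bounded (by $\vbar$ and $1$ respectively), Lemma~\ref{lm:donsker_multiplication} implies that $\cF_3 = \cF_1 \cdot \cG$ is Donsker. For the VC-subgraph claim of $\cF_3$ itself, I would either restrict to the functional form and apply the permanence result for products of nonnegative VC-subgraph classes available in Andersen--Gill type arguments, or simply note that $\cF_3$ lies in the larger class $\cF_1 \vee \{0\}$ restricted to supports in $\cG$, which remains VC-subgraph by the $\vee$-rule applied pointwise to the pair $(\text{product}, 0)$. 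The bulk of the work is handling the combinatorial Boolean structure of the indicator in $\cG$, which is the step that I expect to require the most care.
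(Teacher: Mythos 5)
Your arguments for $\cF_1$ and $\cF_2$ are the same as the paper's: one-dimensional (resp.\ two-dimensional) linear parametrization gives VC-subgraph, then preservation under $\vee$, set intersection, indicators, and multiplication by a fixed function, then envelope $\vbar$ gives Donskerness. Your Boolean rewrite of the indicator in $\cF_3$ as $\bigcap_{k\notin A}\bigcup_{i\in A}\{\betai v_i\geq\betak v_k\}$ matches the paper's $\prod_{k}(1-\prod_{i\in A}\indi(\betai v_i<\beta_k v_k))$ (for $k\in A$ the paper's inner product vanishes automatically, so the two are identical), and your resulting claim that $\cG$ is VC-subgraph and hence Donsker is correct. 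Your route to Donskerness of $\cF_3$ through Lemma~\ref{lm:donsker_multiplication} is cleaner and more explicit than the paper, which merely says ``apply similar arguments as above''; since Donskerness implies Glivenko--Cantelli, this alone covers every downstream use of $\cF_3$ in the paper (the uniform LLN in Theorem~\ref{thm:simplified_rev_var}).

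Where you hedge is exactly where the paper is also terse: the literal VC-subgraph claim for $\cF_3$. Your second proposed fix (``$\cF_3$ lies in $\cF_1\vee\{0\}$ restricted to supports in $\cG$'') is not a valid argument --- a restricted class is not the same as a pointwise max, and the $\vee$-rule gives you $\max(f,0)=f$ here, not the masked function. Your first option is the right one, but it deserves to be spelled out and it uses a fact neither you nor the paper flag explicitly: every function in $\cF_1$ is \emph{nonnegative} (since $v_i\geq 0$ and $\beta\in[0,1]^n$). For $f\geq 0$ and a set $C$, the subgraph of $f\cdot\indi_C$ decomposes as $\{t\leq 0\}\cup(\{0<t\leq f(s)\}\cap(C\times\R))$; the first piece is fixed, the second is the (diagonal) intersection of the positive part of a VC subgraph class with a VC class of cylinders over $\cG$-sets, and Lemma~\ref{lm:CV_set_intersection} (together with closure of VC classes under subclasses and fixed unions) gives VC. This is the ``Andersen--Gill''-style permanence you allude to. Without the nonnegativity, products of VC-subgraph classes need not be VC-subgraph, so the sign condition is essential.
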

\begin{proof}[Proof of \cref{lm:verifying_VC}]
    We show $\cF_1$ is VC-subgraph.
    For each $i$, the class 
    $\{ \theta \mapsto v_i(\theta) \beta_i : \beta_i \in [0,1] \}$ is VC-subgraph (Proposition 4.20 from \citet{wainwright2019high}, and Example 19.17 from \citet{van2000asymptotic}). 
    By the fact the VC-subgraph function classes are preserved by pairwise maximum (\cref{lm:vc_preservation}), we know $\cF_1$ is VC-subgraph. 
    Moreover, the required envelope condition holds since $\esssup _\theta f \leq \vbar $ for all $f\in \cF_1$, so $\cF_1$ is Donsker by \cref{lm:vc_to_donsker}.

    We now show $\cF_2$ is VC-subgraph.
    For a vector-valued function class, we say it is VC-subgraph if each coordinate is VC-subgraph.
    First, the class of sets $\{ \{ v \in \Rn:\betai v_i \geq \beta_k v_k\} \subset \Rn : \beta \in B \}$ is VC, for all $k \neq i$. 
    By \cref{lm:CV_set_intersection}, we know 
    the class of sets $\{ \{v \in \Rn : \betai v_i \geq \beta_k v_k, \forall k\neq i\} \subset \Rn: \beta \in B \}$ is VC.
    By \cref{lm:vcset_to_indicators}, we obtain that the class 
    $\{ \theta \mapsto \prod_{k=1}^n  \indi( \betai \vithe \geq \beta_k v_k(\theta) ): \beta \in B \}$
    is VC-subgraph.
    Finally, multiplying all functions by a fixed function preserves VC-subgraph classes (\cref{lm:vc_preservation}), and so 
    $\{ \theta \mapsto v_i(\theta)\prod_{k=1}^n \indi( \betai \vithe \geq \beta_k v_k(\theta) ): \beta \in B\}$ is VC-subgraph.
    Repeat the argument for each coordinate, and we obtain that $\cF_2$ is VC-subgraph.
    Moreover, the required envelope condition holds since $\esssup_\theta \|D_f(\theta, \beta)\|_2 \leq n\vbar $ for all $D_f \in \cF_2$, and so $\cF_2$ is Donsker by \cref{lm:vc_to_donsker}. We conclude the proof of \cref{lm:verifying_VC}.

    Finally, to see $\cF_3$ is VC-subgraph and Donsker, rewrite its functions as 
    $    \indi \{ \max_A \betai\vithe = \max_{[n]} \betai \vithe \}
        = \prod_{k=1}^n (1 - \prod_{i \in A} \indi (\betai\vithe < \beta_k v_k(\t)))
    $
    and apply similar arguments as above. 
\end{proof}

\section{Proofs for Main Theorems in FPPE}

\subsection{Proof of \cref{thm:fppe_as_convergence}}
\label{sec:proof:thm:fppe_as_convergence}
The convergence $\betagam \toas \betast$ follows the same proof as \cref{thm:consistency} and is ommitted.
To show almost sure convergence of revenue, we note 
\begin{align*}
    & \Big|\frac1t \sumtau \ptau -  \int_\Theta p^*(\theta) s(\theta) \diff \theta\Big|
    \\
    & \leq \frac1t \sumtau |\max_i \{ \vithetau \beta^\gam_i\} - \max_i \{ \vithetau \beta^*_i \} | + 
    \Big| \frac1t \sumtau  \max_i \{ \vithetau \beta^*_i \}  -  \int_\Theta p^*(\theta) s(\theta) \diff \t \Big|
    \\
    & \leq  \vbar \| \beta^\gam - \betast\|_\infty +    
    \Big|\frac1t \sumtau  \max_i \{ \vithetau \beta^*_i \}  -  \int_\Theta p^*(\theta) s(\t)\diff \t\Big|
    \toas 0 \;.
\end{align*}
Here the first term converges to zero a.s.\ by $\beta^\gam \toas \betast$, and the second term converges to $0$ a.s.\ by strong law of large numbers and noting $\E[\max_i \{v_i(\theta)\beta^*_i \}] = \E[p^*(\theta)] = \REVst$.

\subsection{Proof of \cref{thm:rev_convergence}}
\label{sec:proof:thm:rev_convergence}

\begin{proof}[Proof of \cref{thm:rev_convergence}]

    First, by \citet{gao2022infinite}, there is a natural lower bound for the equilibrium pacing multipliers.
    To lower bound $\betasti$, 
    note
    $\betasti = b_i / \usti = b_i / (\deltasti + \musti) \geq b_i / (b_i + \int v_i s \diff \t) = b_i / (b_i + \nu_i)$.
    Then $b_i / (b_i + \nu_i) \leq \betasti \leq 1$. 
    Define the set
    \begin{align*}
        C_\FPPE \defeq \prod_{i=1}^n \bigg[\frac{b_i}{2 \nu_i + b_i}, 1 \bigg]
        \;.
    \end{align*}
    Clearly we have $\betast \in C_\FPPE$. Furthermore, for $t$ large enough $\betagam \in C_\FPPE$ with high probability.
    To see this, if $t$ satisfies $t \geq 2 (\vbar / \min_i \nui) \sq \log(2n/\eta)$, then $\frac1t \sumtau \vithetau \leq 2 \E[v_i(\theta)]$ for all $i$ with probability $\geq 1-\eta$. By a bound on $\betagam$ in the QME
    $ 
        \betagami \geq \frac{b_i}{b_i +  \frac1t \sumtau \vithetau}
    $
    (see Section 6 in \citet{gao2022infinite}), we obtain $\betagami \geq \frac{b_i}{b_i + 2\nu_i}$ (recall $\nu_i = \E[v_i(\theta)]$).

    Let $L_\FPPE$ and $\lambda_\FPPE$ be the Lipschitz constant and strong convexity constants of $H$ and $H_t$ w.r.t\ $\ell_\infty$-norm on $C_\FPPE$.
    We estimate $L_\FPPE$ and $\lambda_\FPPE$. On $C_\FPPE$, the minimum eigenvalue of $\nabla\sq\Psi  (\beta )= \Diag\{ \frac{b_i}{(\betai)\sq} \}$ can be lower bounded by $\ubar{b}$. So we conclude $\lambda_\FPPE = \ubar{b}$.
    And the Lipschitzness constant can be seen by the following. For $\beta,\beta' \in C_\FPPE$,
    \begin{align*}
        & |H_t(\beta)  - H_t(\beta')| 
        \\
        & \leq \frac1t \sumtau \big|\max_i \{\vithetau \beta_i \} - \max_i \{\vithetau \beta_i' \}\big| + 
        \sumiton b_i \big| \log \beta_i -  \log \beta_i'\big|
        \\
        & \leq \vbar \| \beta - \beta'\|_\infty + \sumiton b_i \cdot \frac{1}{b_i / (2\nui + b_i)} |\beta_i - \beta_i'|
        \\
        & \leq \big(\vbar + 2\nubar n + 1\big) \| \beta - \beta'\|_\infty
        \;,
    \end{align*}
    where we used $\sumi b_i =1$.
    Similar argument shows that $H$ is also $(\vbar + 2\nubar n + 1)$-Lipschitz on $C_\FPPE$. 
    We conclude $L_\FPPE =(\vbar + 2\nubar n + 1)$.

    To obtain the convergence rate, we simply repeat the proof of \cref{thm:high_prob_containment}. 
    We obtain from \cref{eq:final_t_bound} that with probability $\geq 1-2\alpha$,
    there exists a constant $c'$ such that
    as long as 
    \begin{align}
        \label{eq:t_bound_in_quasilinear_case}
        t & \geq c'\cdot  L_\FPPE \sq  \min \bigg\{ \frac{1}{ \lambda_\FPPE \epsilon} , \frac{1}{\epsilon\sq} \bigg\} \cdot \bigg(n\log\Big(\frac{16L_\FPPE}{\eps - \delta}\Big) + \log \frac1\alpha\bigg)
        \;,
    \end{align}
    it holds $|H(\betagam)- H(\betast)| < \epsilon$ and that $\betagam \in C_\FPPE$ (see \cref{cor:H_concentration}).
    Now \cref{eq:t_bound_in_quasilinear_case} shows that for $ \epsilon < \ubar{b}$ 
    (so that the $1/(\lambda_\FPPE \epsilon)$ term in the min becomes dominant) we have 
    \begin{align*}
        |H(\betagam) - H(\betast)| = 
        \tilde{O}_p
        \bigg(
            \frac{n\big(\vbar + 2\nubar n + 1\big)\sq  }{\ubar{b} t}
        \bigg)
        \;,
    \end{align*}
    where we use $\tilde{O}_p$ to ignore logarithmic factors of $t$. Moreover, 
    \begin{align*}
        \| \betagam - \betast\|_\infty \leq  
        \| \betagam - \betast\|_2
        \leq \sqrt{2|H(\betagam) - H(\betast)|  / \lambda_\FPPE}
        = \tilde O_p\bigg(\frac{\sqrt{n} \big(\vbar + 2\nubar n + 1\big)}{\ubar{b} \sqrt t}\bigg)
        \;.
    \end{align*}
    From here we obtain
    \begin{align*}
        & |\REV^\gam - \REV\st| 
        \\
        & \leq \vbar \| \betagam - \betast\|_\infty +\Big|\frac1t \sumtau  \max_i \{ \vithetau \beta^*_i \}  -  \int_\Theta p^*(\theta) s (\theta ) \dt \Big|
        \\
        & = \tilde O_p\bigg(\frac{\vbar \sqrt{n} \big(\vbar + 2\nubar n + 1\big)}{\ubar{b} \sqrt t}\bigg) + O_p \bigg(\frac{\vbar}{\sqrt{t}}\bigg)
        \\
        & = \tilde O_p\bigg(\frac{ \vbar \sqrt{n} \big(\vbar + 2\nubar n + 1\big)}{\ubar{b} \sqrt t}\bigg)
        \;.
    \end{align*}
    We conclude
    $ |\REV^\gam - \REV\st| = 
   \tilde{O}_p
   \Big(\frac{\vbar \sqrt{n} (\vbar + 2\nubar n + 1 ) }{ \ubar{b} \sqrt{t}}\Big)
   $. This completes the proof of \cref{thm:rev_convergence}.
\end{proof}

\subsection{Proof of \cref{thm:clt}} \label{sec:proof:thm:clt}
\begin{proof}[Proof of \cref{thm:clt}]

    The proof of \cref{thm:clt} proceeds by showing that FPPE satisfy a set of regularity conditions that are sufficient for asymptotic normality~\citep[Theorem 3.3]{shapiro1989asymptotic}; the conditions are stated in \cref{thm:the_shapiro_thm} in the appendix. Maybe the hardest condition to verify is the so called stochastic equicontinuity condition (\cref{it:stoc_equic}), which we establish with tools from the empirical process literature. 
    In particular, we show that the class of functions, parameterized by a pacing multiplier vector $\beta$, that map each item to its corresponding first-price auction allocation under the given $\beta$, is a VC-subgraph class. This in turn implies stochastic equicontinuity. 
    \cref{as:scs} is used to ensure normality of the limit distribution.

    We verify all the conditions in \cref{thm:the_shapiro_thm}. Recall $I_= = \{i:\betasti = 1\}$ is the set of active constraints. The local geometry of $B$ at $\betast$ is described by the $|I_=|$ constraint functions $g_i(\beta) = e_i\tp \beta - 1$, $i\in I_=$.

    First, we verify the conditions on the probability distribution and the objective function. 
    A1 holds obviously for the map $\beta \mapsto \max_i \betai v_i(\theta)$. 
    A2 holds by $f \leq \vbar$.
    A4 holds with Lipschitz constant $\vbar$.
    A5 holds since by \cref{as:smo} there is a neighborhood $\Ndiff$ of $\betast$ such that for all $\beta \in \Ndiff$, the set $\{\theta: f(\theta,\cdot) \text{ not differentiable at $\beta$} \}$ is measure zero. 
    A6 holds by $\|\nabla f(\theta,\beta)\|_2\leq  n \vbar$.
    B4 holds by \cref{as:smo}.

    Second, we verify the conditions on the optimality. 
    B3 holds since the constraint functions are $g_i(\beta) = e_i\tp \beta - 1$, $i\in I_=$, whose gradient vectors are obviously linear independent. Moreover, the set $\{\beta: \beta_i >0, i\in I_=\}$ is nonempty.
    B5 holds by $\nabla\sq H(\betast) = \nabla\sq\fbar (\betast) + \Diag(b_i/\betasti\sq) \succcurlyeq \Diag(b_i/\betasti\sq)$ being positive definite.

    Finally, we verify the stochastic equicontinuity condition. 
    Recall the definitions of the following two function classes from \cref{lm:verifying_VC} 
    \begin{align*}
        \cF_1 &= \{ \theta \mapsto  \max_i \{\beta_1v_1(\theta), \dots, \beta_n v_n(\theta)\}  :\beta \in B \},
        \\
        \cF_2 &= \{ \theta \mapsto  D_f (\theta,\beta): \beta \in B \}.
        \end{align*}    
        Here $B=[0,1]^n$.
    For any $\beta\in \Ndiff$ we have $\nabla f(\cdot , \beta) = D_f(\cdot,\beta) \in \cF_2$. In \cref{lm:verifying_VC} we show that $\cF_2$ is VC-subgraph and Donsker. By \cref{lm:vc_to_donsker} we know that a stochastic equicontinuity condition w.r.t.\ the $L^2$ norm holds, i.e., 
    \begin{align} 
        \label{eq:nabla_f_se}
        \sup_{\beta \in [\delta_t]_{L^2}} \nu_t ( D_f(\cdot, \beta) - D_f(\cdot, \betast) ) = o_p(1)
    \end{align}
    where 
    $[\delta_t]_{L^2}  = \{ \beta: \beta \in \Ndiff, \int \|D_f(\cdot, \beta) - D_f(\cdot, \betast)\|_2 \sq s \dt \leq \delta_t\}$, 
    $\nu_t D_f = \invtroot (P_t - P) D_f= \invtroot \sumtau( D_f(\thetau) - \int D_f s \dt)$.
    Next, we note for (almost every) fixed $\theta$, $ \lim_{\beta \to \betast}\|D_f(\theta , \beta)  - D_f(\theta , \betast) \|\sq =0$ by $\Thetatie(\betast)$ is measure zero (a condition implied by \cref{as:smo}). 
    Moreover, note 
    \begin{align*}
       \lim_{\beta \to \betast} \E[\|D_f(\theta , \beta)  -D_f(\theta , \betast) \|_2 \sq ] = 
       \E \big[ \lim_{\beta \to \betast}\|D_f(\theta , \beta)  - D_f(\theta , \betast) \|_2\sq \big] = 0
    \end{align*}
    where the exchange of limit and expectation is justified by bounded convergence theorem, and
    by \cref{lm:ltwo_to_parameter}, we can replace $[\delta_t]_{L^2}$ with $ [\delta_t] = \{\beta: \beta \in \Ndiff, \|\beta - \betast\|_2\leq \delta_t\}$ in \cref{eq:nabla_f_se}. Finally, note $\nabla  \fbar (\betast) = \E[D_f(\theta,\betast)]$, and 
    if $H_t$ is differentiable at $\beta \in \Ndiff$, then $\nabla f (\thetau,\beta) = D_f(\thetau, \beta)$ for all $\tau\in[t]$. Then 

    \begin{align}
        & \sup_{[\delta_t] \cap \{ \nabla H_t(\beta)\text{ exists} \}} \frac{\| (\nabla H_t - \nabla H) (\beta) - (\nabla H_t - \nabla H) (\betast) \| _ 2 }{\invtroot +  \|\beta - \betast\|_2 } 
        \notag
        \\ & = 
        \sup_{[\delta_t] \cap \{ \nabla H_t(\beta)\text{ exists}  \} } \frac{\| (P_t - P) D_f(\cdot,\beta) - (P_t - P) D_f(\cdot,\betast) \| _ 2 }{\invtroot +  \|\beta - \betast\|_2 } 
        \\
        & \leq 
        \sup_{[\delta_t]} \sqrt{t}  \| (P_t - P) D_f(\cdot,\beta) - (P_t - P) D_f(\cdot,\betast) \| _ 2 
        = o_p(1)  \tag{by \cref{eq:nabla_f_se}}
    \end{align}
    and thus the required stochastic equicontinuity condition holds. 

    Now we are ready to invoke \cref{thm:the_shapiro_thm}. We need to find the 
    three objects, $C, q, \zeta_t$ as in the lemma that characterize the limit distribution.
    The critical cone $C$ is 
    \begin{align}
        C &= \{w \in \Rn: w\tp e_i = 0 \text{ if $i\in I_=$ and $\deltasti > 0$},\quad 
        w \tp e_i \leq  0 \text{ if $i \in I_=$ and $\deltasti = 0$} \}
        \notag
        \\
       & = \{ w: Aw = 0\} \tag{\cref{as:scs}}
    \end{align}
    where $A \in \R^{|I_=| \times n}$ whose rows are $\{ e_i \tp, i \in I_=\}$. From here we can see the role of \cref{as:scs} is to ensure the 
    critical cone is a hyperplane, which ensures asymptotic normality of $\betagam$.

    If $|I_=| = 0$, i.e., $\betast$ lies in the interior of $B$, then $\cP$ is identity matrix, and the limit distribution is ture. 

    Now assume $|I_=| \geq 1$.
    Note $A A\tp $ is an identity matrix of size ${|I_=|}$ and $A\tp A 
    = \Diag(\indi(i \in I_=))  = \Diag(\indi (\betasti = 1))$.
        The optimal Lagrangian multiplier is unique and so the piecewise quadratic function $q$ is $q(w) = w\tp \cH w$. Finally, the gradient error term is 
    \begin{align}
        \zeta_t = \frac1t \sumtau\big( \mutau - \mubarst \big).
    \end{align}
    The unique minimizer of $w\mapsto \frac12 w\tp \cH w + \zeta w$ over $\{w:Aw=0\}$ is $
    - (\cP \cH \cP)^\dagger  \zeta$ where 
    \begin{align*}
        \cP = I_n - A\tp (A A\tp ) ^{\dagger} A = \Diag( \indi(i \in \Icc))= \Diag(\indi (\betasti < 1)).
    \end{align*}
    For completeness, we provide details for solving this quadratic problem. 
    By writing down the KKT conditions, the optimality condition is 
    \begin{align*}
        \begin{bmatrix}
            \cH & A\tp 
            \\
            A & 0
        \end{bmatrix} 
        \begin{bmatrix}
            w 
            \\
            \lambda 
        \end{bmatrix}
        =
        \begin{bmatrix}
            -\zeta 
            \\
            0
        \end{bmatrix}
        \implies 
        \begin{bmatrix}
            w 
            \\
            \lambda 
        \end{bmatrix} =
        \begin{bmatrix}
           - (\cH\inv - \cH\inv A\tp (A\cH\inv A\tp)\inv A \cH \inv) \zeta 
            \\
           - ((A\cH\inv A\tp)\inv A \cH \inv) \zeta
        \end{bmatrix}
    \end{align*}
    where $\lambda \in \R^{|I_=|}$ is the Lagrangian multiplier.
    By a matrix equality, for any symmetric positive definite $\cH$ of size $n$ and $A\in \R^{|I_=|\times n}$ of rank $|I_=|$, it holds
    \begin{align}
        \cH\inv - \cH\inv A\tp (A\cH\inv A\tp)\inv A \cH \inv =  P_A ( P_A \cH  P_A)^\dagger  P_A = ( P_A\cH P_A)\pinv
    \end{align}
    with $ P_A = I_n - A\tp (A A\tp ) ^{\dagger} A$.
    We conclude that the asymptotic expansion 
    \begin{align}
        \label{eq:expansion_beta}
        \sqrt{t} (\betagam - \betast) =  \frac{1}{\sqrt t} \sumtau D_\beta(\thetau) + o_p(1)
    \end{align}
    holds, where $$D_\beta(\theta) = - (\cP \cH \cP)^\dagger  (\must - \mubarst ) ,$$ and that the asymptotic distribution of $\sqrt{t}(\betagam - \betast)$ is $\cN(0, 
    \Sigma_\beta)$ with $\Sigma_\beta = \E[D_\beta D_\beta \tp] $. Note $\E[D_\beta] = 0$.


    \emph{Proof of $\betagam \toprob \betast$.} This follows from \cref{thm:the_shapiro_thm}. 

    \emph{Proof of Asymptotic Distribution for pacing multiplier $\beta$.} This follows from the above discussion.

    \emph{Proof of Asymptotic Distribution for revenue $\REV$.} 
    We use a stochastic equicontinuity argument.
    Given the item sequence $\gam = (\theta^1, \theta^2 \dots )$, define the (random) operator 
    \begin{align*}
    \nu_t g = \sqrt t (P_t - P)g = \frac{1}{\sqrt t} \sumtau ( g(\thetau) - \E[g]) 
    \;.
    \end{align*}
    where $g:\Theta \to \R$, $ \E[g]=\int g s \dt$.
    Note $\pst(\theta) = \max_i \betasti \vithe = f(\theta,\betast)$, $\REV^* = P f(\cdot ,\betast)$, 
    $\ptau = f(\thetau, \betagam)$ and 
    $\REV^\gam = P_t f(\cdot, \betagam)$ we obtain the decomposition
    \begin{align*}
     \sqrt t (\REV^\gam -  \REV^* ) =
     \underbrace{ \frac{1}{\sqrt t} \sumtau \big( f(\thetau, \betast) - \fbar (\betast) \big) }_{ = : \I_t}
     + 
     \underbrace{    \nu_t (f(\cdot , \betagam  ) - f( \cdot,\betast ))
     }_{ = : \II_t }
     \\
     + 
     \underbrace{ \sqrt{t}( \bar{f} (\betagam) - \bar{f} (\betast) )}_{=: \III_t}
    \end{align*}
    
    For the term $\I_t$, it can be written as
$
        \I_t = \nu_t ( \pst(\cdot) - \REV^* )
$.
    By the linear representation for $\betagam - \betast$ in \cref{eq:expansion_beta}, applying the delta method, we get the linear representation result 
    \begin{align*}
        \III_t  = \frac{1}{\sqrt t} \sumtau \nabla \fbar(\betast) \tp D_\beta(\thetau) + o_p(1)
        =  \frac{1}{\sqrt t} \sumtau (\mubarst)  \tp D_\beta(\thetau) + o_p(1) 
    \end{align*}

    We will show $\II_t= o_p(1)$. The difficulty lies in that the operator $\nu_t$ and the pacing multiplier $\betagam$ depend on the same batch of items. This can be handled with the stochastic equicontinuity argument. The desired claim $\II_t = o_p(1)$ follows by verifying that the function class $\cF_1 = \{ \theta \mapsto  f(\cdot, \beta ) :\beta \in B \}$ (same as that defined in \cref{lm:verifying_VC}) is  VC-subgraph and Donsker. This is true by \cref{lm:verifying_VC}.
    By \cref{lm:donsker_to_SE} we know for any $\delta_t \downarrow 0$,
    \begin{align}
        \sup_{w \in [\delta_t]_{L^2}} \nu_t ( f(\cdot, w) - f(\cdot, \betast) ) = o_p(1)
        \label{eq:f_donsker}
    \end{align}
    where $[\delta_t]_{L^2} = \{ \beta :\beta  \in B, \int (f(\cdot, \beta ) - f(\cdot, \betast))\sq s \dt \leq \delta_t\}$. Noting that for all $\beta, w, \theta$, it holds $|f(\theta, \beta) - f(\theta,w)| \leq \vbar \|\beta -w\|_\infty$, we know that $\int\left[f(\cdot, \beta)-f\left(\cdot, \betast \right)\right]^2  s \dt \rightarrow 0 $ as $ \beta \rightarrow \betast$. Then by \cref{lm:ltwo_to_parameter}, we know \cref{eq:f_donsker} holds with $[\delta_t]_{L^2} $ replaced with $[\delta_t] = \{ \beta: \beta \in B,  \|\beta  - \betast\|_2 \leq \delta_t\} 
    $. Combined with the fact that $\betagam \toprob \betast$, by \cref{lm:se} we know $\II_t = o_p(1)$.
    
    To summarize, we obtain the linear expansion
    \begin{align}
        \label{eq:expansion_rev}
  \sqrt t (\REV^\gam -  \REV^* ) 
 =
    \frac{1}{\sqrt t} \sumtau ( 
    \pst(\thetau) - \REV^*
    + (\mubarst) \tp D_\beta(\thetau)
    ) + o_p(1) .
    \end{align}
    Let $D_\REV(\t) =     \pst(\thetau) - \REV^*
    + (\mubarst) \tp D_\beta(\thetau)
     $.

    We complete the proof of \cref{thm:clt}.
\end{proof}

\subsection{Proof of \cref{cor:set_I_estimate}}
\begin{proof}[Proof of \cref{cor:set_I_estimate}]
    
    For $i$ such that $\betasti = 1$, we know $\betagami - 1 = o_p(\frac{1}{\sqrt t})$ holds. Then 
    $ 
        \P( \betagami < 1-\varepsilon_t ) = \P( o_p(1) > \sqrt{t} \varepsilon_t)  \to 0,
    $
    using the smoothing rate condition $ \sqrt{t} \varepsilon_t\to c \in (0,\infty] $.
    For $i$ such that $\betasti < 1$, we know $\betagam - \betasti = o_p(1)$ by consistency of $\betagam$. Then 
    $ 
        \P( \betagami < 1-\varepsilon_t ) = \P (o_p(1) < (1-\betasti)  - \varepsilon_t) \to 1
    $
    by $\varepsilon_t = o(1)$ and $1-\betasti > 0$.
\end{proof}

\subsection{Proof of Normality of NSW, $u$, and $\delta$ in FPPE}  \label{sec:full:cor:clt_u_and_nsw}

We define NSW and utility in limit FPPE.
We use $(. )^*$ to denote limit FPPE quantities. 
Given an FPPE $(\betast, \pst)$, we define by
\begin{align}
    &\deltasti \defeq b_i - \int \pst  s\xsti \dt \; , \quad \mubarsti \defeq \int v_i s\xsti \dt \;,
    \notag
    \\
    &\usti \defeq \mubarsti + \deltasti \;.
\end{align}
the \emph{leftover budget}, the \emph{item utility} and the \emph{total utility} of buyer $i$. 
Let $\deltast, \mubarst, \ust$ be the vectors that collect these quantities for all buyers.
In FPPE it holds that 
\begin{align*}
    \usti = b_i / \betasti.
\end{align*}

We next define these quantities in finite FPPE.
To emphasize dependence on the item sequence $\gamma$, we use $(.)^\gamma$ to denote equilibrium quantities in $\oFPPE(b,v,1/t, \gamma)$. 
We let $(\betagam,\pgam)$ be an observed FPPE with $\xgam = (\xgam_1,\dots)$. 
The leftover budget $\delta^\gam_i \defeq b_i- \sigma \sum_\tau \pgamtau \xgamtaui$, item utility $\mubar_i \defeq \sigma \sumtau \vitau \xgamtaui $ and total utility $\ugami \defeq \delta^\gam_i + \mubar_i $ are defined similarly. Let $\deltagam, \mubar, \ugam$ be the vectors that collect these quantities for all buyers. The observed revenue is $\REVgam \defeq \sigma \sumtau \ptau$, and NSW is $\NSWgam \defeq \sumiton b_i \log \ugami$.

\begin{corollary}
    \label{cor:clt_u_and_nsw}

     Under the same conditions as \cref{thm:clt},
       $\sqrt{t} (\ugam - \ust) $, $\sqrt t(\deltagam - \deltast)$ and $\sqrt{t}(\NSW^\gam - \NSW^*) $
       are asymptotically normal with 
(co)variances
$\Sigma_u \defeq\allowbreak \Diag{({b_i}/({\betasti})\sq)} \allowbreak \Sigma_\beta  \allowbreak    \Diag({b_i}/({\betasti})\sq)$, 
$\Sigma_\delta \defeq \allowbreak  (I_n - \cH(\cH_B)\pinv) \allowbreak \Omega \allowbreak (I_n - \cH(\cH_B)\pinv) \tp$, and
$\sigma^2_\NSW \defeq\Vec(b_i/\betasti)\tp \allowbreak \Sigma_\beta \allowbreak\Vec(b_i / \betasti)$, respectively.
\end{corollary}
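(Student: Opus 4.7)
The plan is to prove each of the three asymptotic normality statements by reducing to the linear representation for $\sqrt{t}(\betagam - \betast)$ already established in \cref{thm:clt}, together with (where needed) a stochastic equicontinuity argument of the same flavor used in that proof.

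For the total utility $\ugam$, the key observation is that both in the limit and finite FPPE, the pacing identity $\ugami = b_i/\betagami$ and $\usti = b_i/\betasti$ holds. This follows by a case analysis: paced buyers ($\betagami < 1$) have $\deltagami = 0$ and exhaust their budget, giving $\mubari = b_i/\betagami$; unpaced buyers ($\betagami = 1$) have $\deltagami = b_i - \mubari$, and again $\mubari + \deltagami = b_i = b_i/\betagami$. So I would apply the delta method to the smooth map $g(\beta) = (b_1/\beta_1, \ldots, b_n/\beta_n)$, whose Jacobian at $\betast$ is $\Diag(-b_i/(\betasti)^2)$. Asymptotic normality of $\sqrt{t}(\ugam-\ust)$ with covariance $\Diag(b_i/(\betasti)^2)\,\Sigma_\beta\,\Diag(b_i/(\betasti)^2)$ follows immediately. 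The NSW case is essentially the same delta method applied to $\beta \mapsto \sum_i b_i \log(b_i/\betai)$, whose gradient at $\betast$ is $-\Vec(b_i/\betasti)$; squaring yields the claimed scalar variance.

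The leftover-budget case requires more work and will be the main obstacle. The starting point is the KKT identity $\deltast = -\nabla H(\betast)$ (\cref{lm:fppe_relation}), paired with its finite counterpart $\deltagam = -\nabla H_t(\betagam)$, which one verifies directly: $\partial_{\betai} H_t(\betagam) = \mubari - b_i/\betagami = -\deltagami$. Subtracting, I would write
\begin{align*}
\sqrt{t}(\deltagam-\deltast) \;=\; -\sqrt{t}\big(\nabla H_t(\betagam) - \nabla H_t(\betast)\big) \;-\; \sqrt{t}\big(\nabla H_t(\betast) - \nabla H(\betast)\big).
\end{align*}
The second term is $-\sqrt{t}\,\zeta_t = -\tfrac{1}{\sqrt{t}}\sum_\tau(\must(\thetau)-\mubarst) \to \cN(0,\Omega)$. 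For the first term, the plan is a Taylor-like expansion: since the gradient class $\{D_f(\cdot,\beta)\}$ is VC-subgraph and Donsker (\cref{lm:verifying_VC}), the stochastic equicontinuity bound already used in the proof of \cref{thm:clt} gives
\begin{align*}
\sqrt{t}\big(\nabla H_t(\betagam) - \nabla H_t(\betast)\big) \;=\; \sqrt{t}\big(\nabla H(\betagam) - \nabla H(\betast)\big) + o_p(1) \;=\; \cH\,\sqrt{t}(\betagam-\betast) + o_p(1),
\end{align*}
using twice continuous differentiability of $H$ at $\betast$ (\cref{as:smo}) for the second equality.

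Substituting the linear expansion $\sqrt{t}(\betagam - \betast) = -(\cH_B)^\dagger\sqrt{t}\,\zeta_t + o_p(1)$ from \cref{thm:clt} into the display above yields
\begin{align*}
\sqrt{t}(\deltagam-\deltast) \;=\; \big(\cH(\cH_B)^\dagger - I\big)\sqrt{t}\,\zeta_t + o_p(1),
\end{align*}
whence $\sqrt{t}(\deltagam-\deltast) \tod \cN\big(0,\,(I - \cH(\cH_B)^\dagger)\,\Omega\,(I - \cH(\cH_B)^\dagger)^\top\big)$, which matches $\Sigma_\delta$. The trickiest step is justifying the stochastic equicontinuity at the gradient level when $\betagam$ sits at the boundary of $(0,1]^n$ for the coordinates in $I_=$; here one either invokes the fast rate $\P(\betagami = 1) \to 1$ for $i \in I_=$ (\cref{cor:set_I_estimate}), which collapses the relevant coordinates, or applies the VC-subgraph stochastic equicontinuity on a shrinking neighborhood of $\betast$ exactly as in \cref{lm:se}.
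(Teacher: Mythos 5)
Your proof of the $u$ and $\NSW$ claims matches the paper exactly (delta method applied to $g(\beta)=(b_i/\beta_i)_i$ and $g(\beta)=\sum_i b_i\log(b_i/\beta_i)$ respectively), and your sign $\nabla g(\betast)=-\Vec(b_i/\betasti)$ in the NSW case is the correct one (the paper drops the minus sign there, but it cancels in the quadratic form).

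Your treatment of $\deltagam$ takes a genuinely different route. The paper's primary proof invokes Theorem 4.1 of \citet{shapiro1989asymptotic}, which gives a \emph{joint} CLT for the constrained $M$-estimator and its active Lagrange multipliers, then reads off the $\deltagam_{I_=}$ block and argues separately that the $\deltagam_{I_<}$ coordinates have zero limiting variance. Your argument is more elementary and self-contained: you exploit the exact finite-sample KKT identity $\deltagam=-\nabla H_t(\betagam)$ together with the population identity $\deltast=-\nabla H(\betast)$, decompose the difference, apply stochastic equicontinuity at the gradient level (already established in the proof of \cref{thm:clt}), linearize via $C^2$ smoothness, and substitute the influence-function expansion for $\betagam-\betast$. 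The algebra reproduces $\Sigma_\delta=(I-\cH(\cH_B)^\dagger)\Omega(I-\cH(\cH_B)^\dagger)^\top$ and handles both $I_=$ and $I_<$ coordinates uniformly (the block structure makes the last $|I_<|$ rows of $I-\cH(\cH_B)^\dagger$ vanish automatically). The paper explicitly acknowledges this route in one sentence (``An alternative proof is by the delta method and a stochastic equicontinuity argument'') without spelling it out; your proposal fills in exactly that detail. What the paper's citation-based route buys is brevity and a ready-made joint covariance matrix; what yours buys is that it avoids quoting an external theorem statement and makes the mechanism (and the origin of the $I-\cH(\cH_B)^\dagger$ projection factor) transparent.

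One small remark: your worry about stochastic equicontinuity near the boundary for the $I_=$ coordinates is not really an obstacle. The VC-subgraph/Donsker argument in the paper operates on a shrinking $\ell_2$-ball around $\betast$ inside $\Rnpp$ and is indifferent to whether $\betagam$ touches the face $\{\beta_i=1\}$; the fast-rate collapse $\P(\betagami=1)\to 1$ is a nice sanity check but is not needed for the equicontinuity step itself.
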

\begin{proof}[Proof of \cref{cor:clt_u_and_nsw}]

    \emph{Proof of Asymptotic Distribution for individual utility $u$.} We use the delta method; see Theorem 3.1 from \citet{van2000asymptotic}. Note $ \ust = g(\betast)$ with $g: \Rn \to \Rn, g(\beta) = [b_1/\beta_1,\dots, b_n / \beta_n] \tp$. By \cref{eq:expansion_beta}, it holds 
    \begin{align}
        \label{eq:expansion_u}
        \sqrt{t} (\ugam -\ust ) =  \frac{1}{\sqrt t} \sumtau \nabla g(\betast) D_\beta (\thetau) + o_p(1) .
    \end{align}
    Finally, noting $\nabla g(\betast) = \Diag(-b_i / (\betasti)\sq )$ we complete the proof.

    \emph{Proof of Asymptotic Distribution for Nash social welfare $\NSW$.} We use the delta method. Note $ \NSW^* = g(\betast)$ with $g: \Rn \to \R, g(\beta) =  \sumiton b_i \log(b_i / \beta_i) $. By \cref{eq:expansion_beta} it holds 
    \begin{align}
        \label{eq:expansion_nsw}
        \sqrt{t} (\NSW^\gam - \NSW^*) =  \frac{1}{\sqrt t} \sumtau \nabla g(\betast) \tp D_\beta (\thetau) + o_p(1) . 
    \end{align}
    Finally, noting $\nabla g(\betast) = \Vec(b_i / \betasti)$.

\emph{Proof of Asymptotic Distribution for leftover budget $\delta$.}
This is a direct consequence of Theorem 4.1 in \citet{shapiro1989asymptotic}. By that theorem, it holds that 
\begin{align*}
    \sqrt{t} 
    \begin{bmatrix}
        \betagam - \betast 
        \\
        \delta^\gam_{I_=} - \delta^*_{I_=} 
    \end{bmatrix}
    \tod 
    \cN(0, 
    \Sigma_{\text{joint}})
\end{align*} 
with 
\begin{align*}
    \Sigma_{\text{joint}} = 
    \begin{bmatrix}
        \cH & A  \tp
        \\
        A &  0
    \end{bmatrix} \inv
    \begin{bmatrix}
       \Omega & 0
        \\
        0 &    0     \end{bmatrix}
    \begin{bmatrix}
        \cH & A \tp
        \\
        A & 0
    \end{bmatrix} \inv
    = \begin{bmatrix}
        (\cH_B)\pinv  \Omega (\cH_B)\pinv &   
        [Q \Omega (\cH_B)\pinv  ]\tp 
        \\
        Q \Omega (\cH_B)\pinv  
        & 
        Q \Omega Q\tp  
    \end{bmatrix}
\end{align*}
where $A \in \R^{|I_=| \times n}$ whose rows are $\{ e_i \tp, i \in I_=\}$, $Q =  (A\cH\inv A\tp)\inv A \cH \inv \in \R^{|I_=|\times n}$ and $\Omega = \cov (\must)$.
By a matrix equality, noting matrix $A$'s rows are distinct basis vectors, it holds 
\begin{align*}
    (A\cH\inv A\tp)\inv A \cH \inv  = A (I_n - \cH (\cH_B)\pinv)
\end{align*}
Moreover, for other entries of $\deltagam$, i.e., $\deltagam_{\Ic}\,$, their asymptotic variance will be zero.
The matrix $(I_n - \cH (\cH_B)\pinv) \Omega (I_n - \cH (\cH_B)\pinv)\tp$ is zero at the $(i,j)$-th entry if $i$ or $j\in \Ic$.
Summarizing, the asymptotic variance of $\sqrt t(\deltagam - \deltast)$ is $(I_n - \cH (\cH_B)\pinv) \Omega (I_n - \cH (\cH_B)\pinv)\tp$.

An alternative proof is by the delta method and a stochastic equicontinuity argument. To summarize, it holds $\sqrt t ({\deltagam - \deltast}) \tod \cN(0, (I_n - \cH(\cH_B)\pinv) \Omega (I_n - \cH(\cH_B)\pinv) \tp )$ and the linear expansion 
\begin{align} \label{eq:delta_variance}
    \sqrt t (\deltagam - \deltast) = \invtroot \sumtau (I_n - \cH (\cH_B)\pinv) (\must (\thetau)- \mubarst) + o_p(1)
\end{align}
holds. In the case where $I_= = \emptyset$, i.e., $\deltasti = 0$ for all $i$, we have $\cH_B =\cH$ and so $I_n - \cH (\cH_B)\pinv = 0$.

We complete the proof of \cref{cor:clt_u_and_nsw}
\end{proof}

\subsection{Proof of Fast Convergence of Pacing Multipliers}

\begin{lemma} \label{lm:fast_beta_convergence}
    Let \cref{as:smo} and \cref{as:scs} hold. Then $\P(\betagami = 1)\to 1$ for all $i \in I_=$, and $\P(\deltagami = 0) \to 1$ for all $ i\in I_<$.
\end{lemma}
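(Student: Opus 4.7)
The plan is to reduce both statements to the no unnecessary pacing (NUP) condition in \cref{def:finite_fppe}, combined with the consistency results already established. Recall that NUP is equivalent to the two implications $\betagami < 1 \Rightarrow \deltagami = 0$ and, contrapositively, $\deltagami > 0 \Rightarrow \betagami = 1$.

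For $i \in I_<$, we have $\betasti < 1$ by definition, and by strong consistency $\betagam \toas \betast$ (\cref{thm:fppe_as_convergence}), so $\P(\betagami < 1) \to 1$. Applying the first form of NUP yields $\P(\deltagami = 0) \to 1$ immediately.

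For $i \in I_=$, we have $\betasti = 1$ and, by \cref{as:scs}, $\deltasti > 0$. The goal will be to show that $\deltagami \toprob \deltasti$; once this is in hand, $\P(\deltagami > 0) \to 1$, and the second form of NUP gives $\P(\betagami = 1) \to 1$. To prove consistency of $\deltagami$, I will use the identity $\deltagami = b_i - \betagami \mubar_i^\gamma$ valid in every finite FPPE (obtained by summing the winning-buyer prices $p^\tau = \betagami v_i^\tau$ over items won by $i$), which mirrors the limit identity $\deltasti = b_i - \betasti \mubarsti$. Since $\betagam \toprob \betast$ is already known, the only new ingredient is $\mubar_i^\gamma \toprob \mubarsti$. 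Writing $\mubar_i^\gamma$ as $P_t[\, v_i \cdot \indi(\betagami v_i = \max_k \betagamk v_k)\,]$ up to a measure-zero tie set, this is exactly the conclusion of \cref{thm:covnablaf_estimation} applied to the $i$-th coordinate of any gradient selection of $f$: under \cref{as:smo}, \cref{thm:first_differentiability} ensures that ties at $\betast$ form an $s$-measure zero set, so the integrand is continuous at $\betast$ almost surely, and a uniform law of large numbers over a neighborhood of $\betast$ combined with $\betagam \toprob \betast$ gives the required limit.

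The main (and only nontrivial) obstacle is the consistency of $\mubar_i^\gamma$ for $i \in I_=$, since the indicator $\indi(\betai v_i = \max_k \betak v_k)$ is discontinuous in $\beta$ wherever ties occur; the SMO assumption is what tames this discontinuity probabilistically. Alternatively, one may simply invoke \cref{cor:clt_u_and_nsw}, whose asymptotic normality of $\deltagam$ already implies $\deltagam \toprob \deltast$, but the direct route above is self-contained and avoids the full CLT machinery.
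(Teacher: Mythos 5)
Your proof is correct, and the final step (consistency of $\deltagami$ for $i\in I_=$, plus \cref{as:scs} giving $\deltasti>0$, plus the no-unnecessary-pacing / complementary-slackness implication) matches the paper's reasoning exactly. The divergence is in how $\deltagami \toprob \deltasti$ is established. The paper invokes the asymptotic linear expansion \cref{eq:delta_variance} (a byproduct of its CLT for the leftover-budget vector, derived in the proof of \cref{cor:clt_u_and_nsw}), which delivers the stronger statement $\deltagami - \deltasti = O_p(t^{-1/2})$ and therefore a fortiori $o_p(1)$. You instead use the algebraic identity $\deltagami = b_i - \betagami\,\mubar^\gamma_i$, strong consistency $\betagam\toas\betast$ from \cref{thm:fppe_as_convergence}, and the uniform-LLN result \cref{thm:covnablaf_estimation} (which already handles the tie issue by working with an arbitrary subgradient selection of $f$). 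Your route is more elementary in that it only needs consistency and not the constrained-CLT machinery, at the cost of an extra algebraic reduction; the paper's route is shorter because that machinery has already been paid for. Either way the logical content is the same, and you also correctly note the $I_<$ case follows immediately from consistency of $\betagam$ and the contrapositive of NUP, just as the paper intends by its ``follows similarly.''

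One small point worth tightening if you were to write this out in full: when you write $\mubar^\gamma_i = P_t[v_i\indi(\betagami v_i = \max_k\betagamk v_k)]$ ``up to a measure-zero tie set,'' the finite-market allocation can genuinely be fractional at tied items even though the pre-image of ties under the \emph{limit} multiplier is $s$-null, since $\betagam$ is data-dependent and may manufacture ties. The clean fix is exactly what you gesture at afterwards: treat $\mu^\tau$ as an arbitrary measurable selection from $\partial_\beta f(\theta^\tau,\betagam)$ and invoke \cref{thm:covnablaf_estimation} directly, rather than rewriting $\mubar^\gamma_i$ in indicator form.
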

\begin{proof}[Proof of \cref{lm:fast_beta_convergence}]
   \cref{eq:delta_variance} implies $\deltagam_i - \deltasti = o_p(1)$ for all $i\in I_=$. \cref{as:scs} implies $\deltasti > 0$ for all $i \in I_=$. Combining the two gives $\P(\deltagami > 0, \forall i \in I_=) \to 1$, which implies the desired claim by the strict complementarity between $\betagam \leq 1$ and $\deltagam \geq 0$.
   The proof of $\P(\deltagami = 0) \to 1$ follows similarly.
\end{proof}

\subsection{Proof of \cref{thm:rev_localopt}}
\label{sec:proof:rev_local_asym_risk}

\begin{proof}[Proof of \cref{thm:rev_localopt}]
    According to the discussion in \cref{sec:proof:thm:nsw_aym_risk}, 
    we calculate the derivative of the map $\alpha \mapsto \REVst_{\alpha,g}$ at $\alpha=0$.

    For a given perturbation ${(\alpha,g)}$, we let $\pst_{\alpha,g}$ and $\REVst_{\alpha,g}$ be the limit FPPE price and revenue under supply distribution $s_{\alpha,g}$.
    Let $S_{\alpha,g}(\theta) = \nabla _\alpha \log s_{\alpha,g}(\theta)$ be the score function. 
    So $\nabla_\alpha s_{\alpha,g} = s_{\alpha,g} S_{\alpha,g}$
    and $\int S_{\alpha,g} s_{\alpha,g} \diff \t = 0$.
    Obviously with our parametrization of $s_{\alpha,g}$ we have $S_{0,g}(\theta) = g(\theta)$ by \cref{eq:perturbed_is_roughly_expo}.
    We next find the derivative of $\alpha\mapsto  \REVst_{\alpha,g} $ at $\alpha=0$.
    Recall $f$ is defined as $f(
        \theta,
        \beta) = 
        \max_i 
        \vithe \betai
    $ and the price is produced by the highest bid, i.e., $\pst_{\alpha,g}(\theta) = \max_i \betast_{\alpha,g}\vithe = f(\theta,\betast_{\alpha,g})$.
    \begin{align}
        & \nabla_ \alpha \REVst_{\alpha,g} = \nabla_ \alpha \int f(\theta,\betast_{\alpha,g}) s_{\alpha,g}(\theta)\diff \theta
        \notag 
        \\
        & = \int [\nabla_\beta f(\theta, \betast_{\alpha,g})\nabla_ \alpha \betast_{\alpha,g} + f(\theta,\betast_{\alpha,g}) S_{\alpha,g}(\theta)] s_{\alpha,g}(\theta)\diff \theta
        \notag 
        \; .
    \end{align}
    Above we exchange the gradient and the expectation and then apply the chain rule.
    By a perturbation result by Lemma 8.1 and Prop.\ 1 from \citet{duchi2021asymptotic}, under \cref{as:smo} and \cref{as:scs}, 
    \begin{align*}
        \nabla_\alpha \betast_{\alpha,g}|_{\alpha = 0} = - (\cH_B)\pinv \Sigma_{\must , g } 
    \end{align*} with 
    $\Sigma_{\must , g }  = \E_s[(\must(\theta) - \mubarst)g(\theta)\tp] $.
    Plugging in $\E_s[\nabla_\beta f(\theta,\betast_{0,g})] = \mubarst$, $f(\theta,\betast_{0,g}) = \pst(\theta)$ and $S_{0,g} = g$,
    we obtain 
    \begin{align*}
        \nabla_\alpha\REVst _{\alpha,g} |_{\alpha= 0} 
        & =-(\mubarst)\tp (\cH_B)\pinv \Sigma_{\must, g}  + \E_s[ (\pst (\theta) - \REVst) g(\theta) ] 
        \\
        & = \E\big[ \big( -(\mubarst)\tp (\cH_B) \pinv (\must(\theta) - \mubarst) + (\pst(\theta) - \REVst)\big) g (\theta)\big] 
        \\
        & = \E[ D_\REV(\theta) g(\theta)]
        . 
    \end{align*}
    Now we have the two components required to invoke the local minimax result. 
    Given a symmetric quasi-convex loss $L:\R\to\R$, we recall the local asymptotic risk for any procedure $\{\hat r _t : \Theta^t \to \R\}$ that aims to estimate the revenue:
    \begin{align*}
       & \LAR_\REV ( \{\hat r_t \}) =
        \\ 
        & 
        \sup_{ g\in G_d, d\in \mathbb{N}}
        \lim_{c\to \infty}
        \liminf_{t \to \infty}
        \sup_{\|\alpha\|_2\leq \frac{c}{\sqrt t}}
        \E_{s_{\alpha,g} ^{\otimes t}}[L(\sqrt{t} (\hat r_t  - \REVst_{\alpha,g} ))] \;. 
    \end{align*} 
    Following the argument in \citet[Sec.\ 8.3]{duchi2021asymptotic} it holds 
    \begin{align*}
       \inf_{ \{\hat r_t\}} \LAR_\REV ( \{\hat r_t \}) \geq \E[L(\cN(0, \E[D_\REV\sq(\theta)]))] \;.
    \end{align*}
    We complete the proof of \cref{thm:rev_localopt}.
 
\end{proof}

\subsection{Proof of \cref{thm:variance_estimation}}
\label{sec:proof:thm:variance_estimation}
\begin{proof}[Proof of \cref{thm:variance_estimation}]

    \emph{Proof of $\hat\Sigma_\beta \toprob \Sigma_\beta$.}
    We first show $\hat \cH \toprob \cH$ by verifying conditions in \cref{lm:hessian_estiamtion}.
    All conditions are easy to verify except the stochastic equicontinuity condition. By \cref{lm:verifying_VC} we know the SE condition holds. We conclude $\hat \cH \toprob \cH$.
    Next we show $\P(\hat \cP = \cP ) \to 1$. 
    This follows from \cref{cor:set_I_estimate}.
    We conclude $\P(\hat\cP = \cP) \to 1$.
    We now show $(\hat \cP \hat \cH \hat \cP)\pinv  \toprob ( \cP\cH \cP)\pinv $. 
    Wlog let $I_= = [k]$, and $I_< = [k+1,\dots, n]$.
    For any $\epsilon > 0$,
    $ 
        \P( \| ( \hat \cP \hat \cH \hat \cP)\pinv  - ( \cP\cH \cP)\pinv \|_F > \eps) 
         \leq \P( \|( \hat \cP \hat \cH \hat \cP)\pinv  - ( \cP\cH \cP)\pinv \|_F > \eps, \hat \cP = \cP) + \P(\hat \cP\neq \cP)  
         = \P( \| [\hat \cH_{<} ] \inv  - [\cH_{<}]\inv \|_F > \eps) + \P(\hat \cP\neq \cP)  
        \to 0
    $
    by $\hat\cH \toprob \cH$.
    Next we show $ \cov(\must) = \cov(\nabla f(\t, \betast))$ can be consistently estimated by 
    $   \frac1t\sumtau ( \mutau - \mubar)
       ( \mutau - \mubar) \tp
    $,
    $\mu^\tau = [x_1^{\tau} v_1^\tau, \dots, x_n^{\tau} v_n^\tau]\tp$, and $\bar \mu= \frac1t \sumtau \mutau$.
    This follows by \cref{thm:covnablaf_estimation}.
    We rewrite $\hat\Sigma_\beta$ as 
    $
        \hat\Sigma_\beta =  (\hat \cP \hat \cH \hat \cP)\pinv (\frac1t \sumtau (\mutau - \mubar) \ot ) (\hat \cP \hat \cH \hat \cP)\pinv
        $
        which 
        converges in probability to 
        $
        ( \cP  \cH  \cP)\pinv \cov(\must)  ( \cP  \cH  \cP)\pinv = \Sigma_\beta
        $. 

    \emph{Proof of $\hat\sigma\sq_\REV \toprob \sigma\sq_\REV$.}  
    We rewrite 
    $
        \sigma\sq_\REV = 
        \underbrace{\E[(p\st - \REV\st)\sq ]}_{\I_t} 
        + \allowbreak
        \underbrace{(\mubarst)\tp  ( \cP  \cH  \cP)\pinv  \cov(\must) ( \cP  \cH  \cP)\pinv \mubarst }_{\II_t}
        + \allowbreak
        \underbrace{2 \E[ (p\st - \REV\st)(\must - \mubarst)] \tp ( \cP  \cH  \cP)\pinv \mubarst }_{\III_t} 
    $
    and 
    $
    \hat \sigma\sq_\REV  
    = \allowbreak 
    \underbrace{\frac1t\sumtau (\ptau - \REV^\gam)\sq}_{\hat\I_t} + 
    \underbrace{(\mubargam)\tp (\hat \cP \hat \cH \hat \cP)\pinv \hat\Omega  (\hat \cP \hat \cH \hat \cP)\pinv \mubargam }_{\hat \II_t}
     + \allowbreak
    \underbrace{2 \bigg(\frac1t \sumtau 
    (\ptau - \REV^\gam) (\mutau - \mubargam)\bigg) \tp (\hat \cP \hat \cH \hat \cP)\pinv  \mubargam }_{\hat \III_t}\;.
    $
    We have $\hat\I_t\toprob \I_t$ by invoking \cref{lm:verifying_VC}, applying a uniform LLN and using the fact that $\betagam \toprob \betast$. And $\hat \II_t\toprob \II$ holds by 
    $\mubar\toprob \mubarst$, $(\hat \cP \hat \cH \hat \cP)\pinv  \toprob ( \cP\cH \cP)\pinv$ and $\hat \Omega \toprob \Omega$, and applying 
    Slutsky's theorem.
    Finally, $\hat\III_t\toprob \III$ by $\cF_1 \cdot \cF_2$ is Donsker by \cref{lm:donsker_multiplication} and thus a uniform law of large number holds, and that $\betagam\toprob\betast$. 

    We complete the proof of \cref{thm:variance_estimation}.
    \end{proof}
\subsection{Proof of Simplified Estimation Results}
\label{sec:proof:simplified_var_est}

\begin{proof}[Proof of \cref{thm:simplified_rev_var}]
    
    \new{
    First we show $D_\REV(\t) = \pst(\t) - \mubarst \tp (\cP \cH \cP)\pinv \must(\t) = \tilde p^* (\t) $. 
    It suffices to show $(\betast - (\cP\cH\cP)\pinv \mubarst )\tp \must(\t) = \tilde p ^* (\t)$.
    The cases of  $I_< = \emptyset$ or $I_= =\emptyset$ can be handled easily. We assume both of them are nonempty. Also let $I_=  = [k]$, and $I_< = [k+1, \dots, n]$ without loss of generality.
    The finite inverse moment assumption on $\bidgap$ implies $\nabla\sq \E[f(\t,\betast)] \betast = 0$, and $\cH = \Diag(b_i / (\betasti)\sq)$.
    So $\betast = [1_k; \betast_<]$ where $\betast_<$ is the subvector corresponding to $I_<$. Let $\cH_<$ be the lower-right matrix of $\cH$ corresponding to $I_<$.
    \cref{lm:fppe_relation}, \cref{eq:beta_u_relation_fppe} and $\nabla\sq \E[f(\t,\betast)] \betast = 0$ imply $(\cP\cH\cP)\pinv \mubarst = [0_k; \betast_<]$.
    So $\betast - (\cP\cH\cP)\pinv \mubarst = [1_k; 0_{n-k}]$. And $[1_k; 0_{n-k}] \tp \must(\theta) = \sum_{i\in I_=} \mu^*_i(\theta) = \tilde p ^* $.
    The simplified formula for $\Sigma_\beta$ follows easily.

    Next we show $\hat\sigma\sq_{\REV, \textup{sim}} \toprob \sigma\sq_\REV$. 
    Suppose $I_= = \emptyset$, then $\P(\hat I_= = I_=) \to 1$ implies $\hat \sigma_\REV \sq \to 0$. 
    Now suppose $I_= \neq \emptyset$.
    In \cref{lm:verifying_VC} we have shown both $\cF_1$ and $\cF_3$ are VC-subgraph, and thus a uniform law of large number holds.
    Since $\P(\hat I_= = I_=) \to 1$, we analysis is carried out under the event $\hat I_= = I_=$.
    Let $f(\t,\b)$ be the function in $\cF_1$ and $g$ in $\cF_3$. Let $h(\t,\b) = f(\t,\b)g(\t,\b)$.
    Under the event $\hat I_= = I_=$, we have that $\hat \sigma \sq _\REV$ is the sample variance of $h(\thetau, \betagam)$, $\tau \in [t]$, and $\hat \sigma _\REV\sq$ is the variance of the random variable $h(\t, \betast)$.
    The consistency of the simplified revenue variance estimator follows by the uniform law of large number.
    The consistency of the simplified estimator for $\Sigma_\beta$ follows easily.
}
\end{proof}
\subsection{Proof of \cref{thm:clt_ab_testing}}\label{sec:proof:thm:clt_ab_testing}
\begin{proof}[Proof of \cref{thm:clt_ab_testing}]
By the EG characterization of FPPE, we know that $\betagam(1)$, the pacing multiplier of the observed FPPE $\oFPPE \big(\pi b, v(1), \frac{\pi}{t_1}, \gamma(1)  \big)
$, solves the following dual EG program
\begin{align}
    \min_{B} \frac{1}{t_1} \sum_{\tau = 1}^{t_1} \max_i \vithetau \betai - \sumiton b_i \log(\betai)
\end{align}

The major technical challenge is that the number of summands in the first summation is also random. 
Given a fixed integer $k$ and a sequence of items $( \theta^{1,1}, \dots, \theta^{1,k})$, define
    \begin{align*}
        & \beta^{\lin, k} (1) = \betast(1) +  \frac1{{k}} \sum_{\tau = 1}^{k} D_\beta(1, \theta^{1,\tau}), 
        \\ 
        & \beta^k (1)= \text{the unique pacing multiplier in $\oFPPE(b,v(1), k\inv, ( \theta^{1,1}, \dots, \theta^{1,k}))$}
    \end{align*}
    Here $D_\beta(1,\cdot) =  - (\cH_B(1))\pinv (\must(1,\cdot) - \mubarst(1))$ 
    where
$\cH_B(1), \must(1, \cdot)$ and $\mubarst(1)$ are the projected Hessian in \cref{def:cHB}, item utility function in \cref{eq:def:must}, and total item utility vector in \cref{eq:def:mubarst} in the limit market $\FPPE(b, v(1), s)$.
Note $\E[D_\beta(1,\cdot)] = 0$.
Note $\betagam(1) = \beta^{t_1}$ since scaling the supply and the budget at the same time does not change the equilibrium pacing multiplier.
We introduce the following asymptotic equivalence results:
\begin{lemma} \label{lm:asym_equivalence_beta}
     Recall $t_1 \sim \text{Bin}(\pi, t)$. If \cref{as:scs} and \cref{as:smo} hold for the limit market $\FPPE(b,v(1),s)$, then
     \begin{itemize}
         \item $ \sqrt{t} (\betagam(1) - \beta^{\lin, t_1}) = o_p(1)$ as $t \to \infty$.
         \item $ \sqrt{t} (\beta^{\lin, t_1} - \beta^{\lin, \lfloor \pi t \rfloor}) = o_p(1) $ as $t \to \infty$.
     \end{itemize}
  
     Here $\lfloor a \rfloor $ is the greatest integer less than or equal to $a \in \R$. 
     A similar result holds for the market limit $\FPPE(b,v(0),s)$ and the influence function $D_\beta(0, \cdot)$ is defined similarly.
\end{lemma}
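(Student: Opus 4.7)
\textbf{Proof plan for \cref{lm:asym_equivalence_beta}.} The overall strategy is to reduce both statements, which involve the random sample size $t_1\sim\text{Bin}(\pi,t)$, to results about deterministic sample sizes. The crucial structural fact is that, since the items $\theta^1,\theta^2,\dots$ are i.i.d.\ from $s$ and the treatment assignments $W^1,W^2,\dots$ are i.i.d.\ $\text{Ber}(\pi)$ independent of the items, the subsequence of treatment-$1$ items $(\theta^{1,k})_{k\geq 1}$ is again i.i.d.\ from $s$ and is \emph{independent} of $(W^\tau)_{\tau\geq 1}$, hence independent of $t_1$. Moreover, $t_1/t\toprob \pi$ and $t_1-\pi t=O_p(\sqrt t)$ by the CLT for Bernoulli sums. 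These facts let us treat the treatment-$1$ items as a fresh i.i.d.\ stream indexed by a random but essentially independent integer $t_1$.

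For the first claim, write $\sqrt t(\betagam(1)-\beta^{\lin,t_1})=\sqrt{t/t_1}\cdot\sqrt{t_1}(\beta^{t_1}(1)-\beta^{\lin,t_1})$. The factor $\sqrt{t/t_1}\toprob 1/\sqrt\pi$, so it suffices to show the second factor is $o_p(1)$. Theorem \ref{thm:clt} applied to the limit market $\FPPE(b,v(1),s)$ (whose \cref{as:smo,as:scs} hold by hypothesis) gives, for any fixed integer $k$, $\sqrt k(\beta^k(1)-\beta^{\lin,k})=o_p(1)$ as $k\to\infty$; denote $g_\eps(k)=\P(\sqrt k\|\beta^k(1)-\beta^{\lin,k}\|>\eps)$, so $g_\eps(k)\to 0$. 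By the independence of $t_1$ from $(\theta^{1,k})_k$, conditioning on $t_1=k$ yields
\begin{align*}
\P\big(\sqrt{t_1}\|\beta^{t_1}(1)-\beta^{\lin,t_1}\|>\eps\big)=\E[g_\eps(t_1)],
\end{align*}
which tends to $0$ because $t_1\toprob\infty$ and $g_\eps$ is bounded; here I will invoke a standard ``convergence through a random index'' argument (bounded convergence applied to the subsequence principle).

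For the second claim, let $S_k=\sum_{\tau=1}^k D_\b(1,\theta^{1,\tau})$; note $\E[D_\b(1,\theta)]=0$ and $\E\|D_\b(1,\theta)\|^2<\infty$ (the latter follows from \cref{thm:clt} and $\|\must\|\leq n\vbar$). Then
\begin{align*}
\beta^{\lin,t_1}-\beta^{\lin,\lfloor\pi t\rfloor}=\frac{S_{t_1}-S_{\lfloor\pi t\rfloor}}{t_1}+S_{\lfloor\pi t\rfloor}\Big(\frac{1}{t_1}-\frac{1}{\lfloor\pi t\rfloor}\Big).
\end{align*}
For the second summand, $S_{\lfloor\pi t\rfloor}/\sqrt t=O_p(1)$ by the CLT, $|\,1/t_1-1/\lfloor\pi t\rfloor|=O_p(t^{-3/2})$ since $|t_1-\lfloor\pi t\rfloor|=O_p(\sqrt t)$ and both denominators are $\Theta_p(t)$, so multiplied by $\sqrt t$ this term is $O_p(t^{-1/2})$. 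For the first summand, using independence of $t_1$ from the walk $(S_k)$ and the functional CLT (or equivalently Kolmogorov's maximal inequality applied to the shifted random walk $S_{\lfloor\pi t\rfloor+j}-S_{\lfloor\pi t\rfloor}$ over $|j|\leq \sqrt t\log t$), I can show $(S_{t_1}-S_{\lfloor\pi t\rfloor})/\sqrt t\toprob 0$; dividing by $t_1=\Theta_p(t)$ and multiplying by $\sqrt t$ gives $o_p(1)$.

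The main technical obstacle is the coupling between $t_1$ and the items used in the treatment-$1$ market; this is what the independence argument in the first paragraph resolves. The analogous statements for the $0$-treatment market follow by replacing $\pi$ with $1-\pi$ throughout.
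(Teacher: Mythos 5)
Your proposal is correct, and its overall structure mirrors the paper's: both parts are proved by arguments that transfer a deterministic-sample-size result to the random sample size $t_1$.

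For the first bullet, your conditioning identity $\P(\sqrt{t_1}\|\beta^{t_1}(1)-\beta^{\lin,t_1}\|>\eps)=\E[g_\eps(t_1)]$ followed by bounded convergence (with $t_1\toprob\infty$) is precisely a proof of the paper's \cref{lm:random_index_op}, which is stated and proved separately in the appendix (via $\sup_{k\geq t}\P(|X_k|>\eps)\to 0$ and a Binomial concentration bound) and then invoked for $R(t_1)$. Same argument, different packaging; the prefactor $\sqrt{t/t_1}\toprob 1/\sqrt\pi$ handling is a minor bookkeeping step the paper absorbs silently.

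For the second bullet, you diverge slightly from the paper in a way that is actually cleaner. The paper applies a Kolmogorov-type maximal inequality \emph{directly} to the process $m\mapsto\beta^{\lin,m}(1)=\betast(1)+S_m/m$ over an $\eps$-window $[(1\pm\eps)\lfloor\pi t\rfloor]$, getting a $2\eps/\delta^2$ bound and then sending $\eps\to 0$. Since $S_m/m$ is a sample mean rather than a martingale, this application rests on a reverse-martingale or H\'ajek--R\'enyi-type variant of the maximal inequality and the paper leaves the details implicit. Your decomposition
\begin{align*}
\beta^{\lin,t_1}-\beta^{\lin,\lfloor\pi t\rfloor}=\frac{S_{t_1}-S_{\lfloor\pi t\rfloor}}{t_1}+S_{\lfloor\pi t\rfloor}\Big(\tfrac{1}{t_1}-\tfrac{1}{\lfloor\pi t\rfloor}\Big)
\end{align*}
isolates a genuine centered random-walk increment $S_{t_1}-S_{\lfloor\pi t\rfloor}$, to which Kolmogorov's inequality applies without any caveats, and a denominator-fluctuation term whose magnitude follows from $|t_1-\lfloor\pi t\rfloor|=O_p(\sqrt t)$ by elementary algebra. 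You also use a tighter window $|j|\le\sqrt t\log t$ rather than $\eps\pi t$; both work (you must still verify $\P(|t_1-\lfloor\pi t\rfloor|>\sqrt t\log t)\to 0$, which is routine), but your window gives an unconditional $o_p(1)$ conclusion without the final $\eps\to 0$ passage. One minor cleanup worth making explicit: when bounding $(S_{t_1}-S_{\lfloor\pi t\rfloor})/\sqrt t$ you implicitly use the independence of $t_1$ from $(S_k)_k$ to justify conditioning on $t_1$ inside the maximal-inequality argument, and the backward direction ($j<0$) needs a separate but symmetric application of Kolmogorov applied to the reversed increments.

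No gaps; this is a valid proof that matches the paper's approach with a somewhat more careful execution of the maximal-inequality step.
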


With \cref{lm:asym_equivalence_beta}, we write
\begin{align*}
 & \sqrt{t} (\hat \tau_\beta - \tau _\beta)
 \\
 & = \sqrt t (\betagam(1) - \betast(1) ) - \sqrt{t} (\betagam(0) - \betast(0))
 \\ 
 & = \sqrt t \bigg( \frac1{\sqrt{\lfloor \pi t \rfloor}} \sum_{\tau = 1}^{\lfloor \pi t \rfloor} D_\beta(1, \theta^{1,\tau})
  -   \frac1{\sqrt{\lfloor (1-\pi) t \rfloor}} \sum_{\tau = 1}^{\lfloor (1-\pi) t \rfloor} D_\beta(0, \theta^{0,\tau})\bigg) + o_p (1) \tag{\cref{lm:asym_equivalence_beta}}
 \\
 & \tod \cN \bigg(0,\frac1{\pi }\var[ D_\beta(1,\cdot)] 
 + \frac1{{(1-\pi) }} \var[ D_\beta(0,\cdot)]\bigg). \tag{independence between $\{\theta^{1,\tau}\}_\tau$ and $\{\theta^{0,\tau}\}_\tau$}
\end{align*}

\emph{Proof of CLT for $\tau_\beta$.} It follows from the above discussion.

\emph{Proof of CLT for $\tau_u$.} We begin with the linear expansion \cref{eq:expansion_u} and repeat the same argument.

\emph{Proof of CLT for $\tau_\REV$.} We begin with the linear expansion \cref{eq:expansion_rev} and repeat the same argument.

\emph{Proof of CLT for $\tau_\NSW$.} We begin with the linear expansion \cref{eq:expansion_nsw} and repeat the same argument.

We complete the proof of \cref{thm:clt_ab_testing}.
\end{proof}

In order to prove \cref{lm:asym_equivalence_beta}, we will need the following lemma.
\begin{lemma} \label{lm:random_index_op}
    If $X_t = o_p(1)$ and $T \sim \text{Bin}(\pi, t)$ and $T$ and the sequence are independent, then $X_T = o_p(1)$.
\end{lemma}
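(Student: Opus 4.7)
The plan is to unpack the definition of convergence in probability and condition on the random index. Writing $T = T_t \sim \operatorname{Bin}(\pi, t)$ to emphasize dependence on $t$, the goal is: for every $\epsilon > 0$, $\P(|X_{T_t}| > \epsilon) \to 0$ as $t \to \infty$.

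First I would use the independence of $T_t$ and the sequence $\{X_k\}$ to write, by conditioning on $T_t$,
\[
\P(|X_{T_t}| > \epsilon) \;=\; \sum_{k \geq 0} \P(T_t = k)\,\P(|X_k| > \epsilon).
\]
Next, since $X_k = o_p(1)$, for any fixed $\delta > 0$ there exists an integer $K = K(\epsilon,\delta)$ such that $\P(|X_k| > \epsilon) \leq \delta$ for all $k \geq K$. I would then split the sum at $K$:
\[
\P(|X_{T_t}| > \epsilon) \;\leq\; \P(T_t < K) \;+\; \delta\,\P(T_t \geq K) \;\leq\; \P(T_t < K) + \delta.
\]

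The remaining step is to show $\P(T_t < K) \to 0$ as $t \to \infty$ for every fixed $K$. This follows from the (weak) law of large numbers applied to $T_t/t$: $T_t/t \toprob \pi > 0$, so $T_t \toprob \infty$, and in particular $\P(T_t < K) \to 0$. Alternatively one can just use $\E[T_t] = \pi t$, $\var(T_t) = \pi(1-\pi) t$, and Chebyshev. Combining, $\limsup_t \P(|X_{T_t}| > \epsilon) \leq \delta$, and since $\delta > 0$ was arbitrary, the limsup is zero, which gives $X_{T_t} = o_p(1)$.

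There is no real obstacle here; the argument is a routine ``subsequence along a random index'' computation, and the only structural input beyond the assumed $o_p$ statement is that a binomial with mean $\pi t$ diverges in probability. The independence assumption is what makes the conditional factorization $\P(|X_{T_t}|>\epsilon \mid T_t = k) = \P(|X_k|>\epsilon)$ valid, which is essential; without it the summation step would fail.
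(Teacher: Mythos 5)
Your proof is correct and follows essentially the same approach as the paper: condition on the value of $T$ using independence, bound the contribution from large $k$ by uniformity of $\P(|X_k|>\epsilon)\to 0$, and bound the contribution from small $k$ by the fact that $T$ is unlikely to be small. The only cosmetic difference is that you split at a fixed threshold $K$ and use that $T_t\toprob\infty$, whereas the paper splits at $(1\pm\delta)\pi t$ and invokes an explicit binomial tail bound; both suffice, and yours is a touch more elementary since it needs only divergence of $T$ rather than concentration around its mean.
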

\begin{proof}[Proof of \cref{lm:random_index_op}]
By $X_t = o_p(1)$ we know for all $\eps > 0$ it holds $\P( |X_t| > \eps ) \to 0$, or equivalently $ \sup_{k \geq t} \P( |X_k| > \eps) \to 0$ as $ t \to \infty$. By a concentration for binomial distribution, we know for all $\delta> 0$, it holds $\P( |T - \pi t| > \delta \pi t ) \leq 2\exp(- \delta\sq  \pi t / 3)$. Now write 
\begin{align*}
\P( |X_T| > \epsilon)
&\leq \P( |X_T| > \eps, T\in (1\pm \delta) \pi t ) + \P(T \notin (1\pm \delta) \pi t)
\\
&\leq \sum_{k\in  (1\pm \delta) \pi t } \P(|X_k| > \eps) \P(T = k)+2 \exp(- \delta \sq  \pi t / 3)
\\
&\leq \sup_{k\geq (1-\delta) \pi t} \P(|X_k| > \eps) + 2\exp(- \delta  \sq \pi  t / 3)
\to 0 \text{ as $t \to \infty$}
\end{align*}
where in the second inequality we use the independence between $T$ and the sequence.
We conclude $X_T = o_p(1)$,  completing proof of \cref{lm:random_index_op}.
\end{proof}

\begin{proof}[Proof of \cref{lm:asym_equivalence_beta}]
    The first statement uses the independence between $t_1$ and the items $(\theta^{1,1},\theta^{1,2},\dots)$. 
    Define $R(k) = \sqrt{t} (\beta^k(1) - \beta^{\lin,k}(1))$. By \cref{eq:expansion_beta}, we have $R(k) = o_p(1)$ as $k\to\infty$.
    With this notation, the first statement is equivalent to $R(t_1) = o_p(1)$ where $t_1 \sim \text{Bin}(\pi, t)$, which holds true by \cref{lm:random_index_op}.
    
    The second statement is equivalent to $\sqrt{\lfloor \pi t \rfloor}  \big(\beta^{\lin, t_1} (1)- \beta^{\lin ,\lfloor \pi t \rfloor} (1)\big) = o_p(1)$.
    To prove this we use a Komogorov's inequality.
    By Theorem 2.5.5 from \citet{durrett2019probability}, for any $\eps > 0$, (let $\sigma_{D_\beta} = \E[\|D_\beta(1,\theta)\|_2\sq]^{1/2}$)
    \begin{align*}
        \P \bigg (\sqrt{\lfloor \pi t \rfloor} \sup_{(1-\eps) \lfloor \pi t \rfloor \leq m \leq (1 + \eps) \lfloor \pi t \rfloor} \|\beta^{\lin, m} (1)- \beta^{\lin ,(1-\eps)\lfloor \pi t \rfloor} (1) \| _2 \geq \delta \sigma_{D_\beta}  \bigg) \leq \frac{2 \eps}{\delta\sq}.
    \end{align*}
    Then 
\begin{align*}
    &\P\big( \sqrt{\lfloor \pi t \rfloor} \big\| \beta^{\lin, t_1}(1) - \beta^{\lin ,\lfloor \pi t \rfloor}(1) \big\|_2\geq \delta \big)
    \\
    & \leq \P\big(  \sqrt{\lfloor \pi t \rfloor} \big\| \beta^{\lin , t_1} (1)- \beta^{\lin ,\lfloor \pi t \rfloor} (1) \big\|_2\geq \delta , \; (1 - \eps) \lfloor \pi t \rfloor \leq t_1\leq  (1 + \eps) \lfloor \pi t \rfloor\big)
    \\
    &\quad + 
    \P\big( t_1 \notin \big[(1 - \eps) \lfloor \pi t \rfloor, (1 + \eps) \lfloor \pi t \rfloor\big]\big)
    \\
   & \leq \frac{2\eps \sigma_{D_\beta}\sq}{\delta\sq} + 
    \P\big( t_1 \notin \big[(1 - \eps) \lfloor \pi t \rfloor, (1 + \eps) \lfloor \pi t \rfloor\big] \big) \to\frac{2\eps \sigma_{D_\beta}\sq}{\delta\sq} 
\end{align*}
Finally, since the above holds for all $\epsilon > 0$, we obtain $\sqrt{\lfloor \pi t \rfloor}  (\beta^{\lin, t_1} - \beta^{\lin ,\lfloor \pi t \rfloor}) = o_p(1)$.
    We complete the proof of \cref{lm:asym_equivalence_beta}.
\end{proof}


\section{Experiments}

\subsection{Hessian Estimation}
\label{app:hessian_estimation}
\textit{Setup.}
We look at the following configuration of markets and the smoothing parameter $d$.
Note we will be evaluating Hessian at a prespecified point and do not need to form any market equilibria in this experiment.
We consider $n=9$ buyers. The item size $t$ ranges from 200 to 5000, at a log scale. 
Budget does not need to be specified.
Buyers' values are drawn from uniform, exponential, or truncated standard normal distributions.
The smoothing parameter $d $ is chosen from the grid [0.10, 0.17, 0.25, 0.32, 0.40, 0.47, 0.55, 0.62, 0.70].
We evaluate the Hessian $\nabla\sq H$ at a pre-specified point $\beta = [0.200, 0.333, 0.467, 0.600, 0.733, 0.867, 1.000]$, and plot the estimated diagonal values, $\hat \cH_{ii}$ for $i \in [7]$, against the number of items $t$.
Under each configuration we repeat for 10 trials. 

\textit{Results.}
See
\cref{fig:smooth_uniform,fig:smooth_exponential,fig:smooth_normal}.
 We see that $d$ represents a bias-variance trade-off. For a small $d$ (0.10, 0.17, 0.25), the variance of the estimated value $\hat \cH_{ii}$ is small and yet bias is large (since the plots seem to be trending to some point as number of item increases). For a large $d$ (0.55, 0.62, 0.70) variance is large and yet the bias is small (the estimates are stationary around some point). It is suggested to use $d \in (0.32, 0.47)$.

\begin{figure}[h!]
    \center
    \includegraphics[scale=.4]{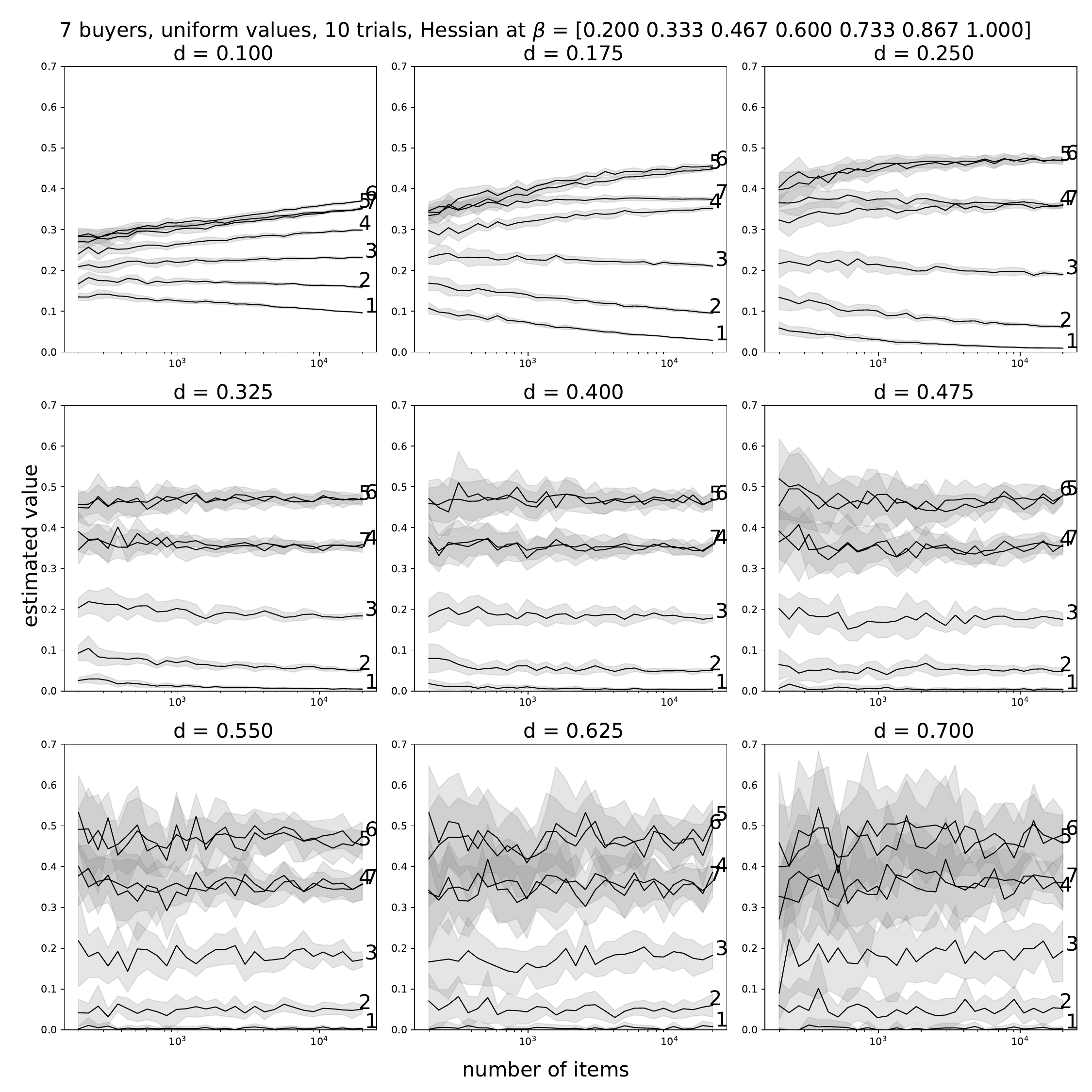}    
    \caption{Effect of smoothing parameter on numerical difference estimation of Hessian. Each curve represents the estimated value of $\cH_{ii}$. Uniform values.}
    \label{fig:smooth_uniform}
    \end{figure}

    \begin{figure}[h!]
        \center
        \includegraphics[scale=.4]{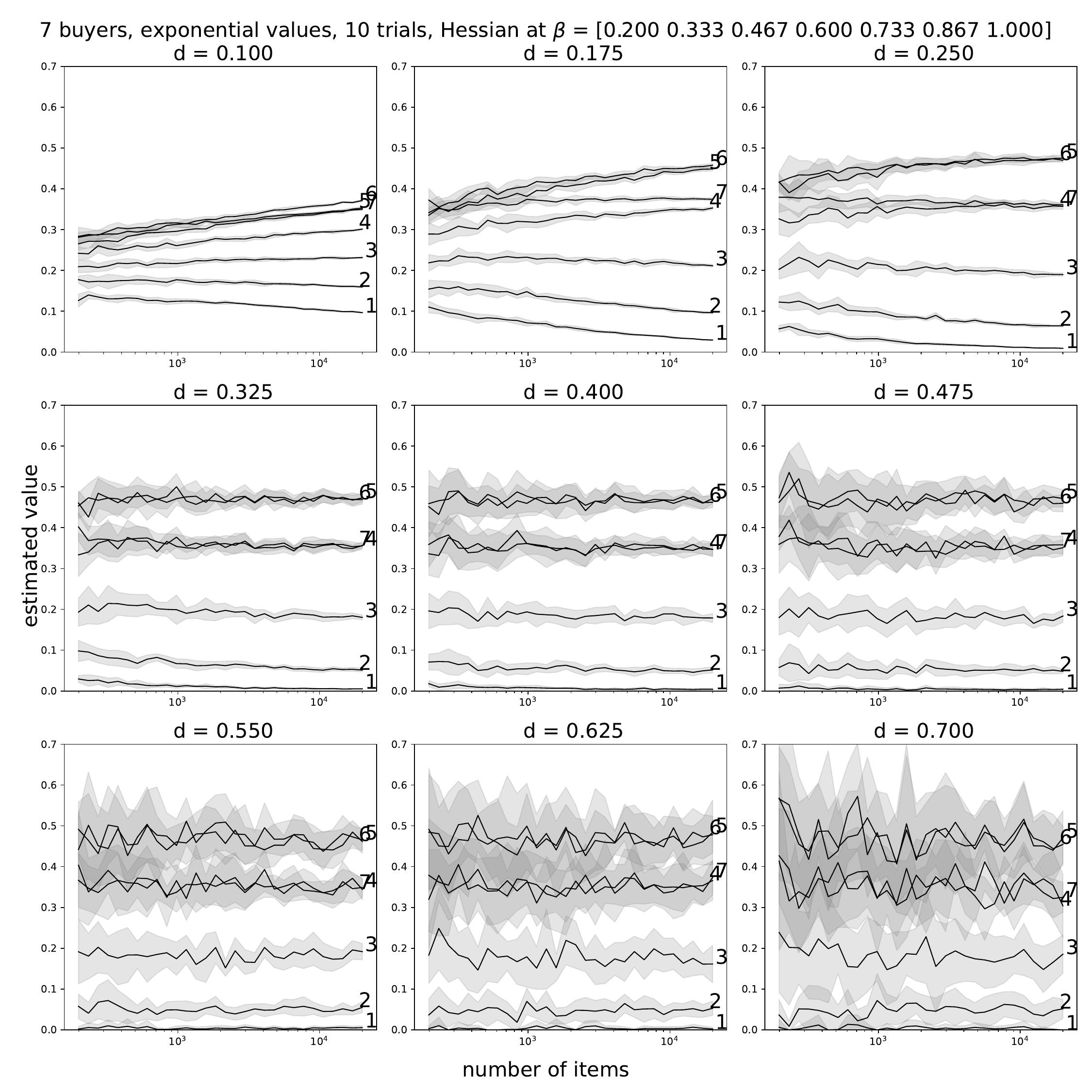}    
        \caption{Effect of smoothing parameter on numerical difference estimation of Hessian. Each curve represents the estimated value of $\cH_{ii}$. Exponential values.}
        \label{fig:smooth_exponential}
    \end{figure}

    \begin{figure}[h!]
        \center
        \includegraphics[scale=.4]{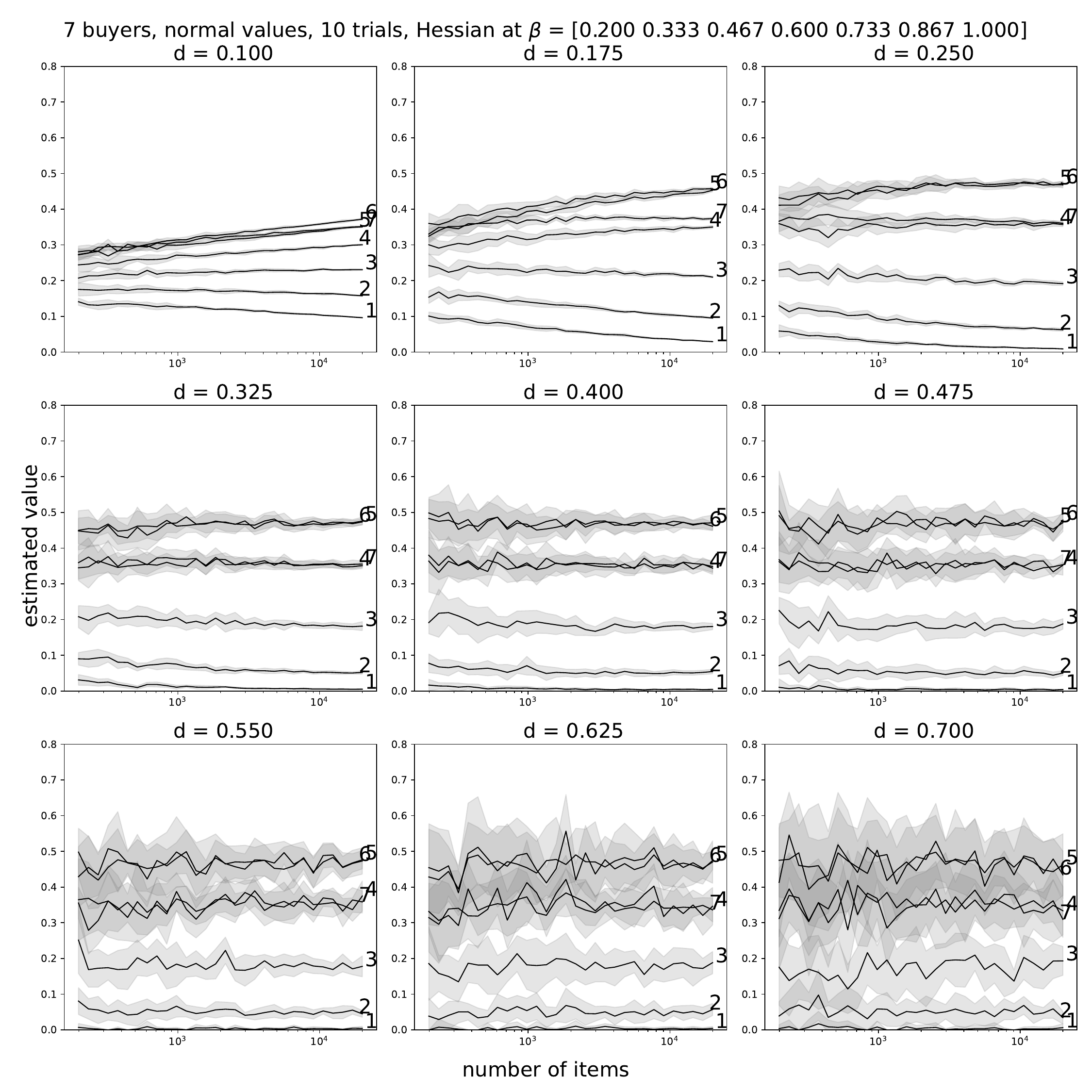}    
        \caption{Effect of smoothing parameter on numerical difference estimation of Hessian. Each curve represents the estimated value of $\cH_{ii}$. Truncated normal values.}
        \label{fig:smooth_normal}
    \end{figure}

\subsection{Visualization of FPPE Distribution}

\begin{figure}[h!]
    \center
    \includegraphics[scale=.6]{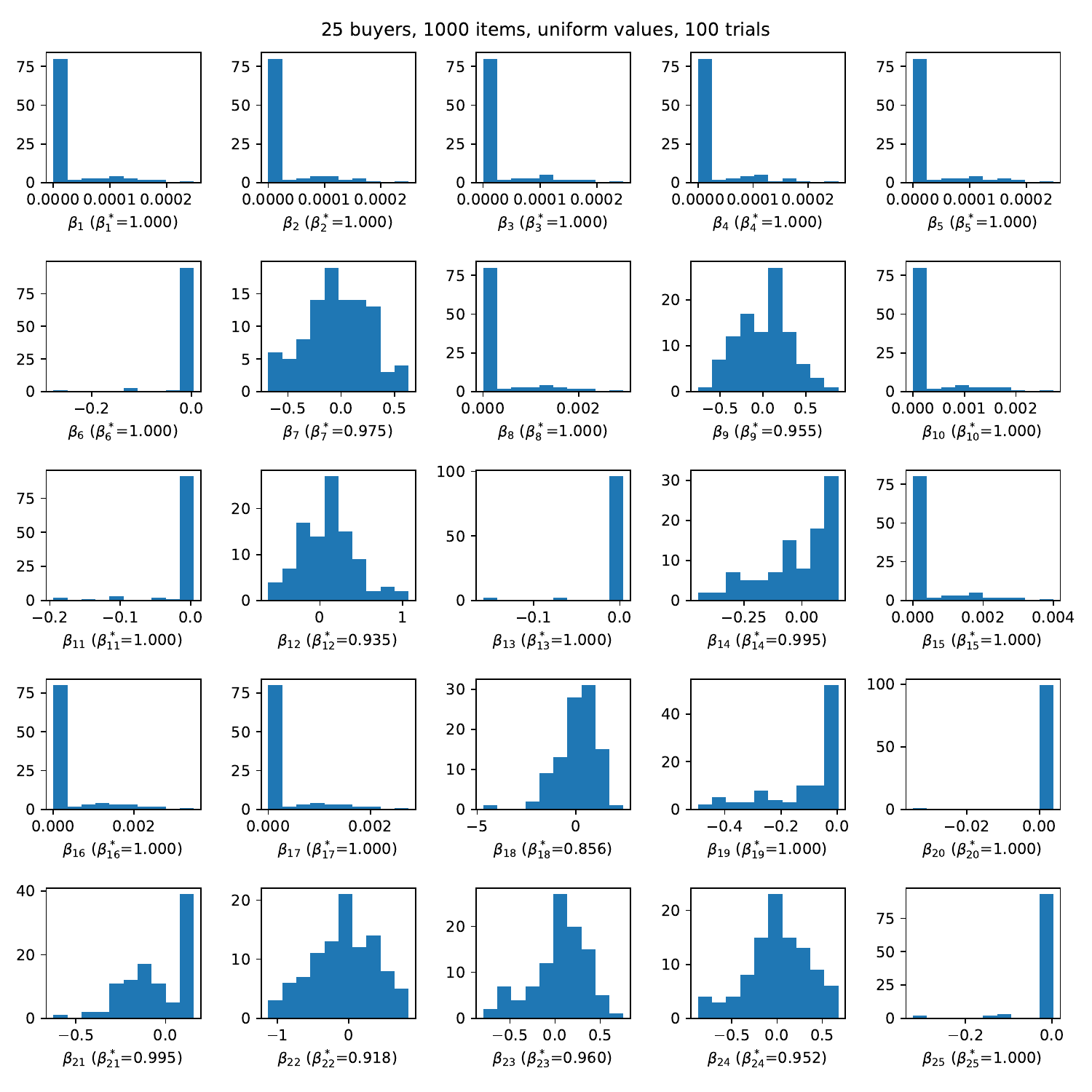}    
    \caption{Distribution of $\sqrt t{(\betagami - \betasti)}$ for all $i \in [n]$.    
    Non-normality and fast convergence for buyers with $\betasti = 1$. Uniform values.}
    \label{fig:nonnormality_unif}
\end{figure}

\begin{figure}[h!]
    \center
    \includegraphics[scale=.6]{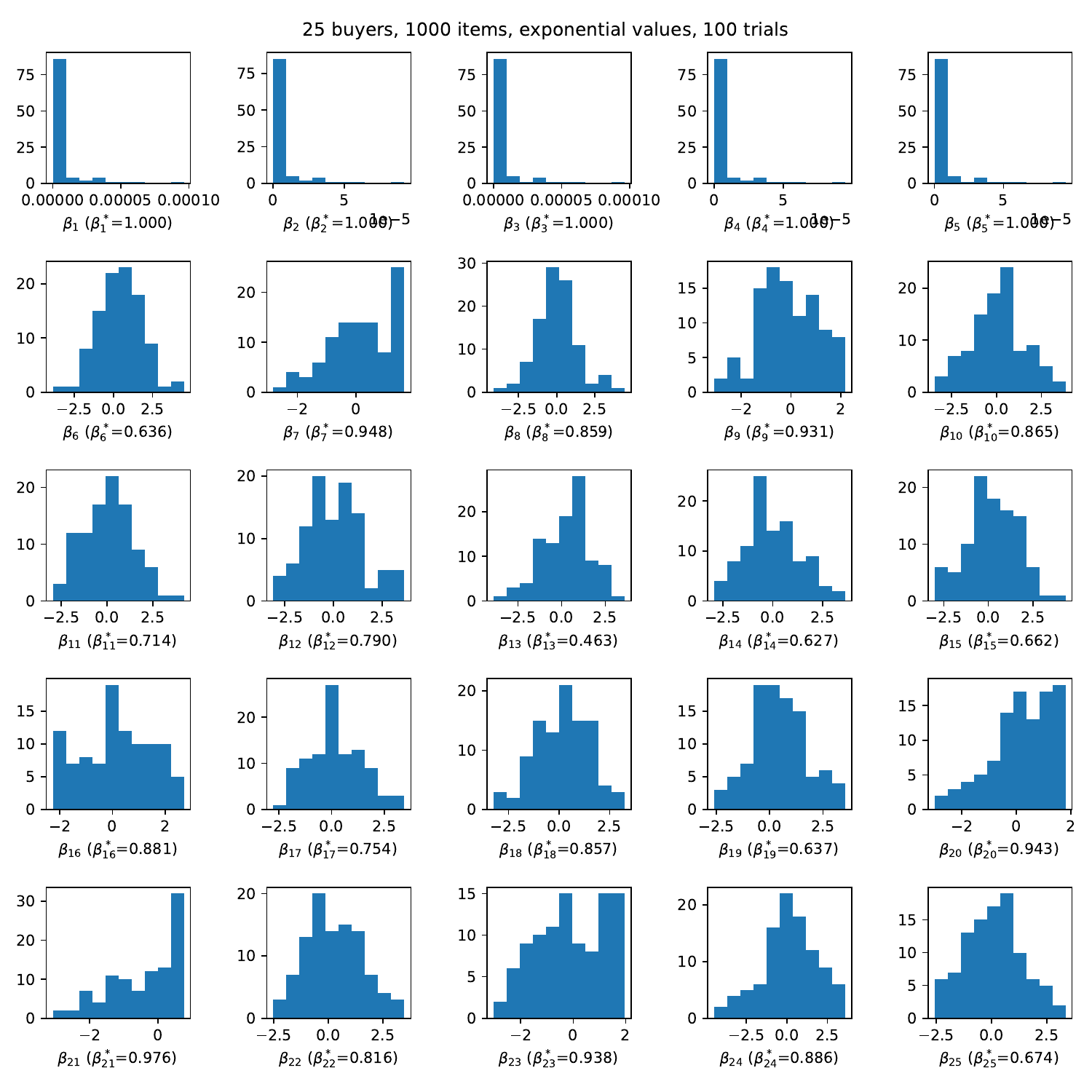}    
    \caption{Distribution of $\sqrt t{(\betagami - \betasti)}$ for all $i \in [n]$.    
    Non-normality and fast convergence for buyers with $\betasti = 1$. Exponential values.}
    \label{fig:nonnormality_expo}
\end{figure}

\begin{figure}[h!]
    \center
    \includegraphics[scale=.6]{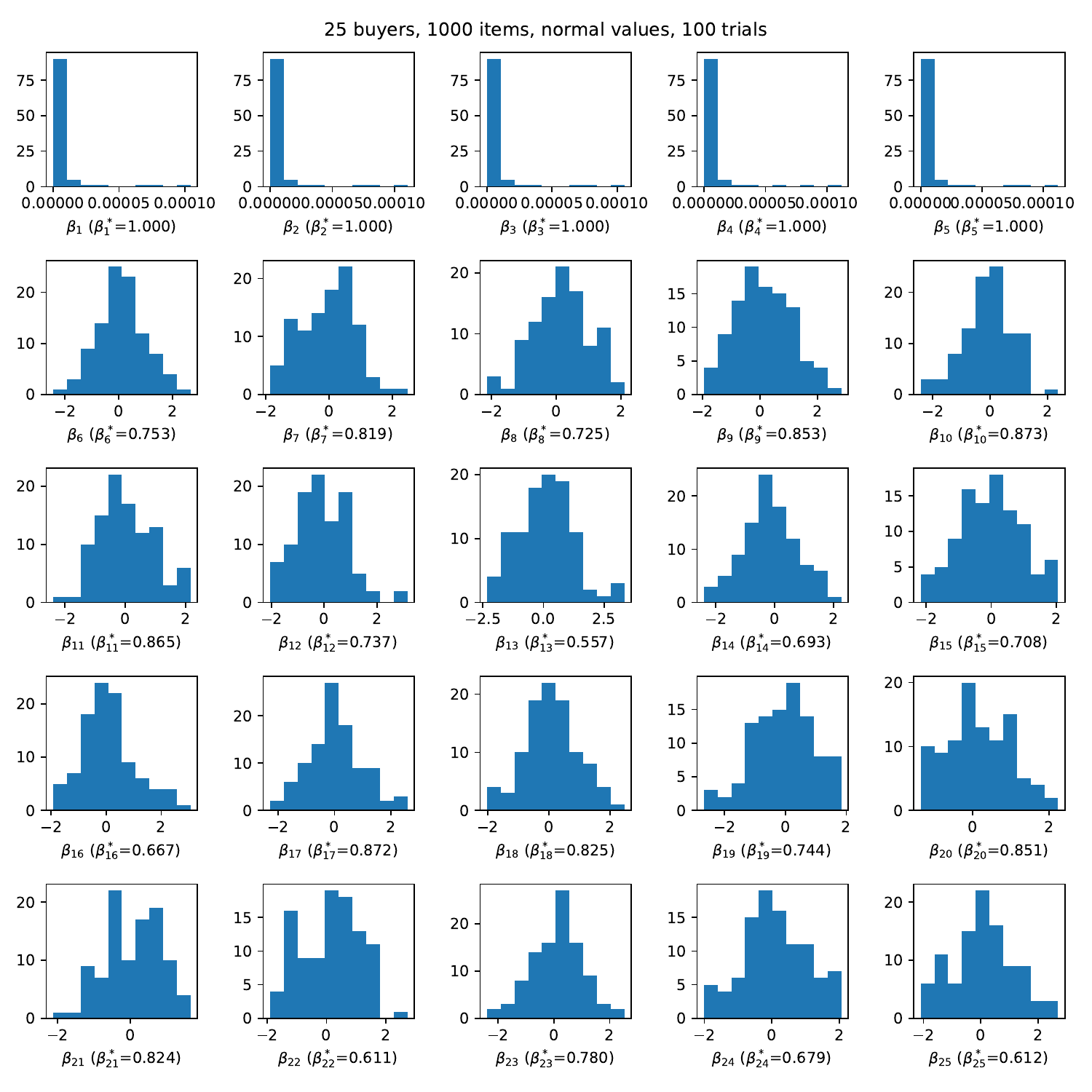}    
    \caption{
    Distribution of $\sqrt t({\betagami - \betasti})$ for all $i \in [n]$.    
    Non-normality and fast convergence for buyers with $\betasti = 1$. Truncated normal values.}
    \label{fig:nonnormality_normal}
\end{figure}

\subsection{Full Versions of \cref{tbl:rev_coverage_short},  \cref{tbl:rev_coverage_new_vs_naive_short}, and  \cref{tbl:abtest_short}}

\begin{table}[ht]
\centering
\caption{Coverage of treatment effect. $\pi$ = treatment probability, the finite difference stepsize $\epsilon_t = t^{-0.4}$, proportion of unpaced buyers $\beta_i=1$ is 30\%. \new{The numbers in parentheses represent the lengths of the confidence intervals. Nominal coverage rate is 90\%.}}
\label{tbl:abtest_full}
\begin{tabular}{llllll}
\toprule
   & items &                      100 &                      200 &                      400 &                      600 \\
\midrule buyers & $\pi$ &                          &                          &                          &                          \\
\midrule
\multirow{5}{*}{20} & 0.1 &  \makecell{0.8\\(5.09)} &  \makecell{0.89\\(4.96)} &  \makecell{0.89\\(2.71)} &  \makecell{0.92\\(1.60)} \\
   & 0.3 &  \makecell{0.91\\(3.43)} &  \makecell{0.91\\(1.74)} &  \makecell{0.93\\(1.12)} &  \makecell{0.92\\(0.94)} \\
   & 0.5 &  \makecell{0.91\\(3.53)} &  \makecell{0.92\\(1.47)} &  \makecell{0.92\\(1.07)} &  \makecell{0.92\\(0.86)} \\
   & 0.7 &  \makecell{0.92\\(3.76)} &  \makecell{0.88\\(1.96)} &  \makecell{0.89\\(1.16)} &   \makecell{0.9\\(0.96)} \\
   & 0.9 &  \makecell{0.79\\(3.50)} &  \makecell{0.89\\(3.57)} &  \makecell{0.96\\(2.75)} &  \makecell{0.91\\(2.10)} \\
\cline{1-6}
\multirow{5}{*}{30} & 0.1 &  \makecell{0.85\\(4.20)} &  \makecell{0.91\\(3.07)} &  \makecell{0.9\\(25.55)} &  \makecell{0.92\\(2.98)} \\
   & 0.3 &  \makecell{0.95\\(4.26)} &  \makecell{0.93\\(4.42)} &  \makecell{0.94\\(1.38)} &  \makecell{0.93\\(1.03)} \\
   & 0.5 &  \makecell{0.92\\(2.80)} &  \makecell{0.93\\(1.86)} &  \makecell{0.92\\(1.89)} &  \makecell{0.84\\(0.97)} \\
   & 0.7 &  \makecell{0.92\\(3.91)} &  \makecell{0.94\\(3.47)} &  \makecell{0.91\\(1.40)} &  \makecell{0.89\\(1.09)} \\
   & 0.9 &  \makecell{0.86\\(5.09)} &  \makecell{0.89\\(4.70)} &  \makecell{0.92\\(2.58)} &  \makecell{0.88\\(3.15)} \\
\cline{1-6}
\multirow{5}{*}{40} & 0.1 &  \makecell{0.86\\(4.09)} &  \makecell{0.91\\(3.52)} &  \makecell{0.92\\(2.00)} &  \makecell{0.87\\(2.08)} \\
   & 0.3 &  \makecell{0.93\\(6.37)} &  \makecell{0.94\\(4.67)} &  \makecell{0.93\\(1.26)} &  \makecell{0.93\\(1.11)} \\
   & 0.5 &  \makecell{0.94\\(9.78)} &   \makecell{0.9\\(2.17)} &  \makecell{0.87\\(1.21)} &  \makecell{0.96\\(1.31)} \\
   & 0.7 &  \makecell{0.91\\(9.30)} &  \makecell{0.96\\(6.72)} &  \makecell{0.94\\(1.73)} &  \makecell{0.89\\(1.17)} \\
   & 0.9 &  \makecell{0.85\\(5.00)} &  \makecell{0.85\\(5.87)} &   \makecell{0.9\\(6.01)} &  \makecell{0.97\\(6.13)} \\
\cline{1-6}
\multirow{5}{*}{50} & 0.1 &  \makecell{0.88\\(8.45)} &  \makecell{0.88\\(4.75)} &  \makecell{0.92\\(1.87)} &  \makecell{0.88\\(1.54)} \\
   & 0.3 &  \makecell{0.95\\(3.49)} &  \makecell{0.96\\(2.37)} &  \makecell{0.86\\(1.28)} &  \makecell{0.93\\(1.03)} \\
   & 0.5 &  \makecell{0.92\\(3.76)} &  \makecell{0.95\\(5.05)} &   \makecell{0.9\\(1.26)} &  \makecell{0.94\\(0.98)} \\
   & 0.7 &  \makecell{0.85\\(3.10)} &  \makecell{0.95\\(2.27)} &  \makecell{0.98\\(1.85)} &  \makecell{0.95\\(1.28)} \\
   & 0.9 &  \makecell{0.76\\(3.36)} &  \makecell{0.87\\(2.87)} &  \makecell{0.92\\(2.78)} &  \makecell{0.96\\(9.33)} \\
\bottomrule
\end{tabular}
\end{table}

\begin{table}[ht]
\centering
\caption{Coverage of revenue CI. $\alpha$ = proportion of $\beta_i = 1$, $d$ is the exponent in finite difference stepsize $\epsilon_t = t^{-d}$. \new{
Numbers in parentheses represent the lengths of CIs.
Nominal coverage rate is 90\%.
}}
\label{tbl:rev_coverage_full}
\footnotesize
\begin{tabular}{llllllllllllll}
\toprule
         & buyers & \multicolumn{4}{l}{20} & \multicolumn{4}{l}{50} & \multicolumn{4}{l}{80} \\
         & $\alpha$ &                     0.05 &                     0.10 &                     0.20 &                     0.30 &                     0.05 &                     0.10 &                     0.20 &                     0.30 &                     0.05 &                     0.10 &                     0.20 &                     0.30 \\
\midrule $d$ & items &                          &                          &                          &                          &                          &                          &                          &                          &                          &                          &                          &                          \\
\midrule
\multirow{4}{*}{0.40} & 100 &  \makecell{0.79\\(1.72)} &  \makecell{0.79\\(1.82)} &  \makecell{0.93\\(1.87)} &   \makecell{0.9\\(1.81)} &  \makecell{0.87\\(1.84)} &  \makecell{0.81\\(1.91)} &  \makecell{0.89\\(1.82)} &  \makecell{0.88\\(2.00)} &  \makecell{0.81\\(1.89)} &  \makecell{0.89\\(1.89)} &  \makecell{0.97\\(1.97)} &   \makecell{0.9\\(1.95)} \\
         & 200 &  \makecell{0.88\\(1.33)} &  \makecell{0.88\\(1.36)} &  \makecell{0.87\\(1.35)} &   \makecell{0.9\\(1.34)} &  \makecell{0.88\\(1.32)} &  \makecell{0.93\\(1.36)} &  \makecell{0.89\\(1.37)} &  \makecell{0.94\\(1.40)} &  \makecell{0.87\\(1.37)} &  \makecell{0.93\\(1.37)} &  \makecell{0.88\\(1.42)} &  \makecell{0.86\\(1.42)} \\
         & 400 &  \makecell{0.84\\(0.93)} &  \makecell{0.88\\(0.99)} &  \makecell{0.93\\(0.99)} &  \makecell{0.91\\(0.98)} &   \makecell{0.9\\(0.95)} &  \makecell{0.94\\(0.98)} &  \makecell{0.92\\(0.98)} &  \makecell{0.84\\(1.00)} &  \makecell{0.88\\(0.97)} &  \makecell{0.85\\(0.98)} &  \makecell{0.86\\(1.01)} &  \makecell{0.85\\(1.01)} \\
         & 600 &  \makecell{0.89\\(0.76)} &  \makecell{0.88\\(0.79)} &   \makecell{0.9\\(0.81)} &  \makecell{0.89\\(0.80)} &   \makecell{0.8\\(0.77)} &  \makecell{0.87\\(0.80)} &  \makecell{0.81\\(0.80)} &  \makecell{0.92\\(0.83)} &  \makecell{0.86\\(0.80)} &  \makecell{0.83\\(0.80)} &  \makecell{0.97\\(0.83)} &  \makecell{0.89\\(0.83)} \\
\cline{1-14}
\multirow{4}{*}{0.33} & 100 &   \makecell{0.8\\(1.75)} &   \makecell{0.8\\(1.88)} &   \makecell{0.9\\(1.89)} &  \makecell{0.87\\(1.87)} &   \makecell{0.9\\(1.82)} &  \makecell{0.88\\(1.93)} &  \makecell{0.87\\(1.84)} &  \makecell{0.86\\(1.96)} &  \makecell{0.88\\(1.88)} &  \makecell{0.88\\(1.90)} &  \makecell{0.87\\(1.97)} &  \makecell{0.93\\(1.99)} \\
         & 200 &  \makecell{0.85\\(1.29)} &  \makecell{0.84\\(1.37)} &  \makecell{0.91\\(1.38)} &  \makecell{0.84\\(1.38)} &  \makecell{0.88\\(1.31)} &  \makecell{0.93\\(1.38)} &  \makecell{0.89\\(1.38)} &  \makecell{0.87\\(1.43)} &  \makecell{0.89\\(1.38)} &  \makecell{0.96\\(1.37)} &  \makecell{0.89\\(1.42)} &   \makecell{0.9\\(1.42)} \\
         & 400 &  \makecell{0.87\\(0.93)} &  \makecell{0.92\\(0.97)} &  \makecell{0.91\\(0.99)} &  \makecell{0.84\\(0.98)} &  \makecell{0.88\\(0.95)} &  \makecell{0.93\\(0.98)} &  \makecell{0.88\\(0.97)} &  \makecell{0.85\\(1.00)} &  \makecell{0.95\\(0.98)} &  \makecell{0.86\\(0.96)} &  \makecell{0.86\\(1.01)} &  \makecell{0.86\\(1.01)} \\
         & 600 &   \makecell{0.9\\(0.76)} &  \makecell{0.89\\(0.80)} &  \makecell{0.84\\(0.80)} &  \makecell{0.83\\(0.80)} &  \makecell{0.87\\(0.78)} &  \makecell{0.87\\(0.80)} &  \makecell{0.88\\(0.81)} &  \makecell{0.85\\(0.83)} &   \makecell{0.9\\(0.80)} &   \makecell{0.9\\(0.80)} &  \makecell{0.84\\(0.83)} &  \makecell{0.89\\(0.83)} \\
\cline{1-14}
\multirow{4}{*}{0.17} & 100 &  \makecell{0.81\\(1.80)} &  \makecell{0.86\\(1.85)} &  \makecell{0.89\\(1.85)} &   \makecell{0.9\\(1.87)} &  \makecell{0.81\\(1.86)} &  \makecell{0.84\\(1.87)} &  \makecell{0.92\\(1.91)} &   \makecell{0.9\\(1.96)} &  \makecell{0.93\\(1.87)} &  \makecell{0.87\\(1.85)} &  \makecell{0.89\\(1.97)} &  \makecell{0.92\\(1.95)} \\
         & 200 &  \makecell{0.91\\(1.30)} &   \makecell{0.9\\(1.37)} &  \makecell{0.88\\(1.38)} &  \makecell{0.86\\(1.37)} &  \makecell{0.85\\(1.32)} &  \makecell{0.87\\(1.38)} &   \makecell{0.9\\(1.36)} &  \makecell{0.86\\(1.41)} &   \makecell{0.9\\(1.37)} &  \makecell{0.94\\(1.38)} &  \makecell{0.85\\(1.41)} &   \makecell{0.9\\(1.43)} \\
         & 400 &  \makecell{0.89\\(0.93)} &  \makecell{0.89\\(0.97)} &  \makecell{0.89\\(0.97)} &  \makecell{0.91\\(0.98)} &  \makecell{0.89\\(0.95)} &  \makecell{0.88\\(0.99)} &  \makecell{0.91\\(0.97)} &  \makecell{0.89\\(1.01)} &  \makecell{0.96\\(0.98)} &  \makecell{0.88\\(0.97)} &  \makecell{0.87\\(1.02)} &   \makecell{0.9\\(1.02)} \\
         & 600 &  \makecell{0.89\\(0.76)} &  \makecell{0.89\\(0.80)} &  \makecell{0.84\\(0.80)} &  \makecell{0.84\\(0.80)} &   \makecell{0.9\\(0.78)} &  \makecell{0.89\\(0.81)} &  \makecell{0.82\\(0.80)} &  \makecell{0.91\\(0.83)} &  \makecell{0.89\\(0.80)} &   \makecell{0.9\\(0.81)} &  \makecell{0.91\\(0.83)} &  \makecell{0.89\\(0.83)} \\
\cline{1-14}
\multirow{4}{*}{0.08} & 100 &  \makecell{0.87\\(1.80)} &  \makecell{0.85\\(1.80)} &  \makecell{0.87\\(1.91)} &  \makecell{0.83\\(1.87)} &  \makecell{0.86\\(1.80)} &  \makecell{0.95\\(1.90)} &  \makecell{0.86\\(1.89)} &  \makecell{0.87\\(1.95)} &  \makecell{0.89\\(1.88)} &  \makecell{0.87\\(1.86)} &  \makecell{0.88\\(1.94)} &  \makecell{0.89\\(1.94)} \\
         & 200 &  \makecell{0.92\\(1.30)} &  \makecell{0.87\\(1.34)} &   \makecell{0.9\\(1.36)} &  \makecell{0.92\\(1.35)} &  \makecell{0.85\\(1.35)} &  \makecell{0.88\\(1.37)} &  \makecell{0.93\\(1.36)} &  \makecell{0.92\\(1.42)} &  \makecell{0.87\\(1.39)} &  \makecell{0.83\\(1.36)} &  \makecell{0.92\\(1.40)} &  \makecell{0.91\\(1.47)} \\
         & 400 &   \makecell{0.9\\(0.93)} &  \makecell{0.91\\(0.97)} &  \makecell{0.93\\(0.99)} &  \makecell{0.88\\(0.97)} &  \makecell{0.86\\(0.95)} &  \makecell{0.91\\(0.99)} &  \makecell{0.89\\(0.98)} &  \makecell{0.91\\(1.01)} &   \makecell{0.9\\(0.97)} &  \makecell{0.89\\(0.97)} &  \makecell{0.86\\(1.01)} &  \makecell{0.89\\(1.02)} \\
         & 600 &   \makecell{0.9\\(0.76)} &  \makecell{0.82\\(0.80)} &  \makecell{0.93\\(0.80)} &  \makecell{0.95\\(0.81)} &  \makecell{0.91\\(0.78)} &  \makecell{0.88\\(0.81)} &  \makecell{0.93\\(0.81)} &  \makecell{0.93\\(0.83)} &  \makecell{0.92\\(0.81)} &  \makecell{0.93\\(0.80)} &   \makecell{0.9\\(0.84)} &  \makecell{0.88\\(0.83)} \\
\cline{1-14}
\multirow{4}{*}{0.06} & 100 &  \makecell{0.84\\(1.79)} &  \makecell{0.87\\(1.89)} &  \makecell{0.85\\(1.88)} &  \makecell{0.89\\(1.81)} &  \makecell{0.88\\(1.84)} &  \makecell{0.93\\(1.91)} &  \makecell{0.91\\(1.88)} &  \makecell{0.91\\(1.94)} &  \makecell{0.91\\(1.89)} &  \makecell{0.94\\(1.91)} &  \makecell{0.91\\(1.96)} &  \makecell{0.87\\(2.00)} \\
         & 200 &  \makecell{0.93\\(1.30)} &   \makecell{0.8\\(1.35)} &  \makecell{0.92\\(1.40)} &  \makecell{0.92\\(1.35)} &  \makecell{0.81\\(1.34)} &  \makecell{0.89\\(1.38)} &  \makecell{0.91\\(1.34)} &  \makecell{0.86\\(1.41)} &  \makecell{0.91\\(1.38)} &  \makecell{0.93\\(1.39)} &  \makecell{0.89\\(1.41)} &  \makecell{0.91\\(1.42)} \\
         & 400 &  \makecell{0.94\\(0.93)} &  \makecell{0.87\\(0.96)} &   \makecell{0.8\\(0.97)} &  \makecell{0.91\\(0.97)} &   \makecell{0.8\\(0.94)} &  \makecell{0.93\\(1.00)} &  \makecell{0.82\\(0.97)} &  \makecell{0.96\\(1.02)} &   \makecell{0.9\\(0.98)} &  \makecell{0.91\\(0.99)} &  \makecell{0.85\\(1.01)} &  \makecell{0.91\\(1.03)} \\
         & 600 &  \makecell{0.92\\(0.76)} &   \makecell{0.9\\(0.80)} &  \makecell{0.88\\(0.80)} &   \makecell{0.9\\(0.80)} &  \makecell{0.86\\(0.78)} &  \makecell{0.92\\(0.81)} &  \makecell{0.88\\(0.80)} &  \makecell{0.93\\(0.84)} &  \makecell{0.88\\(0.81)} &  \makecell{0.88\\(0.80)} &  \makecell{0.93\\(0.83)} &  \makecell{0.93\\(0.83)} \\
\bottomrule
\end{tabular}
\end{table}

\begin{landscape}

\vspace{10cm}
\begin{table}[ht!]
    \caption{Coverage comparison between the Hessian-free CI in \cref{eq:simplified variance estimator}, the naive CI, and Hessian-based CI in \cref{eq:plugin variance}. $\alpha$ = proportion of $\beta_i = 1$. In each cell we present the coverage rate and the average CI widths in parentheses. Nominal rate = 90\%. The number of repetitions in each cell = 500.}
    \label{tbl:rev_coverage_new_vs_naive_full}
    \centering
    \footnotesize
    \begin{tabular}{llllll}
    \toprule
    \
     & items & 100 & 200 & 400 & 800 \\
    buyers & $\alpha$ &  &  &  &  \\
    \midrule
    \multirow[t]{6}{*}{20} & 0.0 & \makecell{{0.98(0.02)}| 1.00(0.67)| 1.00(0.60)} & \makecell{{1.00(0.00)}| 1.00(0.49)| 1.00(0.44)} & \makecell{{1.00(0.00)}| 1.00(0.35)| 1.00(0.32)} & \makecell{{1.00(0.00)}| 1.00(0.25)| 1.00(0.23)} \\
     & 0.05 & \makecell{{0.89(0.93)}| 0.84(0.75)| 0.85(0.76)} & \makecell{{0.93(0.69)}| 0.83(0.54)| 0.83(0.55)} & \makecell{{0.93(0.50)}| 0.84(0.39)| 0.85(0.39)} & \makecell{{0.93(0.35)}| 0.85(0.27)| 0.85(0.28)} \\
     & 0.1 & \makecell{{0.92(1.05)}| 0.89(0.94)| 0.88(0.92)} & \makecell{{0.95(0.75)}| 0.90(0.67)| 0.90(0.66)} & \makecell{{0.93(0.53)}| 0.91(0.47)| 0.91(0.47)} & \makecell{{0.95(0.38)}| 0.92(0.33)| 0.92(0.33)} \\
     & 0.2 & \makecell{{0.93(0.98)}| 0.92(0.96)| 0.90(0.94)} & \makecell{{0.89(0.70)}| 0.89(0.68)| 0.89(0.68)} & \makecell{{0.93(0.50)}| 0.92(0.48)| 0.92(0.48)} & \makecell{{0.93(0.37)}| 0.91(0.34)| 0.91(0.34)} \\
     & 0.3 & \makecell{{0.92(1.01)}| 0.91(0.94)| 0.91(0.93)} & \makecell{{0.92(0.72)}| 0.89(0.67)| 0.89(0.67)} & \makecell{{0.91(0.50)}| 0.89(0.47)| 0.89(0.47)} & \makecell{{0.91(0.36)}| 0.89(0.34)| 0.89(0.34)} \\
     & 1.0 & \makecell{{0.90(1.29)}| 0.88(1.21)| 0.88(1.18)} & \makecell{{0.91(0.87)}| 0.90(0.85)| 0.90(0.84)} & \makecell{{0.92(0.61)}| 0.92(0.61)| 0.92(0.61)} & \makecell{{0.90(0.43)}| 0.90(0.43)| 0.90(0.43)} \\
    \cline{1-6}
    \multirow[t]{6}{*}{50} & 0.0 & \makecell{{0.94(0.03)}| 1.00(0.73)| 1.00(0.69)} & \makecell{{1.00(0.01)}| 1.00(0.53)| 1.00(0.50)} & \makecell{{1.00(0.00)}| 1.00(0.38)| 1.00(0.36)} & \makecell{{1.00(0.00)}| 1.00(0.27)| 1.00(0.26)} \\
     & 0.05 & \makecell{{0.82(1.20)}| 0.88(0.92)| 0.87(0.91)} & \makecell{{0.93(0.79)}| 0.89(0.66)| 0.89(0.66)} & \makecell{{0.95(0.60)}| 0.90(0.47)| 0.90(0.47)} & \makecell{{0.97(0.41)}| 0.92(0.33)| 0.92(0.33)} \\
     & 0.1 & \makecell{{0.92(1.03)}| 0.90(0.96)| 0.89(0.93)} & \makecell{{0.94(0.78)}| 0.89(0.67)| 0.88(0.67)} & \makecell{{0.93(0.54)}| 0.89(0.48)| 0.90(0.48)} & \makecell{{0.95(0.40)}| 0.90(0.34)| 0.90(0.34)} \\
     & 0.2 & \makecell{{0.93(1.08)}| 0.91(0.96)| 0.90(0.94)} & \makecell{{0.91(0.75)}| 0.89(0.68)| 0.89(0.67)} & \makecell{{0.93(0.53)}| 0.89(0.48)| 0.89(0.48)} & \makecell{{0.92(0.37)}| 0.90(0.34)| 0.90(0.34)} \\
     & 0.3 & \makecell{{0.92(1.27)}| 0.90(1.15)| 0.90(1.13)} & \makecell{{0.90(0.85)}| 0.89(0.82)| 0.88(0.81)} & \makecell{{0.93(0.60)}| 0.92(0.58)| 0.92(0.58)} & \makecell{{0.91(0.42)}| 0.90(0.41)| 0.90(0.41)} \\
     & 1.0 & \makecell{{0.90(1.28)}| 0.89(1.23)| 0.89(1.21)} & \makecell{{0.90(0.90)}| 0.90(0.88)| 0.89(0.88)} & \makecell{{0.87(0.63)}| 0.87(0.63)| 0.87(0.62)} & \makecell{{0.92(0.44)}| 0.92(0.44)| 0.92(0.44)} \\
    \cline{1-6}
    \multirow[t]{6}{*}{80} & 0.0 & \makecell{{0.88(0.04)}| 1.00(0.75)| 1.00(0.72)} & \makecell{{0.98(0.02)}| 1.00(0.54)| 1.00(0.52)} & \makecell{{1.00(0.00)}| 1.00(0.39)| 1.00(0.37)} & \makecell{{1.00(0.00)}| 1.00(0.28)| 1.00(0.27)} \\
     & 0.05 & \makecell{{0.93(1.05)}| 0.90(0.95)| 0.90(0.94)} & \makecell{{0.94(0.77)}| 0.90(0.68)| 0.90(0.67)} & \makecell{{0.95(0.58)}| 0.87(0.48)| 0.87(0.48)} & \makecell{{0.94(0.38)}| 0.90(0.34)| 0.90(0.34)} \\
     & 0.1 & \makecell{{0.95(1.15)}| 0.90(0.96)| 0.90(0.94)} & \makecell{{0.92(0.76)}| 0.89(0.68)| 0.89(0.68)} & \makecell{{0.96(0.59)}| 0.91(0.48)| 0.91(0.48)} & \makecell{{0.95(0.39)}| 0.90(0.34)| 0.90(0.34)} \\
     & 0.2 & \makecell{{0.92(1.28)}| 0.90(1.14)| 0.89(1.12)} & \makecell{{0.90(0.85)}| 0.89(0.81)| 0.89(0.81)} & \makecell{{0.91(0.59)}| 0.90(0.58)| 0.90(0.58)} & \makecell{{0.92(0.43)}| 0.91(0.41)| 0.91(0.41)} \\
     & 0.3 & \makecell{{0.90(1.27)}| 0.89(1.19)| 0.89(1.17)} & \makecell{{0.91(0.88)}| 0.91(0.85)| 0.91(0.85)} & \makecell{{0.90(0.61)}| 0.89(0.60)| 0.89(0.60)} & \makecell{{0.89(0.43)}| 0.88(0.43)| 0.88(0.43)} \\
     & 1.0 & \makecell{{0.90(1.33)}| 0.88(1.27)| 0.88(1.25)} & \makecell{{0.90(0.91)}| 0.90(0.90)| 0.90(0.90)} & \makecell{{0.90(0.64)}| 0.90(0.64)| 0.90(0.64)} & \makecell{{0.87(0.45)}| 0.87(0.45)| 0.87(0.45)} \\
    \cline{1-6}
    
    \end{tabular}
    \end{table}
    
\end{landscape}
\end{APPENDICES}
%
%
%



\end{document}